\newtheorem{problem}{Problem}
\newtheorem{subproblem}[problem]{Subproblem}
\newtheorem{theorem}{Theorem}
\newtheorem{lemma}[theorem]{Lemma}
\newtheorem{definition}[theorem]{Definition}
\newtheorem{corollary}[theorem]{Corollary}
\newtheorem{observation}[theorem]{Observation}
\newtheorem{fact}[theorem]{Fact}
\DeclareMathOperator\cis{cis}
\newcommand{\dominates}{>_d}
\newcommand{\undermines}{>_u}
\newcommand{\domseq}{>_{du}}
\newcommand{\etal}{\textit{et al}.}
\newcommand{\problemtwo}{SC\xspace}
\begin{document}

\title{Cubic upper and lower bounds for subtrajectory clustering under the continuous Fr\'echet distance}
\author{Joachim Gudmundsson and Sampson Wong \\
\small{University of Sydney, Australia \\
joachim.gudmundsson@sydney.edu.au, swon7907@sydney.edu.au}}
\date{}

\maketitle

\begin{abstract}
Detecting commuting patterns or migration patterns in movement data is an important problem in computational movement analysis. Given a trajectory, or set of trajectories, this corresponds to clustering similar subtrajectories.

We study subtrajectory clustering under the continuous and discrete Fr\'echet distances. The most relevant theoretical result is by Buchin~\etal~(2011). They provide, in the continuous case, an $O(n^5)$ time algorithm\footnote{\label{quintic}Buchin~\etal~\cite{DBLP:conf/gis/BuchinBGHSSSSSW20} show an $O(|S| \cdot n^2)$ time algorithm, where $S$ is the set of (internal) critical points. In this paper, we show that~$|S| = \Theta(n^3)$, yielding an overall running time of $O(n^5)$.} and a 3SUM-hardness lower bound, and in the discrete case, an $O(n^3)$ time algorithm. We show, in the continuous case,  an $O(n^3 \log^2 n)$ time algorithm and a 3OV-hardness lower bound, and in the discrete case, an $O(n^2 \log n)$ time algorithm and a quadratic lower bound. Our bounds are almost tight unless SETH fails.

\end{abstract}

\section{Introduction}

The widespread use of the Global Positioning System (GPS) in location aware devices has led to an abundance of trajectory data. Although collecting and storing this data is cheaper and easier than ever, this rapid increase of data is making the problem of analysing this data more demanding. One way of extracting useful information from a large trajectory data set is to cluster the trajectories into groups of similar trajectories. However, focusing on clustering entire trajectories can overlook significant patterns that exist only for a small portion of their lifespan. Consequently, subtrajectory clustering is more appropriate if we are interested in similar portions of trajectories, rather than entire trajectories.

Subtrajectory clustering has been used to detect similar movement patterns in various applications. Gudmundsson and Wolle~\cite{DBLP:journals/urban/GudmundssonW14} applied it to football analysis. They reported common movements of the ball, common movements of football players, and correlations between players in the same team moving together. Buchin~\etal~\cite{DBLP:conf/gis/BuchinBGHSSSSSW20} applied subtrajectory clustering to map reconstruction. They reconstructed the location of roads, turns and crossings from urban vehicle trajectories, and the location of hiking trails from hiking trajectories. Many other applications have been considered in the Data Mining and the Geographic Information Systems communities, including behavioural ecology, computational biology and traffic analysis~\cite{DBLP:conf/pods/AgarwalFMNPT18,DBLP:conf/gis/BuchinBDFJSSSW17,DBLP:conf/icdm/ChangZ09,DBLP:conf/gis/GudmundssonTV12,DBLP:journals/tpds/GudmundssonV15,DBLP:conf/sigmod/LeeHW07,DBLP:journals/ijgi/LuoZXFR17,DBLP:conf/bigdataconf/TampakisPDT19}.

Despite considerable attention across multiple communities, the theoretical aspects of subtrajectory clustering are not well understood. The most closely related result is by Buchin~\etal~\cite{DBLP:journals/ijcga/BuchinBGLL11}. Their algorithm forms the basis of several implementations~\cite{DBLP:conf/gis/BuchinBDFJSSSW17,DBLP:conf/gis/BuchinBGHSSSSSW20,DBLP:conf/gis/GudmundssonTV12,DBLP:journals/tpds/GudmundssonV15,DBLP:journals/urban/GudmundssonW14}. Other models of subtrajectory clustering have also been considered~\cite{DBLP:conf/pods/AgarwalFMNPT18,DBLP:journals/corr/abs-2103-06040}, which we will briefly discuss in our related work section.

To measure the similarity between subtrajectories, numerous distance measures have been proposed in the literature~\cite{Ranacher14,tao2021comparative,Toohey15}. In this paper, we will use the (discrete and continuous) Fr\'echet distance, which is the most common and successful distance measure used for trajectories, and also the preferred distance measure in the theory community.

Given a trajectory $T$, the subtrajectory clustering (\problemtwo) problem considered by Buchin~\etal~\cite{DBLP:journals/ijcga/BuchinBGLL11} is to compute a subtrajectory cluster consisting of~$m$ non-overlapping subtrajectories of $T$, one of which is called the reference subtrajectory. The reference subtrajectory must have length at least~$\ell$, and the Fr\'echet distance between the reference subtrajectory and any of the other~$m-1$ subtrajectories is at most~$d$. We formally define the \problemtwo problem in Section~\ref{sec:new_preliminaries}. For \problemtwo under the continuous Fr\'echet distance,  Buchin~\etal~\cite{DBLP:journals/ijcga/BuchinBGLL11} provide an $O(n^5)$ time algorithm and a 3SUM-hardness lower bound. For \problemtwo under the discrete Fr\'echet distance, they provide an $O(n^3)$ time algorithm. Closing the gaps between the two upper and lower bounds have remained important open problems. 


In this paper, we provide an $O(n^3 \log^2 n)$ time algorithm for \problemtwo under the continuous Fr\'echet distance; a significant improvement over the previous algorithm~\cite{DBLP:journals/ijcga/BuchinBGLL11}. Along the way, we also show an $O(n^2 \log n)$ time algorithm for \problemtwo under the discrete Fr\'echet distance. 

We argue that our algorithms are essentially optimal. Our lower bounds are conditional on 
the Strong Exponential Time Hypothesis (SETH). 
Our main technical contribution is an intricate 3OV-hardness lower bound for \problemtwo under the continuous Fr\'echet distance. This implies that there is no $O(n^{3 - \varepsilon})$ time algorithm for any $\varepsilon > 0$, unless SETH fails. We also show, via a simple reduction, that Bringmann's~\cite{DBLP:conf/focs/Bringmann14} SETH-based quadratic lower bound applies to \problemtwo under the discrete Fr\'echet distance. These lower bounds show that our two algorithms are almost optimal, unless SETH fails. Interestingly, our results show that there is a provable separation between the discrete and continuous Fr\'echet distance for SC.

Next, we outline the structure of the paper. We discuss related work in Section~\ref{sec:related_work} and preliminaries in Section~\ref{sec:new_preliminaries}. In Section~\ref{sec:overview_algorithm}, we provide an overview of the key insights that lead to our improved algorithms. In Section~\ref{sec:overview_3ov}, we give a quadratic lower bound in the discrete case, and an overview of the key components of our cubic lower bound in the continuous case. Detailed descriptions and the full proofs are provided in Section~\ref{sec:dfd} for our algorithm under the discrete Fr\'echet distance, in Section~\ref{sec:cfd} for our algorithm under the continuous Fr\'echet distance, and in Section~\ref{sec:3ov} for our 3OV reduction.

\subsection{Related work}
\label{sec:related_work}
Recently, the closely related problem of clustering trajectories has received considerable attention, especially the $(k,\ell)$-center and $(k,\ell)$-median clustering problems. In these problems, entire trajectories are clustered. Given a set $\mathcal G$ of trajectories, and parameters $k$ and $\ell$, the problem is to find a set $\mathcal C$ of $k$ trajectories (not necessarily in $\mathcal G$), each of complexity at most $\ell$, so that the maximum Fr\'echet distance (center) or the sum of the Fr\'echet distances (median) over all trajectories in $\mathcal G$ to its closest trajectory in $\mathcal C$ is minimised. The set $\mathcal C$ is also known as the set of center curves, and the intuition behind restricting the complexity of the center curves is to avoid overfitting. 

Driemel~\etal~\cite{DBLP:conf/soda/DriemelKS16} were the first to consider $(k,\ell)$-center and $(k,\ell)$-medians clustering of trajectories. They showed that both problems are NP-hard when $k$ is part of the input, and provided $(1+\varepsilon)$-approximation algorithms if the trajectories are one-dimensional. Buchin~\etal~\cite{DBLP:conf/soda/BuchinDGHKLS19} showed that $(k,\ell)$-center clustering is NP-hard if $\ell$ is part of the input, and provided a 3-approximation algorithm for trajectories of any dimension. Buchin~\etal~\cite{DBLP:conf/soda/BuchinDR21} provided a randomised bicriteria-approximation algorithm with approximation factor $(1+\varepsilon)$ for trajectories of any dimension.

The idea of computing a set of center curves for trajectory clustering has been extended to subtrajectory clustering. Agarwal~\etal~\cite{DBLP:conf/pods/AgarwalFMNPT18} compute center curves (which they call pathlets) from a set of input trajectories. The key difference is that pathlets are similar to portions of the input trajectory, rather than the entire trajectories. Each trajectory is then modelled as a concatenation of pathlets, with possible gaps in between. Agarwal~\etal~\cite{DBLP:conf/pods/AgarwalFMNPT18} show that their problem is NP-hard, and provide an $O(\log n)$-approximation algorithm that runs in polynomial time. Akitaya~\etal~\cite{DBLP:journals/corr/abs-2103-06040} consider a very similar model, except that they compute a set of center curves with complexity at most $\ell$, and the concatenation of center curves covers the input trajectory without gaps. They also show that this version is NP-hard, and provide a polynomial-time $O(\log n)$-approximation.

The first fine-grained lower bound based on the Strong Exponential Time Hypothesis (SETH) to be applied to Fr\'echet distance problems was by Bringmann~\cite{DBLP:conf/focs/Bringmann14}, who showed a quadratic lower bound for computing the (discrete or continuous) Fr\'echet distance between trajectories of dimension two or higher. Bringmann and Mulzer~{\cite{DBLP:journals/jocg/BringmannM16}} extended the results to also hold for the discrete Fr\'echet distance of 1-dimensional trajectories. Buchin~\etal~{\cite{DBLP:conf/soda/BuchinOS19}} further extended the results to hold for the continuous Fr\'echet distance of 1-dimensional trajectories. Bringman~\etal~\cite{DBLP:conf/soda/BringmannKN19} showed a 4OV-hardness lower bound for computing the translation invariant discrete Fr\'echet distance between two trajectories of dimension two or higher.

\section{Preliminaries}
\label{sec:new_preliminaries}

In this paper, we study the problem of detecting a movement pattern that occurs frequently in a trajectory, or in a set of trajectories. The problem was first proposed by Buchin~et~al.~{\cite{DBLP:journals/ijcga/BuchinBGLL11}}, and we retain the existing convention by referring to this problem as the subtrajectory clustering (SC)~problem. In the SC problem, a trajectory $T$ of complexity $n$ is defined to be a sequence of points $v_1, v_2, \ldots, v_n$ in the $c$-dimensional Euclidean space $\mathbb R^c$, connected by segments.


\begin{problem}[SC problem]
\label{def:p2}
Given a trajectory~$T$ of complexity~$n$, a positive integer~$m$, and positive real values~$\ell$ and~$d$, decide if there exists a subtrajectory cluster of~$T$ such that:
\begin{itemize} [noitemsep]
    \item the cluster consists of one reference subtrajectory and $m-1$ other subtrajectories of $T$,
    \item the reference subtrajectory has Euclidean length at least $\ell$,
    \item the Fr\'echet distance between the reference subtrajectory and any other subtrajectory is~$\leq d$,
    \item any pair of subtrajectories in the cluster overlap in at most one point.
\end{itemize}
\end{problem}

Buchin~\etal~\cite{DBLP:journals/ijcga/BuchinBGLL11} show how the case where the input is a set of trajectories can be reduced to the case when a single trajectory is given as input. See Figure~\ref{fig:sc01_clustering}. Since our two algorithms build upon the algorithm by Buchin~\etal~\cite{DBLP:journals/ijcga/BuchinBGLL11} we will briefly describe their algorithm. First, we discuss their algorithm for the discrete Fr\'echet distance, then for the continuous Fr\'echet distance.

\begin{figure}[ht]
    \centering
    \includegraphics[width=0.8\textwidth]{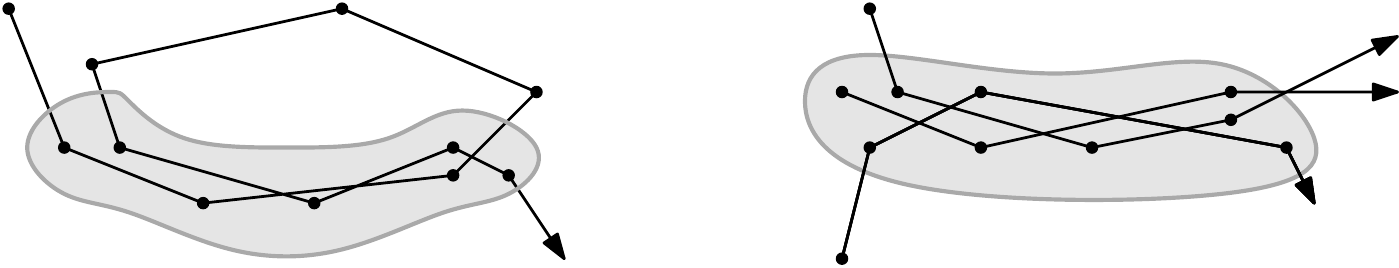}
    \caption{A subtrajectory cluster, for a single trajectory (left), or for a set of trajectories (right).}
    \label{fig:sc01_clustering}
\end{figure}

The first step is to transform \problemtwo into a problem in the discrete free space diagram. Let $F_d(T,T)$ be the discrete Fr\'echet free space diagram between two copies of $T$ and with distance value $d$. For the definition and a formal discussion of the free space diagram, see Appendix~\ref{sec:appendix_frechet_freespace}. Suppose the conditions of \problemtwo hold for some reference subtrajectory starting at vertex~$s$ and ending at vertex~$t$. Let $l_s$ and $l_t$ be the vertical lines in $F_d(T,T)$ representing $s$ and $t$. The conditions of \problemtwo state that there are $m-1$ other subtrajectories so that the Fr\'echet distance between the reference subtrajectory and each of the other subtrajectories is at most $d$. Therefore, the \problemtwo problem reduces to deciding if there are vertical lines $l_s$ and~$l_t$ so that the reference subtrajectory from $s$ to $t$ has length at least $\ell$, and there are $m-1$ monotone paths in $F_d(T,T)$ starting at $l_s$ and ending at~$l_t$. Moreover, since these $m-1$ subtrajectories overlap with each other in at most one point, the $y$-coordinates of our monotone paths overlap in at most one point. Similarly, the $y$-coordinates of our monotone paths overlap with the $y$-interval from $s$ to $t$ in at most one point. See Figure~\ref{fig:sc_02_monotone_paths}, left.

\begin{figure}[ht]
    \centering
    \includegraphics[width=0.7\textwidth]{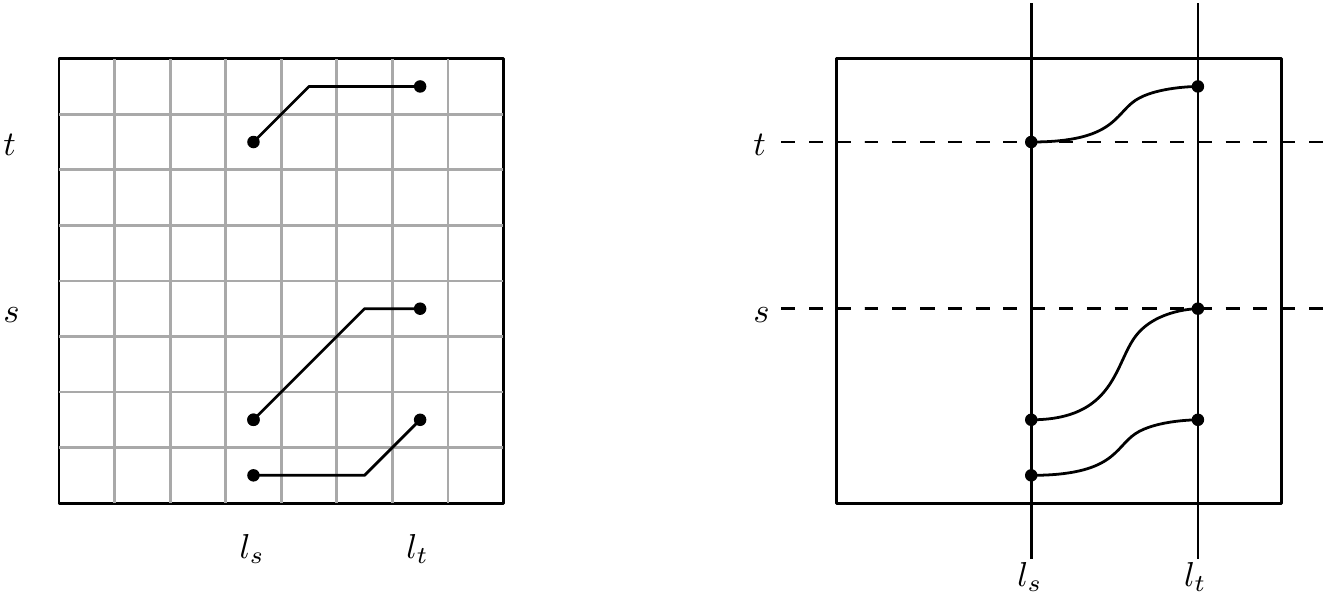}
    \caption{An example of three monotone paths from $l_s$ to $l_t$ in the discrete (left) and continuous (right) free space diagrams. The monotone paths overlap with each other in at most one point, and overlap with the $y$-interval from $s$ to $t$ in at most one point.}
    \label{fig:sc_02_monotone_paths}
\end{figure}

The second step is to iterate over all reference subtrajectories with length at least $\ell$. Buchin~\etal~\cite{DBLP:journals/ijcga/BuchinBGLL11} show that, for \problemtwo under the discrete Fr\'echet distance, there are only $O(n)$ candidate reference subtrajectories to consider, and of these, no reference subtrajectory is a subtrajectory of any other reference subtrajectory.

The third step is, given a candidate subtrajectory starting at $s$ and ending at $t$, deciding whether there is a subtrajectory cluster satisfying the conditions of \problemtwo with this candidate subtrajectory as its reference subtrajectory. The third step is an important subproblem in both the algorithm of Buchin~\etal~\cite{DBLP:journals/ijcga/BuchinBGLL11} and our algorithm. We state the subproblem formally.

\begin{subproblem}
\label{def:p3}
Given a trajectory~$T$ of complexity~$n$, a positive integer~$m$, a positive real value~$d$, and a reference subtrajectory of~$T$ starting at vertex~$s$ and ending at vertex~$t$, let $l_s$ and $l_t$ be two vertical lines in $F_d(T,T)$ representing the vertices $s$ and $t$. Decide if there exist:
\begin{itemize}[noitemsep]
    \item $m-1$ distinct monotone paths starting at $l_s$ and ending at $l_t$, such that 
    \item the $y$-coordinate of any two monotone paths overlap in at most one point, and
    \item the $y$-coordinate of any monotone path overlaps the $y$-interval from $s$ to $t$ in at most one point.
\end{itemize}
\end{subproblem}


Buchin~\etal~\cite{DBLP:journals/ijcga/BuchinBGLL11} solve each instance of Subproblem~\ref{def:p3} individually in $O(n + ml)$ time, where $l$ is the maximum complexity of the reference subtrajectory. As there are $O(n)$ reference subtrajectories, there are $O(n)$ subproblems to solve. Therefore, the total running time of the algorithm is $O(n^2 + nml)$ which in the worst case is $O(n^3)$.

Next, we describe the algorithm of Buchin~\etal~\cite{DBLP:journals/ijcga/BuchinBGLL11} under the continuous Fr\'echet distance. 
The first step is exactly the same as the discrete case, except that we substitute the discrete free space diagram with the continuous free space diagram. See Figure~\ref{fig:sc_02_monotone_paths}, right.

The second step is significantly different between the discrete and continuous case. In the continuous case, the starting point $s$ and ending point $t$ may be any arbitrary points on the trajectory $T$, not just vertices of $T$. Buchin~\etal~\cite{DBLP:journals/ijcga/BuchinBGLL11} claim that there are $O(n^2)$ critical points in the free space diagram, and the vertical line $l_s$ representing the starting point of the reference subtrajectory must pass through one of these critical points. We show in Section~\ref{sec:overview_3ov} that there are $\Omega(n^3)$ critical points, and we show in Lemma~\ref{lem:cfd.a2a.n^3_critical_points} that there are $O(n^3)$ critical points. Hence, in the general case, there are $\Theta(n^3)$ possible reference subtrajectories to consider.

The third step is to solve Subproblem~\ref{def:p3}. Instead of $l_s$ and $l_t$ representing vertices that are the starting and ending points of the reference subtrajectories, we consider $l_s$ and $l_t$ representing arbitrary points that are the starting and ending points. Buchin~\etal~\cite{DBLP:journals/ijcga/BuchinBGLL11} solve each instance of Subproblem~\ref{def:p3} in $O(nm)$ time. As there are $O(n^3)$ critical points and therefore $O(n^3)$ reference subtrajectories to consider, the overall running time of Buchin~\etal's~\cite{DBLP:journals/ijcga/BuchinBGLL11} algorithm is $O(n^4 m)$, which in the worst case is $O(n^5)$.

\section{Technical Overview}

Our main technical contributions in this paper are cubic upper and lower bounds for Subtrajectory Clustering (SC) under the continuous Fr\'echet distance. 

As a stepping stone towards our cubic upper bound, we study two special cases. The first special case is the \problemtwo problem under the discrete Fr\'echet distance. The second special case is the \problemtwo problem under the continuous Fr\'echet distance, but under the restriction that the reference subtrajectory must be a vertex-to-vertex subtrajectory. The final case is the general case. In Section~\ref{sec:overview_algorithm}, we provide an overview of our key insights in each case. Detailed descriptions of the algorithms and their analyses can be found in Sections~\ref{sec:dfd} and~\ref{sec:cfd}.

For our lower bounds, assuming SETH, we show in Section~\ref{sec:overview_3ov} a simple reduction that proves that there is no $O(n^{2-\varepsilon})$ algorithm for \problemtwo for any $\varepsilon > 0$ under the discrete Fr\'echet distance. Then, we show that there is no $O(n^{3-\varepsilon})$ algorithm for \problemtwo for $\varepsilon > 0$ under the continuous Fr\'echet distance. We provide an overview of our reduction from 3OV to \problemtwo in Section~\ref{sec:overview_3ov}, and a detailed analysis of our construction can be found in Section~\ref{sec:3ov}.

\subsection{Algorithm Overview}
\label{sec:overview_algorithm}

Our algorithms, both in the special and general cases, build on the work of Buchin~\etal~\cite{DBLP:journals/ijcga/BuchinBGLL11}. We make several key insights that lead to improvements over previous work. In this section, we present our main technical contributions, and defer relevant proofs to Sections~\ref{sec:dfd} and~\ref{sec:cfd}.

\subsubsection*{Key Insight 1: Reusing monotone paths between different subproblems}

As previously stated, the algorithm of Buchin~\etal~\cite{DBLP:journals/ijcga/BuchinBGLL11} divides \problemtwo into one subproblem per candidate reference subtrajectory. 
For the discrete Fr\'echet distance, there are $O(n)$ candidate reference subtrajectories, so \problemtwo is divided into $O(n)$ instances of Subproblem~\ref{def:p3}. Buchin~\etal~\cite{DBLP:journals/ijcga/BuchinBGLL11} showed that each instance of Subproblem~\ref{def:p3} can be solved in $O(n + ml) \approx O(n^2)$ time individually, so all $O(n)$ subproblems can be solved in $O(n^2 + nml) \approx O(n^3)$ time. 

Our first key insight is that we need not handle each subproblem individually. If we can reuse monotone paths between the $O(n)$ subproblems, this could significantly speed up the algorithm. For example, suppose that a subproblem starts at vertex $s$ and ends at vertex $t$, whereas an adjacent subproblem starts at vertex $s+1$ and ends at vertex $t+1$. Suppose that we solve the subproblem $(s,t)$ first, and we solve the subproblem $(s+1,t+1)$ next. Our key insight is to reuse the monotone paths that we computed in subproblem $(s,t)$ to guide our search in subproblem $(s+1,t+1)$. In fact, since almost all grid cells in subproblem $(s+1,t+1)$ have already appeared in the subproblem $(s,t)$, the speedup from using previously computed paths could be quite large. See Figure~\ref{fig:to_ki1_1}, left.

\begin{figure}[ht]
    \centering
    \includegraphics[width=0.8 \textwidth]{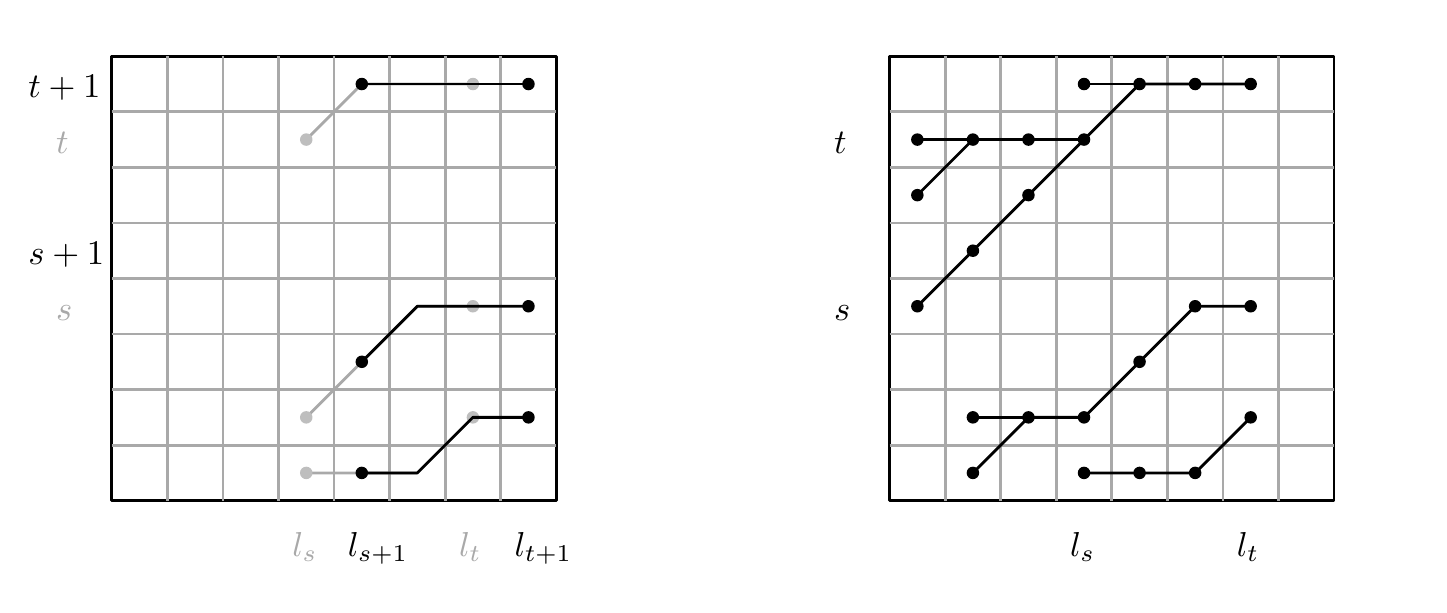}
    \caption{Monotone paths from $l_s$ to $l_t$ that are similar to monotone paths from $l_{s+1}$ to $l_{t+1}$ (left). The link-cut tree data structure retaining monotone paths from previous values of $s$ and $t$ (right).}
    \label{fig:to_ki1_1}
\end{figure}

In short, our approach is as follows. We sort the $O(n)$ subproblems by the $x$-coordinates of their vertical lines $l_s$ and $l_t$. We consider those with the smallest $x$-coordinates first. For each subproblem, we perform a greedy depth first search to find $m-1$ non-overlapping monotone paths. Our search algorithm is very similar to and has the same running time as the original search algorithm by Buchin~\etal~\cite{DBLP:journals/ijcga/BuchinBGLL11}. 

While performing our greedy depth first search, we maintain a dynamic tree data structure to store our monotone paths as we compute them. In particular, we use a link-cut tree~\cite{DBLP:journals/jcss/SleatorT83}, to store a set of rooted trees. The invariant maintained by our data structure is that every node has a monotone path to the root of its link-cut tree, as shown in Figure~\ref{fig:to_ki1_1}, right. The link-cut data structure is maintained via edge insertions and deletions, which require $O(\log n)$ time per update. Using this data structure, we can significantly reuse the monotone paths between subproblems. Whenever we visit a node that has been considered by a previous subproblem, we can simply query for its root in the link-cut data structure, without needing to recompute the monotone path.

In Section~\ref{sec:dfd.dfs}, we provide details of our greedy depth first search. In Section~\ref{sec:dfd.jump} we provide details of how we apply the link-cut tree data structure to our greedy depth first search. We then use amortised analysis to bound the running time. Putting this together yields:

\begin{restatable}{theorem}{theoremone}
\label{thm:main_dfd}
There is an $O(n^2 \log n)$ time algorithm for \problemtwo under the discrete Fr\'echet distance.
\end{restatable}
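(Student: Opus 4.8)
The plan is to follow the three-step framework of Buchin~\etal~\cite{DBLP:journals/ijcga/BuchinBGLL11} but to amortise the cost of Step~3 across all $O(n)$ candidate reference subtrajectories, rather than paying $O(n+ml)$ per subproblem. First I would recall that in the discrete case the candidate reference subtrajectories $(s,t)$ can be taken to be vertex-to-vertex and, crucially, can be ordered so that consecutive subproblems differ by shifting both endpoints by one vertex; moreover no candidate is a subtrajectory of another. I process the subproblems in increasing order of the $x$-coordinates of $l_s$ (equivalently in order of $s$), so that when I move from subproblem $(s,t)$ to the next one the vertical strip of $F_d(T,T)$ between $l_s$ and $l_t$ changes only by adding/removing $O(n)$ grid cells on the boundary. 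For a single subproblem I would use essentially the greedy depth-first search of Buchin~\etal: repeatedly attempt to build a monotone path from $l_s$ to $l_t$ that lies above all previously found paths (in $y$-coordinate), stopping once $m-1$ such paths are found or once no further path exists. The correctness of this greedy choice — that $m-1$ non-overlapping monotone paths exist iff the greedy search finds $m-1$ of them — is inherited from their analysis, since our search explores the same reachability structure.

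The key new ingredient is a link-cut tree~\cite{DBLP:journals/jcss/SleatorT83} that persists between subproblems and stores, as a forest of rooted trees, the reachability relation discovered so far: every node (a free-space cell reachable from $l_s$ along a monotone path already traced) is linked to its parent so that querying the root tells us the topmost monotone path through that node without re-tracing it. The second step of the argument is to describe exactly which link/cut operations are performed as the DFS visits a cell and as $s$ and $t$ are incremented: when the DFS backtracks or when a cell leaves the current strip, its tree edge is cut; when a new monotone edge is traced, it is linked. Each such operation costs $O(\log n)$. The third step is the amortised analysis: I would set up a potential or a direct charging argument showing that the total number of link/cut operations over the entire execution — summed over all $O(n)$ subproblems — is $O(n^2)$, because each grid cell enters and leaves the active strip $O(1)$ times and each cell is the endpoint of $O(1)$ traced monotone edges that persist, so edges are not repeatedly inserted and deleted within a subproblem. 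Combined with the $O(\log n)$ cost per operation and the $O(n^2)$ cells touched in total, this yields the $O(n^2 \log n)$ bound. The length constraint $\ell$ and the "overlap in at most one point" constraints are enforced exactly as in Step~1's reduction to Subproblem~\ref{def:p3}, so they add no overhead.

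I expect the main obstacle to be the amortised analysis, specifically arguing that across the sweep over $(s,t)$ a given free-space cell does not have its link-cut edges churned more than $O(1)$ times. Naively, as the greedy DFS in each subproblem re-explores the strip, one could fear $\Theta(n)$ link/cut operations per cell per subproblem, which would only give $O(n^3\log n)$. The fix is the observation that the monotone-path forest is \emph{monotone} in $s$: a path valid from $l_s$ to $l_t$, when restricted, is still monotone and still reaches $l_{t}$ or tells us immediately that it cannot, so most of the forest computed for $(s,t)$ can be reused verbatim for $(s+1,t+1)$ with only boundary edits. Making this reuse precise — in particular handling the cells on the old left boundary $l_s$ that disappear and the cells on the new right boundary $l_{t+1}$ that appear, and showing the DFS only ever re-touches the $O(n)$ changed cells plus $O(1)$ amortised work elsewhere — is the crux; once that charging scheme is in place the running time follows by summing $O(\log n)$ over $O(n^2)$ operations.
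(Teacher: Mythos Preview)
Your high-level approach is exactly the paper's: sweep over the $O(n)$ candidate reference subtrajectories in order, run the greedy DFS of Buchin~\etal\ within each, and maintain a link-cut forest across subproblems so that reachability already discovered is reused via root queries. Correctness is indeed inherited from the single-subproblem analysis.

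The amortisation you flag as the crux is where you are overcomplicating matters. You do not need a strip-sliding argument, boundary-cell accounting, or any notion of cells ``entering and leaving'' the active window; in fact cells on the old $l_s$ do not disappear from the forest when $s$ advances, they simply stop being start candidates. The clean observation the paper uses is this: a link is cut only when the DFS backtracks from its parent node $g$, and backtracking from $g$ certifies that $g$ has no monotone path to the current $l_t$. Any monotone path from $g$ to a later $l_{t'}$ (with $t'>t$) would have to cross $l_t$, so $g$ remains a dead end for every subsequent subproblem and no DFS step will ever link to it again. Hence \emph{once a link is removed it is never re-added}. Since each of the $n^2$ grid points has out-degree at most three, there are at most $3n^2$ possible links, each inserted and deleted at most once, giving $O(n^2)$ link-cut operations and $O(n^2\log n)$ time total. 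Your proposed charging via ``$O(n)$ changed cells per shift'' would also give $O(n^2)$, but it is harder to make precise and is not needed.
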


\subsubsection*{Key Insight 2: Transforming continuous free space reachability into graph reachability}

As we transition from the discrete case of \problemtwo to the continuous case, the free space diagram becomes significantly more complex. In particular, reachability in continuous free space is calculated in a fundamentally different way to reachability in the discrete free space. The most common approach for computing the continuous Fr\'echet distance is to compute the reachable space. Reachable space is defined to be the subset of free space that is reachable via monotone paths from the bottom left corner, as shown in Figure~\ref{fig:to_ki2_1}, left. For example, this is the approach used by Alt and Godau~\cite{DBLP:journals/ijcga/AltG95} in their original algorithm, and also by Buchin~\etal~\cite{DBLP:journals/dcg/BuchinBMM17} in their improved algorithm.

\begin{figure}[ht]
    \centering
    \includegraphics[width=0.7\textwidth]{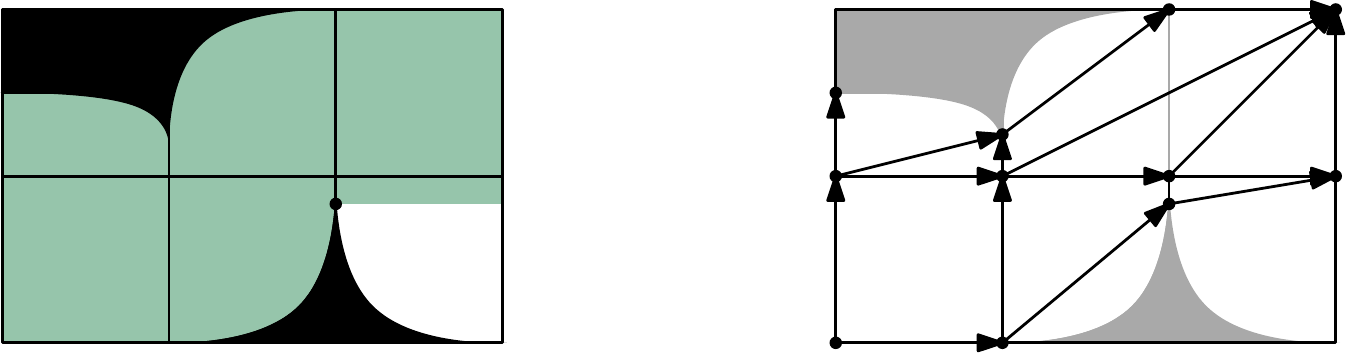}
    \caption{The reachable free space from the bottom left corner, shaded in green (left). The directed graph $G=(V,E)$ where there is a path in the graph if and only if there is a monotone path in the continuous free space diagram (right).}
    \label{fig:to_ki2_1}
\end{figure}

Maintaining the reachable space is fundamentally incompatible with our algorithm. The reason is that the reachable space only considers monotone paths that start at the bottom left corner of the free space diagram. In our problem, the starting point of our monotone path constantly changes, and the reachable space for one starting point may not be used for computing monotone paths from any other starting point.

We propose an alternative to computing the reachable space in the continuous free space diagram. A critical point in the continuous free space diagram is the intersection of the boundary of a cell with the boundary of the free space for that cell (ellipse), or is a cell corner. We build a directed graph $G=(V,E)$ of size $O(n^2 \log n)$, where $V$ is the set of $O(n^2)$ critical points in the free space diagram, so that there is a monotone path between two points in the free space diagram if and only if there is a path between the same two points in the directed graph $G$. See Figure~\ref{fig:to_ki2_1}, right. Note that reachability in this directed graph works for any pair of critical points, not just those where the starting point is the bottom left corner.

In Section~\ref{sec:cfd.graph}, we establish the equivalence between continuous free space reachability and graph reachability in $G$. In Section~\ref{sec:cfd.jump.v2v}, we combine this with our first key insight to obtain an $O(n^2 \log^2 n)$ time algorithm for \problemtwo under the continuous Fr\'echet distance, in the special case where the reference subtrajectory is vertex-to-vertex. 

\subsubsection*{Key Insight 3: Handling additional critical points and reference subtrajectories}

In all special cases considered so far, there are $O(n)$ candidate reference subtrajectories, and hence $O(n)$ instances of Subproblem~\ref{def:p3}. However, in the continuous case where the reference subtrajectory may be any arbitrary subtrajectory, there are significantly more candidate reference subtrajectories to consider. We call a potential starting point for the reference subtrajectory either an internal or external critical point. The difference is that an internal critical point lies in the interior of a free space cell, whereas an external critical point lies on the boundary of a free space cell. 

We show that there are $O(n^3)$ internal critical points, and therefore $O(n^3)$ instances of Subproblem~\ref{def:p3} in the general case. The number of internal critical points is dominated by critical points of the following type: a point that is on the boundary between free and non-free space, and shares a $y$-coordinate with an external critical point. For an example, see Figure~\ref{fig:to_ki3}, left. Surprisingly, these $O(n^3)$ internal critical points are necessary, and form one of the key components of our lower bound.

\begin{figure}[ht]
    \centering
    \includegraphics[width=0.8\textwidth]{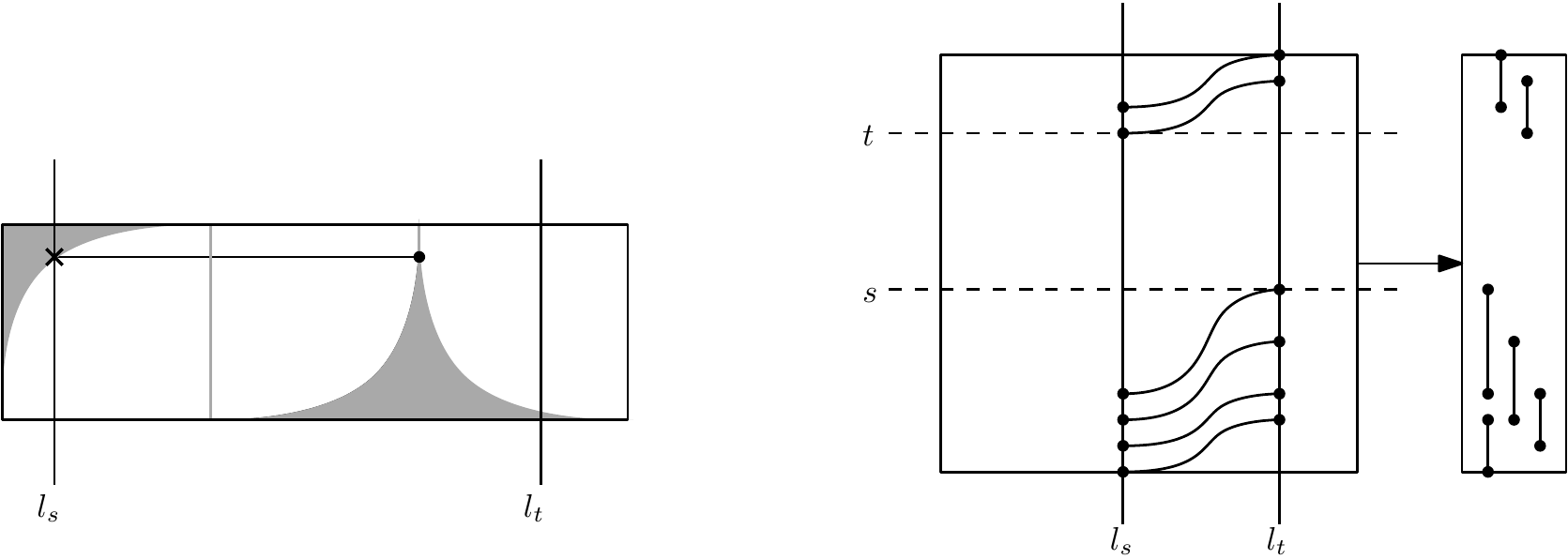}
    \caption{An example of an internal critical point, marked with a cross (left). The overlapping $y$-intervals maintained by the dynamic monotone interval data structure (right).}
    \label{fig:to_ki3}
\end{figure}

Given this increase in critical points and reference subtrajectories in the worst case, if we were to na\"ively apply our first two key insights, we would obtain an $O(n^3 m)$ time algorithm for the general case. The difference is that, previously, our algorithm's running times were dominated by maintaining the link-cut tree data structure over our set of critical points. However, now that there are significantly more reference subtrajectories, the relatively simple process of enumerating up to $m-1$ non-overlapping monotone paths per reference subtrajectory is the new bottleneck. To handle this bottleneck, we require another data structure. The dynamic monotone interval data structure~\cite{DBLP:journals/tcs/GavruskinKKL15} reports whether there exist $m-1$ non-overlapping $y$-intervals, which represent our $m-1$ non-overlapping monotone paths, without explicitly computing them. See Figure~\ref{fig:to_ki3}, right. Putting this together with our first two key insights we obtain the following theorem.

\begin{restatable}{theorem}{theoremtwo}
\label{thm:main_cfd}
There is an $O(n^3 \log^2 n)$ time algorithm for \problemtwo under the continuous Fr\'echet distance.
\end{restatable}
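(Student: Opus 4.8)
The plan is to fuse the three key insights above into a single left-to-right sweep. As in Buchin~\etal, we first reduce \problemtwo under the continuous Fr\'echet distance to solving one instance of Subproblem~\ref{def:p3} for every candidate reference subtrajectory, i.e.\ for every choice of starting point~$s$ and ending point~$t$ on~$T$ whose vertical lines $l_s,l_t$ in $F_d(T,T)$ pass through a critical point. By Lemma~\ref{lem:cfd.a2a.n^3_critical_points} there are only $O(n^3)$ internal critical points, and together with the $O(n^2)$ external critical points this bounds the number of subproblems to consider by $O(n^3)$. The first subtask is to build the directed reachability graph $G=(V,E)$ of Key Insight~2, where $V$ is the set of $O(n^2)$ external critical points and $|E|=O(n^2\log n)$, such that two critical points are joined by a monotone path in the continuous free space if and only if they are joined by a directed path in~$G$. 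Establishing this equivalence (cell corners, ellipse--boundary intersections, and monotonicity of the free-space interval within a cell) is done in Section~\ref{sec:cfd.graph}.

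Next, I would process the $O(n^3)$ subproblems in sweep order: sort the candidate lines $l_s$ and $l_t$ by $x$-coordinate and advance $l_s,l_t$ from left to right, exactly as in Key Insight~1. During the sweep I maintain the link-cut tree data structure over~$V$ with the invariant that every node has a monotone path to its root; advancing past a critical point triggers $O(1)$ edge insertions/deletions in~$G$ and hence $O(1)$ link-cut operations, each costing $O(\log n)$. Over the $O(n^3)$ sweep events this is $O(n^3\log n)$, plus an $O(\log n)$ reachability query per subproblem, so this component stays well within the $O(n^3\log^2 n)$ budget, generalising the $O(n^2\log^2 n)$ vertex-to-vertex algorithm of Section~\ref{sec:cfd.jump.v2v}.

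The remaining, dominant task is to decide, for each subproblem, whether there exist $m-1$ monotone paths from $l_s$ to $l_t$ whose $y$-coordinates pairwise overlap in at most one point and each of which overlaps $[y_s,y_t]$ in at most one point. Naively enumerating such paths costs $\Theta(m)$ per subproblem, giving the $O(n^3 m)$ bound we wish to avoid. Instead I would observe that each feasible monotone path corresponds to a $y$-interval, and that the family of achievable $y$-intervals changes by only $O(1)$ intervals whenever the sweep crosses a critical point; the condition to test is then precisely whether the current family, together with $[y_s,y_t]$, contains $m-1$ pairwise non-overlapping members. This is exactly what the dynamic monotone interval data structure of Gavruskin~\etal~\cite{DBLP:journals/tcs/GavruskinKKL15} supports under insertions and deletions in polylogarithmic time per operation. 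Routing the $O(n^3)$ interval updates through this structure at $O(\log^2 n)$ each yields the claimed $O(n^3\log^2 n)$ running time; correctness follows from the first-step reduction together with the equivalences established for~$G$ and for the monotone-interval reformulation.

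The main obstacle I expect is the bookkeeping that synchronises the three components during the sweep: simultaneously (i) keeping $G$ and the link-cut tree consistent, (ii) identifying which $y$-intervals enter or leave the feasible family when $l_s$ or $l_t$ crosses a critical point, and (iii) guaranteeing that each such event costs only $O(1)$ operations on the monotone interval data structure. A second delicate point is that the internal critical points of Lemma~\ref{lem:cfd.a2a.n^3_critical_points}---in particular points on the free-space boundary that merely share a $y$-coordinate with an external critical point---must be inserted into the sweep as genuine reference starting points, since, as the lower bound of Section~\ref{sec:3ov} shows, omitting them can change the answer; verifying that exactly these points (and no spurious ones) are enumerated, in the correct order, is where most of the care is required.
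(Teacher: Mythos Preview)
Your proposal is essentially the same as the paper's proof: it combines the reachability graph~$G'$, the link-cut sweep, and the dynamic monotonic interval data structure over the $O(n^3)$ internal critical points, and you correctly flag the hard part as showing that only $O(1)$ intervals change per event (the paper handles this via the order-preserving bijection between \emph{minimal} monotone paths and stored intervals, Lemma~\ref{lem:cfd.a2a.order_preserving_bijection}). One minor accounting discrepancy: inserting an internal critical point into~$G'$ adds $O(\log n)$ edges (not $O(1)$), so the link-cut work is $O(n^3\log n)$ edges times $O(\log n)$ per op $=O(n^3\log^2 n)$, which is the actual bottleneck; the interval data structure is only $O(\log n)$ per operation, not $O(\log^2 n)$---your extra log factor migrates to the wrong component but the final bound is unchanged.
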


\subsection{Lower Bound Overview}
\label{sec:overview_3ov}

Our aim is to show that the algorithms in Theorems~\ref{thm:main_dfd} and~\ref{thm:main_cfd} are essentially optimal, assuming SETH. We start by showing that there is no strongly subquadratic time algorithm for \problemtwo under the discrete Fr\'echet distance. We achieve this by reducing from Bringmann's~\cite{DBLP:conf/focs/Bringmann14} lower bound for computing the discrete Fr\'echet distance. 

\begin{theorem}
\label{theorem:bringmann_reduction}
There is no $O(n^{2-\varepsilon})$ time algorithm for \problemtwo under the discrete Fr\'echet distance, for any~$\varepsilon>0$, unless SETH fails.
\end{theorem}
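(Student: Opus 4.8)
The plan is to reduce from Bringmann's SETH-based quadratic lower bound for deciding whether the discrete Fr\'echet distance between two trajectories $P$ and $Q$ is at most $1$, and to realize this decision as an \problemtwo instance with $m = 2$. With $m = 2$, a subtrajectory cluster consists of a single reference subtrajectory together with exactly one other subtrajectory that is within Fr\'echet distance $d$, and the only remaining nontrivial constraints are that the reference has length at least $\ell$ and the two subtrajectories overlap in at most one point. So the task becomes: construct a single trajectory $T$ so that $T$ contains two vertex-disjoint (or single-vertex-overlapping) pieces that are within discrete Fr\'echet distance $d$, one of which has length $\geq \ell$, if and only if $d_{dF}(P,Q) \leq 1$.

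The key construction step is to glue $P$ and $Q$ into one trajectory $T = P \cdot B \cdot Q$, where $B$ is a short ``bridge'' path that moves the head of $P$ far away and then back to the start of $Q$, so that no subtrajectory straddling $B$ can ever be close to a subtrajectory lying entirely inside $P$ or entirely inside $Q$. I would place $P$ and $Q$ in a bounded region and route $B$ through a point at Euclidean distance $\gg d$ from that region, forcing any candidate cluster to have its reference piece inside $P$ and its matched piece inside $Q$ (or vice versa). I then set $d := 1$ and $\ell$ equal to (a lower bound on) the length of the full copy of $P$, so that the only reference subtrajectory long enough is essentially all of $P$ itself; the matched subtrajectory is then forced to be all of $Q$. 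The overlap condition is automatically satisfied because the $P$-part and the $Q$-part of $T$ are disjoint (separated by $B$). Hence a valid cluster exists iff $d_{dF}(P,Q) \leq 1$, and since $|T| = O(|P| + |Q|)$ the reduction is linear-size, so an $O(n^{2-\varepsilon})$ algorithm for \problemtwo would give an $O(n^{2-\varepsilon})$ algorithm for the discrete Fr\'echet decision problem, contradicting SETH by Bringmann's result.

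There is one subtlety I would need to handle carefully: the reference subtrajectory in \problemtwo may start and end at arbitrary vertices of $T$, not necessarily at the endpoints of the embedded copy of $P$, and with $\ell$ chosen as above I must ensure that \emph{no} other length-$\ell$ subtrajectory of $T$ (e.g.\ one that uses part of $P$, all of $B$, and part of $Q$) can be matched to some other piece of $T$ within distance $d$. Choosing $B$ long enough that any subtrajectory containing even one edge of $B$ is longer than $\ell$, while simultaneously keeping $\ell$ at most the length of $P$, requires a mild padding of $P$ and $Q$ (prepending/appending a fixed number of collinear vertices that do not affect the Fr\'echet distance). I would also pad $P$ and $Q$ to have equal length if needed, and verify that the added vertices do not create spurious short matches—this is routine since the padding points can be chosen at a designated location that is close (within $d$) only to its counterpart.

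The main obstacle is precisely this ``no spurious cluster'' argument: proving a clean \emph{if and only if} requires ruling out all alternative choices of the two subtrajectories, which means arguing about where the two pieces can lie given the length constraint and the geometric separation enforced by $B$. Once the separation gadget is set up so that (i) any length-$\ell$ piece that touches $B$ is too long, and (ii) a $P$-piece and a $Q$-piece are the only pair that can be within distance $d$, the equivalence follows, and the size and running-time bookkeeping is immediate. I therefore expect the bulk of the work to be in specifying $B$, $\ell$, and the padding precisely and checking these two separation properties.
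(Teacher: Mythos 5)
There is a genuine gap in the ``only if'' direction of your claimed equivalence. In \problemtwo, the length constraint $\ell$ applies only to the \emph{reference} subtrajectory; the other subtrajectory in the cluster merely has to be within Fr\'echet distance $d$ of the reference and nearly disjoint from it. In your construction $T = P \circ B \circ Q$ with $\ell \approx \mathrm{length}(P)$, the reference is indeed forced to be (essentially) all of $P$, but nothing forces its partner to be all of $Q$: the partner may be any subcurve $Q'$ of $Q$ with discrete Fr\'echet distance at most $1$ from $P$. Hence your SC instance can be a YES instance whenever some \emph{subcurve} of $Q$ is close to $P$, which does not imply that the full curves satisfy $d(P,Q)\le 1$; you would either have to prove SETH-hardness of this subcurve-matching variant (a different statement from the one you cite from Bringmann) or change the construction. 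Your proposed remedies do not address this: making the bridge $B$ long or far away only prevents pieces that straddle $B$ from being matched, and the remark that ``any length-$\ell$ piece that touches $B$ is too long'' has no force, since $\ell$ is only a lower bound on the reference length (such pieces are excluded by the distance argument, not by length).

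The paper closes exactly this hole with an anchoring trick that your single bridge lacks: it sets $T = A \circ T_1 \circ B \circ A \circ T_2 \circ B$, where $A$ and $B$ are two points at distance $\Theta(\lambda)$ from everything else, and takes $\ell = 8\lambda$, $m=2$, $d=1$. The length bound then forces the reference to contain an $A$ followed by a $B$, and—because a point within distance $1$ of $A$ (resp.\ $B$) can only be another copy of $A$ (resp.\ $B$)—the Fr\'echet constraint forces the \emph{partner} to also contain an $A$ followed by a $B$. Disjointness then pins the two pieces down to $A \circ T_1 \circ B$ and $A \circ T_2 \circ B$, so both curves are matched in their entirety and $d(T_1,T_2)\le 1$ follows. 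To repair your proof you would need to add such anchor vertices at both ends of both embedded curves (and choose $\ell$ accordingly), rather than relying on a single separating bridge and on $\ell$ alone.
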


\begin{proof}
Assuming SETH, Bringmann~\cite{DBLP:conf/focs/Bringmann14} constructs a pair of trajectories, $T_1$ and $T_2$, so that deciding whether the discrete Fr\'echet distance between $T_1$ and $T_2$ is at most one has no $O(n^{2 - \varepsilon})$ time algorithm for any $\varepsilon > 0$. Let the trajectories $T_1$ and $T_2$ lie within a ball of radius $r$ centered at the origin, and let~$\ell$ be the sum of the lengths of the trajectories $T_1$ and $T_2$. Let $\lambda = \ell + 2r$, and place point $A$ at $(-4\lambda,0)$ and point $B$ at $(4 \lambda,0)$. Construct the trajectory $T = A \circ T_1 \circ B \circ A \circ T_2 \circ B$. We will show that there is a subtrajectory cluster of $T$ with parameters $m=2$, $\ell=8\lambda$ and $d=1$ if and only if the discrete Fr\'echet distance between $T_1$ and $T_2$ is at most one.

Our key observation is that the subtrajectories $A \circ T_1 \circ B$ and $A \circ T_2 \circ B$ have discrete Fr\'echet distance of at most one if and only if $T_1$ and $T_2$ have a discrete Fr\'echet distance of at most one.

For the if direction, the pair of subtrajectories $A \circ T_1 \circ B$ and $A \circ T_2 \circ B$ each have length at least $8\lambda$, and have discrete Fr\'echet distance at most one from one another.

For the only if direction, suppose there is a subtrajectory cluster of size two. Note that subtrajectory $A \circ T_1$, or any other subtrajectory that contains at most one copy of $A$ or $B$, will have length strictly less than $8 \lambda$. Therefore, $A$ and $B$ must occur in both subtrajectories in the cluster. The corresponding $A$'s and $B$'s must match to one another for the discrete Fr\'echet distance to be at most one, so they must be in the same order in their respective subtrajectory. Without loss of generality, the first subtrajectory contains $A \circ T_1 \circ B$ and the second subtrajectory contains $A \circ T_2 \circ B$. So the discrete Fr\'echet distance between $T_1$ and $T_2$ must be at most one, as required.

As there is no $O(n^{2 - \varepsilon})$ time algorithm for deciding if the discrete Fr\'echet distance is at most one for any $\varepsilon > 0$, there is no $O(n^{2 - \varepsilon})$ time algorithm for \problemtwo under the discrete Fr\'echet distance.
\end{proof}

Next, we show that there is no strongly subcubic time algorithm for \problemtwo under the continuous Fr\'echet distance. We achieve this by reducing the three orthogonal vectors problem (3OV) to SC. 

\begin{problem}[3OV]
We are given three sets of vectors $\mathcal X = \{X_1, X_2, \ldots, X_n\}$, $\mathcal Y = \{Y_1, Y_2, \ldots, Y_n\}$ and $\mathcal Z = \{Z_1, Z_2, \ldots, Z_n\}$. For $1 \leq i,j,k \leq n$, each of the vectors $X_i$, $Y_j$ and $Z_k$ are binary vectors of length $W$. Our problem is to decide whether there exists a triple of integers $1 \leq i,j,k \leq n$ such that $X_i$, $Y_j$ and $Z_k$ are orthogonal. The three vectors are orthogonal if
$X_i[h] \cdot Y_j[h] \cdot Z_k[h] = 0$ for all $1 \leq h \leq W$.
\end{problem}

We employ a three step process in our reduction in Section~\ref{sec:3ov}. First, given a 3OV instance $(\mathcal X, \mathcal Y, \mathcal Z)$, we construct in $O(nW)$ time an \problemtwo instance $(T,m,\ell,d)$ of complexity $O(nW)$. Second, for this instance, we consider the free space diagram $F_d(T,T)$ and prove various properties of it. Third, we use the properties of $F_d(T,T)$ to prove that $(\mathcal X, \mathcal Y, \mathcal Z)$ is a YES instance if and only if $(T,m,\ell,d)$ is a YES instance.

Our reduction implies that there is no $O(n^{3-\varepsilon})$ time algorithm for SC for any $\varepsilon > 0$, unless SETH fails. If such an algorithm for \problemtwo were to exist, then by our reduction we would obtain an $O(n^{3-\varepsilon}W^{3 - \varepsilon})$ time algorithm for 3OV. But under the Strong Exponential Time Hypothesis (SETH), there is no $O(n^{3-\varepsilon}W^{O(1)})$ time algorithm for 3OV, for any $\varepsilon > 0$~\cite{williams2018some}.

Next, we present the three key components of our reduction. For the full reduction see Section~\ref{sec:3ov}. 

\subsubsection*{Key Component 1: Diamonds in continuous free space}

As a stepping stone towards the full reduction, we construct a trajectory $T$ so that $F_d(T,T)$ has $\Theta(n^3)$ internal critical points. This weaker result shows that the analysis of our algorithm in Section~\ref{sec:cfd.jump.a2a} is essentially tight, up to polylogarithmic factors.

To obtain these $\Theta(n^3)$ internal critical points, we introduce the first key component of our reduction. It is a method to generate two curves $T_1$ and $T_2$, so that $F_d(T_2,T_1)$ consists predominantly of free space, with small regions of diamond-shaped non-free space. By varying the positions of the vertices on $T_1$ and $T_2$, we can change both the position and sizes of these small diamonds in $F_d(T_2,T_1)$.

To construct $T_1$ and $T_2,$ we use the polar coordinates in the complex plane. Recall that $r \cis \theta$ has distance~$r$ from the origin and is at an anticlockwise angle of $\theta$ from the positive real axis. The vertices of~$T_1$ will be on the ball of radius $r$ centered at the origin, as illustrated in Figure~\ref{fig:to_kc1_1}, left. The vertices of $T_2$ will be on the ball of radius~$r'$ centered at the origin, where $r' > 10r$. For now, define $\phi = \frac {\pi} 2$, although a much smaller value of $\phi$ will be used in the full reduction.

Let $O$ be the origin. Place the vertices $A$, $B$, $C$, $D$, respectively, at $r \cis 0$, $r \cis \phi$, $r \cis (2 \phi)$, and $r \cis (3 \phi)$. Place the vertices $A^+$, $B^+$, $C^+$, $D^+$, respectively, at $r' \cis \pi$, $r' \cis (\pi + \phi)$, $r' \cis (\pi + 2 \phi)$, and $r' \cis (\pi + 3 \phi)$. Note that $A$ and $A^+$ are diametrically opposite from one another, but on differently sized circles. Let $||\cdot||$ denote the Euclidean norm.  Define $d = ||A^+B||$. Then $A^+$ is within distance $d$ of $B$, $C$ and $D$, but not within distance~$d$ of $A$. Similarly, $A$ is within distance~$d$ of $B^+$, $C^+$ and $D^+$, but not within distance~$d$ of $A^+$. We call the pairs $(A^+,A)$, $(B^+,B)$, $(C^+,C)$ and $(D^+,D)$ antipodes. 

\begin{figure}[ht]
    \centering
    \includegraphics[width=0.7\textwidth]{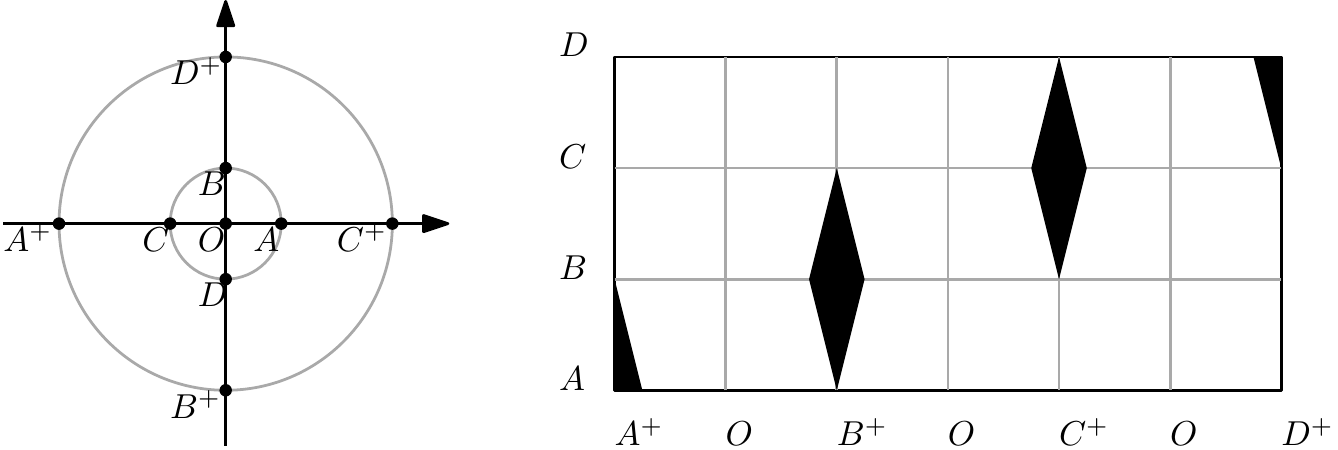}
    \caption{The vertices of $T_1$ and $T_2$ (left), and the free space diagram $F_d(T_2,T_1)$ (right).}
    \label{fig:to_kc1_1}
\end{figure}

Our key insight is that, if the only vertices that appear in $T_1$ are $A$, $B$, $C$ and $D$, and the only vertices that appear in $T_2$ are $A^+$, $B^+$, $C^+$ and $D^+$, then the free space diagram $F_d(T_2,T_1)$ will consist predominantly of free space, with small regions of non-free space. The centers of the regions of non-free space will have $x$-coordinate $v^+$ and $y$-coordinate $v$, where $(v^+,v)$ are antipodes. Section~\ref{sec:3ov_antipodal_property} is dedicated to properties of antipodal pairs, and Section~\ref{sec:3ov_free_space_diagram} is dedicated to how these small regions associated with the antipodes fit in with the rest of the free space diagram.

Let $A \circ B \circ C \circ D$ denote the trajectory formed by joining, with straight segments, the vertices $A$, $B$, $C$, and $D$ in order. Define:
\[
    \begin{array}{rcl}
        T_1 &=& A \circ B \circ C \circ D\\
        T_2 &=& A^+ \circ O \circ B^+ \circ O \circ C^+ \circ O \circ D^+.
    \end{array}
\]

The vertices of $T_1$ and $T_2$ and the free space diagram $F_d(T_2,T_1)$ are shown in Figure~\ref{fig:to_kc1_1}, left. It is not too difficult to see that by swapping the order of the vertices in $T_2$, or by inserting additional vertices into $T_2$, we can change the placements of our diamonds, or partial diamonds, in the free space in Figure~\ref{fig:to_kc1_1}, right. 

We would also like to vary the sizes of our diamonds. To do this, we introduce points that are approximately distance $r'$ from the origin. Let $B^-$ be on $BB^+$ so that $||BB^-|| = d$, that is, $B^- = (d-r) \cis \phi$. Similarly, define $C^- = (d-r) \cis (2 \phi)$. Next, let $B_1^+, B_2^+, \ldots, B_n^+$ be points on segment $B^-B^+$ so that $B^-,B_1^+,\ldots,B_n^+,B^+$ are evenly spaced. Then the antipodal pair $(B^+_i,B)$ would generate a different sized diamond to the antipodal pair $(B^+,B)$. This is because $||B^+_iB|| < d$, 
so the non-free space it generates will be smaller. 

We are ready to construct the trajectory $T$ with $n^3$ internal critical points. Define 

\[
    \begin{array}{rcl}
        T_1 &=& A \circ B \circ C \circ D, \\
        T_2 &=& \bigcirc_{1 \leq i \leq n} (B_i^+ \circ O) \circ \bigcirc_{1 \leq i \leq n} (C^+ \circ O),
    \end{array}
\]
and set $T = \bigcirc_{1 \leq i \leq n}(T_1) \circ T_2$. Note that $T$ has linear complexity.

\begin{figure}[ht]
    \centering
    \includegraphics{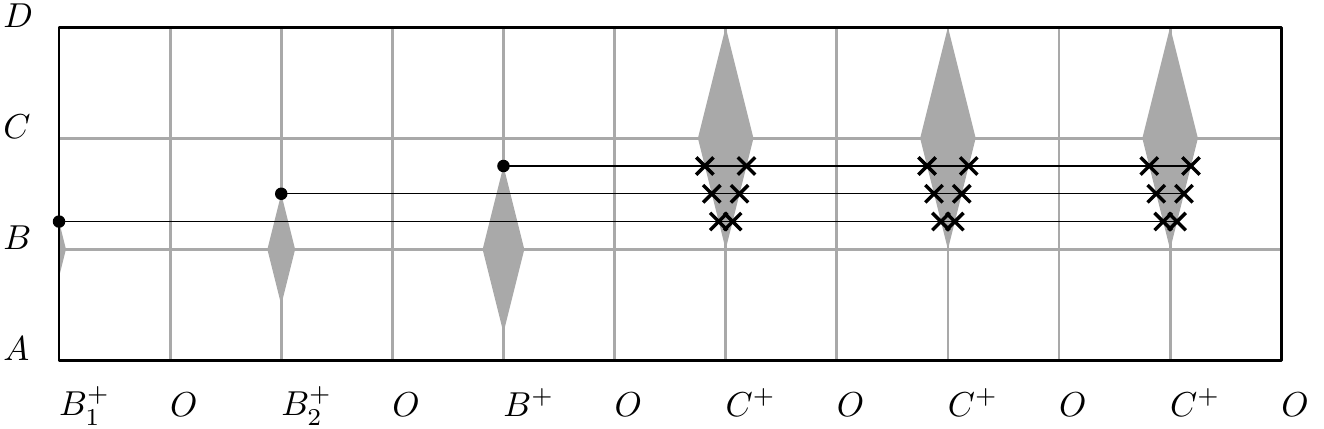}
    \caption{A free space diagram $F_d(T_2,T_1)$ with $2n^2$ critical points.}
    \label{fig:to_kc1_2}
\end{figure}

 In Figure~\ref{fig:to_kc1_2}, on the left, we have the diamonds associated with the antipodal pairs $(B^+_i,B)$. On the right, we have the diamonds associated with the antipodal pairs $(C^+,C)$. Each diamond on the left generates an external critical point for its topmost point (marked with dots). Moreover, all these external critical points have different $y$-coordinates. Each of these distinct $y$-coordinates generates two internal critical points on each diamond on the right (marked with crosses). Therefore, $F_d(T_2,T_1)$ has $2n^2$ internal critical points. Since $F_d(T,T)$ contains $n$ copies of $F_d(T_2,T_1)$ in it, we have that $F_d(T,T)$ contains $\Theta(n^3)$ critical points, as required.

\subsubsection*{Key Component 2: Combining diamonds to form gadgets}

Recall that in our first component, we generate pairs of curves $T_1$ and $T_2$, so that the only regions of non-free space in $F_d(T_2,T_1)$ are small diamonds. We can change $T_1$ and $T_2$ to vary the number, positions and sizes of the diamonds in $F_d(T_2,T_1)$. Our second key component is to position the diamonds in a way that encodes boolean formulas. To help simplify the description of these boolean formulas, we only consider the two sets $(\mathcal X, \mathcal Z)$, making the input a 2OV instance for now.

Our first gadget is an OR gadget and checks if one of two booleans is zero. Our second gadget is an AND gadget and checks if a pair of vectors are orthogonal.

Our OR gadget receives as input two booleans, $X$ and $Z$, and constructs a pair of curves $T_1$ and $T_2$. Let~$s$ and~$t$ be the starting and ending points of~$T_2$, and~$l_s$ and~$l_t$ be the vertical lines corresponding to~$s$ and~$t$. The trajectories $T_1$ and $T_2$ are constructed in such a way that, if $X \cdot Z = 0$ then there is a monotone path from $l_s$ to $l_t$, otherwise, there is no such monotone path.

We use the same definitions of vertices as in the first key component. The pairs $(A^+, A)$, $(B^+,B)$, $(C^+,C)$ and $(D^+,D)$ are antipodes.  Define $B^+_X = O$ if $X = 0$, and $B^+_X = B^+$ otherwise. Similarly, define $D^+_Z = O$ if $Y=0$, and $D^+_Y = D^+$ otherwise.

Now we are ready to construct the curves $T_1$ and $T_2$.

\[
    \begin{array}{rcl}
        T_1 &=& A \circ B \circ C \circ D\\
        T_2 &=& B^+ \circ O \circ C^+ \circ O \circ D^+ \circ O \\
            && \circ A^+ \circ O \circ B^+_X \circ O \circ C^+ \circ O \circ D^+_Z.
    \end{array}
\]

\begin{figure}[ht]
    \centering
    \includegraphics[width=0.7\textwidth]{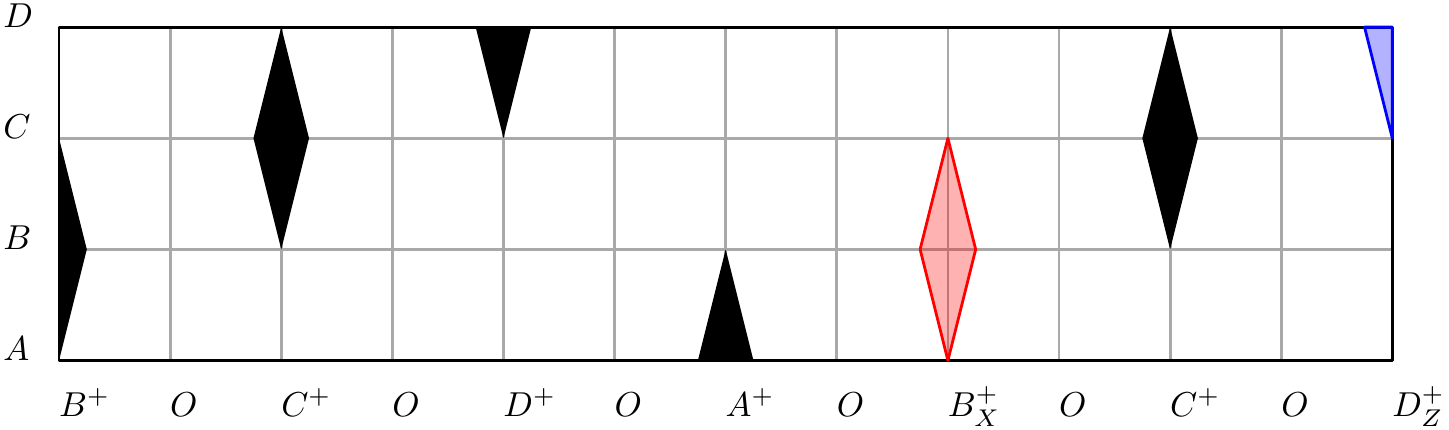}
    \caption{The free space diagram $F_d(T_2,T_1)$ for the OR gadget. The red diamond in column $B_X^+$ disappears if $X=0$, and the blue diamond in column $D_Z^+$ disappears if $Z=0$.}
    \label{fig:to_kc2_1}
\end{figure}

The free space diagram $F_d(T_2,T_1)$ is shown in Figure~\ref{fig:to_kc2_1}. The red diamond in column $B_X^+$ disappears if $X=0$, whereas the blue diamond in column $D_Z^+$ disappears if $Z=0$. If either one is zero, there is a gap for there to be a monotone path from $l_s$ to $l_t$. Otherwise, there is no gap, and no monotone path. This completes the description of the OR gadget.

Our AND gadget receives as input a pair of binary vectors $X = (X[1], X[2], \ldots, X[W])$ and $Z = (Z[1],Z[2],\ldots,Z[W])$, and constructs a pair of trajectories $T_1$ and $T_2$. Let $s$ and $t$ be the starting and ending points of $T_2$, and let $l_s$ and $l_t$ be the vertical lines corresponding to $s$ and $t$. If $X$ and $Z$ are orthogonal, in other words, if $X[h] \cdot Z[h] = 0$ for all $1 \leq h \leq W$, then the maximum number of monotone paths from $l_s$ to $l_t$ is $W$. If $X$ and $Z$ are not orthogonal, in other words, if $X[h] \cdot Z[h] = 1$ for some $1 \leq h \leq W$, then the maximum number of monotone paths from $l_s$ to $l_t$ is $W-1$. Let $r$ be a positive real and $r' > 10r$. Now, let $\phi = \frac {2\pi} {2W+4}$, let $d = ||r \cis \phi - r' \cis \phi||$ and, let $\varepsilon = r' + r - d$. 

\begin{wrapfigure}[5]{r}{0.49\textwidth}
    \centering
    \includegraphics[width=0.45\textwidth]{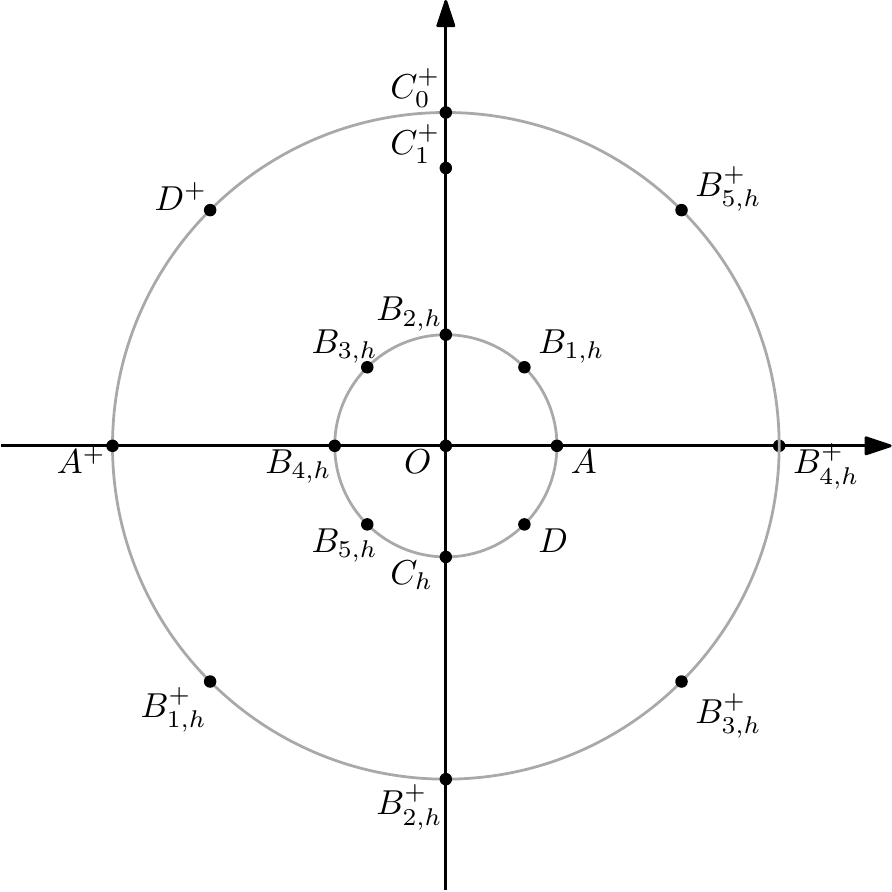}
    \caption{Vertices of $T_1$ and $T_2$, for $W=2$.}
    \label{fig:to_kc2_2}
\end{wrapfigure}

Next, we define the vertices of $T_1$ and $T_2$.
\[
    \begin{array}{rcl}
        A &=& r \cis 0 \\
        B_{u,h} &=& r \cis (u \cdot \phi), \text{ if $u \neq 2h$}\\
        B_{u,h} &=& (r - \frac 2 3 \varepsilon) \cis (u \cdot \phi), \text{ if $u = 2h$}\\
        C_h &=& r \cis ((2W+2) \cdot \phi), \text{ if $Z[h] = 1$}\\
        C_h &=& (r - \frac 2 3 \varepsilon) \cis ((2W+2) \cdot \phi), \text{ if $Z[h] = 0$}\\        
        D &=& r \cis ((2W+3) \cdot \phi) \\
        O &=& 0 \cis 0\\
        A^+ &=& r' \cis \pi \\
        B_{u,0}^+ &=& r' \cis (\pi + u \cdot \phi) \\
        B_{u,1}^+ &=& r' \cis (\pi + u \cdot \phi), \text{ if $u$ is odd}\\
        B_{u,1}^+ &=& r' \cis (\pi + u \cdot \phi), \text{ if $u$ is even and $X[\frac u 2] = 1$}\\        
        B_{u,1}^+ &=& (r' - \frac 2 3 \varepsilon) \cis (\pi + u \cdot \phi) \\&& \quad \text{if $u$ is even and $X[\frac u 2] = 0$}\\
        C_0^+ &=& r' \cis (\pi + (2W+2) \cdot \phi) \\
        C_1^+ &=& (r'-\frac 2 3 \varepsilon) \cis (\pi + (2W+2) \cdot \phi) \\
        D^+ &=& r' \cis (\pi + (2W+3) \cdot \phi) \\        
    \end{array}
\]

Now we can construct $T_1$ and $T_2$.
\[
    \begin{array}{rcl}
        T_1 &=& \bigcirc_{1 \leq h \leq W} \big( A \circ \bigcirc_{1 \leq u \leq 2W+1}(B_{u,h}) \circ C_h \circ D\big) \circ A\\
        T_2 &=& \bigcirc_{1 \leq u \leq 2W+1}(B^+_{u,0} \circ O) \circ C_0^+ \circ O \\
            && \circ A \circ O \circ \bigcirc_{1 \leq u \leq 2W+1}(B^+_{u,1} \circ O) \circ C_1^+ \circ O \\
            && \circ A \circ O \circ \bigcirc_{1 \leq u \leq 2W+1}(B^+_{u,0} \circ O) \circ D^+ \circ O \circ A \circ O \circ C^+_0 \circ O \circ D \circ O
    \end{array}
\]

\begin{figure}[ht]
    \centering
    \includegraphics[width=0.55\textwidth]{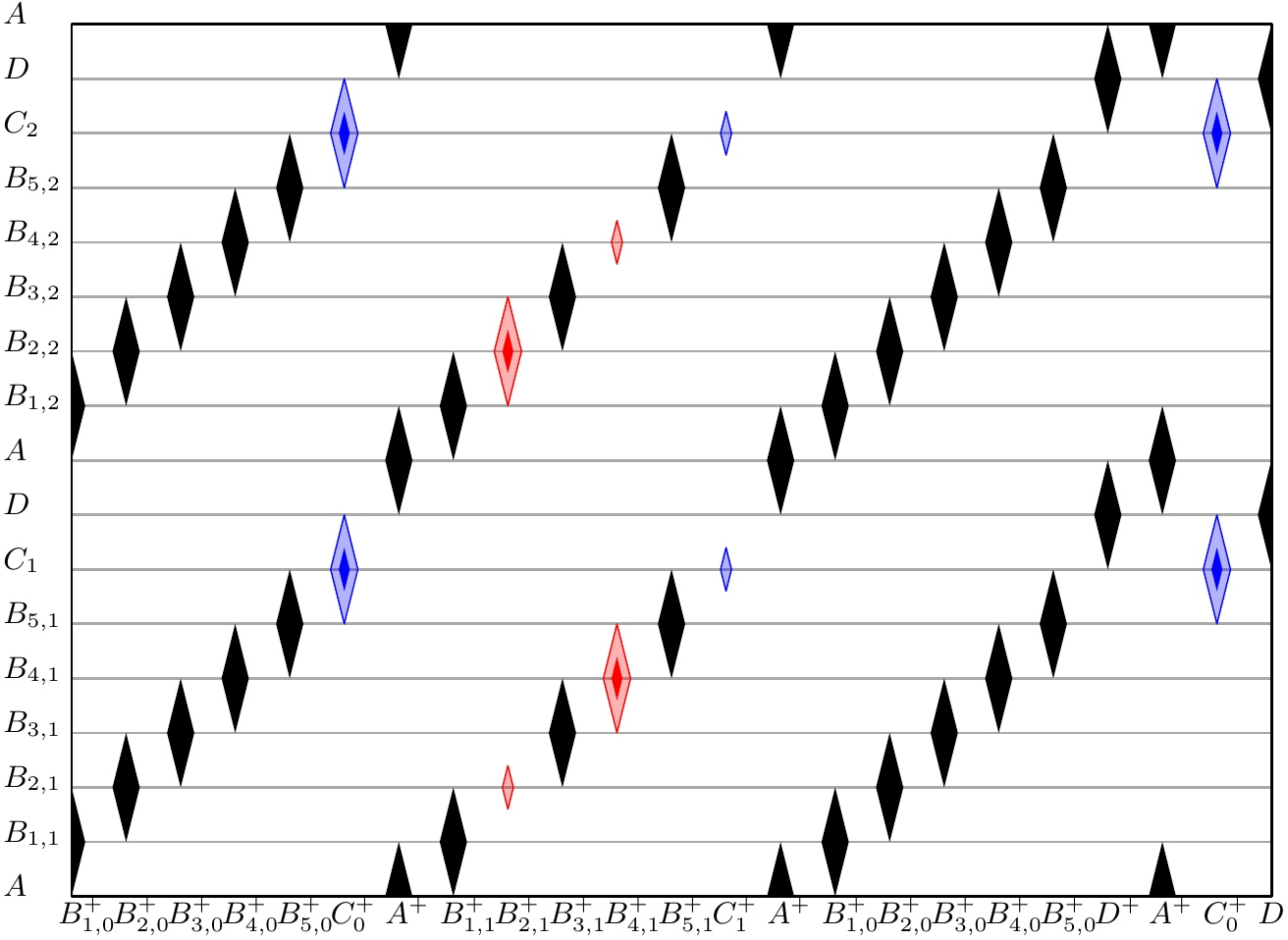}
    \caption{The free space diagram $F_d(T_2,T_1)$ for the AND gadget, for $W=2$.}
    \label{fig:to_kc2_3}
\end{figure}

The free space diagram $F_d(T_2,T_1)$ is shown in Figure~\ref{fig:to_kc2_3}, for $W=2$. The labels for the repeated $O$'s are omitted from the $x$-axis of $F_d(T_2,T_1)$. For $W=2$, as we can see, there are four red diamonds, in columns $B_{2,1}^+$ and $B_{4,1}^+$, and six blue diamonds, in columns $C_0^+$, $C_1^+$ and $C_0^+$. The red diamond in column $B_{2,1}^+$ and row $B_{2,1}$ disappears if and only if $X[1] = 0$. The red diamond in column $B_{4,2}^+$ and row $B_{4,2}$ disappears if and only if $X[2] = 0$. The other red diamonds may shrink, but do not completely disappear. The blue diamond in column $C_1^+$ and row $C_1$ disappears if and only if $Z[1] = 0$. The blue diamond in column $C_1^+$ and row $C_2$ disappears if and only if $Z[2] = 0$. The other blue diamonds may shrink, but do not completely disappear.

The bottom half of the free space diagram is essentially an OR gadget for $X[0]$ and $Z[0]$, and the top half is essentially an OR gadget for $X[1]$ and $Z[1]$. In Figure~\ref{fig:to_kc2_4} left, the vectors $X$ and $Z$ are orthogonal. In this case, there are two monotone paths, one for the OR gadget $X[0] \cdot Z[0] = 0$ and one for the OR gadget $X[1] \cdot Z[1] = 0$. In Figure~\ref{fig:to_kc2_4} right, the vectors $X$ and $Z$ are not orthogonal. In this case, there is a maximum of one monotone path, and this monotone path passes through the ``gap" between the two OR gadgets. 

\begin{figure}[ht]
    \centering
    \includegraphics[width=0.49\textwidth]{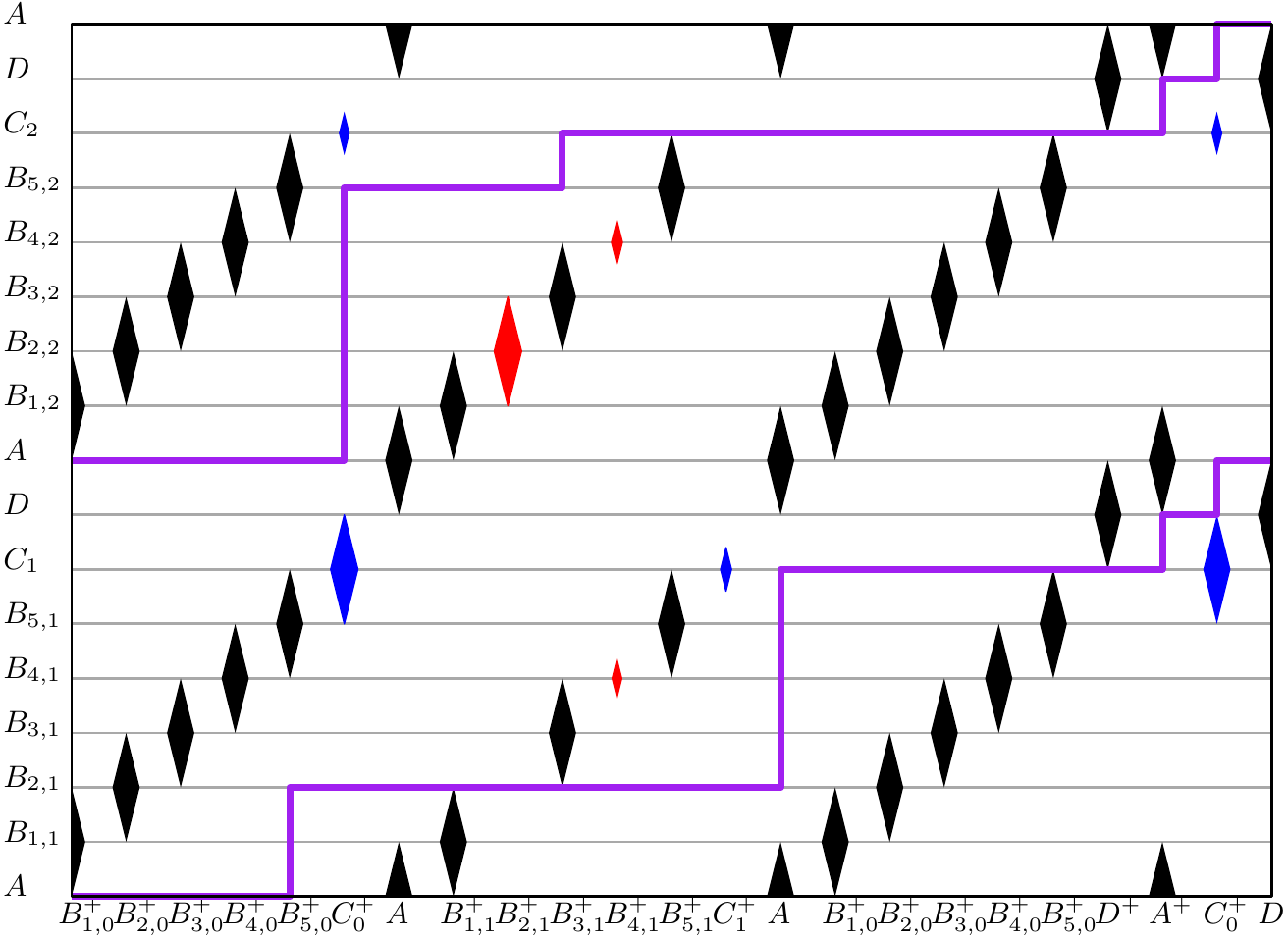}
    \includegraphics[width=0.49\textwidth]{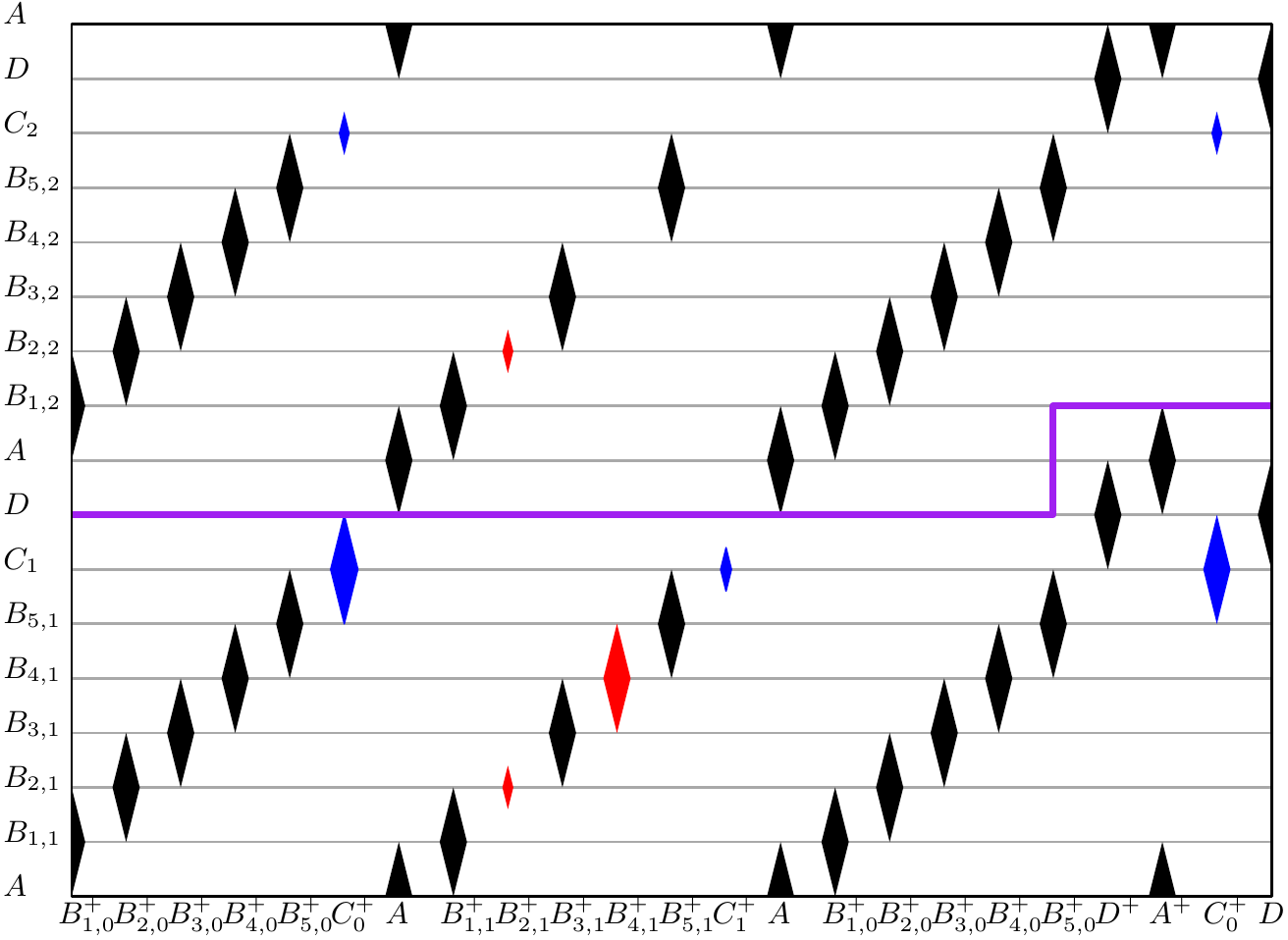}
    \caption{For $W=2$, if $X$ and $Z$ are orthogonal, there are $W$ paths (left), whereas if $X$ and $Z$ are not orthogonal, there are $W-1$ paths (right).}
    \label{fig:to_kc2_4}
\end{figure}

For general values of $W$, the AND gadget is a stack of OR gadgets for $X[h]$ and $Z[h]$ for $1 \leq h \leq W$. If $X[h] \cdot Z[h] = 0$ for all $1 \leq h \leq W$, we obtain $W$ monotone paths, one per OR gadget. Otherwise, we obtain a maximum of $W-1$ monotone paths, one for each gap between two consecutive OR gadgets. This completes the description of the AND gadget.

\subsubsection*{Key Component 3: Combining gadgets to form the full reduction}

The gadgets in our full construction are inspired by the gadgets in our second key component. However, the gadgets in our full construction are more sophisticated in two important ways. 

First, in the gadgets constructed so far, we only consider a single pair of vertical lines, that is, the pair of vertical lines that correspond to the start and end points of $T_2$. In our full construction, we consider all vertical lines that start and end at internal critical points. In particular, we consider $n^2$ pairs of vertical lines that correspond to a pair of integers $(i,j)$ where $1 \leq i,j \leq n$. Each of these vertical lines passes through $nW$ internal critical points. The $nW$ internal critical points correspond to pairs of integers $(k,h)$ where $1 \leq k \leq n$ and $1 \leq h \leq W$. 

Second, in the gadgets we have constructed so far, we only consider two sets of binary vectors. In our full construction, we consider three sets of binary vectors, $\mathcal X$, $\mathcal Y$ and $\mathcal Z$. 

With these two key differences in mind, we can describe the gadgets in our full construction. We receive as input a 3OV instance, in other words, we are given three sets $\mathcal X$, $\mathcal Y$ and $\mathcal Z$, each containing $n$ binary vectors of length $W$. 

First, we describe the OR gadget that appears in our full construction. There will be $nW$ copies of the OR gadget. Given $1 \leq k \leq n$ and $1 \leq h \leq W$, the OR gadget for the pair $(k,h)$ satisfies the following property for all $1 \leq i,j \leq n$: there are two monotone paths from $l_s$ to $l_t$ if $X_i[h] \cdot Y_j[h] \cdot Z_k[h] = 0$, whereas there is a maximum of one monotone path from $l_s$ to $l_t$ if $X_i[h] \cdot Y_j[h] \cdot Z_k[h] = 1$, where $l_s$ and $l_t$ are the vertical lines corresponding to the pair $(i,j)$. 

In Figure~\ref{fig:to_kc3_1}, there is one monotone path from $l_s$ to $l_t$, and this is the only monotone path in the OR gadget in the case that $X_i[h] \cdot Y_j[h] \cdot Z_k[h] = 1$. On the other hand, if $X_i[h] \cdot Y_j[h] \cdot Z_k[h] = 0$, then there are two monotone paths in the OR gadget.

\begin{figure}[ht]
    \centering
    \includegraphics[width=\textwidth]{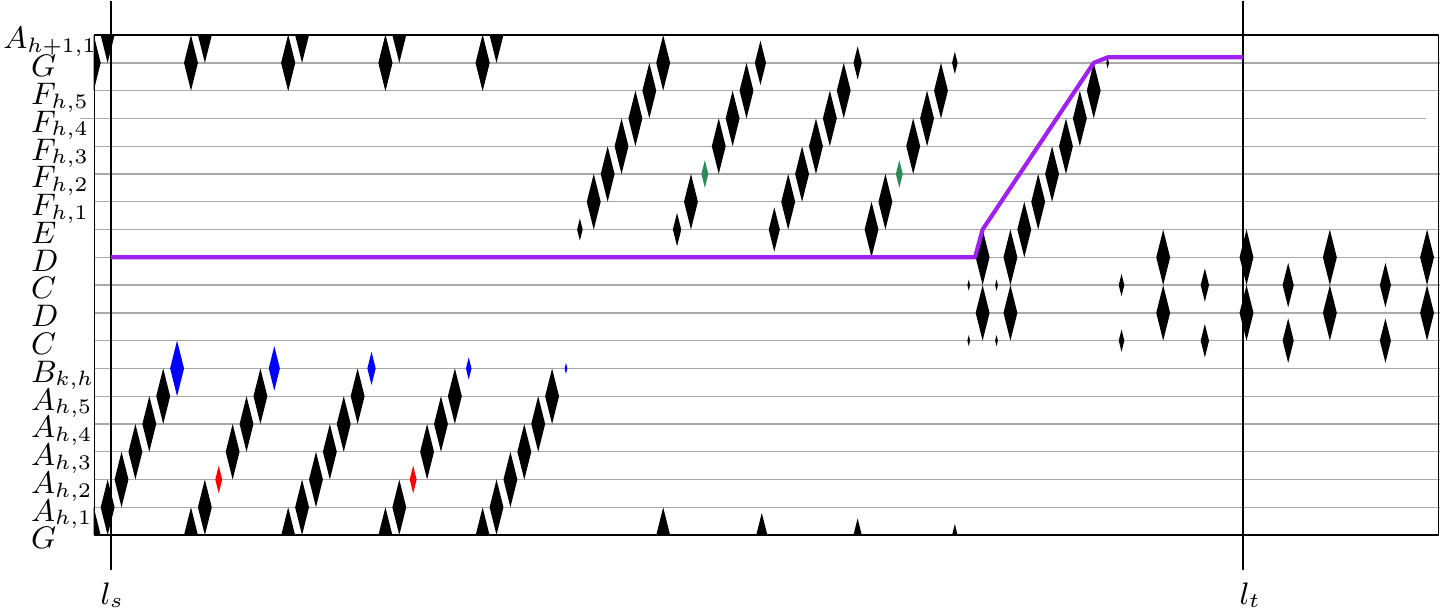}
    \caption{The OR gadget for booleans $X_i[h]$, $Y_j[h]$ and $Z_k[h]$.}
    \label{fig:to_kc3_1}
\end{figure}

In Figure~\ref{fig:to_kc3_X}, we show the two monotone paths from $l_s$ to $l_t$ in three separate cases: for $X_i[h] = 0$, $Y_j[h] = 0$ and $Z_k[h] = 0$. We briefly describe the behaviour in these three cases. If $X_i[h] = 0$, a red diamond in the bottom right disappears and one of the two monotone paths passes through this gap. If $Y_j[h] = 0$, a green diamond disappears and one of the two monotone paths passes through this gap. Finally, if $Z_k[h]=0$, all blue diamonds shrink in size, allowing the lower monotone path to have a much smaller maximum $y$-coordinate.

\begin{figure}[tb]
    \centering
    \includegraphics[width=0.9\textwidth]{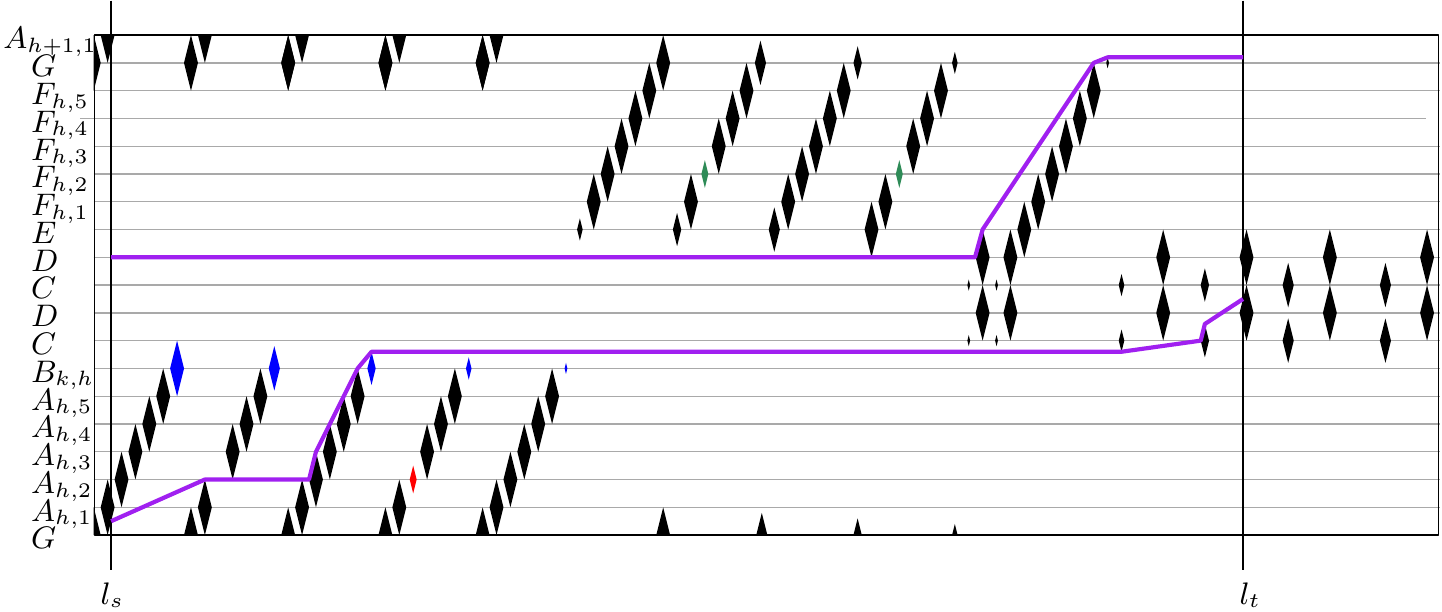}
    \includegraphics[width=0.9\textwidth]{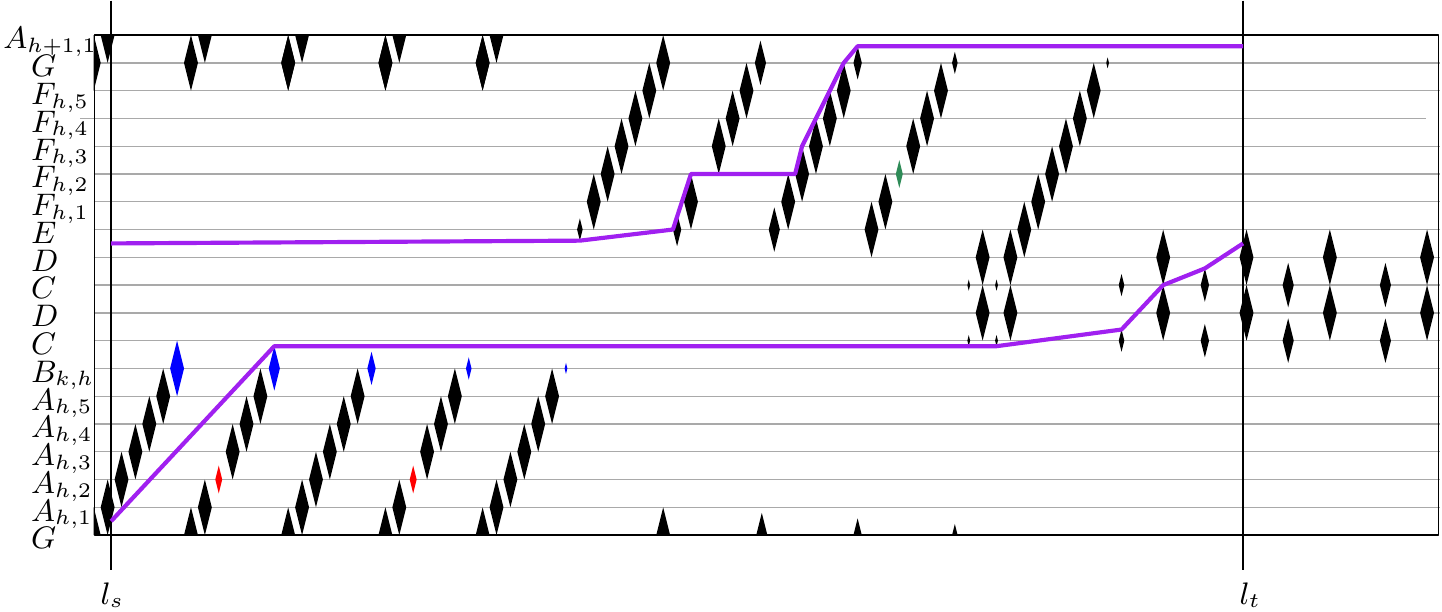}
    \includegraphics[width=0.9\textwidth]{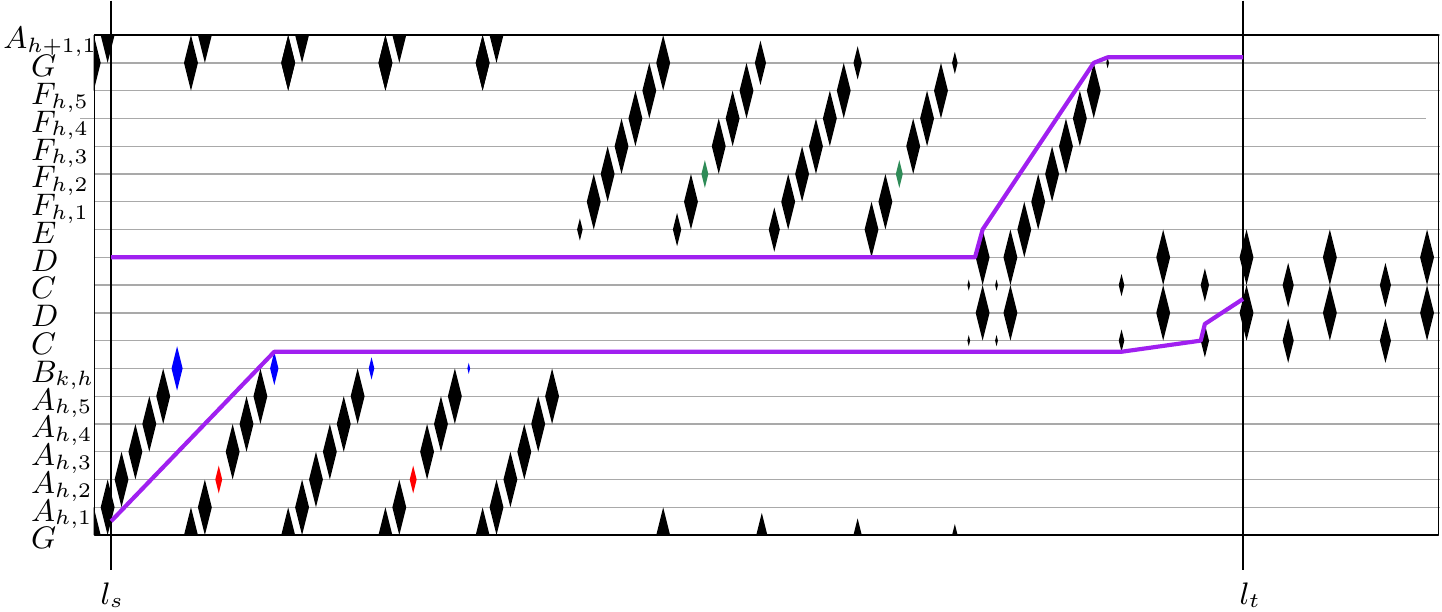}
    \caption{The two monotone paths from $l_s$ to $l_t$ in the cases where $X_i[h]=0$, $Y_j[h] = 0$ and $Z_k[h]=0$, respectively}
    \label{fig:to_kc3_X}
\end{figure}

It is worth noting the connection between these OR gadgets and internal critical points. In all three cases, the starting point of the first monotone path is an internal critical point. If $Y_j[h]=0$, all starting and ending points of the two monotone paths are either internal critical points, or share a $y$-coordinate with an internal critical point. 

Similarly to the AND gadget in the second key component, we stack $W$ copies of the OR gadget on top of each other. Each stack of OR gadgets checks if a triple of vectors $(X_i, Y_j, Z_k)$ are orthogonal. There are $2W+1$ monotone paths from~$l_s$ to~$l_t$ if $X_i[h] \cdot Y_j[h] \cdot Z_k[h] = 0$ for all $1 \leq h \leq W$, whereas there are $2W$ paths from~$l_s$ to~$l_t$ if $X_i[h] \cdot Y_j[h] \cdot Z_k[h] = 1$ for some $1 \leq h \leq W$. 

\FloatBarrier

Finally, we combine these AND gadgets to form the full construction. We give a high level overview. We stack, one on top of another, $n$ copies of the AND gadgets. We add non-free space between consecutive AND gadgets, so that monotone paths in one AND gadget cannot interact with monotone paths in another AND gadget. We let the two curves that generate this free space be $T_1$ and $T_2$. We join $T_1$ and~$T_2$ end to end to form the final curve and set $m = 2nW+2$. The intuition behind setting $m=2nW+2$ is that, as there are $n$ copies of the AND gadget, if there are $m-1$ monotone paths, then by the pigeonhole principle, we must have one AND gadget with $2W+1$ monotone paths, which implies that $X_i$, $Y_j$ and $Z_k$ are orthogonal for some triple $(i,j,k)$. Otherwise, all AND gadgets have $2W$ monotone paths, so for all triples $(i,j,k)$, there is some $h$ so that $X_i[h] \cdot Y_j[h] \cdot Z_k[h] = 1$. Putting this all together yields the following theorem.

\begin{restatable}{theorem}{theoremfour}
\label{theoremfour}
There is no $O(n^{3-\varepsilon})$ time algorithm for \problemtwo under the continuous Fr\'echet distance, for any $\varepsilon>0$, unless SETH fails.
\end{restatable}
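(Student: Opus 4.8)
The plan is to reduce 3OV to \problemtwo under the continuous Fr\'echet distance, following the three-step template of Section~\ref{sec:3ov}. Given a 3OV instance $(\mathcal X,\mathcal Y,\mathcal Z)$ with $|\mathcal X|=|\mathcal Y|=|\mathcal Z|=n$ and vectors of length $W$, I would construct in $O(nW)$ time a trajectory $T$ of complexity $N=O(nW)$ together with parameters $m=2nW+2$, $\ell$, and $d$, so that $(T,m,\ell,d)$ is a YES instance of \problemtwo if and only if $(\mathcal X,\mathcal Y,\mathcal Z)$ contains an orthogonal triple. An $O(N^{3-\varepsilon})$ algorithm for \problemtwo would then solve 3OV in $O((nW)^{3-\varepsilon})=O(n^{3-\varepsilon}W^{3-\varepsilon})$ time, which SETH forbids.

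I would build $T$ bottom-up. Step~1 is the diamond primitive of Key Component~1: placing the vertices of $T_1$ on the circle of radius $r$ and the vertices of $T_2$ on the circle of radius $r'>10r$, expressed in complex polar coordinates and with $T_2$'s vertices interleaved with copies of the origin $O$, makes $F_d(T_2,T_1)$ consist almost entirely of free space, broken only by small diamond-shaped obstacles --- one per antipodal vertex pair, centred at the corresponding $(x,y)$-position. The quantitative core of this step is a monotonicity analysis of the ellipse (distance-to-segment) functions defining free space inside a cell: displacing a vertex of $T_1$ or $T_2$ radially toward $O$ shrinks the corresponding diamond continuously, and the diamond disappears precisely when the displacement reaches a threshold fixed by identities such as $\varepsilon=r'+r-d$. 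These are the antipodal-property and free-space-diagram lemmas, which I would prove by direct computation on cell boundaries.

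Step~2 builds the gadgets of Key Components~2 and~3. An OR gadget arranges one ``$\mathcal X$-diamond'', one ``$\mathcal Y$-diamond'', and a band of ``$\mathcal Z$-diamonds'' so that, for the vertical lines $l_s,l_t$ indexed by a pair $(i,j)$ and the internal critical point indexed by $(k,h)$ lying on them, there are two monotone paths from $l_s$ to $l_t$ when $X_i[h]\cdot Y_j[h]\cdot Z_k[h]=0$ and at most one otherwise; both directions follow from $xy$-monotonicity of free-space paths, since a second path exists exactly when an obstacle has shrunk away to leave a gap, and with no gap the stacked diamonds force every path through an obstacle. Stacking $W$ OR gadgets between shared funnels yields an AND gadget whose maximum number of monotone $l_s$-to-$l_t$ paths that pairwise overlap in at most one point is exactly $2W+1$ when $X_i,Y_j,Z_k$ are orthogonal and exactly $2W$ otherwise; this exact count must be proved against the overlap-in-at-most-one-point constraint of Subproblem~\ref{def:p3}. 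Finally, stack $n$ AND gadgets, one per index $k$, insert bands of non-free space between consecutive AND gadgets so that no monotone path can pass from one AND gadget to another, concatenate the two curves generating this diagram to form $T$, and pick $\ell$ so that the only reference subtrajectories of Euclidean length at least $\ell$ are the $n^2$ canonical ones, whose endpoints sit at the designated internal critical points indexed by $(i,j)$.

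Step~3 is correctness: for the reference subtrajectory indexed by $(i,j)$, the $n$ AND gadgets jointly admit $m-1=2nW+1$ monotone paths, pairwise overlapping in at most one point, if and only if some AND gadget admits $2W+1$ (pigeonhole), which happens if and only if $X_i,Y_j,Z_k$ are orthogonal for some $k$; ranging over all pairs $(i,j)$ gives the stated equivalence. The main obstacle is the multiplexing in Key Component~3: a single trajectory $T$ of complexity only $O(nW)$ must encode, simultaneously for all $n^2$ pairs $(i,j)$, the correct OR-gadget behaviour governed by $X_i[h]\cdot Y_j[h]\cdot Z_k[h]$. Concretely, the $\mathcal X$-diamonds visible from the $l_s$-column at index $i$ must depend on $X_i$ and on no other $X_{i'}$, the $\mathcal Y$-diamonds visible from the $l_t$-column at index $j$ only on $Y_j$, and the $\mathcal Z$-diamonds in the $k$-th AND gadget only on $Z_k$ --- all realised by radial vertex displacements on the two circles and all within linear total complexity. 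Verifying that the internal critical points line up exactly as intended, that precisely the $n^2$ canonical reference subtrajectories meet the length bound while no spurious subtrajectory does, and that the path counts are \emph{exactly} $2W+1$ versus $2W$ (no extra path obtained by rerouting, and none lost), is a delicate global case analysis of which diamonds block which monotone paths; this is where most of Section~\ref{sec:3ov} will be spent, and the step I expect to be hardest.
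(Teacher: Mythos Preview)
Your plan matches the paper's approach at the architectural level: same 3OV-to-\problemtwo reduction, same diamond primitive via antipodal pairs on concentric circles, same OR/AND gadget hierarchy with path counts $2W+1$ versus $2W$, same $m=2nW+2$ and pigeonhole over the $n$ stacked AND gadgets, and the same $O((nW)^{3-\varepsilon})$ running-time calculation. The YES direction is essentially as you describe and is handled in the paper by exhibiting the $2nW+1$ paths explicitly (Section~\ref{sec:3ov_if}).

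Where you diverge from the paper is the NO direction, specifically how you intend to rule out ``spurious'' reference subtrajectories. You propose to choose $\ell$ so that \emph{only} the $n^2$ canonical reference subtrajectories meet the length bound; this cannot literally be arranged, since $s$ and $t$ vary continuously along $T$ and there is a continuum of subtrajectories of any fixed length. The paper instead allows an arbitrary $(s,t)$ with subtrajectory length $\geq\ell$ and, assuming $m-1$ non-overlapping monotone paths exist, \emph{narrows down} the position of $s$ in stages (Lemmas~\ref{lem:prec_h_s_g}--\ref{lem:3ov_no_jj_to_orthogonal}): first $H_{n,1}\prec s\prec G_{2n+1,1}^+$, then $G_{2i-1,1}^+\prec s\preceq G_{2i+1,1}^+$ for some $i$, then $J^+_{2i-1,2j-1}\prec s\preceq J^+_{2i-1,2j+1}$ for some $j$, and finally derives $X_i[h]\cdot Y_j[h]\cdot Z_k[h]=0$ for all $h$. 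The padding points $P_i,Q_i$ (Section~\ref{sec:3ov_subtrajectory_property}) are there precisely to make the canonical lengths equal to $\ell$, not to exclude all other lengths.

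Each narrowing step, and the final upper bound on the number of paths inside an AND gadget, relies on a tool your proposal does not name: \emph{cutting sequences} (Section~\ref{sec:3ov_cuts} and Lemma~\ref{lem:3ov_cutting_sequence}). A cutting sequence certifies the nonexistence of a monotone path $f(x,y)\to f(w,z)$ by exhibiting an alternating dominate/undermine chain of points threaded through the diamond obstacles. This is the formal device that turns your informal ``stacked diamonds force every path through an obstacle'' into a proof, and it is what drives the entire NO-instance argument; you should expect to spend most of the work there rather than on calibrating $\ell$.
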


\section{Discrete Fr\'echet Distance}
\label{sec:dfd}

The main theorem that we will prove in this section is:

\theoremone*

We will be using the discrete free space diagram extensively in our algorithm. Recall that for the discrete Fr\'echet distance, the free space diagram $F_d(T,T)$ consists of $n^2$ grid points. A grid point $(x,y)$ is free if vertices $x$ and $y$ of the trajectory $T$ are within distance $d$ of one another. A monotone path is a sequence of free grid points where a grid point $(x,y)$ is followed by $(x+1,y)$, $(x,y+1)$, or $(x+1,y+1)$. See Appendix~\ref{sec:appendix_frechet_freespace} for a formal discussion of the discrete free space diagram.

In Section~\ref{sec:dfd.dfs}, we show how to solve Subproblem~\ref{def:p3} in $O(nl)$ time, where $l = t-s$, under the discrete Fr\'echet distance. Then, in Section~\ref{sec:dfd.jump}, we show how to extend this to an algorithm that solves \problemtwo in $O(n^2 \log n)$ time, under the discrete Fr\'echet distance.

\subsection{Subproblem~\ref{def:p3} under the discrete Fr\'echet distance}
\label{sec:dfd.dfs}

We inductively define our algorithm for Subproblem~\ref{def:p3} under the discrete Fr\'echet distance. In the base case, we compute the monotone path $P_1$ from $l_s$ to $l_t$ that minimises its maximum $y$-coordinate. In the inductive case, we compute the monotone path $P_i$ from $l_s$ to $l_t$ that minimises its maximum $y$-coordinate, under the condition that $P_i$ does not overlap in $y$-coordinate with $P_1, P_2, \ldots, P_{i-1}$ or the $y$-interval corresponding to the reference subtrajectory.

To compute each of the paths $P_i$, we begin by picking its starting point of $l_s$. Initially, we mark all free grid points as valid, and all non-free grid points as invalid. For $i=1$, we pick the lowest valid grid point on $l_s$ as the initial starting point. For $i>1$, we pick the lowest valid grid point on $l_s$ that has $y$-coordinate at least the maximum $y$-coordinate on $P_{i-1}$. From this initial starting point we begin a greedy depth first search to compute $P_i$.

Any grid point has up to three neighbouring grid points to explore: $(x+1,y), (x+1,y+1)$ and $(x,y+1)$, as long as these grid points are valid. Similar to the standard depth first search, we explore each branch of the search tree as far as possible first before backtracking. The greedy aspect of our depth first search is to first explore the neighbour $(x+1,y)$, then $(x+1,y+1)$, and finally $(x,y+1)$. The intuition is that we would like to minimise the $y$-coordinate of our search. If we are forced to backtrack, i.e. if all neighbours lead to dead ends, we mark the grid point as invalid and backtrack to its original parent. This is the only way a free grid point becomes invalid. Once a grid point is marked as invalid, it is never reverted back to valid, and can never be used in any monotone paths. In Figure~\ref{fig:dfd_1}, backtracking occurred on the red paths, so these cells will be marked as invalid.

\begin{figure}[ht]
    \centering
    \includegraphics{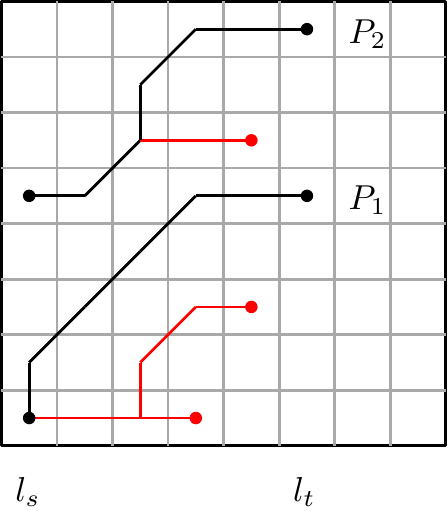}
    \caption{Our greedy depth first search algorithm searches for the monotone paths $P_i$. The red paths are dead ends, so we backtrack and then continue searching.}
    \label{fig:dfd_1}
\end{figure}

The greedy depth first search halts if one of the following three conditions are met. First, if our algorithm reaches $l_t$, we have computed $P_i$, and therefore we halt the search. We continue by computing the next monotone path $P_{i+1}$. Second, if our algorithm backtracks to our initial starting point, then our algorithm cannot find a monotone path starting at this point, and therefore we halt the search. We continue by trying to compute $P_i$, but starting from a higher valid grid point on $l_s$. Third, if our algorithm moves to a grid point with $y$-coordinate strictly between~$s$ and~$t$, then the monotone path intersects the reference trajectory at more than one point, which contradicts the conditions of Subproblem~\ref{def:p3}, and therefore we halt the search. We continue by trying to compute $P_i$ again, but we start at the lowest valid grid point on $l_s$ that has $y$-coordinate at least~$t$.

If our algorithm computes $P_1, P_2, \ldots, P_{m-1}$, then our set of $m-1$ monotone paths are returned. Otherwise, if all valid grid points on $l_s$ are exhausted, then our algorithm returns that there is no set of $m-1$ monotone paths. This completes the statement of our algorithm. 

Next we argue the correctness of our algorithm. We make three observations. Each observation follows immediately from the greedy depth first search and the following ordering on the outgoing edges: first $(x+1,y)$, then $(x+1,y+1)$, and finally $(x,y+1)$. For our third observation, we define $l_{\geq g}$ be the set of grid points with the same $x$-coordinate as $g$ and have $y$-coordinate greater than or equal to the $y$-coordinate of $g$.

\begin{observation}
\label{lem:dfd.dfs.always_valid}
Let $P$ be a monotone path from $l_s$ to $l_t$. Throughout the execution of the greedy depth first search, all grid points on $P$ will remain valid.
\end{observation}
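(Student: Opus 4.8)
The plan is to argue by contradiction, focusing on the \emph{first} grid point of $P$ that ever gets marked invalid. Suppose, for contradiction, that some grid point of $P$ is eventually marked invalid, and let $g=(x,y)$ be the first such grid point in chronological order. The argument rests on two basic facts about the greedy depth first search: (i) a free grid point becomes invalid only when the search backtracks from it, and the search backtracks from a grid point only after it has, in the greedy order, considered each neighbor $(x+1,y)$, $(x+1,y+1)$, $(x,y+1)$ and returned from each of them; and (ii) the search \emph{aborts} (rather than backtracks) the instant it reaches $l_t$ or steps onto a grid point with $y$-coordinate strictly between $s$ and $t$, and it also aborts --- without invalidating the initial starting point --- once it backtracks all the way to that point. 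Combining (i) and (ii): if, while processing $g$, the search recurses into a neighbor $g'$ of $g$ that is valid at that time and later returns to $g$ without the search having aborted, then $g'$ was marked invalid during that recursion.

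The key step is to examine the successor of $g$ along $P$. Since $g$ is eventually backtracked from, the search must at some earlier time have stepped onto $g$ from a parent ($g$ is not the initial starting point, which is never invalidated), and this step did not abort the search; hence $g$ is neither on $l_t$ nor strictly between $s$ and $t$. Being off $l_t$, $g$ is not the last grid point of $P$, so $P$ contains a successor $g^{*}$ of $g$; this $g^{*}$ is one of the three neighbors of $g$, and it is a free grid point (it lies on the monotone path $P$), hence valid at the start. Because $g$ is the first grid point of $P$ to be invalidated, $g^{*}$ is still valid when $g$ is marked invalid, and therefore also at the earlier moment when the search first considers $g^{*}$ while processing $g$; so the search steps from $g$ onto $g^{*}$. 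If $g^{*}$ lies on $l_t$, or has $y$-coordinate strictly between $s$ and $t$, the search aborts at that step and hence never backtracks from $g$, contradicting that $g$ is invalidated. Otherwise the search recurses from $g^{*}$, and since it must later return to $g$ in order to backtrack from $g$, the combination of (i) and (ii) forces $g^{*}$ to have been marked invalid before $g$ --- but $g^{*}$ lies on $P$, contradicting the choice of $g$. Either way we reach a contradiction, so no grid point of $P$ is ever invalidated, as claimed.

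I expect the main obstacle to be careful bookkeeping rather than any real difficulty: one must verify that when the search backtracks from $g$, \emph{every} neighbor of $g$ is invalid at that moment, which amounts to checking that none of the three abort conditions lets the search walk away from a neighbor of $g$ that is still valid; and one must ensure the ``successor of $g$ on $P$'' always exists, which is precisely why reaching $l_t$ must be an abort and not a backtrack. Once these details are settled, the minimality of $g$ closes the argument at once, and, conveniently, the same case split also rules out the degenerate possibilities that $g^{*}$ lies on $l_t$ or strictly between $s$ and $t$.
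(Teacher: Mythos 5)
Your proof is correct and is essentially the argument the paper intends: the paper gives no explicit proof, asserting that the observation ``follows immediately'' from the greedy depth first search, and your first-invalidated-point contradiction (a grid point is marked invalid only when every outgoing neighbour is already a dead end, while the three halting conditions abort the search rather than backtrack) is exactly that reasoning made precise. The only nitpick is the side claim that an initial starting point on $l_s$ is never invalidated --- the paper is ambiguous on this --- but your argument survives either reading, since invalidating such a point would equally require its successor on $P$ to have been invalidated first, contradicting minimality.
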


\begin{observation}
\label{lem:dfd.dfs.no_path}
Suppose our algorithm starts at a grid point on $l_s$ and does not find a monotone path to $l_t$. Then there is no monotone path from $l_s$ to $l_t$ starting at that grid point.
\end{observation}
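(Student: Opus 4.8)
The plan is to prove the contrapositive: assuming there is a monotone path $P$ from $l_s$ to $l_t$ whose first grid point is the grid point $g$ at which the search is launched (and, as dictated by Subproblem~\ref{def:p3}, a path whose $y$-coordinates overlap the interval $[s,t]$ in at most one point), I will show that the greedy depth first search started at $g$ does reach $l_t$, and hence does find a monotone path to $l_t$. The key external ingredient is Observation~\ref{lem:dfd.dfs.always_valid}: since $P$ is a monotone path from $l_s$ to $l_t$, every grid point of $P$ stays valid for the entire run of the search, so at no point is a grid point of $P$ marked invalid or turned into a dead end.

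The heart of the argument is a reverse induction along $P = (g = p_0, p_1, \dots, p_L)$ with $p_L \in l_t$, proving the claim: \emph{if the search ever arrives at $p_k$, then it reaches $l_t$.} The base case $k = L$ is immediate because arriving at a grid point on $l_t$ triggers the first halting condition. For the inductive step, $p_{k+1}$ is one of the at most three out-neighbours $(x{+}1,y)$, $(x{+}1,y{+}1)$, $(x,y{+}1)$ that the greedy search tries from $p_k = (x,y)$, and it is valid by Observation~\ref{lem:dfd.dfs.always_valid}; a depth first search only backtracks from $p_k$ after trying every valid out-edge, and it cannot backtrack from $p_k$ before then, since $p_k$ lies on a path to $l_t$ and hence is never a dead end. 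Moreover $p_{k+1}$ cannot have been ``used up'' by an earlier branch: it is still valid, so it was not marked invalid, and if it had already been fully explored the search would have already halted (fully exploring a grid point either reaches $l_t$ or marks it invalid). Hence the search steps from $p_k$ onto $p_{k+1}$ (unless it reached $l_t$ on an earlier sibling branch, in which case we are done), and induction gives the claim. Applying it with $k = 0$ finishes the contrapositive.

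I expect the one genuinely delicate point to be the interaction with the third halting condition, where the search aborts upon stepping into the open $y$-interval $(s,t)$ — one must argue that this abort does not occur along the exploration prefix forced by the induction before $l_t$ is reached. This will use that the greedy search only ever increases the $y$-coordinate together with the fact that $P$, being a valid path for Subproblem~\ref{def:p3}, meets $[s,t]$ in at most one point; a short case distinction on whether $g$ lies below, within, or above $[s,t]$ shows that in the relevant regime (in particular whenever the search has already been restarted at a grid point with $y \geq t$, or more generally when the search terminates by backtracking to $g$ so that the third condition never fired at all) the forbidden interval is simply never entered. Pinning down this bookkeeping, and the precise reading of ``monotone path'' in the statement as one honouring the at-most-one-overlap constraint, is the part I would spend the most care on; everything else is the routine ``DFS guided by a witness path'' argument resting on Observation~\ref{lem:dfd.dfs.always_valid}.
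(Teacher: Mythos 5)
Your overall plan---taking the contrapositive, fixing a witness path $P$ from the start point $g$ to $l_t$, and running a reverse induction along $P$ using Observation~\ref{lem:dfd.dfs.always_valid} together with the exhaustiveness of depth first search---is sound, and it is essentially the justification the paper leaves unwritten. The problem is that the one step you explicitly defer is the only step with real content, and the hint you give for it does not suffice. You claim that because the greedy search only increases $y$ and $P$ meets $[s,t]$ in at most one point, ``the forbidden interval is simply never entered.'' But the search does not follow $P$: at a node $p_k$ of $P$ whose witness-successor is the diagonal or upward neighbour, the greedy rule first explores the \emph{entire} subtree of the rightward (and then the diagonal) neighbour, and nothing about monotonicity of $y$ alone prevents such a sibling subtree from climbing above height $s$ and stepping into the open strip $(s,t)$, which aborts the whole search from $g$ before your induction ever advances from $p_k$ to $p_{k+1}$. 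So the case ``$g$ below $[s,t]$ with a witness of maximum height at most $s$'' is not bookkeeping; it is the crux, and it is exactly where the specific ordering right, then diagonal, then up must be used. (Your ``within $[s,t]$'' case is also not benign under your constrained reading, since Subproblem~\ref{def:p3} permits a horizontal witness strictly inside $(s,t)$ while the very first move of the search already triggers the abort; the observation should be read for starts outside the open interval.)

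The missing lemma is: \emph{if there is a monotone path from $g$ to $l_t$ whose maximum $y$-coordinate is at most $Y$, then the greedy search started at $g$ never visits a grid point of height greater than $Y$ before reaching $l_t$.} This is true, but needs an argument such as the following. Let $\hat{w}(x)$ be the maximum $y$-coordinate of $P$ over columns at most $x$, and consider the first move of the search to a point strictly above this profile. Such a move is an up or diagonal step from a point $p'=(x',y')$ with $y'=\hat{w}(x')$; since $P$ is monotone and never exceeds height $y'$ up to column $x'$, it must occupy every grid point of row $y'$ from the column where it first attains that height through $x'$ (and, in the diagonal case, through $x'+1$). Hence $p'$ itself lies on $P$ and its rightward (or diagonal) neighbour is again a point of $P$, which is free and, by Observation~\ref{lem:dfd.dfs.always_valid}, never marked invalid---so the greedy ordering forbids the up or diagonal step, a contradiction. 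Taking $Y=s$ shows the third halting condition can never fire when a valid witness below the reference interval exists (the case $y(g)\ge t$ is trivial because $y$ never decreases), and only then does your induction close. Note that this barrier argument, not DFS exhaustiveness, is where the greedy ordering earns its keep; it is the same mechanism that underlies Observation~\ref{lem:dfd.dfs.ell_geq}. Until it is actually carried out, your proposal has a genuine hole at its single delicate point.
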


\begin{observation}
\label{lem:dfd.dfs.ell_geq}
Let $P$ be a monotone path from $l_s$ to $l_t$ computed by our algorithm. Let the initial starting point of $P$ be $g$ and the final grid point of $P$ be $r$. Then any monotone path that starts on~$l_{\geq g}$ and ends on $l_t$ must end on $l_{\geq r}$. 
\end{observation}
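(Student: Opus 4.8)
The plan is to argue by contradiction: suppose some monotone path $Q$ starts on $l_{\geq g}$ and ends on $l_t$ at a grid point $r'$ that lies strictly below $r$ (in the same column $l_t$). The key structural fact we will exploit is that, because of the greedy edge ordering (first $(x+1,y)$, then $(x+1,y+1)$, then $(x,y+1)$), the path $P$ computed by the depth first search is, in a precise sense, the \emph{lowest} path obtainable from its starting point: at every column, $P$ attains the smallest $y$-coordinate that any monotone path from $g$ can attain in that column, subject to the valid grid points available when $P$ was computed. I would first isolate this "$P$ is lowest" property as the workhorse, deriving it directly from the greedy search together with Observation~\ref{lem:dfd.dfs.always_valid} (no grid point on a genuine monotone path is ever invalidated).

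\textbf{Step 1.} Formalize the lowest-path property of $P$. Let $g = (x_s, y_g)$ be the initial starting point of $P$ and let $P$ pass through $(x, y_P(x))$ for each column $x_s \le x \le x_t$. I claim that for any monotone path $R$ from $g$ to $l_t$, and for every column $x$ in its range, $y_R(x) \ge y_P(x)$. This follows because the greedy DFS, when it first reaches column $x$, does so at the lowest $y$-coordinate reachable by a valid monotone path from $g$; backtracking only removes grid points that admit \emph{no} path onward to $l_t$, and by Observation~\ref{lem:dfd.dfs.always_valid} such grid points cannot lie on any monotone path from $l_s$ to $l_t$, so they never cut off a competitor path $R$. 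I would write this as a short induction on the column index, using monotonicity of both $P$ and $R$ to handle the three possible incoming edges.

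\textbf{Step 2.} Reduce the general starting point on $l_{\geq g}$ to the starting point $g$. Let $Q$ be a monotone path starting at a grid point $g' \in l_{\geq g}$, so $g'$ has the same $x$-coordinate $x_s$ as $g$ and $y_{g'} \ge y_g$. Compare $Q$ with $P$ column by column. In the first column, $y_Q(x_s) = y_{g'} \ge y_g = y_P(x_s)$. I would now show $y_Q(x) \ge y_P(x)$ for all columns $x$ in $[x_s, x_t]$ by induction: given $y_Q(x) \ge y_P(x)$, the incoming edge of $P$ into column $x+1$ and the monotonicity of $Q$ force $y_Q(x+1) \ge y_P(x+1)$, using that $P$ increases its $y$-coordinate by at most one per column and that $Q$ is itself monotone. (Here one subtlety is that $P$ can take a vertical edge $(x, y)\to(x, y+1)$ before advancing a column; I would phrase the induction over the sequence of grid points of $P$ rather than strictly over columns, or invoke Step~1 with $R$ taken to be the suffix of $Q$ after it first reaches column $x_s$ at height $\ge y_g$ — but in fact Step 1 already covers any path \emph{from} $g$, and $Q$ restricted to columns $\ge x_s$ starting at height $\ge y_g$ dominates such a path, so I can largely quote Step~1.)

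\textbf{Step 3.} Conclude. Taking $x = x_t$ in Step 2 gives $y_Q(x_t) \ge y_P(x_t) = $ the $y$-coordinate of $r$. Since the final grid point of $Q$ on $l_t$ has $y$-coordinate $\ge$ that of $r$, it lies in $l_{\geq r}$, which is exactly the claim.

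\textbf{Main obstacle.} The crux is Step 1 — pinning down rigorously that the greedy DFS really does produce the pointwise-lowest path and that backtracking never destroys this, i.e., that an invalidated grid point is provably never on any $l_s$--$l_t$ monotone path. This is where Observation~\ref{lem:dfd.dfs.always_valid} does the essential work, and the rest is a bookkeeping induction exploiting that monotone paths change column-height by at most one per step. The comparison in Step 2 is then routine given Step 1; the only care needed is handling vertical moves of $P$ within a column, which is why I would run the induction over $P$'s grid-point sequence rather than naively over column indices.
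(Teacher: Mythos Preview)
Your proposal is correct and captures precisely the reasoning the paper has in mind: the paper states this as an observation that ``follows immediately from the greedy depth first search and the following ordering on the outgoing edges,'' i.e., exactly your lowest-path intuition. Your Steps~1--3 are the natural unpacking of that one-line justification, and your identification of Observation~\ref{lem:dfd.dfs.always_valid} as the ingredient that makes backtracking harmless is spot on; the only difference is that the paper leaves all of this implicit rather than writing out the column-by-column (or grid-point-by-grid-point) induction.
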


The third observation formalises the intuition that if a monotone path is found, then a lower monotone path cannot exist. With these three observations in mind, we are now ready to prove the correctness of our algorithm.

\begin{lemma}
\label{lem:dfd.dfs.correctness}
There exist $m-1$ monotone paths satisfying the conditions of Subproblem~\ref{def:p3} under the discrete Fr\'echet distance if and only if our algorithm returns a set of $m-1$ monotone paths.
\end{lemma}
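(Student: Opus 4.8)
The plan is to prove the two implications separately. The ``if'' direction is a routine verification. Suppose the algorithm returns the paths $P_1, \ldots, P_{m-1}$. Each $P_i$ is a monotone path from $l_s$ to $l_t$ by construction. Each $P_i$ is launched from a grid point of $l_s$ whose $y$-coordinate is at least the maximum $y$-coordinate of $P_{i-1}$, and since the $y$-coordinate is non-decreasing along a monotone path, consecutive paths (hence, by transitivity, all pairs) overlap in $y$-coordinate in at most one point. Moreover, the search that produced $P_i$ would have halted upon entering any grid point with $y$-coordinate strictly between $s$ and $t$, so no $P_i$ contains such a grid point; since the $y$-coordinates along $P_i$ form a contiguous run of integers, the $y$-range of $P_i$ meets the $y$-interval of the reference subtrajectory in at most one point (modulo checking the degenerate small cases). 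Hence the returned set satisfies all conditions of Subproblem~\ref{def:p3}.

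For the ``only if'' direction, fix any solution: monotone paths $Q_1, \ldots, Q_{m-1}$ from $l_s$ to $l_t$ satisfying the conditions of Subproblem~\ref{def:p3}, indexed in increasing order of $y$-coordinate. By the pairwise overlap condition, the maximum $y$-coordinate of $Q_{i-1}$ is at most the minimum $y$-coordinate of $Q_i$, and no $Q_i$ contains a grid point with $y$-coordinate strictly between $s$ and $t$. Note that the maximum $y$-coordinate of a monotone path from $l_s$ to $l_t$ is attained at its final grid point on $l_t$. I would prove, by induction on $i$, that the algorithm successfully computes $P_i$ and that the maximum $y$-coordinate of $P_i$ is at most that of $Q_i$. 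Granting this for all $i$, the algorithm computes $P_1, \ldots, P_{m-1}$ and returns them, which is exactly what we want.

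The inductive step is driven by the three observations. First, Observation~\ref{lem:dfd.dfs.always_valid}, applied to each $Q_j$ (itself a monotone path from $l_s$ to $l_t$), ensures that every grid point of every $Q_j$ stays valid throughout the whole run, so invalidations caused by earlier searches never remove a cell a later $Q_i$ needs. Next, by the induction hypothesis the maximum $y$-coordinate of $P_{i-1}$ is at most that of $Q_{i-1}$, which is at most the minimum $y$-coordinate of $Q_i$; hence the starting grid point of $Q_i$ on $l_s$ is valid and has $y$-coordinate at least the maximum $y$-coordinate of $P_{i-1}$, so it lies on $l_{\geq g_i}$, where $g_i$ is the lowest valid grid point from which the search for $P_i$ (ultimately) begins. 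Observation~\ref{lem:dfd.dfs.no_path} then guarantees that the greedy depth first search reaches $l_t$, so $P_i$ is found. Finally, Observation~\ref{lem:dfd.dfs.ell_geq}, together with the greedy edge ordering ``first $(x+1,y)$, then $(x+1,y+1)$, then $(x,y+1)$'', shows that the final grid point of $P_i$ on $l_t$ is no higher than the final grid point of any monotone path from $l_{\geq g_i}$ to $l_t$ that avoids the open interval $(s,t)$; since $Q_i$ is such a path, the maximum $y$-coordinate of $P_i$ is at most that of $Q_i$, completing the induction.

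The step I expect to be the main obstacle is making the previous paragraph fully rigorous in the presence of the restart mechanism. When the greedy search is forced into a grid point with $y$-coordinate strictly between $s$ and $t$, it is abandoned and restarted at the lowest valid grid point with $y$-coordinate at least $t$; one must argue that this loses no admissible path, i.e.\ that an admissible path either stays below $s$ (in which case it is reached from $g_i$ or from a higher valid start, in increasing order of $y$, before the search ever enters $(s,t)$, by Observation~\ref{lem:dfd.dfs.no_path}) or stays at or above $t$ (in which case it is still available after the restart, and its start still lies on $l_{\geq g_i}$ for the restarted $g_i$). Carefully combining this restart rule, the ``lowest valid start above the previous path'' rule, and the greedy optimality claim is the technical core; the remaining pieces — the ``if'' direction, the degenerate cases where the reference subtrajectory is very short, and the chaining of the induction — are comparatively routine.
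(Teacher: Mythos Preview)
Your proposal is correct and follows essentially the same approach as the paper: verify the ``if'' direction directly from the algorithm's description, and for the ``only if'' direction fix a solution $Q_1,\ldots,Q_{m-1}$ and prove by induction that the algorithm produces $P_i$ with maximum $y$-coordinate at most that of $Q_i$, using Observation~\ref{lem:dfd.dfs.no_path} to guarantee success by the time the start of $Q_i$ is tried and Observation~\ref{lem:dfd.dfs.ell_geq} to bound the endpoint. Your version is in fact more careful than the paper's about the restart-at-$t$ mechanism and the explicit role of Observation~\ref{lem:dfd.dfs.always_valid}, both of which the paper's proof leaves implicit.
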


\begin{proof}
Suppose our algorithm returns a set of $m-1$ monotone paths. It is straightforward to check from the definition of our algorithm that our set of monotone paths all start on $l_s$, all end on $l_t$, are distinct, and overlap in at most one $y$-coordinate. Therefore, our $m-1$ monotone paths satisfy the conditions of Subproblem~\ref{def:p3}.

Next, we prove the converse. We assume that there exist monotone paths $Q_1, Q_2, \ldots, Q_{m-1}$ that satisfy the conditions of Subproblem~\ref{def:p3}. We will prove that our algorithm computes a set of monotone paths $P_1, P_2, \ldots, P_{m-1}$ that also satisfy the conditions of Subproblem~\ref{def:p3}.

We prove by induction that, for all $1 \leq i \leq m-1$,  our algorithm computes a monotone path~$P_i$ so that the maximum $y$-coordinate of $P_i$ is at most the maximum $y$-coordinate of $Q_i$. We will focus on the inductive case, since the base case follows similarly. Our inductive hypothesis implies that the maximum $y$-coordinate of $P_{i-1}$ is at most the maximum $y$-coordinate of $Q_{i-1}$. So the maximum $y$-coordinate of $P_{i-1}$ is at most the minimum $y$-coordinate of $Q_i$. 

Next, our algorithm attempts to compute $P_i$. It starts at a $y$-coordinate that is at most the minimum $y$-coordinate of $Q_i$. By the contrapositive of Lemma~\ref{lem:dfd.dfs.no_path}, our algorithm computes a monotone path when considering the starting point on $Q_i$, if not earlier. Hence, the minimum $y$-coordinate of $P_i$ is at most the minimum $y$-coordinate of $Q_i$. By Lemma~\ref{lem:dfd.dfs.ell_geq}, the maximum $y$-coordinate of $P_i$ is at most the maximum $y$-coordinate of~$Q_i$, completing the induction and the proof of the lemma. So our algorithm computes a set of $m-1$ monotone paths that satisfy the conditions of Subproblem~\ref{def:p3}.
\end{proof}

Finally, we analyse the running time of our algorithm. Whenever our algorithm visits a grid point, a constant number of operations are performed. The operations involve checking if its three neighbours are valid, moving to a neighbour, or backtracking. Each operation takes constant time.

It suffices to count the number of grid points visited by our algorithm. There are $O(nl)$ grid points in total between the vertical lines $l_s$ and $l_t$ where $l = t-s$. When computing a monotone path, for example when computing $P_i$ we use a depth first search. We never visit the same grid point more than once when we compute $P_i$. Hence, a grid point can only be revisited when computing different monotone paths, for example, when computing $P_i$ and $P_j$. However, this cannot happen very often, as we show next. 

\begin{lemma}
\label{lem:dfd.dfs.rt1}
There are at most $2ml$ instances where $P_j$ revisits a grid point that has previously been visited by $P_i$ for some $i < j$.
\end{lemma}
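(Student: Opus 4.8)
The plan is to bound the number of ``revisits'' --- that is, occasions on which the depth first search for $P_j$ steps onto a grid point already visited during the search for some earlier path $P_i$, $i<j$ --- by a charging argument against the $x$-coordinates of the $l$ columns between $l_s$ and $l_t$, repeated over the $m-1$ paths. First I would make precise what a ``revisit'' is: when computing $P_j$, each grid point the search enters is either \emph{fresh} (never entered by any $P_i$ with $i<j$, nor earlier in the search for $P_j$ itself) or a revisit. Fresh grid points are globally charged only once across the whole run of the algorithm for this subproblem, so they contribute $O(nl)$ total and are not the concern here; we need only control revisits. The key structural fact to isolate is that the greedy DFS for $P_j$, restricted to the grid points it revisits, behaves like a monotone staircase: once the search for $P_j$ re-enters previously-explored territory, it cannot backtrack \emph{within} that territory, because any such grid point $g$ already had all of its outgoing edges explored when it was first visited, and backtracking through $g$ during an earlier path's search would have marked $g$ invalid --- but invalid grid points are never re-entered. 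Hence the revisited grid points encountered while computing $P_j$ form (a union of) $x$- and $y$-monotone staircases.

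Next I would count how many such staircase-steps can occur. Along a single monotone staircase the $x$-coordinate is non-decreasing and the $y$-coordinate is non-decreasing, and consecutive steps increase $x+y$ by at least one, so a staircase that stays strictly between $l_s$ and $l_t$ contributes $O(l)$ revisits (the $x$-range has width $l$, and we should argue the $y$-range traversed while inside revisited territory for a fixed path is also $O(l)$ --- this uses that $P_j$'s search is confined to $y$-coordinates between the max $y$ of $P_{j-1}$ and the point at which it either reaches $l_t$ or is cut off, and that by Observations~\ref{lem:dfd.dfs.no_path} and~\ref{lem:dfd.dfs.ell_geq} these $y$-ranges for successive paths tile the column without large overlap). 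The search for $P_j$ may restart from several starting points on $l_s$ (each time it backtracks to its initial starting point or crosses into the reference $y$-interval), but each restart produces at most one fresh staircase of revisits and, crucially, each restart is ``paid for'' by at least one fresh grid point on $l_s$ or by advancing past a now-invalid grid point, so the number of restarts summed over all of $P_1,\dots,P_{m-1}$ is $O(n)$, contributing $O(n)$ staircases; however the dominant term comes from the ``one staircase per path'' accounting: each $P_j$ contributes one primary revisit-staircase of length $O(l)$, giving $O(ml)$ across all $m-1$ paths, and the constant works out to the claimed bound $2ml$.

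The main obstacle I expect is the bookkeeping that pins the constant to exactly $2ml$ rather than merely $O(ml)$: one has to argue carefully that each path $P_j$ revisits at most $2l$ previously-visited grid points, which presumably decomposes as ``at most $l$ from horizontal-type overlap with the region already cleared by $P_1,\dots,P_{j-1}$, plus at most $l$ from the portion where $P_j$'s search runs alongside and then peels away from that region'' --- i.e. the revisited portion enters the old region at most once and leaves it at most once, and while inside it moves monotonically through at most $l$ columns. I would want Observation~\ref{lem:dfd.dfs.always_valid} to guarantee that the ``old region'' behaves like an obstacle the new search slides along (it never has to re-backtrack inside it), and Observation~\ref{lem:dfd.dfs.ell_geq} to guarantee the new search does not dip back down into already-cleared lower $y$-levels, so that the revisited set for $P_j$ is a single connected monotone arc of width at most $l$ traversed at most twice. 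Making ``enters at most once, leaves at most once'' airtight --- ruling out the search oscillating in and out of previously-visited territory many times --- is the delicate point, and it is exactly where the no-re-backtracking property and the permanence of the invalid marking must be invoked.
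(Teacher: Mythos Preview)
Your approach is significantly more complicated than the paper's, and it rests on a claim that is not correct. You assert that once the search for $P_j$ re-enters previously-explored territory it cannot backtrack there, because ``any such grid point $g$ already had all of its outgoing edges explored when it was first visited, and backtracking through $g$ during an earlier path's search would have marked $g$ invalid.'' But the grid points from $P_i$'s search that remain valid are exactly the ones \emph{on the final path $P_i$}, and those were \emph{not} backtracked through and did \emph{not} have all their outgoing edges explored---only the one edge along $P_i$ was taken. So $P_j$'s search can enter such a $g$, try new outgoing edges, and backtrack through $g$. Your staircase picture does not follow, and the subsequent ``enters once, leaves once'' accounting (which you yourself flag as delicate) cannot be salvaged from this premise.

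The paper's argument bypasses all of this with a single structural observation you did not use: if a grid point is visited during the searches for both $P_i$ and $P_j$ with $i<j$, then necessarily $j\le i+2$. The reason is that the only still-valid grid points from $P_i$'s search lie on $P_i$ itself, and $P_j$'s search starts at $y$-coordinate at least the maximum $y$-coordinate of $P_{j-1}$; so a shared grid point forces $P_{i+1},\dots,P_{j-1}$ all to sit at the single $y$-level $\max y(P_i)$, which for $j>i+2$ gives two distinct horizontal paths $P_{i+1}$ and $P_{j-1}$ at the same $y$-value, contradicting distinctness. Once $j-i\le 2$ is established, the count is immediate: at most $2m$ admissible pairs $(i,j)$, and for each such pair the common $y$-coordinate is a single value, hence at most $l$ shared grid points across the $l$ columns, giving $2ml$ total. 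This is both simpler and gives the exact constant directly, whereas your route would at best yield an $O(ml)$ bound after substantial repair.
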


\begin{proof}
First we prove that $j \leq i + 2$. Then we use this bound to show that there are at most $2ml$ revisiting instances, as claimed.

Suppose for the sake of contradiction that a grid point is visited by $P_i$ and $P_j$, and $j > i + 2$. So $P_j$ and $P_i$ must share a $y$-coordinate at the shared grid point. But the monotone paths $P_{i+1}$ and $P_{j-1}$ have $y$-coordinates that are at least the maximum $y$-coordinate of $P_i$ and at most the minimum $y$-coordinate of $P_j$. Therefore, $P_{i+1}$ and $P_{j-1}$ must be horizontal paths. But these paths are no longer unique, which is a contradiction. This proves that $j \leq i + 2$. 

There are at most~$2m$ pairs $(i,j)$ where $j - i \leq 2$, and for each such pair $(i,j)$ there is at most one $y$-coordinate, i.e. $l$ cells, where $P_i$ and $P_j$ can visit the same cell. In total, there are at most~$2ml$ cells where both $P_i$ and $P_j$ can visit, for some pair $(i,j)$.
\end{proof}

There are $O(nl)$ initial visits to grid points and $O(ml)$ revisits, where $l = t-s$. This yields:

\begin{theorem}
\label{thm:dfd.dfs}
There is an $O(nl)$ time algorithm that solves Subproblem~\ref{def:p3} under the discrete Fr\'echet distance.
\end{theorem}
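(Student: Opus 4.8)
The statement to prove is Theorem~\ref{thm:dfd.dfs}: Subproblem~\ref{def:p3} under the discrete Fréchet distance can be solved in $O(nl)$ time where $l = t-s$. Correctness is already settled by Lemma~\ref{lem:dfd.dfs.correctness}, so the remaining task is purely the running-time bound. The plan is to argue that the total work is proportional to the number of grid-point visits performed by the greedy depth-first search, and then to bound that number. The first observation is that each elementary step — testing the validity of the three neighbours $(x+1,y)$, $(x+1,y+1)$, $(x,y+1)$, moving to a neighbour, or backtracking and marking a cell invalid — costs $O(1)$ time, provided the free/non-free status of each grid point has been precomputed (which takes $O(nl)$ time up front by evaluating $\|v_x - v_y\| \le d$ for all relevant pairs). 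Hence it suffices to count visits.

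The counting splits into \emph{initial visits} and \emph{revisits}. For initial visits: there are at most $nl$ grid points strictly between the vertical lines $l_s$ and $l_t$, and within the computation of any single monotone path $P_i$ the depth-first search never visits a grid point twice (standard DFS property — once a node's subtree is exhausted it is marked invalid and never re-entered, and along a successful path we only move rightward/upward). So the number of first-time visits over the entire algorithm is $O(nl)$. For revisits — a grid point visited while computing $P_j$ that was already visited while computing some $P_i$ with $i<j$ — I would invoke Lemma~\ref{lem:dfd.dfs.rt1}, which bounds these by $2ml$. That lemma's proof (already given) hinges on the fact that if $P_i$ and $P_j$ share a grid point then they share its $y$-coordinate, forcing the intermediate paths $P_{i+1},\dots,P_{j-1}$ to be horizontal and hence non-distinct unless $j \le i+2$; combined with the fact that each of the $O(m)$ consecutive pairs $(i,j)$ can collide in at most one $y$-coordinate, i.e.\ at most $l$ cells.

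Adding the two contributions gives $O(nl) + O(ml)$ visits. Since the reference subtrajectory has complexity $l$ and the cluster has $m$ members, and $m = O(n)$ in any instance arising from \problemtwo (indeed a cluster of $m-1$ non-overlapping subtrajectories forces $m-1 \le O(n/1)$... more precisely $ml \le n \cdot n$), the term $O(ml)$ is dominated by $O(nl)$ — or one can simply state the bound as $O(nl + ml)$ and note $m \le n$ for the instances we care about, matching the $O(n + ml)$-per-subproblem phrasing of Buchin~\etal. Either way the total is $O(nl)$, as claimed.

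The only genuine subtlety — and the place I'd be most careful — is justifying that a grid point is never visited \emph{three or more times}, i.e.\ that the revisit count really is governed by pairs $(i,j)$ and not by longer chains; this is exactly what the $j \le i+2$ argument in Lemma~\ref{lem:dfd.dfs.rt1} secures, because a cell shared by $P_i$ and $P_j$ is also (by $y$-coordinate sandwiching) on every $P_k$ with $i \le k \le j$, so a triple visit would already force two of those paths to coincide. Given that lemma, the running-time proof is a short accounting argument with no further obstacles.
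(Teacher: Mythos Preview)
Your proposal is correct and follows essentially the same approach as the paper: correctness via Lemma~\ref{lem:dfd.dfs.correctness}, then a visit-counting argument splitting into $O(nl)$ initial visits (no grid point is visited twice within a single $P_i$) and $O(ml)$ revisits bounded by Lemma~\ref{lem:dfd.dfs.rt1}, with $m = O(n)$ absorbing the second term. The paper presents exactly this accounting, only more tersely.
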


\subsection{\problemtwo under the discrete Fr\'echet distance}
\label{sec:dfd.jump}

Similar to the previous algorithm of Buchin~\etal~\cite{DBLP:journals/ijcga/BuchinBGLL11}, our algorithm for \problemtwo under the discrete Fr\'echet distance involves solving~$O(n)$ instances of Subproblem~\ref{def:p3} with a sweepline approach. We maintain a link-cut data structure~\cite{DBLP:journals/jcss/SleatorT83}
 that allows us to reuse monotone paths during our sweep. Our data structure maintains a set of rooted trees, where the nodes of the trees are grid points in the free space diagram.

\begin{fact}[\cite{DBLP:journals/jcss/SleatorT83}]
\label{lem:linkcut}
A link-cut tree maintains a set of rooted trees, and offers the following four operations. Each operation can be performed in $O(\log n)$ amortised time.
\begin{itemize}[noitemsep]
    \item Add a tree consisting of a single node,
    \item Attach a node (and its subtree) to another node as its child,
    \item Disconnect a node (and its subtree) from its current tree,
    \item Given a node, find the root of its current tree.
\end{itemize}
\end{fact}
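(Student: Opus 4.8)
\textbf{Proof proposal for Theorem~\ref{thm:main_dfd}.}
The plan is to combine the $O(nl)$ algorithm for Subproblem~\ref{def:p3} from Theorem~\ref{thm:dfd.dfs} with a sweepline that processes the $O(n)$ candidate reference subtrajectories in $x$-coordinate order, reusing monotone paths across consecutive subproblems via a link-cut tree (Fact~\ref{lem:linkcut}). First I would fix notation for the sweep. Recall from the preliminaries that Buchin~\etal\ showed there are only $O(n)$ candidate reference subtrajectories under the discrete Fr\'echet distance, none a subtrajectory of another; I would order them so that the $x$-coordinates of the vertical lines $l_s$ and $l_t$ are non-decreasing. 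Because no candidate is nested inside another, as $s$ increases $t$ also increases, so the swept region moves monotonically to the right. I would then run the greedy depth first search of Section~\ref{sec:dfd.dfs} for each subproblem in turn, but whenever the search enters a grid point whose outgoing greedy exploration has already been completed in an earlier subproblem, I would short-circuit: query the link-cut tree for the root of that node's current tree, and if the root lies on the current $l_t$, treat this as having found a monotone path without re-walking it.

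The second step is to specify the invariant maintained by the link-cut forest and show each DFS operation maps to $O(1)$ link-cut updates. The invariant is: every node stored in the forest has a monotone path (inside the already-explored region) to the root of its tree, and a node whose greedy search succeeded points toward the vertical line it reached. When the greedy DFS moves from $g$ to its first valid successor $g'$, I would perform an \texttt{attach} of $g$ under $g'$; when it backtracks out of $g$ (marking $g$ invalid), I would \texttt{cut} $g$ from its parent and not re-link it. A node is added to the forest the first time it is visited (one \texttt{add}), is attached $O(1)$ times, and is cut at most once, since once marked invalid it is never revisited. Crucially, between two consecutive subproblems $(s,t)$ and $(s{+}1,t{+}1)$ the only change to the free-space region is: the column $l_s$ (and possibly a few columns near $l_s$) leaves the window, and the column $l_{t+1}$ (and a few near it) enters; the forbidden $y$-interval also shifts by one. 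So the ``stale'' part of the forest that needs invalidation or re-rooting is confined to $O(n)$ grid points per step (one column), and the new part to be explored is $O(n)$ per step.

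The third step is the amortised running time. Over the whole sweep, each grid point of $F_d(T,T)$ is \emph{initially} visited $O(1)$ times: once when its column first enters the window, plus the bounded number of re-entries caused by the forbidden-interval shift, which I would bound exactly as in Lemma~\ref{lem:dfd.dfs.rt1} by $O(ml)$ revisits per subproblem aggregating to $O(nm)$ — but since we only traverse a grid point's successors once (subsequent entries are resolved by a single link-cut \texttt{root} query), the total number of link-cut operations is $O(n^2)$. By Fact~\ref{lem:linkcut} each costs $O(\log n)$ amortised, giving $O(n^2 \log n)$. Correctness follows from Lemma~\ref{lem:dfd.dfs.correctness} applied to each subproblem, once we check that short-circuiting via the link-cut root returns exactly the path the un-modified greedy DFS would have returned — this uses Observations~\ref{lem:dfd.dfs.always_valid}--\ref{lem:dfd.dfs.ell_geq}, since the greedy ordering of outgoing edges makes the explored subtree of any node deterministic and independent of which subproblem first explored it.

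\textbf{Main obstacle.} The hard part will be carefully handling the boundary of the sweep window: a grid point that was marked invalid as a dead end in subproblem $(s,t)$ might, in subproblem $(s{+}1,t{+}1)$, no longer be a dead end because the forbidden $y$-interval has moved off it, or because a path is now allowed to exit on the new line $l_{t+1}$. I would need to argue that this never happens for the candidate subtrajectories actually considered — intuitively because invalidity of a grid point depends only on the portion of free space to its right and above, and the greedy search never relied on the forbidden interval to declare a node invalid (a node is invalidated only when all three free successors are themselves invalid). Making this monotonicity of invalidity precise, and confirming that the forbidden-interval shift between consecutive subproblems only ever \emph{removes} constraints on grid points that remain in the window (so no ``un-invalidation'' is ever needed), is the crux; once that is established, the amortisation and the $O(\log n)$ per-operation cost from Fact~\ref{lem:linkcut} close the argument.
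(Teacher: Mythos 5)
There is a fundamental mismatch here: the statement you were asked to prove is Fact~\ref{lem:linkcut}, the Sleator--Tarjan link-cut tree result (a forest of rooted trees supporting single-node insertion, attach, detach, and find-root in $O(\log n)$ amortised time per operation), but your proposal is a proof sketch of Theorem~\ref{thm:main_dfd}, the $O(n^2\log n)$ sweepline algorithm for \problemtwo{} under the discrete Fr\'echet distance. Worse, your argument explicitly \emph{invokes} Fact~\ref{lem:linkcut} as a black box (``By Fact~\ref{lem:linkcut} each costs $O(\log n)$ amortised''), so relative to the actual target statement the proposal is circular: it assumes the very claim it was supposed to establish. Nothing in your text addresses what a proof of the fact would require --- the decomposition of each tree into preferred (or solid) paths, the representation of each path by a splay tree (or biased search tree in the original worst-case version), the implementation of link, cut, and find-root via the expose/access operation, and the potential-function argument (the access lemma) that bounds the amortised cost of expose by $O(\log n)$.

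For context, the paper itself does not prove this statement either: it is stated as a Fact with a citation to Sleator and Tarjan~\cite{DBLP:journals/jcss/SleatorT83}, so the ``paper's own proof'' is an appeal to the literature. A reasonable blind attempt would therefore either have reproduced the standard link-cut tree construction and its amortised analysis, or simply observed that this is a known external result and cited it. Your sweepline material may well be a serviceable outline for Theorem~\ref{thm:main_dfd} (it mirrors the paper's Section~\ref{sec:dfd.jump} in spirit), but it does not constitute a proof of Fact~\ref{lem:linkcut}, and so the gap here is total: the statement at hand is simply not addressed.
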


The invariant maintained by our data structure is that there is a monotone path from any node to the root of its current tree. We will first describe our algorithm and how we update our data structure. Then we will prove our invariant in Lemma~\ref{lem:dfd.jump.valid_path}.

We use a sweepline approach, starting with $s=0$ and incrementing $s$. For each $s$, we let~$t$ be the final vertex of the shortest subtrajectory that starts at~$s$ and has length~$\geq \ell$. For each pair~$(s,t)$, we decide whether there exists a set of $m-1$ monotone paths between~$l_s$ and~$l_t$ that do not overlap in $y$-coordinate. We perform a modified version of the greedy depth first search in Section~\ref{sec:dfd.dfs}. Our modifications include updating the link-cut data structure whenever we explore a node, and querying the link-cut data structure to reuse paths.

Whenever our greedy depth first search moves from a current grid point, which we call $g_c$, to an outgoing neighbour, which we call $g_n$, we update our link-cut data structure. We link $g_c$ as a child of $g_n$, thus attaching the subtree rooted at $g_c$ to the tree containing $g_n$. Under the algorithm described in Section~\ref{sec:dfd.dfs}, we would continue the depth first search from the neighbour $g_n$. However, we already know that there is a monotone path from $g_n$ to the root of its current link-cut tree. Hence, we set the new current node, $g_{c'}$, to be the root of the tree containing $g_n$ and continue our greedy depth first search from there. See Figure~\ref{fig:dfd_2}, left.

\begin{figure}[ht]
    \centering
    \includegraphics{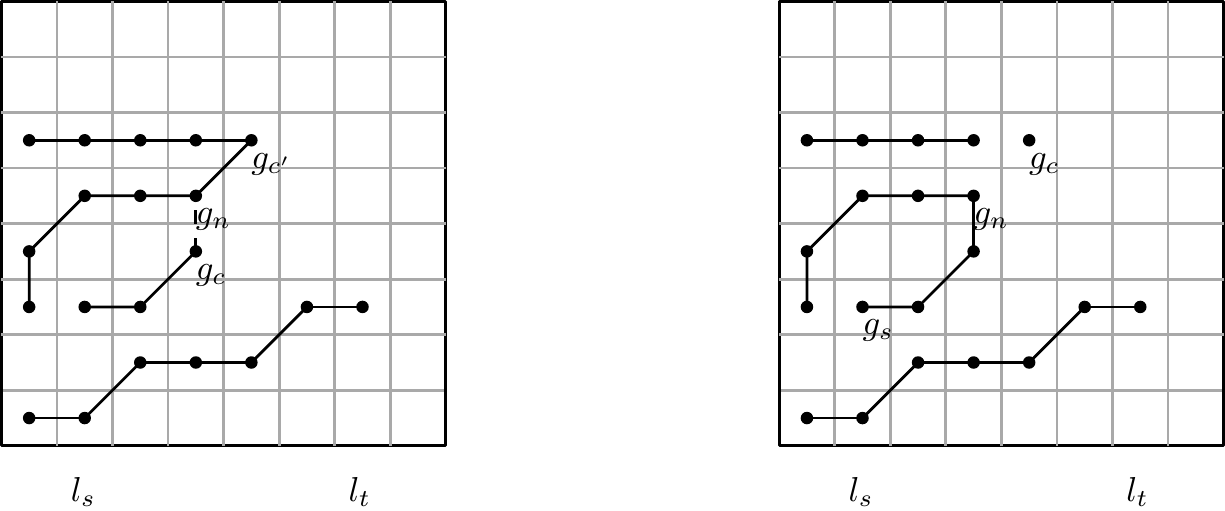}
    \caption{Adding a link from $g_c$ to $g_n$ when exploring a neighbour (left), and removing links from~$g_c$ to its children when backtracking (right).}
    \label{fig:dfd_2}
\end{figure}

Whenever our greedy depth first search backtracks from a current grid point, which we call $g_c$, we also update our link-cut data structure. An invariant maintained by our algorithm is that the current grid point is always the root of its link-cut tree. Hence, its valid incoming neighbours are its children in the link-cut data structure. We disconnect $g_c$ from each of its children. We backtrack to the incoming neighbour $g_n$ with the following property. If we are currently searching for the monotone path $P_i$, and the initial node of $P_i$ on $l_s$ is $g_s$, then we choose $g_n$ to be the root of the tree containing $g_s$. See Figure~\ref{fig:dfd_2}, right.

The greedy depth first search halts if any of the same three conditions as described in Section~\ref{sec:dfd.dfs} are met. This completes the description of our algorithm. Next, we argue its correctness.

\begin{lemma}
\label{lem:dfd.jump.valid_path}
Let~$g$ be a grid point in the free space diagram and~$r$ be a grid point corresponding to the root of the link-cut tree containing~$g$. Then there is a monotone path from~$g$ to~$r$. 
\end{lemma}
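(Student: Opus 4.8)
The plan is to prove the invariant by induction on the sequence of link-cut operations performed by the algorithm, i.e.\ on time. At each step the algorithm performs one of the four link-cut operations from Fact~\ref{lem:linkcut}, and I want to show each one preserves the property ``for every node $g$, there is a monotone path from $g$ to the root of the tree containing $g$.'' Since the only link-cut operations the algorithm ever issues are the two described above (a \emph{link} when the greedy DFS advances from $g_c$ to an outgoing neighbour $g_n$, and a batch of \emph{cuts} when it backtracks from $g_c$), together with the ``add a single-node tree'' initialisation, it suffices to check those cases.

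First I would handle the easy cases. The \emph{add-a-single-node-tree} operation trivially preserves the invariant: the new node is its own root and the empty path is monotone. The \emph{cut} operation performed during backtracking is also easy: when we disconnect $g_c$ from a child $w$, the subtree rooted at $w$ becomes its own tree with root $w$, and every node in that subtree still has the same monotone path to $w$ that it had before (monotone paths within the old tree that ended at $g_c$ are simply truncated; but in fact each node in $w$'s subtree had its monotone path to the \emph{old} root passing through $w$, so restricting to the part before $w$ gives a monotone path from that node to $w$). No node outside the cut subtree changes its root or its path. So the invariant survives cuts.

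The main work — and the main obstacle — is the \emph{link} case. When the DFS moves from $g_c$ to its outgoing neighbour $g_n$, we make $g_c$ a child of $g_n$. I need to show that every node in the (old) subtree rooted at $g_c$ still has a monotone path to the new root, which is the root $r$ of the tree that contained $g_n$. Here is where I use the structure of the greedy DFS. An invariant I would state and carry along is that \emph{the current node $g_c$ of the DFS is always the root of its own link-cut tree}; this is exactly what the paper asserts in the backtracking description, and it should be maintained because after a link we reset the current node to $r$ (a root) and after a cut we move to another root. Granting this, before the link $g_c$ is a root, so by the inductive hypothesis every node in $g_c$'s subtree has a monotone path \emph{ending at $g_c$}. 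The edge the DFS just traversed means $g_n$ is one of $(x{+}1,y)$, $(x{+}1,y{+}1)$, $(x,y{+}1)$ relative to $g_c=(x,y)$, and both are free grid points, so appending $g_n$ to any monotone path ending at $g_c$ yields a monotone path ending at $g_n$. Finally, by the inductive hypothesis applied in the tree containing $g_n$, there is a monotone path from $g_n$ to its root $r$; concatenating gives, for every node in $g_c$'s old subtree, a monotone path to $r$. Nodes not in $g_c$'s old subtree and not below $g_n$ are unaffected; nodes below $g_n$ keep their monotone paths to $r$. Hence the invariant is preserved, and by induction it holds at all times, which is the statement of the lemma.

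One subtlety I would be careful about: I must make sure the batch of cuts during backtracking, the subsequent move to a new current node, and the fact that the new current node is again a root, are all consistent — in particular that we never link a non-root as the \emph{parent} end in a way that breaks the ``current node is a root'' invariant, and that we never create a cycle (which is automatic since link-cut trees forbid linking a node to a descendant, and $g_n$ lies in a different tree from $g_c$ at the moment of linking, because $g_c$ is a root and $g_n \neq g_c$ while $g_n$ is reachable only through edges the DFS has already committed to). These are exactly the structural facts that the greedy DFS description guarantees, so the proof reduces to invoking them at the link step.
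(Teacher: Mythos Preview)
Your proof is correct, but it takes a considerably more elaborate route than the paper. The paper's argument is a two-line structural observation: every link the algorithm ever creates goes from a grid point $(x,y)$ to one of $(x{+}1,y)$, $(x{+}1,y{+}1)$, $(x,y{+}1)$; hence the chain of parent pointers from $g$ up to its root $r$ is itself a monotone path. No induction on time, no auxiliary ``current node is a root'' invariant, no case analysis of cuts is needed---the lemma follows immediately from the shape of the edges present in the forest at any moment.

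Your temporal induction is not wrong, but it obscures this simple point. In fact, the only place you use anything substantive is in the link case, where you invoke exactly the observation above (that $g_n$ is a monotone neighbour of $g_c$); everything else is bookkeeping. There is also a small wrinkle in your cut case: your inductive hypothesis only promises \emph{some} monotone path from each node to the root, not that the tree path is monotone, so saying ``the monotone path\ldots\ passed through $w$'' is not quite justified by the hypothesis as stated. The clean fix is to strengthen the invariant to ``the tree path from each node to its root is monotone''---and once you do that, the induction collapses to the paper's direct argument.
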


\begin{proof}
Each link in the link-cut data structure is from a grid point $(x,y)$ and to a grid point $(x,y+1)$, $(x+1,y+1)$ or $(x+1,y)$. Since~$r$ is an ancestor of $g$ in the link-cut tree data structure, there must be a monotone path from $g$ to $r$.
\end{proof}

Next, we prove an analogous lemma to Lemma~\ref{lem:dfd.dfs.ell_geq}. Recall that $l_{\geq g}$ is the set of grid points with the same $x$-coordinate as $g$ and have $y$-coordinate greater than or equal to the $y$-coordinate of $g$. 

\begin{lemma}
\label{lem:dfd.jump.ell_geq}
For a fixed pair $(s,t)$, suppose $g$ is a grid point on $l_s$ and suppose that its root $r$ lies on $l_t$. Then any monotone path starting on $l_{\geq g}$ that ends on $l_t$ must end on $l_{\geq r}$. 
\end{lemma}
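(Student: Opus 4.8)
The plan is to mirror the argument for Lemma~\ref{lem:dfd.dfs.ell_geq}, but now routed through the link-cut data structure rather than through the explicit depth first search. First I would observe that, since the root $r$ of $g$ lies on $l_t$, Lemma~\ref{lem:dfd.jump.valid_path} already guarantees there is a genuine monotone path $P$ from $g$ to $r$, so $r$ is a legitimate "endpoint reached from $g$" in exactly the same sense as in the unmodified algorithm. The content of the claim is the \emph{extremality} of $r$: no monotone path that starts at or above $g$ on $l_s$ can reach $l_t$ strictly below $r$.

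The key step is to argue this extremality from the greedy order in which links were created. Recall that our modified search explores outgoing neighbours in the order $(x+1,y)$, then $(x+1,y+1)$, then $(x,y+1)$, and that a grid point is permanently invalidated only when the search backtracks through it after exhausting all three. I would argue by contradiction: suppose $Q$ is a monotone path starting on $l_{\geq g}$, ending on $l_t$, and ending strictly below $r$. Every grid point of $Q$ is free. Two cases: either all grid points of $Q$ are still valid at the relevant moment, or some grid point of $Q$ has been invalidated. In the first case, when the search reached $g$ (or the relevant point on $l_{\geq g}$), the greedy preference for decreasing $y$ would have followed $Q$ — or an even lower path — all the way to $l_t$, producing a root on $l_t$ no higher than the endpoint of $Q$, contradicting that $r$ is the root of $g$ and lies above $Q$'s endpoint. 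In the second case, I use Observation~\ref{lem:dfd.dfs.always_valid}, adapted to the link-cut setting: a grid point lying on \emph{some} monotone path from $l_s$ to $l_t$ cannot be invalidated, because invalidation only happens when all three outgoing branches are dead ends, and a branch lying on a monotone path to $l_t$ is never a dead end. Hence no grid point of $Q$ is ever invalidated, which returns us to the first case.

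I would also need to be careful about \emph{when} the statement is asserted: "for a fixed pair $(s,t)$" means at the moment the search for the monotone paths associated with this $(s,t)$ is running, and $g$'s root being on $l_t$ is a hypothesis about the data structure state at that time. The bookkeeping point to nail down is that links created during earlier pairs $(s',t')$ with $s' < s$ are still monotone (each link goes from $(x,y)$ to one of $(x,y{+}1),(x{+}1,y{+}1),(x{+}1,y)$, by the proof of Lemma~\ref{lem:dfd.jump.valid_path}), so reusing them does not corrupt the greedy invariant: following the root pointer from $g_n$ is equivalent to having walked a monotone, greedily-chosen path from $g_n$, which is exactly what the unmodified search would have done.

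The main obstacle I anticipate is the second case — ruling out that a grid point of the competing path $Q$ has been invalidated by the search (possibly during an earlier subproblem $(s',t')$). The clean way around it is to strengthen Observation~\ref{lem:dfd.dfs.always_valid} to its monotone-path-to-$l_t$ form and check that invalidation in the link-cut version still occurs only on genuine backtracking (all outgoing edges dead), so that any vertex on a surviving monotone path to $l_t$ stays valid across all subproblems. Once that invariant is in hand, the extremality argument is the same short greedy argument as in Lemma~\ref{lem:dfd.dfs.ell_geq}.
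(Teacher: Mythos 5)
Your proposal matches the paper's proof: the paper likewise invokes Lemma~\ref{lem:dfd.jump.valid_path} to obtain the monotone path from $g$ to $r$, and then repeats the greedy argument of Observation~\ref{lem:dfd.dfs.ell_geq} with ``prefers to search its lower neighbours'' replaced by ``prefers to link to its lower neighbours.'' Your additional bookkeeping about links from earlier subproblems remaining monotone and about valid points on surviving paths never being invalidated is just an explicit spelling-out of what the paper leaves implicit, so the approach is essentially the same.
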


\begin{proof}
By Lemma~\ref{lem:dfd.jump.valid_path}, there exists a monotone path $P$ from $g$ to $r$. The remainder of the proof is identical to the proof of Lemma~\ref{lem:dfd.dfs.ell_geq}, except we replace the claim that our algorithm prefers to search its lower neighbours with our algorithm prefers to link to its lower neighbours.
\end{proof}

Now we show that our algorithm solves \problemtwo.

\begin{lemma}
\label{lem:dfd.jump.correctness}
There exist $m-1$ monotone paths satisfying the conditions of \problemtwo under the discrete Fr\'echet distance if and only if our algorithm returns a set of $m-1$ monotone paths.
\end{lemma}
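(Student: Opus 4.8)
The proof should closely parallel Lemma~\ref{lem:dfd.dfs.correctness}, the correctness proof of the single-subproblem algorithm. The forward direction (if our algorithm returns $m-1$ monotone paths, then they satisfy the conditions of \problemtwo) is essentially bookkeeping: I would invoke Lemma~\ref{lem:dfd.jump.valid_path} to argue that every path our modified search reports (following link-cut roots) is genuinely a monotone path from $l_s$ to $l_t$, and that the three halting conditions guarantee the paths are distinct and pairwise overlap in at most one $y$-coordinate and overlap the reference $y$-interval in at most one point. So the real content is the converse.

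For the converse, I plan to fix a pair $(s,t)$ and assume monotone paths $Q_1,\dots,Q_{m-1}$ satisfying the conditions exist, and prove by induction on $i$ that our algorithm computes a path $P_i$ whose maximum $y$-coordinate is at most that of $Q_i$. The inductive step mirrors the proof of Lemma~\ref{lem:dfd.dfs.correctness}: the hypothesis puts the maximum $y$-coordinate of $P_{i-1}$ at or below the minimum $y$-coordinate of $Q_i$, so our algorithm begins the search for $P_i$ at a grid point on $l_s$ no higher than the start of $Q_i$; then I apply the analogue of Observation~\ref{lem:dfd.dfs.no_path} together with Lemma~\ref{lem:dfd.jump.ell_geq} (the sweepline analogue of Observation~\ref{lem:dfd.dfs.ell_geq}) to conclude that the search succeeds, and that the reported $P_i$ ends on $l_{\geq r}$ with $y$-coordinate at most that of $Q_i$. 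The subtlety is that our modified search no longer visits every grid point along a path; it \emph{jumps} to link-cut roots. So I need to argue that jumping to a root does not cause the search to skip over a valid path to $l_t$: this follows because Lemma~\ref{lem:dfd.jump.valid_path} guarantees a monotone path from the jumped-from node to the root, and by Lemma~\ref{lem:dfd.jump.ell_geq} (applied at the level of the current column rather than $l_s$) the root is the lowest endpoint reachable, so no lower path to $l_t$ is missed. In other words, the invariant "jumping to the root loses nothing" is exactly what Lemma~\ref{lem:dfd.jump.ell_geq} and Lemma~\ref{lem:dfd.jump.valid_path} were set up to provide.

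The main obstacle I anticipate is handling the \emph{cross-subproblem} reuse of the link-cut tree. A node may be marked invalid (through backtracking) during the processing of an earlier pair $(s',t')$ with $s' < s$; I must check that such prior invalidations are still correct for the current pair $(s,t)$ — i.e., that a grid point marked invalid earlier genuinely has no monotone path to \emph{any} $l_t$ lying to its right, not just the $l_{t'}$ current at the time of invalidation. This is where the ordering of the sweep ($s$ increasing, $t$ the minimal valid endpoint) and the structure of the halting conditions matter: backtracking-induced invalidation happens only when all outgoing neighbours are dead ends or lead into the forbidden $y$-interval, a property that is monotone as $t$ increases (the forbidden interval $(s,t)$ only grows on the low side and shrinks on the high side in a controlled way). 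I would formalize this as a short auxiliary claim: once a grid point is invalid, it has no monotone path to $l_t$ for the current or any future $t$ in the sweep. With that claim in hand, the induction above goes through verbatim, and combining the forward and converse directions completes the proof of the lemma.
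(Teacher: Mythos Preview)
Your proposal is correct and follows essentially the same approach as the paper: mirror the proof of Lemma~\ref{lem:dfd.dfs.correctness} and substitute Lemma~\ref{lem:dfd.jump.ell_geq} for Observation~\ref{lem:dfd.dfs.ell_geq}. Two small remarks. First, the paper adds one step you omit: in the only-if direction the reference subtrajectory witnessing \problemtwo ends at some vertex $u$ that need not equal the canonical $t$ the sweep uses for that $s$; the paper first truncates the $Q_i$ from $l_u$ to $l_t$ before running the induction. Second, the cross-subproblem invalidation issue you flag is real but not discussed in the paper at all (it simply says the remainder is identical to Lemma~\ref{lem:dfd.dfs.correctness}); your planned auxiliary claim is the right fix, and it is immediate once you note that any monotone path from a grid point to $l_t$ with $t\geq t'$ must pass through column $t'$, so a dead end for $l_{t'}$ remains a dead end for every later $l_t$.
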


\begin{proof}
First we prove the if direction. If our algorithm returns a set of $m-1$ monotone paths from $l_s$ to $l_t$, then the conditions of Subproblem~\ref{def:p3} are satisfied for this fixed pair of vertices $(s,t)$. With the subtrajectory from $s$ to $t$ acting as the reference trajectory, and the $m-1$ monotone paths acting as the other $m-1$ subtrajectories, this cluster of subtrajectories satisfies the conditions of \problemtwo. 

Next we prove the only if direction. Suppose there exists a set of $m-1$ monotone paths that satisfy \problemtwo. Let the reference subtrajectory start at $s$ and end at $u$. Then the $m-1$ monotone paths between $l_s$ and $l_u$ corresponding to the $m-1$ subtrajectories that satisfy the conditions of Subproblem~\ref{def:p3}. Let the shortest subtrajectory starting at $s$ with length $\geq \ell$ end at the vertex $t$. Shorten the $m-1$ monotone paths to be between $l_s$ and $l_t$. There exists a set of $m-1$ monotone paths from $l_s$ to $l_t$ that satisfy the conditions of Subproblem~\ref{def:p3}. The remainder of the proof is identical to the proof of Lemma~\ref{lem:dfd.dfs.correctness}, except we replace our reference to Lemma~\ref{lem:dfd.dfs.ell_geq} with a reference to Lemma~\ref{lem:dfd.jump.ell_geq}. Therefore, our algorithm returns a set of $m-1$ monotone paths from $l_s$ to $l_t$, as required.
\end{proof}

Finally, we analyse the overall running time of our algorithm to obtain the main theorem of Section~\ref{sec:dfd}. The running time is dominated by the greedy depth first search and updating the link-cut data structure.

\theoremone*

\begin{proof}
We will bound the number of steps in the greedy depth first search, and hence the number of updates made on the data structure. Since each grid point has degree at most three, there are at most $3n^2$ pairs of vertices between which there could be a link. At each search step, our algorithm either moves to a neighbour, in which case a link is added, or our algorithm backtracks, in which case a link is removed. Once a link is removed it cannot be added again. Hence, there are at most $O(n^2)$ updates made on the data structure, being either links or cuts, and each update takes $O(\log n)$ amortized time. Hence, the overall running time is $O(n^2 \log n)$. 
\end{proof}

By Theorem~\ref{theorem:bringmann_reduction}, there is no $O(n^{2-\varepsilon})$ time algorithm for \problemtwo under the discrete Fr\'echet distance, for any $\varepsilon > 0$, assuming SETH. Therefore, our algorithm is almost tight, unless SETH fails.

\section{Continuous Fr\'echet Distance}
\label{sec:cfd}

The main theorem that we will prove in this section is the following:

\theoremtwo*

We will be using the continuous free space diagram extensively in this section. Recall that for the continuous Fr\'echet distance, the free space diagram $F_d(T,T)$ consists of $n^2$ cells. The free space within a single cell is the intersection of an ellipse with the cell. A monotone path is a continuous monotone path in the free space. For each cell, we define its critical points to be the intersection of the boundary of the free space with the boundary of the cell. A cell corner in free space is considered a critical point. There are at most eight critical points per cell. See Appendix~\ref{sec:appendix_frechet_freespace} for a formal discussion of the continuous free space diagram.

In Section~\ref{sec:cfd.graph} we provide a modified version of the algorithm by Alt and Godau~\cite{DBLP:journals/ijcga/AltG95} that decides if the continuous Fr\'echet distance between two trajectories is at most~$d$. In Section~\ref{sec:cfd.jump} we extend this algorithm to solve \problemtwo under the continuous Fr\'echet distance in $O(n^2 \log^2 n)$ time.

\subsection{The continuous Fr\'echet distance decision problem}
\label{sec:cfd.graph}

The problem we focus on in this section is to decide if the continuous Fr\'echet distance between a pair of trajectories is at most~$d$. We let the complexities of our two trajectories be $n_1$ and $n_2$. Note that the complexities are usually denoted with $m$ and $n$, however, we use $n_1$ and $n_2$ to avoid confusion with the size of the subtrajectory cluster. 

Similar to the original algorithm by Alt and Godau~\cite{DBLP:journals/ijcga/AltG95}, we decide whether there is a monotone path from the bottom left to the top right corner of the free space diagram. The running time of original algorithm requires $O(n_1n_2)$ time~\cite{DBLP:journals/ijcga/AltG95}, whereas ours requires $O(n_1 n_2 \log (n_1 + n_2))$ time. The original algorithm computes reachability intervals, which are horizontal or vertical propagations of critical points. We avoid computing reachability intervals. Instead, our algorithm decomposes long monotone paths into shorter ones, which we call basic monotone paths.

Recall that a critical point in the continuous free space diagram is either a cell corner, or the intersection point of a cell boundary with an elliptical boundary between free space and non-free space. For a detailed discussion of the free space diagram see Appendix~\ref{sec:appendix_frechet_freespace}. 

\begin{definition}
\label{def:cfd.graph.basic_monotone_path}
A basic monotone path is a monotone path that is contained entirely in a single row (resp. column) of the free space diagram, starts at a critical point on a vertical (resp. horizontal) cell boundary, and ends on a point on a horizontal (resp. vertical) cell boundary. See Figure~\ref{fig:cts_basic_monotone_path}.
\end{definition}

\begin{figure}[ht]
    \centering
    \includegraphics{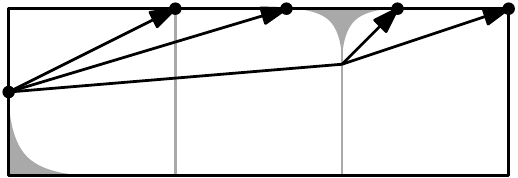}
    \caption{Example of basic monotone paths in a row.}
    \label{fig:cts_basic_monotone_path}
\end{figure}

The next lemma decomposes a monotone path from the bottom left to the top right corner into basic monotone paths. 

\begin{lemma}
\label{lem:cfd.graph.c1}
Given a pair of critical points $a$ and $b$ in the free space diagram, there is a monotone path from $a$ to $b$ if and only if there is a sequence of critical points $p_0, p_1, p_2, \ldots, p_k$ so that $p_0=a$, $p_k=b$, and there is a basic monotone path from $p_i$ to $p_{i+1}$ for every $i = 0,1,\ldots,k-1$.
\end{lemma}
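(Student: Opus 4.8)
The plan is to prove the two directions separately, with the ``if'' direction being essentially trivial and the ``only if'' direction requiring a decomposition argument. For the ``if'' direction, suppose we have critical points $p_0 = a, p_1, \ldots, p_k = b$ with a basic monotone path from $p_i$ to $p_{i+1}$ for each $i$. Each basic monotone path is by definition a monotone path, so their concatenation is a curve; the only thing to check is that the concatenation is itself monotone. Since each basic monotone path ends exactly at the critical point where the next one begins (namely $p_{i+1}$), and since monotonicity in both coordinates is preserved under concatenation at a shared endpoint, the concatenated path is a monotone path from $a$ to $b$. This handles one direction in a couple of lines.

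For the ``only if'' direction, let $P$ be a monotone path from $a$ to $b$. I would decompose $P$ at the points where it crosses a horizontal or vertical cell boundary of the free space diagram. Formally, let $p_0 = a$, and walking along $P$ from $a$, let $p_1$ be the first point at which $P$ meets a cell boundary that is ``transverse'' to the row/column it is currently travelling through — i.e. if $P$ currently lies in a single row, the first horizontal cell boundary it reaches; if it lies in a single column, the first vertical cell boundary. Then the sub-path of $P$ from $p_0$ to $p_1$ lies entirely in one row (resp.\ column), starts at a critical point ($p_0 = a$ is a critical point by hypothesis, or one reduces to that case), and ends on a horizontal (resp.\ vertical) cell boundary, so it is a basic monotone path. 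Iterate: from $p_1$, the path continues either in the same direction or turns, but in either case it now travels through a new row or column, and we take $p_2$ to be the next transverse boundary crossing, and so on until we reach $b$. The monotonicity of $P$ guarantees that this process terminates after finitely many steps, bounded by the number of rows plus columns $P$ visits, which is $O(n_1 + n_2)$.

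The main subtlety — and what I expect to be the chief obstacle — is the requirement in Definition~\ref{def:cfd.graph.basic_monotone_path} that a basic monotone path \emph{start} at a critical point (on the appropriate cell boundary) and \emph{end} on a point of a cell boundary. The endpoints $p_i$ defined above are points where $P$ meets a cell boundary, hence they lie on cell boundaries, but they need not be critical points (intersections with the elliptical free/non-free boundary, or cell corners). The resolution is to observe that the \emph{first} point of each basic monotone path must be a critical point: the previous basic monotone path ended on a cell boundary at $p_i$, and $p_i$ is the point where $P$ enters the next cell/row/column; since $p_i$ lies on a cell boundary and $P$ passes through it in the free space, $p_i$ is on the boundary of the free space region of the new cell, which is precisely what makes it a critical point in Alt--Godau's sense (an intersection of a cell boundary with the region boundary) — or, if $P$ merely grazes the interior, one shifts $p_i$ to the nearest relevant critical point along the boundary without leaving free space, using convexity of the free space within a cell. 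One must verify carefully that this relocation of breakpoints keeps every segment a \emph{monotone} path and keeps the breakpoints critical; this case analysis on how $P$ interacts with a cell (entering through a vertical vs.\ horizontal edge, traversing, exiting) is the technical heart of the argument. The directionality convention (vertical-boundary start $\leftrightarrow$ row-contained $\leftrightarrow$ horizontal-boundary end, and the transpose) needs to be tracked consistently throughout, and I would state it once and then appeal to it.

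Finally, I would remark that the number of breakpoints $k$ is $O(n_1 + n_2)$, which is what makes this decomposition useful for the graph-reachability reformulation in the next subsection; this bound is immediate from monotonicity since $P$ can cross each of the $n_1$ horizontal and $n_2$ vertical cell-boundary lines at most once in each coordinate direction.
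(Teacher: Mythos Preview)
Your overall architecture matches the paper's: the ``if'' direction is immediate concatenation, and the ``only if'' direction is to break the path at alternating horizontal/vertical boundary crossings and then relocate the breakpoints to critical points. You also correctly identified the one real difficulty, namely that the raw crossing points $p_i$ need not be critical points. However, your resolution of that difficulty contains a genuine gap.

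First, the sentence ``$p_i$ is on the boundary of the free space region of the new cell, which is precisely what makes it a critical point'' is simply false: a crossing point of $P$ with a cell boundary typically lies in the \emph{interior} of the free interval on that boundary, not at an intersection with the ellipse or at a corner, so it is not a critical point at all. Your fallback, ``shift $p_i$ to the nearest relevant critical point along the boundary,'' is too vague to be a proof; in particular ``nearest'' is the wrong choice and can destroy monotonicity of the adjacent pieces. The paper's fix is specific: on a vertical boundary replace $q_i$ by the critical point \emph{directly below} it (the lowest free point on that boundary segment), and on a horizontal boundary replace $q_i$ by the critical point \emph{directly to its left} (the leftmost free point). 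This asymmetric, one-sided shift is exactly what makes the argument go through.

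Second, after this shift you still owe a verification that there is a basic monotone path from the new $p_i$ to the new $p_{i+1}$, and this is not a one-liner. The paper does it via an intermediate point: the original path from $q_i$ to $q_{i+1}$ (contained in a single row) must enter the cell carrying $q_{i+1}$ through its left boundary at some point $r_i$; then one concatenates $p_i \to q_i$ (vertical move up within a cell boundary), $q_i \to r_i$ (along $P$), and $r_i \to p_{i+1}$ (inside a single convex free-space cell, with $r_i$ on the left edge and $p_{i+1}$ on the top edge). Each of these three legs is monotone, and the composite stays in one row with the correct vertical-start, horizontal-end pattern. Your sketch gestures at ``a case analysis on how $P$ interacts with a cell'' but never supplies this intermediate-point step, and without it the relocation is not justified.
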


\begin{proof}
For the ``if'' direction, there is a monotone path from $p_i$ to $p_{i+1}$ for all $i=0,1,\ldots,k-1$. Concatenating these paths yields a monotone path from $a$ to $b$.

For the ``only if'' direction, suppose there is a monotone path~$Q$ starts at~$a$ and ends at~$b$. Define $q_0 = a$. We define~$q_i$ inductively for $i \geq 1$. If~$q_i$ is on a vertical (resp. horizontal) boundary, then we define~$q_{i+1}$ as the first intersection of~$Q$ with a horizontal (resp. vertical) boundary occurring after~$q_i$. Between $q_i$ and $q_{i+1}$ there is a monotone path, and the sequence alternates between being on horizontal and vertical boundaries. Eventually we have $q_k = b$ for some $k$.

The monotone path from $q_i$ to $q_{i+1}$ may not be basic if~$q_i$ and~$q_{i+1}$ are not critical points. For $1 \leq i \leq k-1$, if $q_i$ is on a vertical (resp. horizontal) boundary, we define~$p_i$ to be the critical point below (resp. left of)~$q_i$ and on the same cell boundary segment as~$q_i$. Then $p_i$ is either the lowest free point on a vertical boundary, or the leftmost free point on a horizontal boundary. Finally, we set $p_0 = q_0$ and $p_k = q_k$. This completes the construction of the sequence of critical points $p_0, p_1, p_2, \ldots, p_k$. It suffices to show that there is a basic monotone path from~$p_i$ to~$p_{i+1}$. See Figure~\ref{fig:cts_decompose_basic_monotone}.

\begin{figure}[ht]
    \centering
    \includegraphics{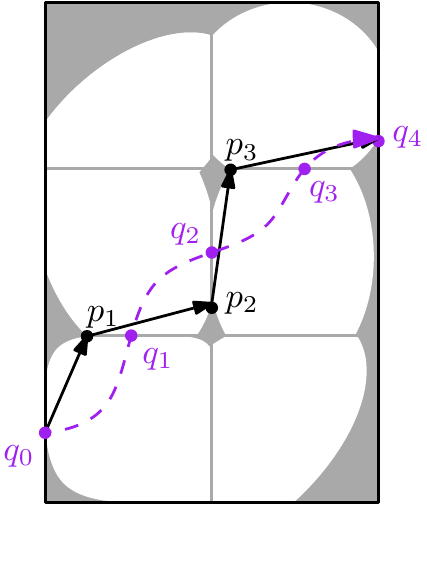}
    \caption{Constructing a sequence of basic monotone paths $\{p_i\}$, from any monotone path $\{q_i\}$.}
    \label{fig:cts_decompose_basic_monotone}
\end{figure}

There is a monotone path from $p_i$ to $q_i$, since $p_i$ is in the same cell and either directly below or directly to the left of $q_i$. Next, we show that there is a monotone path from $q_i$ to $p_{i+1}$. Consider the monotone path from $q_i$ to $q_{i+1}$, which is a subpath of~$Q$. Recall that if $q_i$ is on a vertical (resp. horizontal) boundary, then $q_{i+1}$ is the first intersection of $Q$ with a horizontal (resp. vertical) boundary. Therefore, the monotone path $Q$ must intersect the left (resp. bottom) boundary of the cell that has $q_{i+1}$ on its top (resp. right) boundary. Let the intersection of $Q$ with this left (resp. bottom) boundary be $r_i$. Now, we have a monotone path from $q_i$ to $r_i$, and there is a monotone path from $r_i$ to $p_{i+1}$. Thus, there is a monotone path from $p_i$ to $q_i$, to $r_i$, to $p_{i+1}$. Moreover, $p_i$ is on a vertical (resp. horizontal) boundary, and $p_{i+1}$ is on a horizontal (resp. vertical) boundary, and their cells share a $y$-coordinate (resp. $x$-coordinate). We have shown that there is a basic monotone path from $p_i$ to $p_{i+1}$, completing our proof.
\end{proof}

Lemma~\ref{lem:cfd.graph.c1} motivates us to build the following directed graph. Let $G=(V,E)$ be a graph where~$V$ is the set of critical points in the free space diagram, and~$E$ is the set of all pairs~$(p,q)$ such that there is a basic monotone path from $p$ to $q$. Unfortunately, there are cases where a critical point has $O(n)$ outgoing edges, so that $|E| = O(n^3)$. Our goal will be to reduce the size of this graph, or rather, build essentially the same graph, but implicitly. To do this, we observe the following property of outgoing neighbours of a critical point.

\begin{lemma}
\label{lem:cfd.graph.c2}
Let $p$ be the lowest free point on a vertical cell boundary. Let $q$ be the rightmost critical point in the same row as $p$ such that there is a basic monotone path from $p$ to $q$. Then for any critical point~$r$, there is a basic monotone path from $p$ to $r$ if and only if~$r$ is to the right of $p$, to the left of $q$, and has the same $y$-coordinate as $q$. See Figure~\ref{fig:cts_basic_monotone_path_pqr}.
\end{lemma}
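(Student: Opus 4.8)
The plan is to reduce everything to the convexity of the free space inside a single cell, carried out within the one row of the free space diagram that contains $p$. Since $p$ lies on a vertical cell boundary, every basic monotone path from $p$ stays in the row of $p$; being $y$-monotone, such a path can only terminate on the \emph{top} horizontal boundary of that row (terminating on the bottom boundary would force $p$ itself onto the bottom boundary, a degenerate corner case I would dispose of separately and symmetrically). Hence every critical point reachable from $p$ by a basic monotone path lies on the top boundary line of the row and has $x$-coordinate at least $x_p$ --- this is exactly ``to the right of $p$'' together with ``has the same $y$-coordinate as $q$'' --- and, by the definition of $q$ as the \emph{rightmost} such critical point, it also has $x$-coordinate at most $x_q$. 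This proves the ``only if'' direction immediately.

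For the ``if'' direction I would fix a critical point $r=(x_r,y_{\mathrm{top}})$ on the top line with $x_p\le x_r\le x_q$, noting that $r$ is free since every critical point lies in the (closed) free space. Take the basic monotone path $P$ from $p$ to $q$ given by the definition of $q$, and let $q'$ be the first point where $P$ meets the top line, so $P$ runs monotonically from $p$ to $q'$ and then rightwards along the free top-edge segment from $q'$ to $q$. If $x_r\ge x_{q'}$ then $r$ lies on that free segment and truncating $P$ at $r$ is already a basic monotone path from $p$ to $r$. If $x_r< x_{q'}$, let $w=(x_r,y')$ be the highest point of $P$ with $x$-coordinate $x_r$; then $y'<y_{\mathrm{top}}$, and $w$ and $r$ both lie in the closure of the free space of a common cell $C$ of the row (if $x_r$ is an interior cell-boundary coordinate, so $r$ is a cell corner, take $C$ to be the cell immediately to the left). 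Since the free space of $C$ is an ellipse intersected with a rectangle, hence convex, the vertical segment from $w$ up to $r$ lies in free space; prepending the portion of $P$ from $p$ to $w$ gives a monotone path from $p$ to $r$ that stays in the row and ends on a horizontal boundary, i.e.\ a basic monotone path.

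To make the second case airtight I would package it as the usual reachability-interval statement: for each vertical line $x=c$ meeting the row, the set $I(c)\subseteq[y_{\mathrm{bot}},y_{\mathrm{top}}]$ of heights reachable from $p$ by a monotone path staying in the row is a (possibly empty) subinterval, proved by a left-to-right induction over the cells using per-cell convexity --- inside a convex region the monotone-reachable set from a fixed point is exactly its coordinatewise-larger part, whose slice by any vertical line is an interval, and the interval on the right edge of a cell seeds the next cell. Then reachability of $q$ gives $y_{\mathrm{top}}\in I(x_q)$, and the convexity move above upgrades this to $y_{\mathrm{top}}\in I(x_r)$ for every $x_p\le x_r\le x_q$, which is precisely reachability of $r$. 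I expect this interval bookkeeping, together with the boundary and corner case distinctions (when $x_r$ is an interior cell boundary, and when $p$ is itself a cell corner so that ``the row of $p$'' and the role of the bottom edge need separate treatment), to be the only real obstacle; the geometric heart --- ``slide right along the top edge, or else go straight up inside one convex cell'' --- is short.
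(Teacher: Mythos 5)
Your proof is correct and takes essentially the same route as the paper's: the ``only if'' direction follows from monotonicity and row-containment, and the ``if'' direction reroutes the basic monotone path from $p$ to $q$ through $r$ using convexity of the free space within a single cell (the paper diverts at the path's crossing with the left boundary of the cell having $r$ on its top edge, rather than going straight up the vertical line through $r$ --- a cosmetic difference). The reachability-interval machinery in your final paragraph is not needed beyond this.
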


\begin{figure}[ht]
    \centering
    \includegraphics{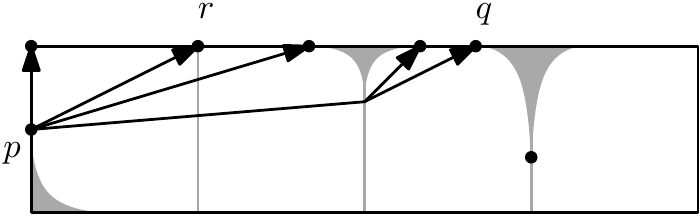}
    \caption{There is a basic monotone path $pr$ if and only if $r$ is to the right of $p$, to the left of $q$, and $r$ has the same $y$-coordinate as $q$.}
    \label{fig:cts_basic_monotone_path_pqr}
\end{figure}

\begin{proof}
We first prove the ``only if" direction. Suppose there is a basic monotone path from $p$ to~$r$. Then clearly $r$ is to the right of $p$. Moreover, $q$ is to the right of $r$ since $q$ is the rightmost critical point so that there is a basic monotone path from $p$ to $r$. Finally, $p$, $q$ and $r$ are all on boundaries of cells that share a $y$-coordinate. Both $q$ and $r$ must be on the top boundary of their respective cells as there are basic monotone paths from $p$ to $q$ and $r$. Hence, $q$ and $r$ share the same $y$-coordinate. This completes the ``only if'' direction.

Next we prove the ``if'' direction. Suppose $r$ shares the same $y$-coordinate as $q$, is to the right of $p$ and to the left of $q$. Let $L$ be the left boundary of the cell with $r$ on its top boundary. Since $p$ and $q$ are in the same row of the free space diagram, and $L$ is between $p$ and $q$, the basic monotone path from $p$ to $q$ must intersect $L$ at some point, which we will call $s$. So there is a monotone path from $p$ to $s$ and $s$ to $r$, since $s$ is on the left boundary and $r$ is on the top boundary of the same cell. Moreover, this monotone path is basic since $p$ is on a vertical boundary, $r$ is on a horizontal boundary, and their cells share a $y$-coordinate. This completes the ``if'' direction.
\end{proof}

We also have a corollary for this lemma where the $x$ and $y$-coordinates are switched.

\begin{corollary}
\label{lem:cfd.graph.c2'}
Let $p$ be the leftmost free point on a horizontal cell boundary. Let $q$ be the topmost critical point in the same column as $p$ such that there is a basic monotone path from $p$ to $q$. Then for any critical point~$r$, there is a basic monotone path from $p$ to $r$ if and only if~$r$ is above $p$, is below $q$, and has the same $x$-coordinate as $q$.
\end{corollary}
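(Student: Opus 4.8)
The plan is to obtain this corollary from Lemma~\ref{lem:cfd.graph.c2} by a symmetry argument, reflecting the free space diagram across its main diagonal. Recall that $F_d(T_2,T_1)$ carries $T_2$ along the $x$-axis and $T_1$ along the $y$-axis; reflecting across the line $y=x$ exchanges the two axes and produces $F_d(T_1,T_2)$, which is again a legitimate continuous free space diagram because the Fr\'echet distance is symmetric. This reflection is an order-isomorphism of the two diagrams: it carries cells to cells, horizontal cell boundaries to vertical ones and vice versa, columns to rows, the relation ``is above'' to ``is to the right of'', ``is below'' to ``is to the left of'', ``topmost'' to ``rightmost'', ``leftmost'' to ``lowest'', ``has the same $x$-coordinate as'' to ``has the same $y$-coordinate as'', and the four sides of a cell to one another accordingly (top$\leftrightarrow$right, left$\leftrightarrow$bottom). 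Since it merely interchanges the two coordinates, both of which are required to be non-decreasing along a monotone path, it maps monotone paths to monotone paths, and hence maps basic monotone paths confined to a single column (starting on a horizontal boundary, ending on a vertical boundary) to basic monotone paths in the sense of Definition~\ref{def:cfd.graph.basic_monotone_path} confined to a single row (starting on a vertical boundary, ending on a horizontal boundary).

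First I would check that, under this reflection, the hypotheses of the corollary become exactly the hypotheses of Lemma~\ref{lem:cfd.graph.c2}: the point $p$, being the leftmost free point on a horizontal cell boundary, maps to the lowest free point on a vertical cell boundary; $q$, being the topmost critical point in the column of $p$ reachable from $p$ by a basic monotone path, maps to the rightmost critical point in the row of the image of $p$ reachable from it by a basic monotone path; and an arbitrary critical point $r$ maps to an arbitrary critical point. Then the conclusion of Lemma~\ref{lem:cfd.graph.c2} --- that there is a basic monotone path from $p$ to $r$ if and only if $r$ lies to the right of $p$, to the left of $q$, and shares the $y$-coordinate of $q$ --- pulls back, under the inverse reflection, to the statement that there is a basic monotone path from $p$ to $r$ if and only if $r$ lies above $p$, below $q$, and shares the $x$-coordinate of $q$, which is precisely the corollary. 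So the argument is: reflect, apply Lemma~\ref{lem:cfd.graph.c2}, reflect back.

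Equivalently, one may simply reread the proof of Lemma~\ref{lem:cfd.graph.c2} with the substitutions ``lowest''~$\to$~``leftmost'', ``vertical boundary''~$\to$~``horizontal boundary'', ``row''~$\to$~``column'', ``rightmost''~$\to$~``topmost'', ``to the right of / to the left of''~$\to$~``above / below'', ``$y$-coordinate''~$\to$~``$x$-coordinate'', and the corresponding renaming of the cell's sides; nothing in that proof exploits any asymmetry between the two axes, so it carries over verbatim. I do not expect a genuine obstacle here. The only care required is to state the reflection dictionary accurately and to confirm that both Definition~\ref{def:cfd.graph.basic_monotone_path} and the argument of Lemma~\ref{lem:cfd.graph.c2} are invariant under swapping the two (monotonically non-decreasing) coordinates, and both of these checks are routine.
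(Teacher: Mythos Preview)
Your proposal is correct and matches the paper's own treatment: the paper states this corollary without proof, remarking only that it is Lemma~\ref{lem:cfd.graph.c2} ``where the $x$ and $y$-coordinates are switched,'' which is precisely your reflection/substitution argument.
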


We leverage Lemma~\ref{lem:cfd.graph.c2} and Corollary~\ref{lem:cfd.graph.c2'} to build an improved graph that has fewer edges. We use binary trees as an intermediary between the start and end point of the edges in~$E$. We construct a (directed) binary tree $B_i$ for each horizontal line $h_0, h_1, \ldots h_{n_2}$ in the free space diagram. Every parent in $B_i$ has a (directed) edge to its two children. The leaves of the binary tree are a sorted list of the critical points on $h_i$. Analogous binary trees are constructed for the vertical lines in the free space diagram. 
Now we describe the improved graph $G' = (V', E')$. Let $V'$ be the union of the set of critical points in the free space diagram plus the set of internal vertices in the binary trees~$B_i$ for the horizontal and vertical lines in the free space diagram. For each critical point $p$ on a vertical boundary, compute the rightmost critical point $q$ so that there is a basic monotone path from~$p$ to~$q$. In $G=(V,E)$, there is a directed edge from $p$ to every critical point that is between~$p$ and~$q$ and on the horizontal line through $q$. Let the binary tree that corresponds to this horizontal line be $B_i$. We construct a directed edge from $p$ to nodes of $B_i$ so that the union of their descendants in $B_i$ matches this set of contiguous critical points on $h_i$. We do so similarly for the critical points $p$ on horizontal boundaries. This completes the construction of the graph $G' = (V', E')$. 

To decide whether the Fr\'echet distance between our two trajectories is at most $d$, we perform a depth first search in $G'$ to decide whether there is a directed path from the bottom left corner to the top right corner. This completes the statement of the algorithm. Now we prove its correctness.

\begin{lemma}
\label{lem:cfd.graph.c3}
Given a pair of critical points $a$ and $b$, there is a monotone path from $a$ to $b$ if and only if there is a directed path from $a$ to $b$ in the graph $G' = (V', E')$. 
\end{lemma}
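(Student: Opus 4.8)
The plan is to prove both directions of the equivalence by relating paths in $G' = (V',E')$ to paths in the conceptual graph $G = (V,E)$ from Lemma~\ref{lem:cfd.graph.c1}, and then invoking that lemma.

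First I would establish that reachability in $G'$ coincides with reachability in $G$ \emph{when restricted to critical points}. This is the crux and splits naturally into two inclusions. For the ``$G$ implies $G'$'' direction: suppose $(p,q') \in E$, i.e. there is a basic monotone path from critical point $p$ to critical point $q'$. Without loss of generality $p$ lies on a vertical boundary, so by Lemma~\ref{lem:cfd.graph.c2}, $q'$ lies on the horizontal line $h_i$ through the extremal point $q$, is to the right of $p$, and to the left of $q$. By construction of $E'$, the out-edges of $p$ point to a set of nodes of $B_i$ whose descendant leaves are exactly the critical points in that contiguous range, so $q'$ is a descendant (in $B_i$) of one of those nodes; following the tree edges from that node down to the leaf $q'$ gives a directed path $p \rightsquigarrow q'$ in $G'$. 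For the ``$G'$ implies $G$'' direction: I need that if there is a directed path in $G'$ from critical point $a$ to critical point $b$, then there is one in $G$. Here the key observation is that the only way to enter a binary tree $B_i$ is via an edge from a critical point $p$ (whose target nodes all have descendant-leaves forming a valid contiguous range reachable from $p$ in $G$), and the only way to leave $B_i$ is by descending to a leaf. So any maximal subpath of the $G'$-path that passes through tree-internal nodes has the form $p \to (\text{internal nodes of } B_i) \to q'$ where $q'$ is a leaf descendant of the first internal node; by construction and Lemma~\ref{lem:cfd.graph.c2}, $(p,q') \in E$. Contracting every such detour-through-a-tree into a single $G$-edge turns the $G'$-path into a $G$-path between the same critical points. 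One should handle the (easy) case where the path uses only critical-point-to-critical-point structure: but note that in $G'$ there are \emph{no} direct critical-point-to-critical-point edges — every edge out of a critical point goes into a binary tree — so actually every step is of the detour type, which streamlines the argument.

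With $G$-reachability and $G'$-reachability shown equivalent on critical points, the lemma follows: by Lemma~\ref{lem:cfd.graph.c1} there is a monotone path from $a$ to $b$ iff there is a sequence $p_0 = a, p_1, \ldots, p_k = b$ of critical points with a basic monotone path between consecutive ones, which is exactly a directed path $a \rightsquigarrow b$ in $G$, which by the above is equivalent to a directed path $a \rightsquigarrow b$ in $G'$. I would also remark that the analogous statements for $p$ on a horizontal boundary use Corollary~\ref{lem:cfd.graph.c2'} in place of Lemma~\ref{lem:cfd.graph.c2}, and the argument is symmetric.

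The main obstacle I anticipate is bookkeeping the binary-tree structure cleanly: I must be careful that the set of nodes of $B_i$ to which $p$ connects is chosen so that (i) their descendant leaves are \emph{exactly} the intended contiguous range (no more, no less) and (ii) there are $O(\log n)$ of them, and then argue that descending through $B_i$ from any such node reaches precisely the leaves in that range. A standard interval-decomposition-in-a-balanced-binary-tree argument handles this, but the proof needs to state explicitly that tree edges only go parent-to-child (so one cannot ``climb back up'' and escape the intended range) — this is what forces every $G'$-path through a tree to exit at a genuine descendant leaf, and hence correspond to a legitimate $G$-edge. Once that structural invariant is pinned down, the rest is routine path-surgery.
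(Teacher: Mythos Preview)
Your proposal is correct and follows essentially the same approach as the paper: reduce to the graph $G$ via Lemma~\ref{lem:cfd.graph.c1}, then show $G$-reachability and $G'$-reachability coincide on critical points using Lemma~\ref{lem:cfd.graph.c2}, Corollary~\ref{lem:cfd.graph.c2'}, and the binary-tree interval decomposition. If anything, you are more explicit than the paper about the $G' \Rightarrow G$ direction (in particular your observation that tree edges go only parent-to-child, so any traversal through a $B_i$ must exit at a descendant leaf), which the paper leaves implicit.
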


\begin{proof}
By Lemma~\ref{lem:cfd.graph.c1} and the definition of the graph $G$, there is a monotone path from $a$ to~$b$ if and only if there is a directed path from $a$ to $b$ in the graph $G = (V,E)$. By Lemma~\ref{lem:cfd.graph.c2} and Corollary~\ref{lem:cfd.graph.c2'}, the outgoing neighbours of a critical point form a set of contiguous critical points on either a horizontal or vertical line in the free space diagram. Without loss of generality, suppose the set of contiguous critical points lie on $h_i$. By the definition of the binary trees $B_i$, we can convert a set of edges to this contiguous set of critical points into a set of paths down the binary tree $B_i$. Hence, there is a directed edge $(p,q)$ in $G=(V,E)$ if and only if there is a directed path $(p_0, p_1, \ldots, p_k)$ in $G'=(V',E')$ where $p_0 = p$, $p_k = q$ and $p_j$ is in a binary tree $B_i$ for all $1 \leq j \leq k-1$. Hence, there is a monotone path from $a$ to $b$ in the free space diagram if and only if there is a directed path from $a$ to $b$ in the directed graph $G' = (V',E')$.
\end{proof}

Finally, we analyse the running time of our algorithm. First we consider the running time of constructing the graph $G' = (V', E')$. Constructing the set of vertices~$V'$ takes $O(n_1n_2)$ time. It remains to construct the set of edges~$E'$. We first compute, for each critical point $p$, the rightmost (resp. topmost) critical point $q$ where there is a basic monotone path from $p$ to $q$.

\begin{lemma}
\label{lem:cfd.graph.c4}
Given a row of $n_1$ cells in a free space diagram, let $p_1, p_2, \ldots, p_{n_1+1}$ be the lowest free points on its vertical boundaries. Then we can compute, in $O(n_1 \log n_1)$ time, the set of critical points $q_1, q_2, \ldots, q_{n_1+1}$, so that for all $1 \leq i \leq k$, $q_i$ is the rightmost critical point on a horizontal cell boundary such that there is a basic monotone path from $p_i$ to $q_i$. 
\end{lemma}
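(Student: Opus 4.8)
The plan is to exploit the monotone structure of the basic monotone path reachability within a single row. The key observation is that if $q_i$ is the rightmost critical point reachable by a basic monotone path from $p_i$, then $q_{i+1}$ is at least as far right as $q_i$ — that is, the sequence $q_1, q_2, \ldots, q_{n_1+1}$ is monotone from left to right. This is because a basic monotone path from $p_i$ through to the cell containing $q_i$ must cross the vertical boundary on which $p_{i+1}$ sits, and the lowest free point $p_{i+1}$ on that boundary can only be at or below that crossing, so any path reachable from $p_i$ onward is also reachable from $p_{i+1}$. Thus the total ``work'' of advancing the right endpoint, summed over all $i$, is $O(n_1)$ cell-advancement steps.

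First I would set up the incremental computation: process the cells of the row left to right. Maintain, for the current source $p_i$, a ``reach interval'' on the current vertical boundary — the set of $y$-coordinates from which the remainder of the row is still reachable. When we advance from one cell to the next, we intersect the current reach interval with the free space of the new cell (an elliptical slab, so the free portion of each vertical boundary is a single interval), propagate it monotonically across the cell, and read off the free portion of the next vertical boundary; the process stops advancing when the reach interval becomes empty or when the top boundary of some cell is hit, at which point the rightmost reachable critical point $q_i$ is determined (it is the rightmost critical point on that top boundary, or a cell corner). Because $q_{i+1} \geq q_i$, when we move the source from $p_i$ to $p_{i+1}$ we do not restart from the left — we simply restart the reach-interval propagation from $p_{i+1}$ but only need to recompute within the cells between $p_i$'s and $p_{i+1}$'s reach, so the amortised cost of the sweep is $O(n_1)$ interval operations, each taking $O(1)$ time (intersecting two intervals, propagating an interval through one elliptical cell — the standard Alt--Godau reachability primitive).

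The $\log n_1$ factor enters only at the final bookkeeping step: once we know, for each $p_i$, the cell boundary and the sub-segment of it on which $q_i$ lies, we must locate $q_i$ as an actual critical point, i.e. the rightmost critical point weakly to the left of the reach endpoint. Since the critical points on each horizontal line are a sorted list (which we can build once in $O(n_1 \log n_1)$ time), each such lookup is a binary search costing $O(\log n_1)$, for a total of $O(n_1 \log n_1)$. I expect the main obstacle to be carefully justifying the monotonicity claim $q_{i+1} \geq q_i$ together with the amortisation — in particular arguing that the reach-interval propagation, when the source jumps from $p_i$ to $p_{i+1}$, never has to re-examine cells strictly to the left of where the previous propagation settled, so that the work telescopes. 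The elliptical-cell interval arithmetic itself is routine (it is exactly the primitive used by Alt and Godau~\cite{DBLP:journals/ijcga/AltG95}), and the sortedness of critical points per line is immediate from Definition~\ref{def:cfd.graph.basic_monotone_path} and the fact that each cell contributes $O(1)$ critical points to each incident line.
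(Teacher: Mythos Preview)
Your approach differs substantially from the paper's. The paper sweeps \emph{right to left}, maintaining two monotone stacks (one of the $p_j$, one of the highest free points $r_j$), and for each $p_i$ binary-searches these stacks for the first boundary $j>i$ at which the horizontal ray at height $y(p_i)$ is obstructed; if obstructed from below by some $p_j$ it sets $q_i = q_j$ (already computed by the right-to-left order), if from above by some $r_j$ it reads $q_i$ off the top of cell $j-1$. The $\log n_1$ factor comes entirely from those binary searches.

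Your monotonicity observation $x(q_{i+1}) \ge x(q_i)$ is correct, but the amortisation you sketch has a genuine gap. You write that after shifting the source to $p_{i+1}$ you ``restart the reach-interval propagation from $p_{i+1}$'' yet ``only need to recompute within the cells between $p_i$'s and $p_{i+1}$'s reach''. These are incompatible as stated: to resume at column $j^* := \mathrm{col}(q_i)$ from the new source you need the reach interval there, whose lower endpoint is $\max\bigl(y(p_{i+1}),\ldots,y(p_{j^*})\bigr)$; this can be strictly smaller than what you had from source $p_i$ (precisely when $y(p_i)$ was the running maximum), and nothing in your outline computes it without walking back through all the intermediate boundaries. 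Concretely, take the $y(p_i)$ strictly decreasing with every $y(p_i) > r_{j^*}$ for one fixed boundary $j^*$ (and $r_k$ large for $k<j^*$): then every $q_i$ lies above cell $j^*-1$, the right pointer never moves, and a literal restart from each $p_{i+1}$ costs $\Theta(j^*-i)$, summing to $\Theta(n_1^2)$. The gap is easy to patch --- maintain a sliding-window maximum (monotone deque) over the $y(p_k)$ in the current window, so each source shift costs amortised $O(1)$ --- and with that fix the row actually takes $O(n_1)$ time, beating the paper's bound; your binary-search step (the claimed source of the $\log n_1$) then becomes unnecessary, since once $j^*$ is known $q_i$ is simply the rightmost free point on the top edge of cell $j^*-1$.
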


\begin{proof}
Let the lowest free points on the vertical boundaries from left to right be $p_1,p_2,\ldots,p_{n_1+1}$. Let the corresponding highest free points on the same vertical boundaries be $r_1, r_2, \ldots, r_{n_1+1}$. Our algorithm is a dynamic program that considers the critical points $p_i$ and $r_i$ for decreasing values of~$i$.

While performing the dynamic program on decreasing values of $i$, we maintain two lists, one for $p_i$ and one for $r_i$. The list for $p_i$ is of all $p_j$ that have $y$-coordinate greater than the $y$-coordinates of $p_{i+1},\ldots,p_{j-1}$. The list for $r_i$ is of all $r_j$ that have $y$-coordinate less than the $y$-coordinates of $r_{i+1},\ldots,r_{j-1}$. 

Both lists can be maintained in amortized constant time per update by storing the list as a stack. When a new vertical boundary is considered, we will add $p_i$ and $r_i$ to the top of the stack. To maintain the invariant that all $p_j$ in the list must have greater $y$-coordinates than $p_i$, we pop off all elements on the top of the stack that have $y$-coordinate less than or equal to the $y$-coordinate of $p_i$ before adding $p_i$. We maintain $r_i$ analogously, but we check if the $y$-coordinate is greater than or equal. 

Next, we use this pair of stacks to compute our dynamic program. The idea is that we construct a horizontal path starting at the critical point $p_i$ and find the first non-free point it intersects. There are three cases. Either it reaches the rightmost vertical boundary of the row, it intersects a point that is below $p_j$ for some $j>i$, or it intersects a point that is above $r_j$ for some $j>i$. 

In the first case, the horizontal path intersects the rightmost vertical boundary. If $i < n_1+1$, then $q_i$ is on the top boundary of the rightmost cell in the row. If $i = n_1 + 1$, then $q_i$ does not exist.

In the second case, the horizontal line intersects a point that is below~$p_j$. By our invariant, the critical point~$p_j$ must be in our stack. We locate~$p_j$ by computing the smallest index $j$ such that the $y$-coordinate of $p_j$ is at least the $y$-coordinate of $p_i$. For points to the right of $p_j$, our monotone path starting at $p_i$ can only reach points where the monotone path starting at $p_j$ can reach. Hence, we set $q_i$ to $q_j$, which has previously been computing. 

The third case is that the horizontal line intersects a point that is above $r_j$. By our invariant, the critical point~$r_j$ must be in our stack. Again, we locate~$r_j$ by computing the smallest index~$j$ such that the $y$-coordinate of $r_j$ is at most the $y$-coordinate of~$p_i$. Our monotone path starting at $p_i$ can reach $r_j$, but cannot reach any points to the right of $r_j$. Hence, we can set $q_j$ to be the leftmost free point on the top boundary of the cell that has $r_j$ on its right boundary.

Maintaining the stacks takes $O(n_1)$ time. Performing the binary searches to find $p_j$ in the first case and $r_j$ in the second case takes $O(n_1 \log n_1)$ time in total. Hence, the overall running time for computing $q_1, q_2, \ldots, q_{n_1+1}$ is $O(n_1 \log n_1)$ time, as required.
\end{proof}

Lemma~\ref{lem:cfd.graph.c4} allows us to compute all the edges $E$ in $O(n_1 n_2 \log (n_1 + n_2))$ time. Next, we convert the edges $E$ into the edges $E'$. We show that in the graph $G' = (V', E')$, there are at most $O(\log n_1)$ outgoing neighbours of $p$, and that these neighbours can be computed in $O(\log n_1)$ time.

\begin{fact}[Chapter~5.1 of~\cite{DBLP:books/lib/BergCKO08}]
\label{lem:cfd.graph.logn_binary_tree}
Let $B$ be a binary search tree with size $O(n)$. Suppose the leaves of $B$, from left to right, are a sorted list of real numbers. Then we can preprocess $B$ in $O(n)$ time, so that given a pair of real numbers $s$ and $t$, we can select $O(\log n)$ nodes of~$B$ in $O(\log n)$ time so that their descendants are those that lie in the interval $[s,t]$.
\end{fact}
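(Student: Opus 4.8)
The plan is to prove this as the standard ``canonical node decomposition'' for a one-dimensional range query on a balanced search tree, essentially following Section~5.1 of~\cite{DBLP:books/lib/BergCKO08}. First I would deal with the preprocessing. If $B$ is not already height-balanced, I rebuild it into a balanced binary search tree on the same sorted leaf sequence; since the leaves are already given in sorted order this rebuild takes only $O(n)$ time (recursively split the sorted list at its median), and afterwards $B$ has height $O(\log n)$. If $B$ is already balanced this step is vacuous. The internal nodes carry the usual splitting keys, so a search for any real value follows a root-to-leaf path in $O(\log n)$ time.

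For the query, given $s \le t$ I would first locate the \emph{split node} $v_{\mathrm{split}}$: the node at which the search path for $s$ and the search path for $t$ diverge, found by walking down from the root while the two searches agree. From $v_{\mathrm{split}}$ I continue the search for $s$ toward its leaf, and each time this path descends to a left child I report the corresponding right child as a selected node, since the entire set of leaf keys beneath that right child lies in $[s,t]$; I also test the leaf at the end of the $s$-path and report it if its key is in $[s,t]$. Symmetrically, from $v_{\mathrm{split}}$ I follow the search for $t$, reporting the left child whenever the path descends to a right child, and I test the final leaf of the $t$-path. The reported nodes are the answer.

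The correctness argument splits into two halves. First, each reported internal node $w$ hangs off one of the two search paths on the side that points into the interval, so every leaf key below $w$ is $\ge s$ and $\le t$; the two reported boundary leaves are checked explicitly, so only in-range leaves are covered. Second, conversely, for a leaf $\mu$ with key $x \in [s,t]$, the search path for $x$ agrees with the $s$-path and the $t$-path down to $v_{\mathrm{split}}$ and then ``leaves'' one of them exactly once (or terminates at a reported boundary leaf); the node at which it first departs is precisely one of the reported nodes, and it is the unique reported node above $\mu$. Hence the leaf-descendants of the selected nodes partition $\{\, x : s \le x \le t \,\}$ exactly. I expect this disjoint-and-covering bookkeeping, together with the careful treatment of the two boundary leaves at the ends of the search paths and of the degenerate cases ($v_{\mathrm{split}}$ a leaf, or $s$ or $t$ outside the stored range), to be the only real subtlety: routine, but it must be stated precisely.

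Finally, for the running time: because $B$ is balanced, the two search paths each have length $O(\log n)$, so at most $O(\log n)$ nodes are reported, and walking the two paths while emitting these nodes costs $O(\log n)$ time total, giving the claimed bound.
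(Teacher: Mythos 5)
Your proposal is correct and is essentially the argument the paper relies on: the paper states this as a Fact cited directly from Chapter~5.1 of de~Berg~et~al., and your split-node/canonical-subtree decomposition with the two $O(\log n)$ search paths is exactly that standard proof (the $O(n)$ rebalancing step is harmless here, since the trees the paper builds over the critical points are balanced by construction anyway).
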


Applying Lemma~\ref{lem:cfd.graph.c4} and Fact~\ref{lem:cfd.graph.logn_binary_tree} to each of the $O(n_1n_2)$ critical points in the free space diagram leads to an $O(n_1n_2 \log (n_1+n_2))$ time algorithm for constructing all edges in the graph $G' = (V', E')$. Moreover, the size of $E'$ is $O(n_1n_2 \log (n_1+n_2))$. Finally, running the depth first search takes $O(|V'| + |E'|) = O(n_1n_2 \log (n_1+n_2))$. This yields the following theorem.

\begin{theorem}
\label{thm:cfd.graph}
Given a pair of trajectories of complexities $n_1$ and $n_2$, there is an $O(n_1n_2 \log (n_1+n_2))$ time algorithm that solves the Fr\'echet distance decision problem by running a depth first search algorithm on the set of critical points in the free space diagram.
\end{theorem}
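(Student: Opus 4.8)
The plan is to assemble the algorithm and its correctness from the lemmas already proved, and then carefully tally the running time contributions from (a) building the vertex set $V'$, (b) building the edge set $E'$, and (c) running the depth first search on $G'$. The algorithm itself is exactly the one described preceding the statement: construct the graph $G' = (V', E')$, then run a depth first search from the vertex representing the bottom left corner of the free space diagram, and report that the Fr\'echet distance is at most $d$ if and only if the search reaches the vertex representing the top right corner. Correctness is immediate from Lemma~\ref{lem:cfd.graph.c3}: there is a monotone path from the bottom left corner to the top right corner (which, by Alt and Godau, is equivalent to the Fr\'echet distance being at most $d$) if and only if there is a directed path between the corresponding vertices in $G'$, which the depth first search detects.

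For the running time, I would argue in three stages. First, $V'$ consists of the $O(n_1 n_2)$ critical points (at most eight per cell) together with the internal nodes of the binary trees $B_i$, one tree per horizontal and per vertical grid line; since each $B_i$ has a number of leaves equal to the number of critical points on that line and these sum to $O(n_1 n_2)$, the total size of $|V'|$ is $O(n_1 n_2)$ and it can be built in $O(n_1 n_2 \log(n_1+n_2))$ time (sorting critical points along each line and building balanced trees). Second, for the edges: by Lemma~\ref{lem:cfd.graph.c4}, processing each row of $n_1$ cells to compute the rightmost reachable critical point $q_i$ for each of its lowest free points $p_i$ takes $O(n_1 \log n_1)$ time, and summing over all $n_2$ rows (and symmetrically over all $n_1$ columns, using Corollary~\ref{lem:cfd.graph.c2'}) gives $O(n_1 n_2 \log(n_1+n_2))$ time to determine, for every critical point, its contiguous interval of outgoing neighbours along a grid line. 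Then, invoking Fact~\ref{lem:cfd.graph.logn_binary_tree} on the relevant binary tree, each such interval is covered by $O(\log(n_1+n_2))$ canonical nodes found in $O(\log(n_1+n_2))$ time, so the $O(n_1 n_2)$ critical points contribute $O(n_1 n_2 \log(n_1+n_2))$ edges in total and the construction of $E'$ takes $O(n_1 n_2 \log(n_1+n_2))$ time; the tree edges inside each $B_i$ number $O(|V'|) = O(n_1 n_2)$. Third, since $|V'| = O(n_1 n_2)$ and $|E'| = O(n_1 n_2 \log(n_1+n_2))$, the depth first search runs in $O(|V'| + |E'|) = O(n_1 n_2 \log(n_1+n_2))$ time. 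Adding the three stages gives the claimed bound.

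I do not expect a genuine obstacle here, since the substantive work is done in Lemmas~\ref{lem:cfd.graph.c1}--\ref{lem:cfd.graph.c4} and Fact~\ref{lem:cfd.graph.logn_binary_tree}; the theorem is essentially a bookkeeping assembly. The one point that needs a little care is making sure the edge count is genuinely $O(n_1 n_2 \log(n_1+n_2))$ and not larger: this relies crucially on Lemma~\ref{lem:cfd.graph.c2} (and its corollary), which guarantees that the outgoing neighbours of any critical point form a \emph{contiguous} run of critical points along a single grid line, so that the binary-tree decomposition of Fact~\ref{lem:cfd.graph.logn_binary_tree} applies and replaces a potential $\Theta(n)$-sized out-neighbourhood by $O(\log n)$ tree nodes. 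I would also remark explicitly that the depth first search correctly handles the tree edges (every internal tree node forwards reachability to both children), so reaching a canonical node of $B_i$ is equivalent to reaching all critical points in its subtree, which is what Lemma~\ref{lem:cfd.graph.c3} requires.
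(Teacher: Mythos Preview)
Your proposal is correct and follows essentially the same assembly as the paper: correctness from Lemma~\ref{lem:cfd.graph.c3}, edge construction via Lemma~\ref{lem:cfd.graph.c4} and Fact~\ref{lem:cfd.graph.logn_binary_tree} yielding $|E'| = O(n_1 n_2 \log(n_1+n_2))$, and a DFS in $O(|V'|+|E'|)$ time. The paper's argument is terser (and claims $V'$ is built in $O(n_1 n_2)$ time rather than your $O(n_1 n_2 \log(n_1+n_2))$, since critical points along each grid line can be produced already sorted), but this does not affect the overall bound.
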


\subsection{Reference subtrajectory is vertex-to-vertex}
\label{sec:cfd.jump}
\label{sec:cfd.jump.v2v}

Our approach is to run the sweepline algorithm in Section~\ref{sec:dfd.jump}, but we replace the discrete free space diagram (i.e. the $n^2$ grid points) with the graph $G'=(V',E')$ defined in Section~\ref{sec:cfd.graph}. For this we require three modifications to the sweepline algorithm.

The first modification is to generalise the greedy aspect of the depth first search to the new graph. In the discrete free space diagram, we first explore $(x+1,y)$, then $(x+1,y+1)$, and finally $(x,y+1)$. In the graph $G'=(V',E')$, we explore the neighbours with minimum $y$-coordinate first, and of those with the same $y$-coordinate, we explore those with maximum $x$-coordinate first.

The second modification is to create additional nodes in $G'=(V',E')$ for the ending points of the monotone paths $P_i$. The ending point is the lowest point on $l_t$ such that there is a monotone path to that point. However, this lowest point on $l_t$ may not be a node in $G'$. We can detect this case by checking if the last critical point before reaching $l_t$, say $p$, has a basic monotone path through $l_t$. In this case the ending point is simply the intersection of $l_t$ with a horizontal line through $p$. We add this intersection to the graph $G'$, and calculate its outgoing neighbours.

The third modification is to create additional nodes in $G'=(V',E')$ for the starting points of the monotone paths $P_i$. As usual, the starting point of $P_{i+1}$ is the lowest free point on $l_s$ that has a $y$-coordinate greater than or equal to the maximum $y$-coordinate of $P_i$. However, this lowest free point may not be a node of $G'$. If it is not, we add it to $G'$ and calculate its outgoing neighbours. 

All additional nodes created in the second and third modifications are not initially part of the graph $G'=(V',E')$, and are only added to $G'$ when necessary. Hence, the graph $G'$ increases in size as the sweep line algorithm is performed.

A special case that is closely related to the second and third modifications is to detect if there are infinitely many horizontal monotone paths between $l_s$ and $l_t$. We use the same method as Buchin~\etal~\cite{DBLP:journals/ijcga/BuchinBGLL11} to detect if adding any of these additional nodes to $G'=(V',E')$ creates infinitely many horizontal monotone paths, in which case we return a set of $m-1$ monotone paths.

This completes the statement of our algorithm. Next, we argue its correctness.

\begin{lemma}
There exist $m-1$ monotone paths satisfying the conditions of \problemtwo under the continuous Fr\'echet distance in the case that the reference subtrajectory is vertex-to-vertex if and only if our algorithm returns a set of $m-1$ monotone paths. 
\end{lemma}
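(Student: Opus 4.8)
The plan is to follow the proof of Lemma~\ref{lem:dfd.jump.correctness} almost verbatim, replacing the discrete free space diagram (the $n^2$ grid points) by the graph $G'=(V',E')$ from Section~\ref{sec:cfd.graph}, and replacing every appeal to ``there is a monotone path in the discrete free space'' by an appeal to Lemma~\ref{lem:cfd.graph.c3}, which equates directed paths in $G'$ with monotone paths in the continuous free space diagram. As in Section~\ref{sec:dfd.jump}, we decompose \problemtwo into $O(n)$ instances of Subproblem~\ref{def:p3} via a sweepline on~$s$: for each vertex~$s$ we let~$t$ be the last vertex of the shortest vertex-to-vertex subtrajectory starting at~$s$ of Euclidean length $\geq \ell$, and we run the modified greedy depth first search on $G'$ between $l_s$ and $l_t$.

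For the ``if'' direction: if our algorithm returns $m-1$ monotone paths from $l_s$ to $l_t$ for some pair $(s,t)$ produced by the sweep, then by construction these are $m-1$ distinct monotone paths in the continuous free space that all start on $l_s$, all end on $l_t$, whose $y$-coordinates pairwise overlap in at most one point, and each of which overlaps the $y$-interval from~$s$ to~$t$ in at most one point; hence they satisfy Subproblem~\ref{def:p3}. Since the subtrajectory from~$s$ to~$t$ is a vertex-to-vertex subtrajectory of length $\geq \ell$, taking it as the reference subtrajectory and the $m-1$ monotone paths as the remaining subtrajectories yields a subtrajectory cluster satisfying \problemtwo.

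For the ``only if'' direction: suppose a valid vertex-to-vertex cluster exists, with reference subtrajectory from vertex~$s$ to vertex~$u$ of length $\geq \ell$. Let~$t$ be the last vertex of the shortest subtrajectory starting at~$s$ of length $\geq \ell$; then~$t$ precedes or equals~$u$ in vertex order, so we may truncate the $m-1$ witnessing monotone paths to the columns between $l_s$ and $l_t$, obtaining $m-1$ monotone paths witnessing a YES instance of Subproblem~\ref{def:p3} for $(s,t)$. It then remains to show that the modified greedy search on $G'$ finds $m-1$ monotone paths for this $(s,t)$. For this we re-establish, in the $G'$ setting, the analogues of Observation~\ref{lem:dfd.dfs.always_valid}, Observation~\ref{lem:dfd.dfs.no_path}, Lemma~\ref{lem:dfd.jump.valid_path}, and Lemma~\ref{lem:dfd.jump.ell_geq}: a node lying on some monotone path is never marked invalid; if the search started from a node on $l_s$ fails, no monotone path in free space starts there; the link-cut root of any node is reachable from it by a monotone path; and if the search finds a path from $l_{\geq g}$ ending at~$r$ on $l_t$, then every monotone path from $l_{\geq g}$ to $l_t$ ends on $l_{\geq r}$. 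Each of these follows from the greedy exploration order on $G'$ (minimum $y$-coordinate first, and among ties maximum $x$-coordinate first), which by Lemma~\ref{lem:cfd.graph.c2} and Corollary~\ref{lem:cfd.graph.c2'} is exactly the order minimising the reachable $y$-coordinate, together with Lemma~\ref{lem:cfd.graph.c3}. With these in hand, the induction of Lemma~\ref{lem:dfd.dfs.correctness} transfers unchanged: by induction on~$i$, our algorithm computes a monotone path~$P_i$ whose maximum $y$-coordinate is at most that of the $i$-th witness path~$Q_i$, so all $m-1$ paths are found.

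The main obstacle I expect is the bookkeeping around the lazily added nodes (the second and third modifications) and the internal binary-tree nodes of $G'$. One must check that (i) inserting a start/end node into $G'$ mid-sweep does not change reachability between previously existing critical points, so that Lemma~\ref{lem:cfd.graph.c3} still applies after each insertion; (ii) the link-cut invariant ``there is a monotone path from any node to its root'' is stated for the free-space points corresponding to nodes, with internal binary-tree nodes only ever traversed, never left as a root; and (iii) the greedy order still selects, among the lazily added nodes and the original critical points, the one minimising the eventual reachable $y$-coordinate --- in particular, the degenerate case of infinitely many horizontal monotone paths between $l_s$ and $l_t$ must be detected and handled exactly as by Buchin~\etal~\cite{DBLP:journals/ijcga/BuchinBGLL11}, since there the notion of a ``lowest'' monotone path is not well defined. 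Apart from this case analysis, the proof is a routine transfer of the discrete argument.
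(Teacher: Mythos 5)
Your proposal matches the paper's proof in both structure and substance: the paper likewise argues that Lemma~\ref{lem:dfd.jump.valid_path}, Lemma~\ref{lem:dfd.jump.ell_geq} and Lemma~\ref{lem:dfd.jump.correctness} generalise from the discrete grid to the graph $G'=(V',E')$ (with the link-cut invariant justified by only linking nodes connected by directed paths in $G'$, the greedy ``lower neighbour first'' order, and treating the lazily added start/end nodes as ordinary critical points), and your extra care about the binary-tree nodes and the infinitely-many-horizontal-paths case is consistent with how the algorithm is described in Section~\ref{sec:cfd.jump.v2v}. So the proposal is correct and takes essentially the same route as the paper, just spelled out in more detail.
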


\begin{proof}
Observe that Lemma~\ref{lem:dfd.jump.valid_path} generalises from the discrete free space diagram to the graph $G' = (V',E')$. In particular, we add a link between a pair of critical points in the continuous free space diagram only if there is a directed path between them in $G'$. So there is always a monotone path from any critical point to the root of its link-cut tree.

Observe that Lemma~\ref{lem:dfd.jump.ell_geq} generalises to $G'$. In particular, by our first modification, our algorithm prefers to link to its lower neighbours first, and the remainder of the proof is identical to the proof of Lemma~\ref{lem:dfd.jump.ell_geq}.

Finally, we observe that Lemma~\ref{lem:dfd.jump.correctness} generalises to $G'$. The proof of both the if and only if directions are identical, so long as we take into account the additional nodes from the second and third modifications. These additional nodes can be treated exactly the same as any other critical point in $G'$, other than that they require additional time to compute. By generalising Lemma~\ref{lem:dfd.jump.correctness} to the continuous free space diagram, we yield the Lemma.
\end{proof}

It remains only to analyse the running time.

\begin{theorem}
There is an $O(n^2 \log^2 n)$ time algorithm that solves \problemtwo under the continuous Fr\'echet distance in the case that the reference subtrajectory is vertex-to-vertex.
\end{theorem}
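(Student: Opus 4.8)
The plan is to instantiate the sweepline-plus-link-cut-tree algorithm of Section~\ref{sec:dfd.jump} essentially verbatim, except that the discrete free space diagram (the $n^2$ grid points and their degree-$3$ successor relation) is replaced throughout by the implicit graph $G'=(V',E')$ of Section~\ref{sec:cfd.graph}, together with the three modifications stated above. Correctness is already settled by the preceding lemma, so only the running time remains. I would split it into three contributions: (i) building $G'$ once, (ii) the greedy depth first search and all link-cut updates, and (iii) introducing on the fly the extra nodes required by the second and third modifications.

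For (i): $G'$ is the graph of Section~\ref{sec:cfd.graph} applied to $T$ against $T$, so by Theorem~\ref{thm:cfd.graph} it has $|V'|+|E'|=O(n^2\log n)$ and is built in $O(n^2\log n)$ time. For (ii): the argument is the same as in the proof of Theorem~\ref{thm:main_dfd}. Every step of the search either follows an edge of $G'$ and inserts one link into the link-cut forest, or it backtracks and removes links; a link once removed is never reinserted (by the same invalidation argument as in the discrete case). Hence the number of link-cut operations is $O(|E'|)=O(n^2\log n)$, and since $\log|V'|=O(\log n)$ each operation costs $O(\log n)$ amortised time, giving $O(n^2\log^2 n)$ for the whole search.

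The main obstacle is (iii). A new starting point of some $P_i$ on $l_s$, or a new ending point on $l_t$, need not be a critical point, hence is not yet a node of $G'$, and must be equipped with its $O(\log n)$ outgoing edges before the search can pass through it. Computing these edges by propagating a reachability interval across a whole row costs $\Theta(n)$, and with $\Theta(m)$ new path endpoints and $\Theta(n)$ restart points per subproblem over $\Theta(n)$ subproblems this would exceed the target bound. To obtain $O(\log n)$ per new node, I would precompute, for each row of the free space diagram, a range-minimum structure over the top endpoints of that row's free intervals ($O(n\log n)$ per row, $O(n^2\log n)$ in total). The key observation is that, propagating rightward, the lower endpoint of the reachability interval is nondecreasing, so a new point $p$ (which lies above the lowest free point $p^-$ on the same cell boundary) dies no later than $p^-$; moreover, up to a point-independent cutoff column, the column where $p$ dies is simply the first column whose free interval lies entirely below the height of $p$, which a single binary search over the range-minimum structure finds in $O(\log n)$ time. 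Lemma~\ref{lem:cfd.graph.c2} then identifies the full outgoing-edge set of $p$ as a contiguous run of critical points on a single line, which is read off the corresponding binary tree $B_i$ in a further $O(\log n)$ time by Fact~\ref{lem:cfd.graph.logn_binary_tree}. Since there are $O(nm)=O(n^2)$ new nodes over the whole sweep, step (iii) costs $O(n^2\log n)$ in total; the degenerate case of infinitely many horizontal monotone paths between $l_s$ and $l_t$ is handled exactly as by Buchin~\etal~\cite{DBLP:journals/ijcga/BuchinBGLL11}.

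Summing (i)--(iii) yields $O(n^2\log^2 n)$. I expect the delicate part of a full write-up to be (iii): both the claim that a new node's outgoing edges can be recovered from precomputed per-row data in $O(\log n)$ time, and the amortised bound on how many new nodes (restarts included) the sweep creates, need care. The link-cut accounting in (ii) should carry over essentially word for word from the proof of Theorem~\ref{thm:main_dfd}.
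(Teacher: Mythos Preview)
Your proposal is correct and follows essentially the same three-part decomposition as the paper: (i) build $G'$ in $O(n^2\log n)$ time, (ii) bound link--cut work by $O(|E'|)\cdot O(\log n)=O(n^2\log^2 n)$ using the same ``a cut edge is never relinked'' amortisation as Theorem~\ref{thm:main_dfd}, and (iii) handle the $O(mn)$ extra start/end nodes at $O(\log n)$ apiece.

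The only real difference is in how (iii) is justified. The paper simply observes that the per-row stacks already built in Lemma~\ref{lem:cfd.graph.c4} support, for any new point on a vertical boundary, a binary search that locates its rightmost basic-monotone-path endpoint $q$ in $O(\log n)$ time, after which Fact~\ref{lem:cfd.graph.logn_binary_tree} gives the $O(\log n)$ outgoing edges. You instead engineer a separate range-minimum structure. This works, but your justification is slightly incomplete: the horizontal ray from a new point $p$ can terminate either because some later free interval lies entirely \emph{below} $y(p)$ (the case your range-minimum handles) \emph{or} because some later free interval lies entirely \emph{above} $y(p)$, in which case one must jump to that column's precomputed $q_j$. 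Both cases are exactly the two cases in the proof of Lemma~\ref{lem:cfd.graph.c4}, and both are answered by binary search in the two monotone stacks already stored there; so you can drop the extra range-minimum machinery and cite Lemma~\ref{lem:cfd.graph.c4} directly. Also, your worry about ``$\Theta(n)$ restart points per subproblem'' is unnecessary: failed restarts on $l_s$ are at existing critical points of $G'$ and create no new nodes, so the count of genuinely new nodes is $O(mn)$, matching the paper.
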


\begin{proof}
First, we analyse the running time of constructing $G'$. By Lemmas~\ref{lem:cfd.graph.c4} and~\ref{lem:cfd.graph.logn_binary_tree}, we can construct $G'$ in $O(n^2 \log n)$ time. Next, we analyse the running time of the sweepline algorithm. This running time is dominated by two processes, maintaining the link-cut tree and inserting the additional nodes for the second and third modifications.

First, we bound the number of additional nodes, and the time required to insert them. There are at most $m-1$ paths per reference subtrajectory we consider, and we consider $O(n)$ reference subtrajectories. Therefore, we add at most $O(mn)$ additional points to the graph $G'$. Inserting these additional points requires $O(mn)$ time. Inserting the outgoing edges for each of these points requires amortised $O(\log n)$ time per edge by Lemmas~\ref{lem:cfd.graph.c4} and~\ref{lem:cfd.graph.logn_binary_tree},  which is $O(mn \log n)$ time in total

Next, we analyse the running time of maintaining the link-cut tree. For this, the running time is dominated by updating the data structure. There are $O(n^2 \log n)$ edges in $G'$, and $O(mn \log n)$ edges for the additional points. Each update, which is either a link or cut operation, requires $O(\log n)$ amortised time. We observe that, similarly to in Section~\ref{sec:dfd.jump}, once a link is removed it cannot be added again. So all link and cut updates can be performed in $O(n^2 \log^2 n)$ time. This dominates the running time of our algorithm, yielding the theorem.
\end{proof}

\subsection{Reference subtrajectory is arbitrary}
\label{sec:cfd.jump.a2a}

Next, we handle the case where the reference subtrajectory may start and end at arbitrary points on the trajectory. Although there are infinitely many possible starting and ending points, we only consider starting and ending points associated to either vertices of the trajectory, or to additional critical points that we call internal critical points.

We define an external critical point to be critical points that lie on the boundary of a free space cell. In contrast, an internal critical point is in the interior of a cell, and is defined as follows:

\begin{definition}
\label{defn:cfd.a2a.internal_critical_point}
We define an internal critical point to be a point that is in the interior of a cell, on the boundary between free and non-free space, and satisfies one of the following three conditions:

\begin{itemize}[noitemsep]
    \item it is the leftmost or rightmost free point in its cell, or
    \item it shares a $y$-coordinate with an external critical point, or
    \item it is $\ell$ units horizontally to the right of a point that is on the boundary between free and non-free space.
\end{itemize}

\begin{figure}[ht]
    \centering
    \includegraphics[width=\textwidth]{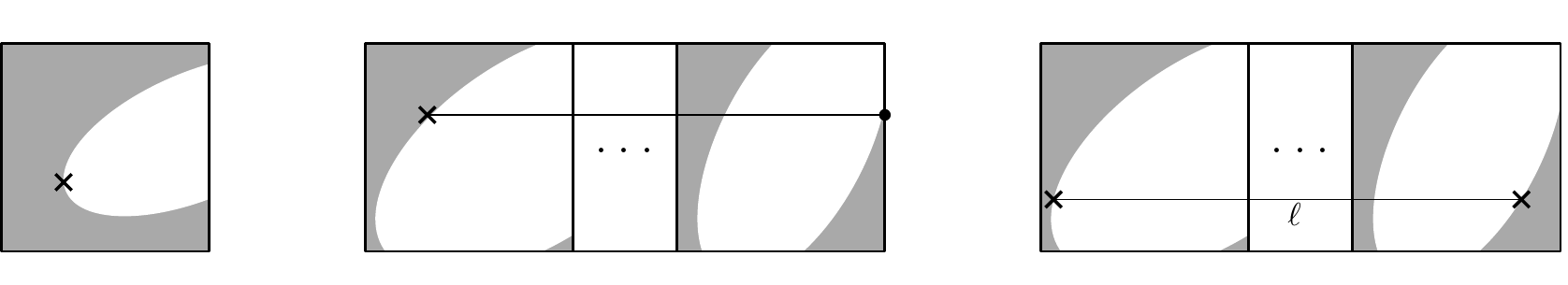}
    \caption{The three types of internal critical points.}
    \label{fig:cfd_3}
\end{figure}

\end{definition}

In Figure~\ref{fig:cfd_3} we show the three types of interior critical points. Both the vertices of the trajectory and the internal critical points are candidate starting points for the reference subtrajectory. To decide whether any such subtrajectory satisfies the properties in \problemtwo, we would like to perform the same sweepline algorithm as the one in Section~\ref{sec:cfd.jump.v2v}. Recall that this algorithm iterates through all reference subtrajectories, and reuses monotone paths between these subproblems using a link-cut data structure. We provided the details for maintaining the link-cut data structure in Section~\ref{sec:dfd.jump}. Recall that three modifications were applied to our sweepline algorithm so that it may be applied to the graph $G' = (V',E')$. We provided the details of these three modifications in Section~\ref{sec:cfd.jump.v2v}.

However, if we were to perform the same sweepline algorithm as in  Section~\ref{sec:cfd.jump.v2v}, our running time would increase to accommodate the additional internal critical points and their reference subtrajectories. In Lemma~\ref{lem:cfd.a2a.n^3_critical_points}, we show that there are $O(n^3)$ internal critical points. Therefore, the running time for maintaining the link-cut tree would increase to $O(n^3 \log^2 n)$. The running time for inserting the additional nodes for the second and third modifications would increase to $O(n^3 m)$. 

To perform the sweepline algorithm in $O(n^3 \log^2 n)$ time, we avoid computing the $O(n^3 m)$ additional nodes for the second and third modifications entirely. We use a completely different approach. Our intuition is that if a monotone path exists for a reference subtrajectory, then either the same or a very similar monotone path is likely to exist for the next reference subtrajectory. We divide the process of computing $m-1$ non-overlapping monotone paths into two steps. The first step is to maintain a large set of overlapping monotone paths, so that any monotone path is represented within this set. The second step is to query this set of overlapping monotone paths to decide whether there are $m-1$ elements that do not overlap. Surprisingly, building and maintaining this set of overlapping monotone paths is more efficient than recomputing the non-overlapping monotone paths for each reference subtrajectory.

Our set of overlapping monotone paths is maintained by the following dynamic data structure.

\begin{fact}[Theorem~16 in~\cite{DBLP:journals/tcs/GavruskinKKL15}]
A dynamic monotonic interval data structure maintains a set of monotonic intervals. A set of intervals is monotonic if no interval contains another. The data structure offers the following three operations. Each operation can be performed in $O(\log n)$ amortised time.
\begin{itemize}[noitemsep]
    \item Insert an interval, so long as the monotonic property is maintained,
    \item Remove an interval, 
    \item Report the maximum number of non-overlapping intervals in the data structure.
\end{itemize}
\end{fact}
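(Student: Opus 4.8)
The plan is to reconstruct the data structure of Gavruskin~\etal\ by exploiting the rigid ``staircase'' shape that the monotonic property forces. First I would record the structural fact that, once we forbid one interval from containing another, sorting the intervals by left endpoint simultaneously sorts them by right endpoint: if $[a_i,b_i]$ and $[a_j,b_j]$ are distinct with $a_i<a_j$, then necessarily $b_i<b_j$. Consequently the classical activity-selection greedy is optimal for the ``report'' query \emph{and} can be run as a single left-to-right scan: maintain a state $(c,b)$, where $c$ is the number of intervals selected so far and $b$ is the right endpoint of the last selected interval (initially $(0,-\infty)$); upon reading $[a_i,b_i]$, if $a_i>b$ set the state to $(c+1,b_i)$, and otherwise leave it unchanged. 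The answer to a ``report'' query is the value of $c$ after the whole scan. (Checking whether an insertion preserves monotonicity is easy: locate the predecessor and successor by left endpoint and test the two containment inequalities, $O(\log n)$ time.)

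The second ingredient is to make this scan dynamic via a balanced search tree $\mathcal T$ on the current interval set, keyed by left endpoint (equivalently, by right endpoint). Each node of $\mathcal T$ corresponds to a contiguous block $B$ of the sorted order, and I would augment it with a succinct summary of the \emph{transfer function} $F_B$ that sends an incoming blocker $b$ to the state $(\Delta c,b')$ produced by scanning $B$. The key lemma that keeps the summary small is that $F_B(b)$ reduces to the block's \emph{fresh} behaviour on a suffix of $B$: if $j$ is the first index in $B$ with $a_j>b$, then scanning $B$ starting from blocker $b$ selects exactly the intervals that a fresh scan (blocker $-\infty$) of the suffix $B_{\ge j}$ selects, and leaves the same outgoing blocker. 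Thus a node need only store its subtree's fresh count and fresh outgoing blocker, together with the usual subtree extremes of left endpoints; $F_B$ is then evaluated by a binary search for $j$ followed by a walk that composes the $O(\log n)$ subtree summaries covering $B_{\ge j}$. A ``report'' query reads the fresh count stored at the root in $O(\log n)$ (indeed $O(1)$) time, and an insert or delete changes only the $O(\log n)$ ancestor blocks, plus those involved in rotations; each changed node's summary is recomputed from those of its children by evaluating a transfer function.

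The \textbf{main obstacle} is the recomputation cost. Done naively, composing block transfer functions along the update path costs $O(\log^2 n)$ per operation rather than the $O(\log n)$ amortized bound claimed, because evaluating a single block's transfer function is itself an $O(\log n)$ walk, and these nest. Closing this gap is the technical core of Theorem~16 of~\cite{DBLP:journals/tcs/GavruskinKKL15}: one represents the transfer functions explicitly as step functions whose breakpoints lie among the interval endpoints, shares structure between a node and its children through a linked (persistent-style) representation so that composition only materialises the $O(1)$ genuinely new pieces, and then charges the recomputation work to a potential that bounds the total stored size and changes by only $O(\log n)$ amortized per update. My proposal would therefore present the clean $O(\log^2 n)$ tree-augmentation version as the ``obvious'' solution, prove its correctness from the suffix-reduction lemma above, and then invoke this amortized refinement to reach the $O(\log n)$ amortized bound stated in the Fact.
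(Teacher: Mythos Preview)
The paper does not prove this statement at all: it is stated as a \emph{Fact} and attributed to Theorem~16 of Gavruskin~\etal~\cite{DBLP:journals/tcs/GavruskinKKL15}, then used as a black box. There is therefore no ``paper's own proof'' to compare your proposal against.

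That said, your reconstruction is a reasonable sketch of how such a data structure could be built. The structural observation (monotonic intervals are totally ordered by both endpoints simultaneously), the greedy scan with state $(c,b)$, and the suffix-reduction lemma are all correct and are the natural starting point. Your augmented balanced search tree with block transfer functions is a standard and sound way to dynamise this scan, and you correctly identify that the na\"ive composition along an update path yields $O(\log^2 n)$ per operation. Where your proposal becomes thin is exactly where you say it does: you do not actually carry out the amortised refinement to $O(\log n)$, but instead invoke the cited theorem for that step. Since the Fact \emph{is} that theorem, this is circular if your goal is an independent proof; if your goal is merely to explain why the Fact is plausible and how one would approach it, your write-up is fine, but you should be explicit that the final logarithmic-factor improvement is being quoted rather than proved.
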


With this data structure in mind, we are now ready to state our algorithm in full. Compute all external critical points and build the graph $G' = (V',E')$ defined in Section~\ref{sec:cfd.graph}. Our sweepline algorithm starts with~$s = 0$. For each external critical point~$g$ on~$l_s$, we perform a greedy depth first search to find the lowest point~$r$ on~$l_t$ so that there is a monotone path from $g$ to $r$. We maintain the link-cut data structure throughout the greedy depth first search, so that $r$ is the root of the tree containing $g$. For $s=0$, we only compute a set of additional external critical points, that is, all points on $l_s$ that share a $y$-coordinate with another external critical point. For these external critical points, we compute its lowest monotone path to $l_t$, so that the root of the tree containing the external critical point is on $l_t$.

For $s=0$, we now have a set of link-cut trees, each of which has their root on $l_t$. For each root $r$ on~$l_t$, compute its highest descendant $g$ on $l_s$. This highest descendant can be maintained by each link-cut tree individually, so that when two link-cut trees are merged, we simply take the higher descendant for the merged tree. Finally, for each root $r$ on $l_t$, and its highest descendant~$g$ on $l_t$, we insert the interval $(y(g),y(r))$ into the dynamic monotonic interval data structure, where $y(g)$ and $y(r)$ denote the $y$-coordinates of~$g$ and $r$ respectively. See Figure~\ref{fig:dynamic_interval_scheduling}. This completes the base case of $s=0$ in the sweepline algorithm.

\begin{figure}[ht]
    \centering
    \includegraphics{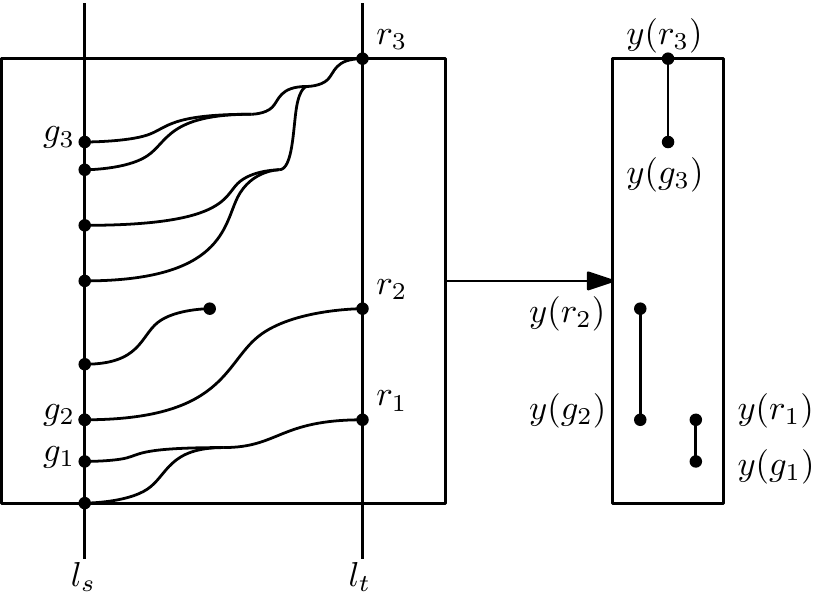}
    \caption{The link cut tree rooted at $r_i$, with its highest descendant $g_i$ on $l_s$, shown on the left. The $y$-intervals $(y(g_i),y(r_i))$ stored by the dynamic monotonic interval data structure, shown on the right.}
    \label{fig:dynamic_interval_scheduling}
\end{figure}

Next, compute all internal critical points defined in Definition~\ref{defn:cfd.a2a.internal_critical_point}, and sort them by $x$-coordinate. We sweep the vertical lines $l_s$ and $l_t$ from left to right, and whenever $l_s$ or $l_t$ pass through an internal critical point, we process it as an event. There are five types of events, depending on the type of the internal critical point and whether it passes through $l_s$ or $l_t$. On all five of these events, we maintain the invariant that the intervals inserted into our data structure are monotonic. We also maintain the invariant that any monotone path is represented by an interval in the data structure. So that we do not need to re-compute the exact $y$-coordinates of these monotone paths at every event point, we only store the relative positions of the intervals with respect to one another (i.e. whether they are overlapping or non-overlapping). By only storing the relative positions of the $y$-coordinates, we can avoid computing the $O(n^3m)$ starting and ending points of monotone paths in the second and third modifications that are required for the algorithm in Section~\ref{sec:cfd.jump.v2v}. We split our analysis of our five types of events into five cases:

\begin{itemize}[noitemsep]
    \item The first type of event is if $l_s$ passes through an internal critical point $g$ that is the leftmost free point in its cell. For this event, insert $g$ into the graph $G'$, and compute the lowest monotone path from $g$ to the $l_t$. If a jump operation was performed using the link-cut data structure, then the root~$r$ of the tree containing $g$ already exists in the dynamic monotonic interval data structure. We only update this interval to $(y(g),y(r))$ if $g$ is the highest descendant of $r$ on $l_s$. If no jump operation was performed, then $r$ is a new root on $l_t$, so we simply insert the interval $(y(g),y(r))$ into our data structure. 
    \item The second type of event is if $l_s$ passes through an internal critical point $g$ that shares a $y$-coordinate with an external critical point. Similarly to the first event, we insert the~$g$ into the graph~$G'$, and compute its root~$r$ on the line~$l_t$. We insert a new interval $(y(g),y(r))$ into the dynamic monotonic interval data structure if $r$ is new, or replace an existing interval if $r$ is not new, but $g$ is the highest descendant on~$l_s$. After this, we consider whether $l_s$ passing through $g$ causes a pair of monotone paths that were previously overlapping to now be non-overlapping. In particular, suppose that $g$ shares a $y$-coordinate with an exterior critical point, which in turn shares a $y$-coordinate with a root $r'$ on $l_t$. In other words, $y(g) = y(r')$. Then the pair of intervals $(y(g),y(r))$ and $(y(g'),y(r'))$ may switch from overlapping to non-overlapping, or vice versa. If this is the case, we remove the interval $(y(g),y(r))$ and replace it with a new interval so that their relative positions are $g',r',g,r$ instead of $g',g,r',r$, or vice versa.
    \item The third type of event is if $l_t$ passes through an internal critical point $r$ that is the rightmost free point in its cell. For this event, simply remove the interval $(g,r)$ from the dynamic monotone interval data structure, where $g$ is the highest ancestor of $r$ that is on $l_s$. 
    \item The fourth type of event is if $l_t$ passes through an internal critical point $r$ that shares a $y$-coordinate with an external critical point. Let the highest descendant of $r$ on $l_s$ be $g$. We consider whether $l_t$ passing through $r$ makes the monotone path from $g$ to $r$ invalid. This could be the case if the last segment in this monotone path is horizontal, and if $r$ lies on a boundary between free and non-free space that has a negative gradient. If this is the case, then we remove the invalid path from $r$ to its child, and recompute its new highest descendant on $g$, if one exists. Next, similar to the second event, we consider whether $l_t$ passing through $r$ causes a pair of monotone paths that were previously non-overlapping to now be overlapping. In particular, suppose that $r$ shares a $y$-coordinate with an exterior critical point, which in turn shares a $y$-coordinate with a root $g'$ on $l_s$. In other words, $y(r) = y(g')$. Then the pair of intervals $(y(g),y(r))$ and $(y(g'),y(r'))$ may switch from overlapping to non-overlapping, or vice versa. If this the case, we remove the interval $(y(g),y(r))$ and $(y(g'),y(r;))$ and replace it with a new interval so that their relative positions are $g,r,g',r'$ instead of $g,g',r'r'$, or vice versa. 
    \item The fifth type of event is if $l_s$ passes through a point $g'$ while $l_t$ simultaneously passes through a point~$r$, so that both $g'$ and~$r$ are on the boundaries between free and non-free space, and $y(g') = y(r)$. We consider whether $l_s$ and $l_t$ passing through $g'$ and~$r$ causes a pair of monotone paths that were previously non-overlapping to now be overlapping. In particular, the pair of intervals $(y(g),y(r))$ and $(y(g'),y(r'))$ may switch from overlapping to non-overlapping, or vice versa. If this is the case, we remove the interval $(y(g),y(r))$, and replace it with a new interval so that their relative positions of $g',r',g,r$ instead of $g',g,r',r$, or vice versa.
\end{itemize}

After processing an event, we report whether there are $m-1$ monotone paths for this event and its associated reference subtrajectory. To do this, we query the dynamic monotone interval data structure to report the maximum number of non-overlapping intervals in the data structure. If there are $m-1$ or more non-overlapping intervals, we report these $m-1$ intervals. We can retrieve the original monotone paths by storing the monotone paths with the intervals when we insert them. We can modify the monotone paths to start and end at $l_s$ and $l_t$ in constant time per monotone path. 

This completes the statement of our algorithm. Next, we prove its correctness. 

\begin{definition}
Given $s$ and $t$, a monotone path from $g$ on $l_s$ to $r$ on $l_t$ is called minimal if there does not exist a monotone path from $g'$ on $l_s$ to $r'$ on $l_t$ such that the interval $[y(g'),y(r')]$ is a strict subset of the interval $[y(g),y(r)]$.
\end{definition}

\begin{lemma}
\label{lem:cfd.a2a.exists_minimal}
Given $s$ and $t$, and a monotone path from $g$ on $l_s$ to $r$ on $l_t$, there exists a minimal monotone path from $g'$ on $l_s$ to $r'$ on $l_t$ so that $[(y(g'),y(r')] \subseteq [(y(g),y(r))]$.
\end{lemma}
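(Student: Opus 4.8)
The plan is to obtain a minimal monotone path as the minimizer of a continuous functional over a suitable compact family of paths. Fix $s$ and $t$, write $x_s<x_t$ for the $x$-coordinates of $l_s$ and $l_t$, and let $I=[y(g),y(r)]$ be the $y$-interval of the given path. Since a monotone path from $l_s$ to $l_t$ is non-decreasing in both coordinates and hence attains its minimum $y$-value at its start (on $l_s$) and its maximum $y$-value at its end (on $l_t$), we have $y(g)\le y(r)$, so $I$ is a genuine closed interval. I would then let $\mathcal Q$ be the set of all monotone paths in the free space that start on $l_s$, end on $l_t$, and whose image lies in the rectangle $[x_s,x_t]\times I$; equivalently, these are exactly the monotone $l_s$-to-$l_t$ paths whose $y$-interval is contained in $I$. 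The hypothesis gives $\mathcal Q\neq\emptyset$.

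First I would show that $\mathcal Q$ is compact, say in the Hausdorff metric on the images of the paths. The rectangle $[x_s,x_t]\times I$ is compact and the free space is closed (within each cell it is a sublevel set of the continuous function $(x,y)\mapsto\|P_1(x)-P_2(y)\|$), so any Hausdorff limit of paths in $\mathcal Q$ has image lying in the rectangle and in the free space; combined with the standard fact that a Hausdorff limit of monotone paths is again a monotone path and that its start on $l_s$ and end on $l_t$ pass to the limit, the limit again lies in $\mathcal Q$, giving compactness. Next I would observe that $\Phi(P):=y(\mathrm{end}(P))-y(\mathrm{start}(P))$, the length of the $y$-interval of $P$, is continuous on $\mathcal Q$. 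Since $\mathcal Q$ is nonempty and compact, $\Phi$ attains a minimum at some $P^\star\in\mathcal Q$; set $g'=\mathrm{start}(P^\star)$ and $r'=\mathrm{end}(P^\star)$, and note $[y(g'),y(r')]\subseteq I$ directly from the definition of $\mathcal Q$.

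Finally I would check that $P^\star$ is minimal. If some monotone path $P'$ from $g''$ on $l_s$ to $r''$ on $l_t$ satisfied $[y(g''),y(r'')]\subsetneq[y(g'),y(r')]$, then, a proper inclusion of closed intervals being strict at an endpoint, we would get $y(r'')-y(g'')<y(r')-y(g')$, i.e.\ $\Phi(P')<\Phi(P^\star)$; but also $[y(g''),y(r'')]\subseteq[y(g'),y(r')]\subseteq I$, so $P'\in\mathcal Q$, contradicting the minimality of $\Phi$ at $P^\star$. Hence $P^\star$ is the desired minimal monotone path, and its $y$-interval lies in $[y(g),y(r)]$. The one nonroutine point is the compactness claim — precisely, that a limit of monotone free-space paths is still monotone and still runs from $l_s$ to $l_t$ — which is standard in the Fr\'echet-distance literature; it can be made rigorous either by reparametrizing each path by normalized $\ell_1$-arclength and invoking Arzel\`a--Ascoli, or by using the Alt--Godau observation that reachable sets on cell boundaries are closed intervals, so that the set of attainable pairs $(y(\mathrm{start}),y(\mathrm{end}))$ of $l_s$-to-$l_t$ monotone paths is closed and the minimum of $\Phi$ over it is attained. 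Everything else is elementary.
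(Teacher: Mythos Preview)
Your argument is correct, but it takes a genuinely different route from the paper. The paper's proof is constructive: it sets $r'$ to be the lowest point on $l_t$ reachable from $g$ (appealing to the greedy depth-first search to guarantee this exists), then sets $g'$ to be the highest point on $l_s$ from which $r'$ is reachable, and finally proves minimality by a path-crossing argument (any hypothetical strictly smaller path would have to cross the constructed ones, producing contradictions with the extremal choices of $r'$ and $g'$). Your proof is instead a pure existence argument via compactness: you minimise the $y$-interval length over the nonempty compact family of monotone $l_s$-to-$l_t$ paths whose $y$-range sits inside $[y(g),y(r)]$, and observe that a strict sub-interval would have strictly smaller length, contradicting optimality.

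Both arguments are sound. Your compactness claim is justifiable exactly as you sketch (images of monotone paths are compact connected chains in the componentwise order, and this property, together with containment in the closed free space and spanning $[x_s,x_t]$, passes to Hausdorff limits), and the key inequality ``strict interval inclusion $\Rightarrow$ strictly smaller length'' is immediate. What the paper's approach buys is an explicit description of the minimal path as ``lowest reachable endpoint on $l_t$, then highest starting point on $l_s$ reaching it'', which is precisely what the surrounding algorithm (greedy DFS plus link-cut trees) computes; this characterisation is reused in the next lemma to set up the bijection with the interval data structure. Your approach is cleaner as a standalone existence proof but does not hand you that algorithmic description.
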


\begin{proof}
Let $r'$ be the lowest point on $l_t$ so that there is a monotone path from $g$ to $r'$. We know $r'$ exists since we can compute it with a greedy depth first search. Let $g'$ be the highest point on $l_s$ so that there is a monotone path from $g'$ to $r'$. We have by definition that $y(g) \leq y(g') \leq y(r') \leq y(r)$, so $[(y(g'),y(r')] \subseteq [(y(g),y(r))]$. It suffices to show that the monotone path from $g'$ to $r'$ is minimal.

Suppose the monotone path from $g'$ to $r'$ is not minimal. Then there exists a monotone path from $g''$ to $r''$ so that $[(y(g''),y(r'')] \subset [(y(g'),y(r'))]$. But now, the monotone paths from $g''$ to $r''$ and from $g'$ to $r'$ must cross at some point $u$. Construct the monotone path from $g''$ to $u$ to $r'$. But $g'$ is the highest point on $l_s$ so that there is a monotone path from $g'$ to $r'$, so $y(g'') \leq y(g')$. But $y(g') \leq y(g'')$ since $[(y(g''),y(r'')] \subset [(y(g'),y(r'))]$. So $y(g'') = y(g')$. 

Therefore, there exists a monotone path from $g'$ to $r''$ such that $y(r'') < y(r')$. But now, the monotone path from $g'$ to $r''$ and the monotone path from $g$ to $r$ must cross at some point $u$. Construct the monotone path $g$ to $u$ to $r''$. Since $y(r'') < y(r')$, this contradicts the construction that $r'$ is the lowest point on $l_t$ such that there is a monotone path from $g$ to $r'$. Hence, our initial assumption that $g'$ to $r'$ is not minimal cannot hold, and we are done.
\end{proof}

\begin{lemma}
\label{lem:cfd.a2a.order_preserving_bijection}
Given $s$ and $t$, suppose we run our sweepline algorithm until our pair of sweeplines reach $l_s$ and $l_t$ respectively. Then there is an order preserving bijection from the $y$-intervals of the minimal monotone paths to the set of intervals in our dynamic monotonic interval data structure.
\end{lemma}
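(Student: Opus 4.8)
The plan is to prove the statement by induction on the events processed by the sweepline algorithm, maintaining the stronger invariant that, at every event, the set of link-cut trees whose roots lie on $l_t$ is in order-preserving bijection with the set of minimal monotone paths from $l_s$ to $l_t$, where a tree rooted at $r$ with highest descendant $g$ on $l_s$ corresponds to the minimal path with $y$-interval $[y(g),y(r)]$; and that the intervals stored in the dynamic monotonic interval data structure are exactly the images of these intervals under ``remember only relative position.'' The base case is $s=0$: here we explicitly compute, for every external critical point on $l_s$ (including the added ones that merely share a $y$-coordinate with an external critical point), its lowest monotone path to $l_t$, and for each resulting root $r$ we take the highest descendant $g$ on $l_s$ and insert $(y(g),y(r))$. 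By Lemma~\ref{lem:cfd.a2a.exists_minimal}, every monotone path from $l_s$ to $l_t$ contains a minimal one, and the construction in that lemma's proof (lowest $r'$, then highest $g'$) is precisely what the link-cut search computes; so the roots-on-$l_t$ exactly enumerate the minimal monotone paths, and since two minimal monotone paths can never have one $y$-interval strictly inside another, the inserted intervals are automatically monotonic. This gives the bijection at $s=0$.

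For the inductive step I would go event by event through the five cases and argue that each event is exactly the set of combinatorial changes that can occur to the collection of minimal monotone paths as $l_s$ and $l_t$ sweep through one internal critical point. The key sub-claim is: between two consecutive events, the combinatorial structure of the free space between $l_s$ and $l_t$ — in particular which monotone paths exist, which are minimal, and the relative order of the endpoints $y(g),y(r)$ across different minimal paths — does not change. This is because a change in the existence of a monotone path, or in whether a minimal path's interval starts/ends strictly inside another, forces either the leftmost/rightmost free point of some cell to cross $l_s$ or $l_t$ (events one and three), or a free/non-free boundary point of one path to align in $y$-coordinate with an external critical point governing another path (events two, four, five) — and all such alignments are, by Definition~\ref{defn:cfd.a2a.internal_critical_point}, internal critical points, hence handled as events. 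For each event I would verify that the algorithm's interval update (insert, delete, or the ``swap relative order from $g',g,r',r$ to $g',r',g,r$'' rewrites) matches the corresponding change in the minimal-path collection: events one and two add a new minimal path (or lower the start of an existing one because a new higher descendant $g$ appeared via a jump), event three removes a minimal path whose right endpoint $r$ leaves the strip, and events four and five only permute relative orders of endpoints without changing the underlying set of paths. Throughout, I would invoke Lemma~\ref{lem:cfd.a2a.exists_minimal} to argue that ``every monotone path is represented'' is equivalent to ``every minimal monotone path is represented,'' and the non-nesting property of minimal paths to argue the monotonic invariant is preserved.

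The main obstacle I expect is the bookkeeping in the fourth and fifth events, i.e. correctly arguing that an interval swap of the form ``$g',g,r',r \to g',r',g,r$'' is both necessary and sufficient to track the combinatorial change, and that no other pair of intervals is affected. Concretely, I need to rule out cascading reorderings — that moving one endpoint past one external critical point could reorder it past several others simultaneously — which follows because each such coincidence is itself a separate internal critical point and hence a separate event, so events can be taken in general position and processed one at a time. A secondary subtlety is the interaction of the ``highest descendant of $r$ on $l_s$'' with link-cut tree merges caused by jump operations: I would note that the highest-descendant field is maintained under merges by taking the maximum, so after any jump the stored interval $(y(g),y(r))$ still names the minimal path whose top is $r$, and only needs rewriting when the merge strictly raises $g$ — exactly the condition checked in events one and two. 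Once these are in place, the order-preserving bijection at the general pair $(s,t)$ follows from the invariant, completing the proof.
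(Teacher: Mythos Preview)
Your proposal is correct and follows essentially the same inductive approach as the paper: establish the bijection at $s=0$, argue that the collection of minimal monotone paths is combinatorially invariant between consecutive internal critical points, and verify that each of the five event types updates the data structure to match the change in minimal paths. The paper invests most of its technical effort in your ``between events'' sub-claim---it constructs an explicit mapping from minimal paths at $(l_s,l_t)$ to minimal paths at $(l_{s'},l_{t'})$ and verifies in five separate parts that it is well-defined, minimal-to-minimal, injective, surjective, and order-preserving (the last via the intermediate value theorem, producing an internal critical point of the third type whenever two intervals would swap)---so you should expect that step, not the event-by-event bookkeeping, to be the bulk of the work.
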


\begin{proof}
Our proof is divided into three parts. First we prove an order preserving bijection in the base case, where $s=0$. Next, we prove that in the inductive case, if the sweepline does not pass through any internal critical points, then the order preserving bijection is preserved. Finally, we prove that as the sweepline passes through an internal critical point, the order preserving bijection is preserved.

For the base case, consider when $s=0$. Let $g$ be on $l_s$ and $r$ be on $l_t$ so that the path from $g$ to $r$ is a minimal monotone path. Suppose $g$ is not an external critical point. Then there is a segment of free space directly above $g$. We show that the monotone path from $g$ to $r$ consists of a horizontal path from $g$ to an external critical point. Suppose the contrary, that the initial path from $g$ is not horizontal. Let the initial path be from $g$ to $h$, so that $y(h) > y(g)$. Then if we selected the point on $l_s$ with $y$-coordinate $y(g) + \varepsilon$ for a sufficiently small $\varepsilon$, then there would still have $y(h) > y(g) + \varepsilon$, and we would maintain our monotone path. But now, the new path would have a $y$-interval from $y(g) + \varepsilon$ to $y(r)$, which is a strict subset of the $y$-interval from $y(g)$ to $y(r)$, contradicting the fact that the monotone path from $g$ to $r$ is minimal. Therefore, if $g$ is not an external critical point, then there is a horizontal path from $g$ to its next point, which must be an external critical point. Therefore, all minimal monotone paths start at either external critical points or these additional points that share a $y$-coordinate with an external critical point. By the definition of our base case in our algorithm, we insert all $y$-intervals for these potential starting points, so we have all minimal monotone paths in our initial data structure. Hence, we have shown the base case of our induction.

For the inductive case where the sweepline does not pass through an internal critical point, suppose the inductive hypothesis that $l_s$ and $l_t$ are sweeplines for which there is an order preserving bijection. Let $l_{s'}$ and $l_{t'}$ be another pair of sweeplines to the right of $l_s$ and $l_t$, so that there are no internal critical points between $l_s$ and $l_{s'}$, and similarly between $l_t$ and $l_{t'}$. We will show that there is a bijection between the $y$-intervals of the minimal monotone paths between $l_s$ and $l_t$, and $l_{s'}$ and $l_{t'}$. By composing this bijection with the bijection between the minimal monotone paths between $l_s$ and $l_t$ and the data structure, we yield a bijection between the minimal monotone paths between $l_{s'}$ and $l_{t'}$ and the data structure.

We construct our bijection between the two sets of minimal monotone paths as follows. For each minimal monotone path starting at $g$ on $l_s$ and ending at $r$ on $l_t$, we perform an operation to yield a minimal monotone path between $l_{s'}$ and $l_{t'}$. First, we let $g'$ be the highest point on $l_{s'}$ such that there is a monotone path from $g$ to $g'$ and a monotone path from $g'$ to $r$. Second, we extend the monotone path horizontally from $r$ to its right, until it either hits $l_{t'}$, or the boundary between free and non-free space. We extend the monotone path along the boundary between free and non-free space until we reach $l_{t'}$. When this path reaches $l_{t'}$, we define this point to be $r'$. This completes the description of the mapping from minimal monotone paths of $l_s$ and $l_t$ to paths between $l_{s'}$ and $l_{t'}$. 

Our proof of this bijection is divided into five parts. First, we show that the mapping is well defined. Second, we show that the mapping is from minimal monotone paths to minimal monotone paths. Third, we show that the mapping is injective. Fourth, we show that the mapping is surjective. Fifth, we show that the mapping is order preserving. 

First, we show that the mapping is well defined. The point $g'$ is well defined. Extending $r$ horizontally is well defined, however, it may be that $r$ cannot be extended along the boundary between free and non-free space until we reach $l_{t'}$. There are two ways this can occur. First, the free space may stop at some $x$-coordinate before $l_{t'}$. However, in this case, we have an internal critical point that is the rightmost free point in its cell. But we assumed there were no such critical points between $l_t$ and $l_{t'}$. Second, the boundary between free space and non-free space may have a negative gradient, so that the monotone path cannot travel along it. In this case, we must have a horizontal path starting at $r$ that is extended towards the boundary with a negative gradient. If the point $r$ does not share a $y$-coordinate with an external critical point, then we can reduce the $y$-coordinate of the horizontal path containing $r$, contradicting the fact that $g$ to $r$ is minimal. Hence, $r$ shares a $y$-coordinate with an external critical point, so does our horizontal path that intersects the boundary between free and non-free space. Therefore, this point is an internal critical point, which again contradicts our assumption that there are no such critical points between $l_t$ and $l_{t'}$.

Second, we show that the mapping is from minimal monotone paths to minimal monotone paths. By definition, $r'$ is the lowest point on $l_{t'}$ such that there is a monotone path from $g'$ to $r'$. We will show that $g'$ is the highest point on $l_s$ so that there is a monotone path from $g'$ to $r'$. Putting these two facts together yields that the monotone path from $g'$ to $r'$ is minimal. Suppose for the sake of contradiction that there is another point, $g''$, that has a larger $y$-coordinate than $g'$, and there is a monotone path from $g''$ to $r'$. Extend the monotone path starting at $g''$ horizontally to the left, and then along the boundary between free and non-free space, until it reaches $l_s$. For the same reason as extending $r$ horizontally to the right to reach $l_{t'}$, we must be able to extend $g''$ horizontally to the left to reach $l_s$, as there are no internal critical points between $l_s$ and $l_{s'}$. We show that the extension of $g''$ meets $l_s$ at $g$. If $y(g'') > y(g)$, this would contradict the minimality of the monotone path from $g$ to $r$. If $y(g'') < y(g)$, then the extension of $g''$ crosses the path from $g'$ to $g$, which is impossible. Hence, we have monotone paths from $g$ to $g''$ to $r$ where $g''$ has a larger $y$-coordinate of $g'$. This contradicts the definition of $g'$, so the point $g''$ cannot exist, as required.

Third, we show the mapping is injective. Suppose for the sake of contradiction that there are two monotone paths, $g$ to $r$ and $g'$ to $r'$, from $l_s$ to $l_t$, that both map to a monotone path $g''$ to $r''$, from $l_{s'}$ to $l_{t'}$. Then there are monotone paths from $g$ to $g''$ to $r$ and from $g'$ to $g''$ to $r'$. Without loss of generality, suppose $y(g') \geq y(g)$. Then by the minimality of the path from $g$ to $r$, we get that $y(r') \geq y(r)$. But now, consider the monotone path from $g$ to $g''$ to $r'$. By the minimality of $g$ to $r$, we get that $y(r') \leq y(r)$, so $y(r') = y(r)$. By the minimality of $g'$ to $r'$, we get that $y(g') \leq y(g)$, so $y(g') = y(g)$. Hence, $g = g'$ and $r = r'$, so our mapping is injective.

Fourth, we show the mapping is surjective. Suppose for the sake of contradiction that there is a monotone path between $l_{s'}$ and $l_{t'}$ that is not mapped to. Let this minimal monotone path be from $g'$ to $r'$. Let $r$ be the highest point on $l_t$ such that there is a monotone path from $g'$ to $r$ to $r'$. Let $g$ be the highest point on $l_s$ such that there is a monotone path from $g$ to $g'$ to $r$. We claim that our construction maps the path $g$ to $r$ to the path $g'$ to $r'$. First, we show that $g'$ is the highest point on $l_{s'}$ such that there is a monotone path from $g$ to $g'$ to $r$. Suppose there is a higher point $g''$ so that there is a monotone path from $g''$ to $r$. Then we have a monotone path from $g''$ to $r'$, contradicting the minimality of the path from $g'$ to $r'$. Next, we show that if we extend the monotone path from $r$ horizontally to the right, then we reach the point $r'$ on $l_{t'}$. Suppose that instead we reach another point $r''$. By the definition of our extension, $r''$ is the lowest point on $l_{t'}$ such that there is a monotone path from $r$ to $r''$. So $y(r'') \leq y(r')$. But by the minimality of the path $g'$ to $r'$, we get that $y(r'') \geq y(r')$. Hence, $r'=r''$ as required, and the mapping is surjective.

Fifth, we show that the mapping is order-preserving. Suppose there are paths $g$ to $r$ and $g'$ to $r'$, both between $l_s$ and $l_t$. Without loss of generality, let $y(g) < y(g')$. Suppose these monotone paths map to $g''$ to $r''$ and $g'''$ to $r'''$ respectively. We show that $y(g'') < y(g''')$. Suppose for the sake of contradiction that $y(g'') \geq y(g''')$. If $g'' = g'''$, by minimality we would have $r'' = r'''$, contradicting the injectivity of our mapping. So $y(g'') > y(g''')$. Therefore, the path from $g$ to $g''$ and the path from $g'$ to $g'''$ must cross at some point $u$. But now we have a monotone path from $g$ to $u$ to $g'''$, which contradicts the fact that $g''$ is the highest point on $l_{s'}$ such that there is a path from $g$ to $g''$ to $r$. Hence, $y(g'') < y(g''')$ as required. Finally, we show that $y(g') < y(r)$ if and only if $y(g''') < y(r'')$. Suppose for the sake of contradiction that $y(g') < y(r)$ and $y(g''') > y(r'')$. Take the path from $r$ to $r''$ that is a horizontal line to the right, until it reaches the boundary between free and non-free space, and take the path along this boundary. Similarly, take the path from $g'''$ to $g'$ that is a horizontal line to the left, until it reaches the boundary between free and non-free space, and take the path along this boundary. For any pair of vertical lines $l_a$ between $l_s$ and $l_{s'}$ and $l_b$ between $l_{t}$ and $l_{t'}$, such that $l_b - l_a = \ell$, define $a$ to be the intersection of the path between $g'''$ and $g'$ with $l_a$, and define $b$ to the intersection of the path between $r$ and $r''$ with $l_b$. When $l_a = l_s$, we have $y(a) < y(b)$. When $l_s = l_{s'}$, we have $y(a) > y(b)$. So by the intermediate value theorem, there must be a point where $y(a) = y(b)$. This cannot occur when both $a$ and $b$ are on the horizontal portions of their respective paths. If they are both on the boundary between free and non-free space, then $a$ and $b$ are interior critical points, where $a$ is horizontally $\ell$ units to the right $b$. If one of $a$ or $b$ is on the boundary between free and non-free space, and the other is on the horizontal portion, then we also have an internal critical point, as there is a point on the boundary between free and non-free space sharing a $y$-coordinate with an external critical point. Putting this all together, we yield a contradiction in the case where $y(g') < y(r)$ and $y(g''') > y(r'')$. We yield a similar contradiction in that case where $y(g') > y(r)$ and $y(g''') < y(r'')$. This shows that our bijection is indeed order preserving. This completes the proof of the inductive case where the sweepline does not pass through an internal critical point.

For the inductive case where the sweepline passes through an internal critical point, we show that the bijection between the set of $y$-intervals and the set of minimal monotone paths is preserved. If the internal critical point is the leftmost free point in the cell, there is one extra starting point for a minimal monotone path to be considered. If an additional minimal monotone path exists due to this point, our algorithm adds it to our data structure. Similarly, if the internal critical point is the rightmost free point in the cell, there is one fewer ending point for minimal monotone paths, and a $y$-interval is deleted from our data structure if necessary. If our internal critical point is a point on the boundary between free and non-free space that shares a $y$-coordinate with an external critical point, we consider the two potential modifications to the set of minimal monotone paths. First, if the internal critical point is on $l_s$, we consider whether this point is the starting point of a new minimal monotone path. If the internal critical point is on $l_t$, we consider whether this point is the last valid point for a minimal monotone path that needs to be removed. Second, we consider whether there may be a swap in relative positions of the $y$-coordinates in minimal monotone paths. If this is the case, we replace the previous $y$-coordinates with new $y$-coordinates in the data structure. Finally, if the internal critical point is a pair of points on the boundary between free and non-free space that are $\ell$ units horizontally away from one another, we consider whether this internal critical point changes the relative positions of the $y$-coordinates of minimal monotone paths, and update the data structure accordingly. Note that if the internal critical point is a leftmost free point or rightmost free point, this only adds or deletes minimal monotone paths, since the critical point is local to either $l_s$ or $l_t$. In contrast, if the internal critical point is a pair of points on the boundaries of free and non-free space that are $\ell$ units horizontally from one another, this only affects the relative positions of $y$-coordinates of minimal monotone paths, so it suffices to consider swapping these relative positions in the data structure. To summarise, in all cases we update the data structure and preserve the relative positions of the $y$-intervals of the minimal monotone paths in our data structure. This completes the proof of our lemma.
\end{proof}

\begin{lemma}
\label{lem:cfd.a2a.correctness}
There exist $m-1$ monotone paths satisfying the conditions of \problemtwo under the continuous Fr\'echet distance if and only if our algorithm returns a set of $m-1$ monotone paths.
\end{lemma}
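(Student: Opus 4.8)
The plan is to assemble this lemma from two ingredients already in hand: the reduction of \problemtwo to a family of instances of Subproblem~\ref{def:p3}, and the two structural lemmas of this section, Lemma~\ref{lem:cfd.a2a.exists_minimal} and Lemma~\ref{lem:cfd.a2a.order_preserving_bijection}. Recall from Section~\ref{sec:new_preliminaries} and the discussion opening Section~\ref{sec:cfd.jump.a2a} that the conditions of \problemtwo hold for $T$ if and only if there is a candidate reference subtrajectory $(s,t)$ -- with $s$ a vertex or an internal critical point (Definition~\ref{defn:cfd.a2a.internal_critical_point}) and $t$ determined from $s$ so that $[s,t]$ has length at least $\ell$ -- for which Subproblem~\ref{def:p3} has a positive answer; that is, there are $m-1$ pairwise non-overlapping monotone paths from $l_s$ to $l_t$ whose $y$-intervals each meet $[y(s),y(t)]$ in at most one point. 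There are only finitely many such candidates (the internal critical points are intersections of finitely many curves), and the sweepline processes each of them as an event, answering positively exactly when its query to the dynamic monotonic interval data structure returns at least $m-1$. Hence it suffices to prove, for each processed event $(s,t)$: the data structure reports $\ge m-1$ non-overlapping intervals if and only if Subproblem~\ref{def:p3} has a positive answer for $(s,t)$, and in that case the algorithm actually outputs $m-1$ monotone paths.

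Fix a processed event $(s,t)$. I would first show that the maximum size of a family of pairwise non-overlapping monotone paths from $l_s$ to $l_t$ equals the maximum size of such a family consisting of \emph{minimal} monotone paths. The inequality ``$\ge$'' is immediate since minimal monotone paths are monotone paths. For ``$\le$'', given a pairwise non-overlapping family, replace each member (a path from $g$ to $r$) by a minimal monotone path from $g'$ to $r'$ with $[y(g'),y(r')]\subseteq[y(g),y(r)]$, which exists by Lemma~\ref{lem:cfd.a2a.exists_minimal}; containment preserves the property that any two of the intervals meet in at most one point, and since in a non-overlapping family the $y$-intervals have pairwise disjoint interiors the resulting minimal paths are pairwise distinct. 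Next, by Lemma~\ref{lem:cfd.a2a.order_preserving_bijection} there is an order-preserving bijection between the $y$-intervals of minimal monotone paths for $(s,t)$ and the intervals currently stored in the data structure; since whether two intervals meet in at most one point is determined solely by the relative order of their endpoints, an order-preserving bijection carries pairwise non-overlapping families to pairwise non-overlapping families in both directions. Chaining the two steps, the maximum number of pairwise non-overlapping intervals in the data structure -- which is exactly what the query reports -- equals the maximum number of pairwise non-overlapping monotone paths from $l_s$ to $l_t$.

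With this in place both directions follow. If the query at some event $(s,t)$ returns $\ge m-1$, the algorithm outputs $m-1$ of the stored monotone paths (kept alongside their intervals, and adjusted to run from $l_s$ to $l_t$ in $O(1)$ time each); these are $m-1$ pairwise non-overlapping monotone paths from $l_s$ to $l_t$, their $y$-intervals meet $[y(s),y(t)]$ in at most one point (enforced by the greedy depth first search exactly as in the discrete case, Section~\ref{sec:dfd.dfs}, and the vertex-to-vertex case, Section~\ref{sec:cfd.jump.v2v}), and the reference subtrajectory $[s,t]$ has length at least $\ell$; so together they constitute a subtrajectory cluster of size $m$, i.e. the conditions of \problemtwo hold. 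Conversely, if the conditions of \problemtwo hold, then by the reduction above some processed event $(s,t)$ admits $m-1$ pairwise non-overlapping monotone paths from $l_s$ to $l_t$ satisfying Subproblem~\ref{def:p3}, so by the previous paragraph the data structure reports at least $m-1$ at that event and the algorithm outputs a set of $m-1$ monotone paths.

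I expect the main obstacle to be bookkeeping rather than a new idea. Lemma~\ref{lem:cfd.a2a.order_preserving_bijection} matches only \emph{minimal} monotone paths -- and only their relative order -- to the contents of the data structure, so the argument hinges on (i) the minimal-path replacement preserving the maximum non-overlapping count, and (ii) ``order-preserving'' being strong enough to preserve the ``meet in at most one point'' relation; both are clean but must be stated precisely. The remaining requirements (the reference length being at least $\ell$, the at-most-one-point overlap with $[y(s),y(t)]$, and the completeness of the candidate set of reference starting points) are inherited unchanged from the discrete and vertex-to-vertex cases and need no new work here.
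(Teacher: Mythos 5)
Your proposal is correct and follows essentially the same route as the paper's proof: the forward direction rests on the validity of the computed paths from the vertex-to-vertex machinery plus the order-preserving bijection of Lemma~\ref{lem:cfd.a2a.order_preserving_bijection}, and the converse combines Lemma~\ref{lem:cfd.a2a.exists_minimal} with that bijection to conclude the data structure reports at least $m-1$ non-overlapping intervals at the relevant event. You spell out more carefully than the paper why order preservation carries the ``overlap in at most one point'' relation both ways and why the minimal-path replacement preserves the non-overlapping count, but this is elaboration of the same argument, not a different one.
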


\begin{proof}
Suppose our algorithm returns a set of $m-1$ monotone paths. We have already shown in Section~\ref{sec:cfd.jump.v2v} that the monotone paths computed by the greedy depth first search and link-cut data structure on $G'=(V',E')$ result in valid paths. These monotone paths are non-overlapping due to the order preserving bijection in Lemma~\ref{lem:cfd.a2a.order_preserving_bijection}. Hence, our reference subtrajectory plus our $m-1$ monotone paths forms a subtrajectory cluster that satisfies the conditions of \problemtwo.

Suppose there exist $m-1$ monotone paths satisfying the conditions of \problemtwo. By Lemma~\ref{lem:cfd.a2a.exists_minimal} there exist $m-1$ minimal monotone paths satisfying the conditions of \problemtwo. By Lemma~\ref{lem:cfd.a2a.order_preserving_bijection} there is an order preserving bijection from the $y$-intervals of these $m-1$ minimal monotone paths to the set of intervals in our dynamic monotonic interval data structure. Hence, when our algorithm reaches the reference subtrajectory that satisfies the conditions of \problemtwo, our algorithm will report that there are $m-1$ non-overlapping intervals in the data structure, as required.
\end{proof}

Finally, we perform a running time analysis on our algorithm. We begin by bounding the number of internal critical points.

\begin{lemma}
\label{lem:cfd.a2a.n^3_critical_points}
There are $O(n^3)$ internal critical points.
\end{lemma}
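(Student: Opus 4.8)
The plan is to bound separately the number of internal critical points of each of the three types in Definition~\ref{defn:cfd.a2a.internal_critical_point}, and then add the three bounds. For the first type (leftmost or rightmost free point of its cell), note that the free space inside a single cell is convex, so it has a unique leftmost and a unique rightmost point; with $n^2$ cells this contributes $O(n^2)$ such points. For the second type (shares a $y$-coordinate with an external critical point), there are $O(n^2)$ external critical points, since each cell carries at most eight critical points, and hence at most $O(n^2)$ distinct $y$-coordinates among them. Fix one such value $y^\ast$. The horizontal line at height $y^\ast$ meets exactly one row of $n$ cells, and within each cell the boundary between free and non-free space is an arc of a single ellipse, which a line meets in at most two points; so at most $2n$ points of the free/non-free boundary have $y$-coordinate $y^\ast$, and therefore at most $2n$ internal critical points share this $y$-coordinate. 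Summing over the $O(n^2)$ candidate values of $y^\ast$ gives $O(n^3)$ internal critical points of the second type.

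The third type ($\ell$ units horizontally to the right of a point on the free/non-free boundary) is the interesting case, and the key observation is that this condition relates only points with the same $y$-coordinate, so the count can be done one row at a time. Fix a row $j$. The free/non-free boundary restricted to this row is a union $\Gamma_j$ of $O(n)$ conic arcs (at most $O(1)$ per cell). A type-3 internal critical point in row $j$ lies on $\Gamma_j$ and is simultaneously the translate by $\ell$ in the $x$-direction of a point of $\Gamma_j$, hence lies in $\Gamma_j \cap (\Gamma_j + \ell)$, where $\Gamma_j + \ell$ is $\Gamma_j$ shifted right by $\ell$ and is again a union of $O(n)$ conic arcs. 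By B\'ezout's theorem two conic arcs meet in $O(1)$ points, so $|\Gamma_j \cap (\Gamma_j + \ell)| = O(n^2)$; summing over the $n$ rows yields $O(n^3)$ points of the third type. Adding the three bounds proves the lemma.

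I expect the third type to be the only step requiring care: one must exploit the purely horizontal nature of the $\ell$-shift to localise the intersection count to a single row before invoking the constant bound on conic-arc intersections, since comparing arcs across different rows would only give the weaker estimate $O(n^4)$. Minor points to handle cleanly are that the free/non-free boundary inside a cell is a constant number of conic arcs (so a line meets it $O(1)$ times and two such arcs meet $O(1)$ times), and that any degenerate coincidences of arcs can be discarded by a standard general-position argument; neither affects the asymptotics.
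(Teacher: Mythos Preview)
Your proof is correct and follows the same three-type case split as the paper, with essentially identical arguments for the first two types. For the third type the paper obtains a sharper $O(n^2)$ bound by observing that a fixed horizontal shift by~$\ell$ pairs each cell with only $O(1)$ other cells (hence $O(n^2)$ relevant cell-pairs in total), whereas your row-by-row B\'ezout count of all arc-pairs gives $O(n^3)$; either bound suffices since the second type already forces the overall $O(n^3)$.
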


\begin{proof}
If the internal critical point is the leftmost free point in its cell, there is only one copy of it in its cell, so there are at most $O(n^2)$ internal critical points of this type.

If the internal critical point shares a $y$-coordinate with an external critical point, and is on the boundary between free and non-free space, there are only two copies of it per external critical point and per cell in the same row as it. In total, there are $O(n^2)$ external critical points, and $O(n)$ cells in the same row as it, so there are $O(n^3)$ internal critical points of this type.

If the internal critical point is $\ell$ unites horizontally to the right of a point that is on the boundary between free and non-free space, there is at most a constant number of copies per pair of cells in the same row. This is because, for any pair of cells in the same row as each other, a $\ell$-unit horizontal translation of its free space boundary would only intersect the free space boundary of the other cell a constant number of times. As there are $O(n^2)$ pairs of cells that share a row, there are $O(n^2)$ internal critical points of this type.
\end{proof}

Now we can analyse the running time of our algorithm. to obtain the main result of Section~\ref{sec:cfd}.

\theoremtwo*

\begin{proof}
Computing the graph $G'=(V',E')$ requires $O(n^2 \log n)$ time. Computing all internal critical points takes $O(n^3)$ time, by computing each critical point individually in constant time. Sorting them by $x$-coordinate requires $O(n^3 \log n)$ time. At each of the $O(n^3)$ events, we may insert the critical point into the graph $G'$. Inserting a single critical point inserts up to $O(\log n)$ edges into the graph $G'$. In total, there are $O(n^3 \log n)$ edges in the graph $G'$. Each edge can be linked or cut at most once, so by the same amortised analysis as in Section~\ref{sec:dfd.jump}, we spend at most $O(n^3 \log^2 n)$ time updating the link-cut data structure.

We maintain the dynamic monotonic interval data structure by inserting, deleting, or replacing (both a delete and an insert) an interval, which requires $O(\log n)$ time per event. In total, maintaining the dynamic monotonic interval data structure requires $O(n^3 \log n)$ time. 

In total, the running time is dominated by updating the link-cut data structure, and the running time of our algorithm is $O(n^3 \log^2 n)$.
\end{proof}

\section{Lower bound}
\label{sec:3ov}

The main theorem we will prove in this section is the following:

\theoremfour*

We reduce from the 3OV problem to \problemtwo. The formal definition of 3OV is as follows:

\addtocounter{problem}{-1}
\begin{problem}[3OV]
\label{def:p4}
We are given three sets of vectors $\mathcal X = \{X_1, X_2, \ldots, X_n\}$, $\mathcal Y = \{Y_1, Y_2, \ldots, Y_n\}$ and $\mathcal Z = \{Z_1, Z_2, \ldots, Z_n\}$. For $1 \leq i,j,k \leq n$, each of the vectors $X_i$, $Y_j$ and $Z_k$ are binary vector of length $W$. Our problem is to decide whether there exists a triple of integers $1 \leq i,j,k \leq n$ such that $X_i$, $Y_j$ and $Z_k$ are orthogonal. The three vectors are orthogonal if
$X_i[h] \cdot Y_j[h] \cdot Z_k[h] = 0$ for all $1 \leq h \leq W$.
\end{problem}

This section is structured as follows. In Section~\ref{sec:3ov_construction}, given an instance $(\mathcal X, \mathcal Y, \mathcal Z)$ of 3OV, we construct an \problemtwo instance $(T, m, \ell, d)$ of complexity $O(nW)$. In Section~\ref{sec:3ov_well_defined} we fill in some missing details in our construction. In Sections~\ref{sec:3ov_antipodal_property}-\ref{sec:3ov_subtrajectory_property} we prove several useful properties of our construction. In Section~\ref{sec:3ov_free_space_diagram} we describe the free space diagram of our construction. In Section~\ref{sec:3ov_paths}, we construct a set of monotone paths in the free space diagram. In Sections~\ref{sec:3ov_if}, we use this set of monotone paths to prove that if our input $(\mathcal X, \mathcal Y, \mathcal Z)$ is a YES instance for 3OV, then our construction $(T, m, \ell, d)$ is a YES instance for \problemtwo. In Section~\ref{sec:3ov_cuts}, we define and construct a set of sequences that show that there are is no monotone path between certain points. In Section~\ref{sec:3ov_only_if}, we use this set of sequences to show that if our input $(\mathcal X, \mathcal Y, \mathcal Z)$ is a NO instance for 3OV then our construction $(T, m, \ell, d)$ is a NO instance for \problemtwo. Finally, in Section~\ref{sec:3ov_putting_together}, we put it all together and prove Theorem~\ref{theoremfour}. 

\subsection{Construction}
\label{sec:3ov_construction}

Recall that as input we are given a 3OV instance, that is, three sets $\mathcal X$, $\mathcal Y$, $\mathcal Z$ of $n$ binary vectors of length $W$. Given this instance, we construct an instance $(T,m,\ell,d)$ for \problemtwo. We construct the trajectory $T$ by constructing two subtrajectories $T_1$ and $T_2$ and connecting them via a point.

Let $r$ and $r'$ be positive real numbers so that $10r < r'$. We use polar coordinates in the complex plane. Recall that $r \cis \theta$ is a point that is $r$ units away from the origin, at an angle of $\theta$ anticlockwise from the positive real axis. Define $\phi = \frac {\pi}{4W + 6}$ and $d = ||r \cis \pi - r' \cis \phi||$, where $||\cdot||$ denotes the Euclidean norm. Define $\delta = (r+r' - d)$. Let $\varepsilon > 0$ be arbitrarily small relative to $\delta$ and $r$. 

Define the following points which will be used to define $T_1$. See Figure~\ref{fig:lower_1}.
\[
    \begin{array}{l c l l}
    A_{h,u} &=& r \cis ((2W+3+u) \cdot \phi), \text{ if}\ u \neq 2h \\
    A_{h,u} &=& (r - \frac 2 3 \delta) \cis ((2W+3+u) \cdot \phi), \text{ if}\ u = 2h,\\
    A_{W+1,u} &=& r \cis ((2W+3+u) \cdot \phi) \\
    B_{k,h} &=& r \cis ((4W+5) \cdot \phi), \text{ if}\ Z_k[h] = 1\\
    B_{k,h} &=& (r-\varepsilon) \cis ((4W+5) \cdot \phi), \text{ if}\
    Z_k[h] = 0 \\
    B_{k,W+1} &=& r \cis ((4W+5) \cdot \phi) \\
    C &=& r \cis ((4W+6) \cdot \phi) \\
    D &=& r \cis 0 \\
    E &=& r \cis \phi \\
    F_{h,u} &=& r \cis ((u+1) \cdot \phi), \text{ if}\ u \neq 2h \\
    F_{h,u} &=& (r-\frac 2 3 \delta) \cis ((u+1) \cdot \phi), \text{ if}\ u = 2h \\
    G &=& r \cis ((2W+3) \cdot \phi) \\
    H_1 &=& 4 r' \cis ((3W+4) \cdot \phi) \\
    H_2 &=& 8 r' \cis ((2W+3) \cdot \phi).
    \end{array}
\]

Now we are ready to define the first subtrajectory $T_1$.
\begin{align*}
T_1 = &\bigcirc_{1 \leq k \leq n} \Big(
    \bigcirc_{1 \leq h \leq W} \big(
        G
        \circ \bigcirc_{1 \leq u \leq 2W+1} (A_{h,u})
        \circ B_{k,h}
        \circ C
        \circ D
        \circ C
        \circ D 
        \\ &\hspace{3.6cm}
        \circ E
        \circ \bigcirc_{1 \leq u \leq 2W+1} (F_{h,u})
    \big)
    \\ &\hspace{1.8cm}
    \circ G 
    \circ \bigcirc_{1 \leq u \leq 2W+1} (A_{W+1,u})
    \circ B_{k,W+1}
    \circ C
    \circ D
    \circ C
    \circ D
    \\ &\hspace{1.8cm}
    \circ C
    \circ H_1
    \circ H_2
\Big)
\end{align*}

\begin{figure}[ht]
    \centering
    \includegraphics[width=0.6\textwidth]{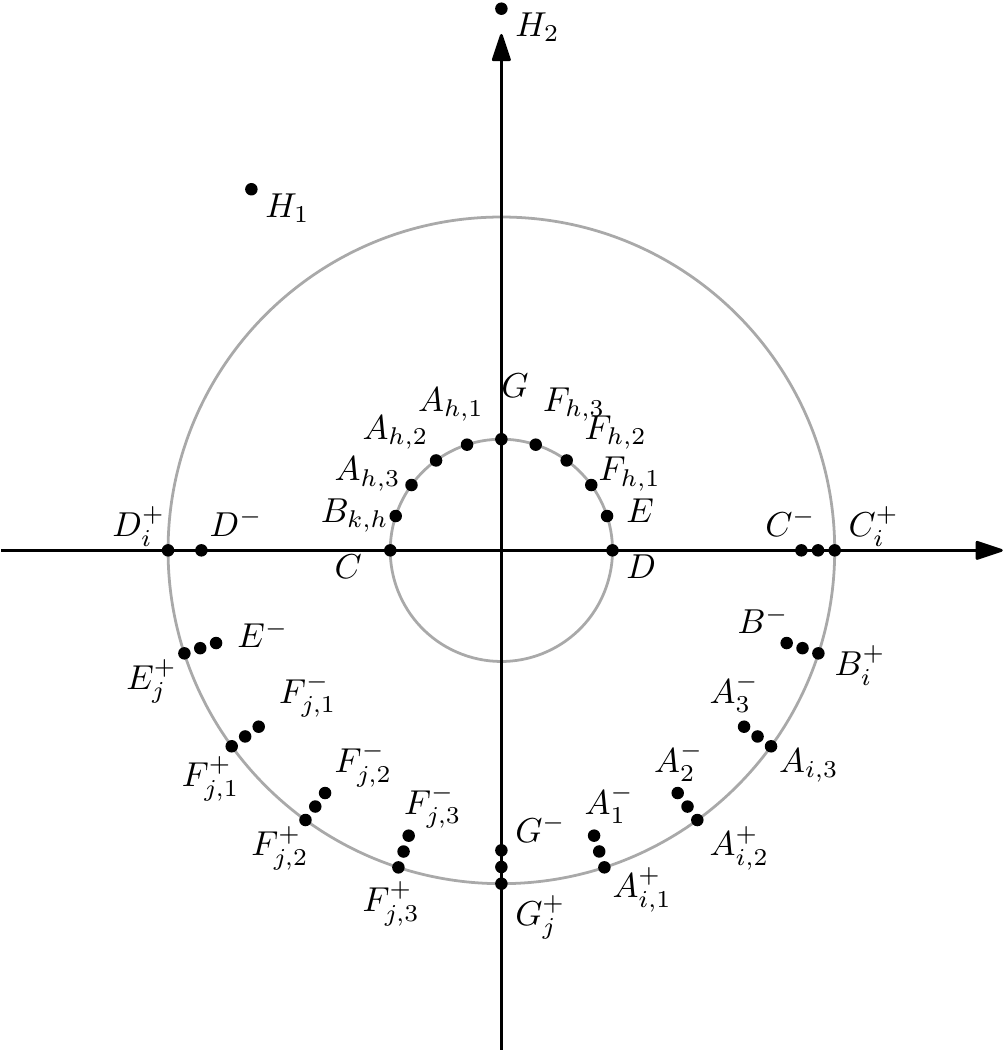}
    \caption{The vertices of $T_1$ and $T_2$, for $W=1$.}
    \label{fig:lower_1}
\end{figure}

Define the following points which will be used to define $T_2$. 
\[
    \begin{array}{l c l l}
    A^-_u &=& (r'-\delta) \cis (\pi + (2W+3+u) \cdot \phi) \text{ for all } 1 \leq u \leq 2W+1\\
    B^- &=& (r'-\delta) \cis (\pi + (4W+5) \cdot \phi) \\
    C^- &=& (r'-\delta) \cis (\pi + (4W+6) \cdot \phi) \\
    D^- &=& (r'-\delta) \cis (\pi) \\
    E^- &=& (r'-\delta) \cis (\pi + \phi) \\
    F^-_u &=& (r'-\delta) \cis (\pi + (u+1) \cdot \phi) \text{ for all } 1 \leq u \leq 2W+1 \\
    G^- &=& (r'-\delta) \cis (\pi + (2W+3) \cdot \phi) \\
    H^- &=& (r'-\delta) \cis ((2W+3) \cdot \phi) \\
    
    A^+_{i,u} &=& r' \cis (\pi + (2W+3+u) \cdot \phi), \text{ if $i$ or $u$ are odd} \\
    A^+_{i,u} &=& r' \cis (\pi + (2W+3+u) \cdot \phi), \text{ if $i$ and $u$ are even and } X_{\frac i 2}[\frac u 2] = 1 \\
    A^+_{i,u} &=& (r'- \frac 2 3 \delta) \cis (\pi + (2W+3+u) \cdot \phi), \text{ if $i$ and $u$ are even and } X_{\frac i 2}[\frac u 2] = 0 \\
    B_i^+ &=& b_i \cis (\pi + (4W+5) \cdot \phi), \text{ where $b_i$ is defined below} \\
    C_i^+ &=& c_i \cis ((4W+6) \cdot \phi), \text{ where $c_i$ is defined below} \\
    D^+_i &=& r' \cis (\pi) \\
    E_j^+ &=& e_j \cis (\pi + \phi) \text{ where $e_j$ is defined below}\\
    F^+_{j,u} &=& r' \cis (\pi + (u+1) \cdot \phi), \text{ if $j$ or $u$ are odd}  \\
    F^+_{j,u} &=& r' \cis (\pi + (u+1) \cdot \phi), \text{ if $j$ and $u$ are even and } Y_{\frac j 2}[\frac u 2] = 1 \\
    F^+_{j,u} &=& (r'-\frac 2 3 \delta) \cis (\pi + (u+1) \cdot \phi), \text{ if $j$ and $u$ are even and } Y_{\frac j 2}[\frac u 2] = 0 \\
    G^+_j &=& g_j \cis (\pi + (2W+3)\cdot \phi), \text{ where $g_j$ is defined below} \\
    \end{array}
\]

Now we define $b_i$, $c_i$, $g_j$, and $e_j$. 

Let $B_\varepsilon = (r-\varepsilon)\cis((4W+5) \cdot \phi)$. For $1 \leq i \leq 2n+3$, define $I^+_i$ so that $C,I^+_1,I^+_2,\ldots,I^+_{2n+3},B_\varepsilon$ are evenly spaced along the segment $CB_\varepsilon$. For $1 \leq i \leq 2n+3$ define $b_i$ so that $||B_i I_i|| = d$. For $1 \leq i \leq 2n+1$ define $c_i$ so that $||C_{i+2} I_i|| = d$. Define $c_1 = c_2 = c_3$. 

For all $1 \leq j \leq 2n+1$, define $J^+_{j} = (r' - \frac {j} {2n+2} \delta) \cis (\pi+(2W+4)\cdot \phi)$ to be a point on $A_1^-A_{1,1}^+$. Define $K_j$ to be the point on $GA_{1,1}$ such that $||K_jJ^+_{j}|| = d$. Define $g_j$ to be the positive real so that $||G^+_jK_{2n+2-j}|| = d$. 

For all $1 \leq j \leq 2n+1$, define $L^+_j = (r' - \frac {2n+4-j} {2n+2} \delta) \cis (\pi)$ to be a point on $D^-D_1^+$. Define $M_j$ to be the point on $DE$ such that $||M_jL^+_{j}|| = d$. Define $e_j$ to be the positive real so that $||E_j^+M_{2n+2-j}|| = d$. 

Now we are ready to define the second subtrajectory $T_2$.
\begin{align*}
T_2 = &\bigcirc_{1 \leq i \leq 2n+1} \big(
            G^-
            \circ G_i^+
            \circ G^-
            \circ \bigcirc_{1 \leq u \leq 2W+1}( A_u^- \circ A^+_{i,u} \circ A_u^-)
            \circ B^-
            \circ B^+_i
            \circ B^-
            \circ \bigcirc_{1 \leq u \leq 3}
            (
                P_i
                \circ Q_i
            )
        \big)
        \\
        &\circ
        \bigcirc_{1 \leq j \leq 2n} \big(
            E^-
            \circ E^+_j
            \circ E^-
            \circ \bigcirc_{1 \leq u \leq 2W+1} (F^-_u \circ F^+_{j,u} \circ F^-_{u})
            \circ G^-
            \circ G^+_j
            \circ G^-
        \big)
        \\
        &
        \circ C^-
        \circ C^+_1
        \circ C^-
        \circ D^-
        \circ D^+_1
        \circ D^-
        \circ C^-
        \circ C^+_2
        \circ C^-
        \circ D^-
        \circ D^+_2
        \circ D^-
        \circ H^-
        \\
        &
        \circ E^-
        \circ E^+_{2n+1}
        \circ E^-
        \circ \bigcirc_{1 \leq u \leq 2W+1} (F^-_u \circ F^+_{2n+1,u} \circ F^-_{u})
        \circ G^-
        \circ G^+_{2n+1}
        \circ G^-
        \\
        &\circ
        \bigcirc_{3 \leq i \leq 2n+2} \big(
            C^-
            \circ C^+_i
            \circ C^-
            \circ G^-
            \circ D^-
            \circ D^+_i
            \circ D^-
        \big)
        \\
        &\circ C^-
        \circ C^+_{2n+3}
        \circ C^-
        \circ G^-
        \circ D^-
        \circ D^+_{2n+3}
\end{align*}

where $P_i$ and $Q_i$ are points on $B^-G^-$ so that, for all $1 \leq i \leq n$, the subtrajectory from $A_{2i-1,1}^+$ to $D^+_{2i+3}$ has total length $\lambda$, for some constant $\lambda$. Define $T = T_1 \circ D^- \circ T_2$. Define $m = 2nW + 2$. Define $\ell = \lambda - \delta$. Recall that $d = ||r \cis \pi - r' \cis \phi||.$ Then our constructed instance is $(T,m,\ell,d)$.

We show that $B_i$, $E_i$, $G_j$ and $K_j$ are well defined in Section~\ref{sec:3ov_well_defined}, and show that $P_i$ and $Q_i$ are well defined in Section~\ref{sec:3ov_subtrajectory_property}.

\subsection{Points $B_i$, $E_i, G_j$ and $K_j$ are well defined}
\label{sec:3ov_well_defined}

We start with a useful lemma for constructing a point $X$ on a segment so that, for some other point $P$, the distance from $P$ to $X$ is exactly $d$. 

\begin{lemma}
\label{lem:well_defined_ivt}
Suppose $P$, $A$, $B$ are arbitrary points in the plane and let $d$ be a given constant. Suppose further that $|PA| < d < |PB|$. Then there exists a point $X$ on the segment $AB$ so that $|PX| = d$.
\end{lemma}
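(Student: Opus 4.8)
The plan is to apply the intermediate value theorem to the distance-to-$P$ function restricted to the segment $AB$. First I would parametrise the segment: for $t \in [0,1]$, set $X(t) = (1-t)A + tB$, so that $X(0) = A$, $X(1) = B$, and $X(t)$ traces out the segment $AB$ as $t$ ranges over $[0,1]$. Then define $f : [0,1] \to \mathbb{R}$ by $f(t) = \|P - X(t)\|$.

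Next I would observe that $f$ is continuous on $[0,1]$: the map $t \mapsto X(t)$ is affine hence continuous, and the norm $\|P - \cdot\|$ is continuous (it is $1$-Lipschitz), so $f$ is a composition of continuous maps. We then evaluate at the endpoints: $f(0) = \|P - A\| = |PA| < d$ and $f(1) = \|P - B\| = |PB| > d$, using the hypothesis $|PA| < d < |PB|$.

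Finally, since $f$ is continuous on the interval $[0,1]$ with $f(0) < d < f(1)$, the intermediate value theorem yields some $t^\ast \in (0,1)$ with $f(t^\ast) = d$. Taking $X = X(t^\ast)$ gives a point on the segment $AB$ with $|PX| = d$, as required.

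There is no real obstacle here — the statement is essentially a textbook consequence of IVT, and the only thing to be careful about is to state the parametrisation of the segment explicitly and to note the (routine) continuity of the distance function so that IVT genuinely applies. I would keep the write-up to just these few lines.
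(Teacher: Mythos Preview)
Your proposal is correct and takes exactly the same approach as the paper: the paper also argues that the distance from $P$ to a point moving along $AB$ is continuous, starts below $d$, ends above $d$, and invokes the intermediate value theorem. Your version is just slightly more explicit about the parametrisation and the continuity argument.
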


\begin{proof}
As $X$ varies along the segment $AB$, the distance of $P$ to $X$ is a continuous function. This function starts at a value less than $d$ (when $X = A$) and ends at a value greater than $d$ (when $X=B$). By the intermediate value theorem, there is a point so that $|PX| = d$.
\end{proof}

Next, we apply this lemma repeatedly to show that $B_i$, $E_i$, $G_j$ and $K_j$ are well defined. 

\begin{lemma}
The points $B_i$, $C_i$, $E_j$ and $G_j$ are well defined.
\end{lemma}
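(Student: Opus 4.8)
The plan is to obtain all of these points by a single repeated application of Lemma~\ref{lem:well_defined_ivt}. The only care needed is the \emph{order} of the derivations, which is forced by the dependency structure of the construction: $B^+_i$ and $C^+_i$ depend only on the explicitly defined points $I^+_{i'}$ lying on the segment $CB_\varepsilon$; the auxiliary points $K_j$ and $M_j$ depend only on the explicitly defined points $J^+_j$ (on $A^-_1A^+_{1,1}$) and $L^+_j$ (on $D^-D^+_1$); and $G^+_j$ and $E^+_j$ depend, respectively, on $K_{2n+2-j}$ and $M_{2n+2-j}$. So I would first establish $B^+_i$, $C^+_i$, $K_j$ and $M_j$, and only then $G^+_j$ and $E^+_j$. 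Throughout I will use the elementary consequences of $d = ||r\cis\pi - r'\cis\phi||$, namely that $d^2 = r^2 + r'^2 + 2rr'\cos\phi$, so $r < d < r + r'$ and $\delta = r + r' - d > 0$.

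All of $B^+_i, C^+_i, G^+_j, E^+_j$ fit one template: each must be a point $\rho\cis\theta$ with $\rho > 0$ and $\theta$ fixed, at distance exactly $d$ from a reference point $P$ (equal to $I^+_{i'}$, $I^+_{i'}$, $K_{2n+2-j}$, $M_{2n+2-j}$ respectively). In every case $P$ lies on a segment both of whose endpoints are at distance at most $r$ from the origin $O$: the segment $CB_\varepsilon$ for the first two; the chord $GA_{1,1}$ for $G^+_j$, once $K_{2n+2-j}$ is known to lie on it; and the segment $DE$ for $E^+_j$, once $M_{2n+2-j}$ is known to lie on it. Since the distance from $O$ to a point sweeping a segment is a convex function of the parameter, it is maximized at an endpoint, so $||OP|| \le r < d$. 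On the other hand, the point $B := 2(r+r')\cis\theta$ satisfies $||PB|| \ge ||OB|| - ||OP|| \ge 2(r+r') - r > r + r' > d$. Applying Lemma~\ref{lem:well_defined_ivt} with $A := O$ and this $B$ produces a point $\rho\cis\theta$ on $AB$ at distance $d$ from $P$; it has $\rho > 0$ because $||OP|| \ne d$, and $\rho$ is in fact unique, since $\rho \mapsto ||\rho\cis\theta - P||^2$ is a quadratic with positive leading coefficient that is below $d^2$ at $\rho = 0$, hence meets the value $d^2$ exactly once for $\rho > 0$. This gives $b_i$, $c_i$, $g_j$ and $e_j$.

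It remains to produce the auxiliary points $K_j$ and $M_j$, which I would do by a direct law-of-cosines computation. Take $K_j$: it must lie on the chord $GA_{1,1}$, where $G = r\cis((2W+3)\phi)$ and $A_{1,1} = r\cis((2W+4)\phi)$, at distance exactly $d$ from $J^+_j = (r' - \tfrac{j}{2n+2}\delta)\cis(\pi+(2W+4)\phi)$. Since $1 \le j \le 2n+1 < 2n+2$, the radius $\rho_j := r' - \tfrac{j}{2n+2}\delta$ lies strictly between $r'-\delta$ and $r'$. The point $A_{1,1}$ lies on the ray from $O$ opposite to the one carrying $J^+_j$, so $||A_{1,1} - J^+_j|| = r + \rho_j > r + (r'-\delta) = d$. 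The point $G$ makes an angle $\pi+\phi$ at $O$ with the ray carrying $J^+_j$, so by the law of cosines $||G - J^+_j||^2 = r^2 + \rho_j^2 + 2r\rho_j\cos\phi$, and $||G - J^+_j||^2 - d^2 = (\rho_j - r')(\rho_j + r' + 2r\cos\phi) < 0$. Thus $||G - J^+_j|| < d < ||A_{1,1} - J^+_j||$, and Lemma~\ref{lem:well_defined_ivt} yields $K_j$ on $GA_{1,1}$. The point $M_j$ is handled by the identical argument, with $DE$ ($D = r\cis 0$, $E = r\cis\phi$) in place of $GA_{1,1}$, with $D$ in the antipodal role, $E$ in the $\phi$-offset role, and $L^+_j$ on the ray at angle $\pi$ in place of $J^+_j$; the one thing to check is that $||OL^+_j||$ lies strictly between $r'-\delta$ and $r'$, after which the same two inequalities carry over.

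I expect the only real obstacle to be bookkeeping: matching each of the roughly eight families of points with its reference point and its host segment, and verifying the two inequalities $||PA|| < d < ||PB||$ of Lemma~\ref{lem:well_defined_ivt} with the correct strictness. The estimates are tightest precisely at the extreme index values, and for $K_j$ (and $M_j$) correctness hinges on keeping $\rho_j$ strictly above $r'-\delta$, i.e.\ on $j < 2n+2$. Keeping straight the three-layer dependency order --- the $I^+,J^+,L^+$-based points, then $K_j$ and $M_j$, then $G^+_j$ and $E^+_j$ --- is the other point that needs attention, but no individual step goes beyond the law of cosines and the intermediate value theorem.
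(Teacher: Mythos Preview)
Your approach is essentially the same as the paper's: both apply Lemma~\ref{lem:well_defined_ivt} to the same reference points and target rays/segments, in the same dependency order ($B_i^+,C_i^+$ first; then $K_j,M_j$; then $G_j^+,E_j^+$). Your choice of the far endpoint $2(r+r')\cis\theta$ is actually cleaner than the paper's use of $r'\cis\theta$, since the paper's stated justification ``$r' > d$'' is false (indeed $d^2 = r^2 + r'^2 + 2rr'\cos\phi > r'^2$), whereas your larger endpoint gives $||PB|| \ge 2(r+r') - r > r+r' > d$ directly.
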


\begin{proof}
Let $O$ be the origin. The points $B_i$ and $C_i$ can be shown to be well defined in a similar way, so we focus only on $B_i$. The definition of $B_i$ is $b_i \cis (\pi + 4W+6 \cdot \phi)$ so that $||I_iB_i|| = d$. Let $B^+ = r' \cis (\pi + 4W+6 \cdot \phi)$. To show that $B_i$ is well defined, we apply Lemma~\ref{lem:well_defined_ivt} on segment $OB^+$. We know $||I_iO|| < r < d$, and $||I_iB^+|| > ||OB^+|| > r' > d$ since $\angle I_iOB^+ > \frac \pi 2$. Therefore, there exists a point $B_i$ on $OB^+$ so that $||I_iB_i|| = d$. 

The points $E_j$ and $G_j$ can be shown to be well defined in a similar way, so we focus only on $E_j$. For $E_j$, we first show that $M_j$ is well defined, then we show $E_j$ is well defined. To show that $M_j$ is well defined, we apply Lemma~\ref{lem:well_defined_ivt} on segment $DE$. We know $||DL_j^+|| > ||DD^-|| = d$, and $||EL_j^+|| < ||ED^+|| = d$. Hence, there exists $M_j$ on $DE$ so that $||M_jL_j^+|| = d$. Let $E^+ = r' \cis (\pi + \phi)$. To show that $E_j$ is well defined, we apply Lemma~\ref{lem:well_defined_ivt} on segment $OE^+$. We know $||M_jO|| < r < d$, and $||M_jE^+|| >||OE^+|| > r' > d$ since $\angle M_jOE^+ > \frac \pi 2$. Therefore, there exists a point $E_j$ so that $||E_jM_j|| =d$.
\end{proof}

\subsection{Antipodal property}
\label{sec:3ov_antipodal_property}

The subtrajectories $T_1$ and $T_2$ were constructed in such a way that most points are within distance $d$ of one another, whereas a few pairs of points are of distance greater than $d$ from one another. We call these pairs of points antipodes, which we formally define below.

\begin{definition}
\label{defn:antipodes}
Define the following pairs of vertices of $T$ to be antipodes: 
$(A_{i,u}^+, A_{h,u})$, 
$(B_i^+,B_{k,h})$,
$(C_i^+,C)$,
$(D^+,D)$,
$(E_j^+,E)$,
$(F_{j,u}^+,F_{h,u})$,
$(G_j^+,G)$, and
$(v^+,H_u)$, where $u \in \{1,2\}$ and $v^+$ is any vertex of $T_2$.
\end{definition}

Next, we prove that these antipodes are indeed the only pairs of points with distance greater than $d$ from one another.

\begin{lemma}
\label{lem:3ov_antipodal_property_of_vertices}
Suppose $v$ is a vertex of $T_1$ and $v^+$ is a vertex of $T_2$, so that the distance between $v^+$ and $v$ is greater than $d$. Then $v^+$ and $v$ are antipodes.
\end{lemma}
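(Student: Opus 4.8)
The plan is to verify, for each possible type of vertex $v^+$ of $T_2$ and $v$ of $T_1$, that $|v^+ v| > d$ forces the pair to be one of the antipodes listed in Definition~\ref{defn:antipodes}. The underlying geometric fact is that the vertices of $T_1$ lie essentially on (or just inside) the circle of radius $r$ about the origin $O$, while the vertices of $T_2$ lie essentially on (or just inside) the circle of radius $r'$ about $O$, together with copies of $O$ itself. Since $d = \|r\cis\pi - r'\cis\phi\|$ and $\phi = \pi/(4W+6)$ is small, a short computation with the law of cosines shows that a point at distance $\approx r$ from $O$ and a point at distance $\approx r'$ from $O$ are within distance $d$ of one another precisely when the angle between them (as seen from $O$) is at least $\pi - \phi$ (up to an $O(\varepsilon,\delta)$ slack, which is why the various $-\tfrac23\delta$ and $-\varepsilon$ perturbations, and the $\varepsilon$ chosen arbitrarily small, do not change the combinatorics). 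In other words, $v^+$ and $v$ are farther than $d$ apart if and only if their angular positions differ by \emph{less} than $\pi - \phi$; equivalently, $v$ lies within angular distance $\phi$ of the point diametrically opposite $v^+$. I will isolate this as the single quantitative claim and prove it once.

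First I would record that claim precisely: letting $v = \rho\cis\alpha$ with $\rho \in \{r, r-\tfrac23\delta, r-\varepsilon\}$ and $v^+ = \rho'\cis\beta$ with $\rho' \in \{r', r'-\tfrac23\delta, r'-\delta\}$ (the only two families of vertices of $T_1$ and $T_2$ that are not $O$), we have $\|v^+ v\| > d$ iff $\beta - \alpha \equiv \pi \pmod{2\pi}$, i.e. the angular coordinates are antipodal. This is because the candidate ``matching'' angles in the construction are all multiples of $\phi$ offset by $\pi$ between the two curves: every vertex of $T_1$ sits at angle $k\phi$ for some integer $k \in \{0, 1, \dots, 4W+6\}$ (with $H_1, H_2$ the exceptions), and every non-origin vertex of $T_2$ sits at angle $\pi + k'\phi$; so the angular gap $\beta-\alpha$ is always a multiple of $\phi$ plus $\pi$, hence is either exactly $\pi$ or differs from $\pi$ by at least $\phi$. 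The quantitative claim then says: gap exactly $\pi$ $\Rightarrow$ distance $> d$; gap $\geq \phi$ away from $\pi$ $\Rightarrow$ distance $\leq d$. I would prove this by writing $\|v^+v\|^2 = \rho^2 + \rho'^2 - 2\rho\rho'\cos(\beta-\alpha)$, noting this is increasing in $|\cos^{-1}|$... more carefully, decreasing in the cosine, hence maximized when $\beta - \alpha = \pi$, and comparing against $d^2 = r^2 + r'^2 + 2rr'\cos\phi$; the perturbation terms only shrink $\rho, \rho'$ and hence shrink the distance, and the $r' > 10r$ condition together with the smallness of $\varepsilon$ relative to $\delta, r$ ensures the inequalities survive the perturbations with room to spare. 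The $H_1 = 4r'\cis((3W+4)\phi)$ and $H_2 = 8r'\cis((2W+3)\phi)$ vertices are at distance $\geq 4r' > d$ from \emph{every} vertex of $T_2$ (since each vertex of $T_2$ is within distance $r'$ of $O$ and $H_1, H_2$ are at distance $\geq 4r'$ from $O$), so $(v^+, H_u)$ is always an antipodal pair, consistent with the list; this is the one case where no angular condition is needed.

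With the quantitative claim in hand, the proof of Lemma~\ref{lem:3ov_antipodal_property_of_vertices} becomes bookkeeping: enumerate the vertex-name prefixes occurring in $T_2$ — $G^+_j, A^+_{i,u}, B^+_i, C^+_i, D^+_i, E^+_j, F^+_{j,u}, G^+_j$, plus the many copies of $O$ — and for each, read off from the construction its angular coordinate modulo $2\pi$, subtract $\pi$, and identify which vertices of $T_1$ share that angle. Copies of $O$ are within distance $r < d$ of everything, so they are never in an antipodal pair. For each non-origin $v^+$ one finds that the $T_1$-vertices at the antipodal angle are exactly those named in Definition~\ref{defn:antipodes}: e.g. $A^+_{i,u}$ at angle $\pi + (2W+3+u)\phi$ is antipodal in angle to $A_{h,u}$ and $A_{W+1,u}$ at angle $(2W+3+u)\phi$; $B^+_i$ at angle $\pi + (4W+5)\phi$ to $B_{k,h}$ and $B_{k,W+1}$; $C^+_i$ at $\pi + (4W+6)\phi \equiv (4W+6)\phi + \pi$... here I must be slightly careful since $C^+_i$ is written $c_i\cis((4W+6)\phi)$, not $\cis(\pi + \cdots)$, but $(4W+6)\phi = \pi$, so its angle is $\pi$ and the antipodal $T_1$-angle is $0$, which is exactly where $D$ sits — wait, that would make $(C^+_i, D)$ antipodal, contradicting the list. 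I expect the main obstacle to be exactly this kind of angle-tracking: several vertices of $T_2$ are defined with angles that, after reduction mod $2\pi$ using $(4W+6)\phi = \pi$, land on ``unexpected'' values, and I will need to recompute each one carefully (in particular $C^+_i, D^+_i$, whose radii $c_i, b_i$ are chosen via the intermediate-value construction to be \emph{strictly less} than $r'$, which changes which $T_1$-vertices are within distance $d$). The resolution is that for these the relevant ``antipode'' in the construction is $C$ resp.\ $D$ as stated, and the IVT-chosen radii are precisely calibrated so that $\|C^+_i C\| > d \geq \|C^+_i v\|$ for every other $T_1$-vertex $v$; I would verify this directly from the defining equations $\|C_{i+2}I_i\| = d$ etc., invoking Lemma~\ref{lem:well_defined_ivt} and the monotonicity of distance-from-$O$ along the relevant segments. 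Once every vertex-prefix of $T_2$ is checked against this template, the lemma follows.
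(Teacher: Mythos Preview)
Your core geometric idea --- that the law of cosines forces $\angle v^+ O v = \pi$ whenever $\|v^+ v\| > d$, because any smaller angular gap yields distance $\leq d$ --- is exactly the paper's argument. But the paper executes it in three lines rather than by enumeration: it observes that all vertices (other than $H_1, H_2$, which are handled separately) lie on rays from $O$ at angles that are integer multiples of $\phi$, so if $v^+, O, v$ are \emph{not} collinear in that order then $\angle v^+ O v \leq \pi - \phi$; combined with $\|Ov\| \leq r$, $\|Ov^+\| \leq r'$ and the monotonicity of $a^2 + b^2 + 2ab\cos\phi$ in both $a$ and $b$, this gives $\|v^+ v\| \leq d$, a contradiction. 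The paper's last sentence is simply ``the only collinear pairs with $\|v^+v\| > d$ are the listed antipodes,'' deferring the vertex-by-vertex matching to inspection rather than spelling it out.

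Your proposed enumeration would also work but is more labor than needed, and your list of $T_2$-vertices is incomplete: in the full construction $O$ is not a vertex of $T_2$ (it appears only in the overview), whereas the $v^-$ points, $P_i$, $Q_i$, and $H^-$ \emph{are} vertices and would need to be covered. None of these introduces a new antipode since they all satisfy $\|Ov^+\| < r'$, but they belong in the case analysis. Your confusion over $C_i^+$ is a legitimate catch about the paper's text --- as written, $C_i^+ = c_i\cis((4W{+}6)\phi) = c_i\cis\pi$ lies on the same ray as $C$, which would make $D$ rather than $C$ its diametrically opposite $T_1$-vertex; the intended definition is almost certainly $c_i\cis(\pi + (4W{+}6)\phi) = c_i\cis 0$, matching the pattern for every other ``plus'' vertex --- but this is orthogonal to the proof strategy itself.
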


\begin{proof}
Note that if $v = H_u$ for some $u \in \{1,2\}$, then we immediately get that $v^+$ and $v$ are antipodes. So for the remainder of this proof we assume $v \neq H_u$.

Let $O$ be the origin. We will show that $v^+$, $O$ and $v$ are collinear, in that order. Suppose the contrary. Then $\angle v^+Ov \leq \pi - \phi$, since all vertices in our construction lie on rays emanating from the origin that are $\phi$ radians apart. Now, $||Ov|| \leq r$, $||Ov^+|| \leq r'$ and $\angle v^+Ov \leq \pi - \phi$. Therefore, $||v^+v|| \leq ||r \cis \pi - r' \cis \phi|| = d$, which is a contradiction. Hence, $v^+$, $O$, $v$ are collinear, in that order. Moreover, $||v^+v|| > d$. The only pairs of vertices $(v^+,v)$ satisfying these two properties are the antipodes listed in Definition~\ref{defn:antipodes}.
\end{proof}

Finally, we prove a property of these antipodal pairs, which will help us better describe the structure of the free space diagram in Section~\ref{sec:3ov_free_space_diagram}. 

\begin{lemma}
\label{lem:3ov_antipodal_property_for_segments}
Suppose $a$ is a point on $T_1$ and $b$ is a point on $T_2$, not necessarily vertices. Suppose that $|ab| > d$. Then there exists an endpoint $E_a$ of the segment containing $a$, and an endpoint $E_b$ of the segment containing $b$, so that $(E_b, E_a)$ are antipodes. 
\end{lemma}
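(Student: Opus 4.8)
The plan is to reduce Lemma~\ref{lem:3ov_antipodal_property_for_segments} to the vertex statement already established in Lemma~\ref{lem:3ov_antipodal_property_of_vertices}, using the elementary geometric fact that the Euclidean distance from a fixed point to a point moving along a segment attains its maximum at an endpoint of that segment (the distance-squared function is convex in the moving point, hence maximised at a vertex of the domain). First I would fix the point $b$ on $T_2$ and consider the segment of $T_1$ containing $a$, with endpoints $u$ and $u'$ (these are consecutive vertices of $T_1$). Since $|a b| > d$ and $|{\cdot}\,b|$ restricted to the segment $u u'$ is maximised at an endpoint, at least one of $|u b| > d$ or $|u' b| > d$ holds. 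So there is an endpoint $E_a \in \{u, u'\}$ of the segment containing $a$ with $|E_a b| > d$.

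Next I would repeat the same argument on the $T_2$ side: now $E_a$ is a fixed point, and $b$ lies on a segment of $T_2$ with endpoints $w, w'$ (consecutive vertices of $T_2$). Because $|E_a b| > d$ and the distance from $E_a$ to a point on segment $w w'$ is maximised at an endpoint, at least one of $|E_a w| > d$ or $|E_a w'| > d$ holds; pick such an endpoint and call it $E_b$. Now $E_a$ is a vertex of $T_1$, $E_b$ is a vertex of $T_2$, and $|E_b E_a| > d$. By Lemma~\ref{lem:3ov_antipodal_property_of_vertices}, the pair $(E_b, E_a)$ is an antipodal pair, which is exactly the conclusion. A small point to state carefully is the degenerate case where $a$ (or $b$) is itself a vertex, i.e. an endpoint shared by two segments; this causes no trouble, since we may take that segment to be either incident segment, and the argument still produces a valid endpoint.

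I do not expect a genuine obstacle here; the lemma is essentially a "convexity pushes the witness to a vertex" argument applied twice, and the real content lives in the already-proved vertex lemma. The only thing to be mildly careful about is making explicit the claim that $x \mapsto |p - x|$ on a segment is maximised at an endpoint — I would either invoke convexity of $x \mapsto |p-x|^2$ directly, or note that a segment lies in the convex hull of its endpoints and a convex function on a convex polytope attains its maximum at a vertex. Everything else is bookkeeping, and no properties of the specific construction (the values of $r$, $r'$, $\phi$, $\delta$, or the vector-dependent perturbations) are needed beyond what Lemma~\ref{lem:3ov_antipodal_property_of_vertices} already encapsulates.
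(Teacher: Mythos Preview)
Your proposal is correct and is essentially the contrapositive of the paper's own argument: the paper assumes all four endpoint pairs are non-antipodal (hence at distance $\le d$ by Lemma~\ref{lem:3ov_antipodal_property_of_vertices}), then uses the same ``distance to a segment is maximised at an endpoint'' fact twice to deduce $|ab|\le d$, a contradiction. You run the identical convexity step twice in the forward direction instead, which is arguably cleaner but not a different approach.
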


\begin{proof}
Suppose for the sake of contradiction that there does not exist an endpoint of the segment containing $a$ and an endpoint of the segment containing $b$ that are antipodes. Let $a$ be on $a_1a_2$ and $b$ be on $b_1b_2$. Since none of $(a_u,b_w)$ are antipodes for $1 \leq u,w \leq 2$, we have by Lemma~\ref{lem:3ov_antipodal_property_of_vertices} that $||a_ub_w|| \leq d$. The distance of point $a_1$ to $b_1b_2$ is maximised at the endpoints $b_1$ or $b_2$. So $||a_1b|| \leq d$, $||a_2b|| \leq d$. But now, the distance of point $b$ to $a_1a_2$ is maximised at the endpoints. So $||ab|| \leq d$. This contradiction means that our initial assumption cannot hold, and there is a pair of endpoints $(E_b,E_a)$ which are antipodes.
\end{proof}

\subsection{Subtrajectory length property}
\label{sec:3ov_subtrajectory_property}

The main lemma in this section is Lemma~\ref{lem:3ov_p_q_well_defined}, which states that $P_i$ and $Q_i$ are well defined. Recall that $P_i$ and $Q_i$ are points on $B^-G^-$ so that, for all $1 \leq i \leq n$, the subtrajectory from $A_{2i-1,1}^+$ to $D_{2i+3}^+$ has total length $\lambda$, for some constant $\lambda$. We first show a useful lemma that we will use to prove the main lemma.

\begin{lemma}
\label{lem:3ov_subtrajectory_lemma}
Suppose $A$ and $B$ are arbitrary points in the plane and $\lambda$ is a positive real satisfying $|AB| < \lambda < 7 \cdot |AB|$. Then there exist points $P$ and $Q$ on $AB$ so that the curve $A \circ \bigcirc_{1 \leq u \leq 3} (P \circ Q) \circ B$ has total length $\lambda$.
\end{lemma}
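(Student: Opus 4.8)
The plan is to place both $P$ and $Q$ on the segment $AB$ and exhibit a one-parameter family of such placements whose total curve length varies continuously from $|AB|$ to $7|AB|$; the conclusion then follows from the intermediate value theorem, in the same spirit as Lemma~\ref{lem:well_defined_ivt}.

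Concretely, write $L = |AB|$ and parametrize $AB$ linearly so that $A$ sits at parameter $0$ and $B$ at parameter $1$. For $t \in [0,1]$, let $Q_t$ be the point at parameter $\tfrac12(1-t)$ and $P_t$ the point at parameter $\tfrac12(1+t)$, so that $Q_t$ travels from the midpoint of $AB$ down to $A$ while $P_t$ travels from the midpoint up to $B$; in particular $P_t$ never precedes $Q_t$ along $AB$. The curve $A \circ \bigcirc_{1 \le u \le 3}(P_t \circ Q_t) \circ B$ consists of the segment $AP_t$, five copies of the segment $P_tQ_t$, and the segment $Q_tB$, so its total length is $|AP_t| + 5\,|P_tQ_t| + |Q_tB|$. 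Since all four points lie on $AB$ with the stated parameters, one computes $|AP_t| + |Q_tB| = L(1+t)$ and $|P_tQ_t| = Lt$, hence the total length equals $L(1 + 6t)$.

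This length is a continuous (indeed linear) function of $t$ on $[0,1]$, equal to $L$ at $t=0$ and to $7L$ at $t=1$. Given $\lambda$ with $L < \lambda < 7L$, set $t^\ast = (\lambda/L - 1)/6 \in (0,1)$; then $P := P_{t^\ast}$ and $Q := Q_{t^\ast}$ are points of $AB$ (in fact strictly interior and distinct) for which the curve $A \circ \bigcirc_{1 \le u \le 3}(P \circ Q) \circ B$ has total length exactly $\lambda$, as required.

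There is no genuine obstacle here; the only care needed is (i) to count the seven segments of the curve correctly and (ii) to fix the ordering of $P_t$ and $Q_t$ along $AB$, which lets the absolute values in the length computation collapse to the clean formula $L(1+6t)$. If one prefers to avoid the explicit formula altogether, the same conclusion follows purely qualitatively: at $P = Q = \text{midpoint of } AB$ the length is $L < \lambda$, at $P = B$ and $Q = A$ it is $7L > \lambda$, and continuity of arclength in the positions of $P$ and $Q$ together with the intermediate value theorem finishes the argument.
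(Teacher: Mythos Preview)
Your proof is correct and follows essentially the same approach as the paper: choose a one-parameter family of placements of $P$ and $Q$ on $AB$ whose total curve length varies continuously from $|AB|$ to $7|AB|$, then apply the intermediate value theorem. The only difference is cosmetic---the paper fixes $Q=A$ and slides $P$ from $A$ to $B$, whereas you move $P$ and $Q$ symmetrically outward from the midpoint and additionally supply the explicit solution $t^\ast = (\lambda/L - 1)/6$.
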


\begin{proof}
Start with $P=Q=A$. At first, the total length of $A \circ \bigcirc_{1 \leq u \leq 3} (P \circ Q) \circ B$ is $|AB|$. Now, move $P$ continuously from $A$ to $B$. When $P$ reaches $B$, the total length is $7|AB|$. By the intermediate value theorem, there is a position of $P$ so that the total length of $A \circ \bigcirc_{1 \leq u \leq 3} (P\circ Q) \circ B$ is $\lambda$.
\end{proof}

Now we are ready to prove the main lemma of this section.

\begin{lemma}
\label{lem:3ov_p_q_well_defined}
The points $P_i$ and $Q_i$ are well defined for $1 \leq i, j \leq 2n+1$.
\end{lemma}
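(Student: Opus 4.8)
The plan is to strip away from $T_2$ the part whose length depends on the unknown points $P_i,Q_i$, reducing the claim to the elementary one–dimensional fact behind Lemma~\ref{lem:3ov_subtrajectory_lemma}. Throughout I adopt the convention used there: $Q_k$ always coincides with $B^-$, and $P_k$ lies on the segment $B^-G^-$ at distance $x_k\ge 0$ from $B^-$; write $\ell_0=|B^-G^-|$. With this convention the ``zigzag'' subpath $B^-\circ\bigcirc_{1\le u\le 3}(P_k\circ Q_k)\circ G^-$ has length exactly $\ell_0+6x_k$.

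First I would carry out the decomposition. Reading off $T_2$, the subtrajectory from $A^+_{2i-1,1}$ to $D^+_{2i+3}$ contains precisely the zigzag subpaths for $k=2i-1,2i,\dots,2n+1$ --- the ones sitting in iterations $2i-1$ through $2n+1$ of the first $\bigcirc$ of $T_2$ --- and every other edge it uses has a length not depending on the $P_k,Q_k$. Hence its length equals $f(i)+\sum_{k=2i-1}^{2n+1}(\ell_0+6x_k)$ for a constant $f(i)$ depending only on $(\mathcal X,\mathcal Y,\mathcal Z)$. Next I would rewrite the requirement that these $n$ lengths (for $i=1,\dots,n$) all equal a common $\lambda$. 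Since $x_{2n-1},x_{2n},x_{2n+1}$ occur in every one of the $n$ expressions, they are irrelevant to equality, so I set them to $0$; subtracting consecutive expressions turns the requirement into the $n-1$ conditions
\[
x_{2i-1}+x_{2i}=c_i:=\tfrac16\bigl(f(i+1)-f(i)-2\ell_0\bigr),\qquad 1\le i\le n-1,
\]
which involve pairwise disjoint pairs of variables and so can be imposed independently. Consequently, once one knows $0\le c_i\le 2\ell_0$ for every $i$, it suffices to put $x_{2i-1}=\min(c_i,\ell_0)$ and $x_{2i}=c_i-x_{2i-1}$ (both lying in $[0,\ell_0]$), leave the remaining three $x_k$ equal to $0$, place the corresponding $P_k,Q_k$ on $B^-G^-$ at these prescribed distances (possible by Lemma~\ref{lem:well_defined_ivt}, or equivalently by Lemma~\ref{lem:3ov_subtrajectory_lemma}), and set $\lambda:=f(n)+3\ell_0$; all the $P_i,Q_i$, $1\le i\le 2n+1$, are then well defined, as claimed.

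The one genuinely hard step --- and the one I expect to be the main obstacle --- is the inequality $0\le c_i\le 2\ell_0$, a length computation inside the construction. Passing from the subtrajectory ending at $D^+_{2i+3}$ to the one ending at $D^+_{2i+5}$ trades two iterations of the first $\bigcirc$ of $T_2$ (removed from the front, since the start moves from $A^+_{2i-1,1}$ to $A^+_{2i+1,1}$) for two iterations of the block $\bigcirc_{3\le i'\le 2n+2}(C^-\circ C^+_{i'}\circ C^-\circ G^-\circ D^-\circ D^+_{i'}\circ D^-)$ (added at the back; for $i=n-1$, one such iteration plus the analogous final segment of $T_2$), so $f(i+1)-f(i)$ equals, up to lower-order terms, twice a third-block iteration minus twice a first-block iteration. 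A first-block iteration has length $\Theta(r')$ but at most roughly $3r'$: at both ends it makes only short hops of length $O(r\phi)$ between the circle of radius $r'-\delta$ and the slightly displaced vertices $G^+_{i'},A^+_{i',u},B^+_{i'}$ (each of $g_{i'},b_{i'}$ lying within $O(r\phi)$ of $r'-\delta$, as follows from their defining distance equations and $d\approx r+r'$), while in between it traverses $2W+2$ chords of angular width $\phi$ on that circle plus the length-$\ell_0$ portion of one zigzag. A third-block iteration, by contrast, contains the two nearly diametral chords $C^-C^+_{i'}$: using $(4W+6)\phi=\pi$ and $(2W+3)\phi=\tfrac\pi2$ one has $C^-=(r'-\delta)\cis 0$, $C^+_{i'}=c_{i'}\cis\pi$ with $c_{i'}$ within $O(r)$ of $r'$, and $G^-=(r'-\delta)\cis\tfrac{3\pi}2$, so $|C^-C^+_{i'}|\approx 2r'$, together with the two quarter-circle chords $C^-G^-$ and $G^-D^-$ of length $(r'-\delta)\sqrt2$ each; its length is therefore about $7r'$. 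Hence $f(i+1)-f(i)\approx 2(7r')-2(3r')$ sits comfortably inside $(2\ell_0,14\ell_0)$, since $\ell_0=2(r'-\delta)\sin\!\bigl((W+1)\phi\bigr)\in\bigl(r'-\delta,(r'-\delta)\sqrt2\bigr)$. What remains, and what carries the weight of the proof, is to verify that all the error terms are dominated: the $O(W\delta)$ discrepancies from the input-dependent displacements of $A^+_{i',u},C_h,B_{k,h},F^+_{j,u}$, the $O(r\phi)$ variations from the spacings of the auxiliary points $I^+_i,J^+_j,L^+_j$, and the $O(\varepsilon)$ terms are all $o(r')$ because $r'>10r$ and $\varepsilon$ is arbitrarily small relative to $\delta$ and $r$. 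As the interval $(2\ell_0,14\ell_0)$ then retains slack $\Omega(r')$, the bound $0\le c_i\le 2\ell_0$ holds uniformly in $i$, completing the argument.
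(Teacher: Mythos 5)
Your proposal is correct and follows essentially the same route as the paper: both reduce the requirement that all $n$ window lengths equal a common $\lambda$ to absorbing, for each consecutive pair of windows, the fixed length discrepancy into the adjustable zigzag on $B^-G^-$ (an achievable interval of width $\Theta(r')$), and both verify the needed containment by the same kind of coarse, $r'$-scale estimates of the first-block and third-block iteration lengths. Your decoupled system $x_{2i-1}+x_{2i}=c_i$ with the explicit length formula $\ell_0+6x$ simply replaces the paper's downward induction and intermediate-value-theorem step, which is a presentational rather than substantive difference.
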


\begin{proof}
Define $P_{2n+1} = Q_{2n+1} = B^-$. Similarly, define $P_{2n} = Q_{2n} = P_{2n-1} = Q_{2n-1} = B^-$. Define $\lambda$ to be the length of the subtrajectory from $A_{2n-1,1}^+$ to $D_{2n+3}^+$. We define $P_i$ and $Q_i$ inductively, starting at $i = 2n-1$ and working down to $i=1$, so that the subtrajectory from $A_{2i-1,1}^+$ to $D_{2i+3}^+$ has length $\lambda$. In the base case of $i=2n-1$, the subtrajectory has length $\lambda$ by definition.

In the inductive case, assume that the subtrajectory from $A_{2i+1,1}^+$ to $D_{2i+5}^+$ has length $\lambda$. We would like to define $P_i$ and $Q_i$ so that the subtrajectory from $A_{2i-1,1}^+$ to $D_{2i+3}^+$ also has length $\lambda$. This is equivalent to the subtrajectory from $A_{2i-1,1}^+$ to $A_{2i+1,1}^+$ and the subtrajectory from $D_{2i+3}^+$ to $D_{2i+5}^+$ having the same length. We will approximate the lengths of these subtrajectories, and show that $P_i$ and $Q_i$ can be defined using Lemma~\ref{lem:3ov_subtrajectory_lemma} to make the subtrajectories the same length.

First, we approximate the length of the subtrajectory from $A_{2i-1,1}^+$ to $A_{2i+1,1}^+$. We ignore $P_i$ and $Q_i$ initially, and add its contribution later. The length is dominated by the distances between the vertices $A_{2i-1,u}^+$ for $1 \leq u \leq 2W-1$, $B_{2i-1}^+$, $G_{2i-1}^+$, $A_{2i,u}^+$ for $1 \leq u \leq 2W-1$, $B_{2i}^+$, $G_{2i}^+$ and finally $A_{2i+1,u}^+$. Formally, we can set $r' >> r$ so that $\delta$ approaches zero, so $B^-$ is arbitrarily close to $B^+$. With this simplification in mind, the subtrajectory is approximately a closed loop that visits $B^+$ twice and $G^+$ twice. So a lower bound for the length of the subtrajectory is $4||B^+G^+|| \approx 4 \sqrt 2 r'.$ An upper bound for the length of the subtrajectory is to replace the segments of the subtrajectory with an arc of the circle centered at the origin with radius $r'$. This upper bound is a closed loop on the circumference that visits $B^+$ twice and $G^+$ twice. So an upper bound for the length of the subtrajectory is four times the arc from $B^+$ to $G^+$, which is approximately $2 \pi r'.$ Let the length of $A_{2i-1,1}^+$ to $A_{2i+1,1}^+$, ignoring the contributions of $P_i$ and $Q_i$, be $\lambda_i$, so that $4 \sqrt 2 r' \leq \lambda_i \leq 2 \pi r'$.

Next, we approximate the length of the subtrajectory from $D_{2i+3}^+$ to $D_{2i+5}^+$. The length is dominated by the distances between the vertices $D^+$, $G^-$, $C^+$, $G^-$, $D^-$, $G^-$, $C^+$, $G^-$ and back to $D^+$. The length of this closed loop is approximately $8||D^+G^-|| = 8 \sqrt 2 r'$. 

Now we use the same idea as Lemma~\ref{lem:3ov_subtrajectory_lemma}. Start with $P_i = Q_i = B^-$. Then the length of the subtrajectory from $A_{2i-1,1}^+$ to $A_{2i+1,1}^+$ is simply $\lambda_i$. Now, move $P_i$ continuously from $B^-$ to $G^-$. When $P_i$ reaches $G^-$, then the length of the subtrajectory from $A_{2i-1,1}^+$ to $A_{2i+1,1}^+$ is $\lambda_i + 6 |B^-G^-| \approx \lambda_i + 6 \sqrt 2 r'$. Therefore, as $P_i$ moves continuously, the initial length of $A_{2i-1,1}^+$ to $A_{2i+1,1}^+$ is at most $2 \pi r'$, and its final length is at least $10 \sqrt 2 r'$. By the intermediate value theorem, there exists a position of $P_i$ so that the lengths of the subtrajectories from $A_{2i-1,1}^+$ to $A_{2i+1,1}^+$ and from $D_{2i+3}^+$ to $D_{2i+5}^+$ are the same.
\end{proof}

Next, we add indices to the trajectory $T$. The trajectory below is exactly the same as the trajectory defined in Section~\ref{sec:3ov_construction}, in that it is the same vertices in the same order. The additional indices, in the form of subscripts, helps us describe the rows and columns in the free space diagram later on. For example, the point $G$ appears as a vertex in the trajectory $T$ many times, and we let $G_{k,h}$ denote the $(kW + k + h)^{th}$ time $G$ appears in $T$. Applying a similar set of indices to all vertices, we yield the following:

\begin{align*}
T = &\bigcirc_{1 \leq k \leq n} \Big(
    \bigcirc_{1 \leq h \leq W} \big(
        G_{k,h}
        \circ \bigcirc_{1 \leq u \leq 2W+1} (A_{k,h,u})
        \circ B_{k,h}
        \circ C_{k,h,1}
        \circ D_{k,h,1}
        \circ C_{k,h,2}
        \circ D_{k,h,2}
        \\ &\hspace{4cm}
        \circ E_{k,h}
        \circ \bigcirc_{1 \leq u \leq 2W+1} (F_{k,h,u})
    \big) 
        \\ &\hspace{1.8cm}
    \circ G_{k,W+1}
    \bigcirc_{1 \leq u \leq 2W+1} (A_{k,W+1,u})
    \circ B_{k,W+1}
    \circ C_{k,W+1,1}
    \circ D_{k,W+1,1}
    \circ C_{k,W+1,2}
        \\ &\hspace{1.8cm}
    \circ D_{k,W+1,2}
    \circ C_{k,W+1,3}
    \circ H_{k,1}
    \circ H_{k,2}
\Big) \circ D^-_0
\\
&\circ \bigcirc_{1 \leq i \leq 2n+1} \big(
            G^-_{i,1}
            \circ G_{i,1}^+
            \circ G^-_{i,1}
            \circ \bigcirc_{1 \leq u \leq 2W+1}( A_{i,u}^- \circ A^+_{i,u} \circ A_{i,u}^-)
            \circ B^-_i
            \circ B^+_i
            \circ B^-_i
        \\ &\hspace{1.8cm}
            \circ \bigcirc_{1 \leq u \leq 3}
            (
                P_{i,u}
                \circ Q_{i,u}
            )
        \big) 
        \\
        &\circ
        \bigcirc_{1 \leq j \leq 2n} \big(
            E^-_j
            \circ E^+_j
            \circ E^-_j
            \circ \bigcirc_{1 \leq u \leq 2W+1} (F^-_{j,u} \circ F^+_{j,u} \circ F^-_{j,u})
            \circ G^-_{j,2}
            \circ G^+_{j,2}
            \circ G^-_{j,2}
        \big) 
        \\
        &\circ C^-_1
        \circ C^+_1
        \circ C^-_1
        \circ D^-_1
        \circ D^+_1
        \circ D^-_1
        \circ C^-_2
        \circ C^+_2
        \circ C^-_2
        \circ D^-_2
        \circ D^+_2
        \circ D^-_2
        \circ H^-_1
        \\
        &
        \circ E^-_{2n+1}
        \circ E^+_{2n+1}
        \circ E^-_{2n+1}
        \circ \bigcirc_{1 \leq u \leq 2W+1} (F^-_{2n+1,u} \circ F^+_{2n+1,u} \circ F^-_{2n+1,u})
        \circ G^-_{2n+1,2}
        \circ G^+_{2n+1,2}
        \\
        &\circ G^-_{2n+1,2}
        \circ
        \bigcirc_{3 \leq i \leq 2n+3} \big(
            C^-_i
            \circ C^+_i
            \circ C^-_i
            \circ G^-_{i,3}
            \circ D^-_i
            \circ D^+_i
            \circ D^-_i
            \circ G^-_{i,4}
        \big)
        \\
        &\circ C^-_{2n+3}
            \circ C^+_{2n+3}
            \circ C^-_{2n+3}
            \circ G^-_{2n+3,3}
            \circ D^-_{2n+3}
            \circ D^+_{2n+3}
\end{align*}

Similarly, the points $J^+_j$ and $L_j^+$ appear many times in the trajectory $T$. In particular, the segment $A^+_{2i-1,1}A^-_{2i-1,1}$ passes through $J^+_j$ for all $1 \leq i,j \leq n$, and the segment $D^-_{2i+3}D^+_{2i+3}$ passes through $L_j^+$ for all $1 \leq i,j \leq n$. Define $J^+_{2i-1,j}$ to coincide with $J^+_j$ in the complex plane, so that $A^+_{2i-1,1}$, $J^+_{2i-1,j}$ and $A^-_{2i-1,1}$ are in that order along the trajectory $T$. Similarly, define $L^+_{2i+3,j}$ to coincide with $L_j^+$ in the complex plane, so that $D^-_{2i+3}$, $L^+_{2i+3,j}$ and $D^+_{2i+3}$ are in that order along the trajectory $T$.

As a direct result of the main lemma of this section, and the definitions of $P_i$, $Q_i$, $J^+_{2i-1,j}$ and $L^+_{2i+3,j}$, we obtain the following subtrajectory length property. As mentioned in our third key component in Section~\ref{sec:overview_3ov}, we consider $n^2$ reference subtrajectories in our 3OV reduction. In particular, our $n^2$ reference subtrajectories will be the pairs of vertical lines where $l_s$ passes through $J^+_{2i-1,2j-1}$, and $l_t$ passes through $L^+_{2i+3,2j+1}$, for $1 \leq i,j, \leq n$. Lemma~\ref{lem:sub_length_j_to_l} states that if we pick $l_s$ and $l_t$ in this way, we obtain reference subtrajectories of length $\ell$ for all $1 \leq i,j \leq n$. 
\begin{lemma}
\label{lem:sub_length_j_to_l}
For all $1 \leq i,j \leq n$, the length of the subtrajectory from $J^+_{2i-1,2j-1}$ to $L^+_{2i+3,2j+1}$ is $\ell$. 
\end{lemma}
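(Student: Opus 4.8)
The plan is to realize the subtrajectory from $J^+_{2i-1,2j-1}$ to $L^+_{2i+3,2j+1}$ as the subtrajectory from $A^+_{2i-1,1}$ to $D^+_{2i+3}$ with a short prefix and a short suffix deleted, and then to cash in the length guarantee of Lemma~\ref{lem:3ov_p_q_well_defined} for the middle piece.

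First I would recall the placement of $J^+_{2i-1,2j-1}$ and $L^+_{2i+3,2j+1}$ from Section~\ref{sec:3ov_subtrajectory_property}: the point $J^+_{2i-1,2j-1}$ lies on the segment $A^+_{2i-1,1}A^-_{2i-1,1}$ (with order $A^+_{2i-1,1}, J^+_{2i-1,2j-1}, A^-_{2i-1,1}$ along $T$), and this segment immediately follows the vertex $A^+_{2i-1,1}$ in $T$; likewise $L^+_{2i+3,2j+1}$ lies on the segment $D^-_{2i+3}D^+_{2i+3}$ (with order $D^-_{2i+3}, L^+_{2i+3,2j+1}, D^+_{2i+3}$), which immediately precedes the vertex $D^+_{2i+3}$ in $T$. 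Consequently the subtrajectory of $T$ from $J^+_{2i-1,2j-1}$ to $L^+_{2i+3,2j+1}$ is exactly the subtrajectory from $A^+_{2i-1,1}$ to $D^+_{2i+3}$ with the initial segment from $A^+_{2i-1,1}$ to $J^+_{2i-1,2j-1}$ and the final segment from $L^+_{2i+3,2j+1}$ to $D^+_{2i+3}$ removed. By Lemma~\ref{lem:3ov_p_q_well_defined} (and the definition of $\lambda$) the middle subtrajectory has length exactly $\lambda$, so it remains only to measure the two deleted pieces.

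Second I would compute those two lengths directly from the polar-coordinate definitions. Since $2i-1$ and $1$ are both odd, $A^+_{2i-1,1} = r'\cis(\pi+(2W+4)\phi)$, and $J^+_{2i-1,2j-1}$ coincides with $J^+_{2j-1} = (r'-\tfrac{2j-1}{2n+2}\delta)\cis(\pi+(2W+4)\phi)$; as these lie on a common ray, $|A^+_{2i-1,1}J^+_{2i-1,2j-1}| = \tfrac{2j-1}{2n+2}\delta$. Similarly $D^+_{2i+3} = r'\cis\pi$ and $L^+_{2i+3,2j+1}$ coincides with $L^+_{2j+1} = (r'-\tfrac{2n+4-(2j+1)}{2n+2}\delta)\cis\pi = (r'-\tfrac{2n+3-2j}{2n+2}\delta)\cis\pi$, so $|L^+_{2i+3,2j+1}D^+_{2i+3}| = \tfrac{2n+3-2j}{2n+2}\delta$. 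Adding these, the total removed length is $\tfrac{(2j-1)+(2n+3-2j)}{2n+2}\delta = \tfrac{2n+2}{2n+2}\delta = \delta$, independently of $i$ and $j$. Hence the subtrajectory from $J^+_{2i-1,2j-1}$ to $L^+_{2i+3,2j+1}$ has length $\lambda - \delta = \ell$.

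I do not expect a genuine obstacle here; the lemma is essentially bookkeeping. The only care needed is to track the index-parity conventions (using that $2i-1$ and $2j+1$ are odd so that the correct cases of the definitions of $A^+_{i,u}$, $J^+_j$, $L^+_j$ apply), and to note in passing that for $1 \le j \le n$ the fractions $\tfrac{2j-1}{2n+2}$ and $\tfrac{2n+3-2j}{2n+2}$ both lie strictly in $(0,1)$, so $J^+_{2i-1,2j-1}$ and $L^+_{2i+3,2j+1}$ genuinely lie in the interiors of the claimed segments and the prefix/suffix decomposition above is valid.
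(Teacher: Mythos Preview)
Your proposal is correct and follows exactly the same approach as the paper: decompose the subtrajectory from $A^+_{2i-1,1}$ to $D^+_{2i+3}$ (of length $\lambda$ by Lemma~\ref{lem:3ov_p_q_well_defined}) into a prefix to $J^+_{2i-1,2j-1}$, the desired middle piece, and a suffix from $L^+_{2i+3,2j+1}$, then compute the prefix and suffix lengths from the polar definitions and observe that they sum to $\delta$. Your write-up is in fact slightly more careful with indices than the paper's own proof.
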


\begin{proof}
We already have that the subtrajectory from $A_{2i-1,1}^+$ to $D_{2i+3}^+$ has length $\lambda$, for all $1 \leq i \leq n$. The distance from $A_{2i-1,1}^+$ to $J_j^+$ is $\frac {2j-1} {2n+2} \delta$. The distance from $L_j^+$ to $D_{2n+3}^+$ is $\frac {2n+4-2j-1}{2n+2} \delta$. Therefore, the length of the subtrajectory from $J_j^+$ to $L_j^+$ is $\lambda - \frac {2j-1} {2n+2} \delta - \frac {2n+4-2j-1}{2n+2} \delta = \lambda - \delta = \ell$ as required.
\end{proof}

We show another subtrajectory length property. This bound shows that $l_s$ and $l_t$ cannot be between the vertical lines through the $x$-coordinates corresponding to $H_{k,1}$ and $H_{k+1,2}$. It involves modifying the construction of the base case in Lemma~\ref{lem:3ov_p_q_well_defined}.

\begin{lemma}
\label{lem:sub_length_h_h}
For all $1 \leq k \leq n$, the length of the subtrajectory from $H_{k,1}$ to $H_{k+1,2}$ is less than $\ell$.
\end{lemma}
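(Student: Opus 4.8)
My plan is to bound the length of the subtrajectory from $H_{k,1}$ to $H_{k+1,2}$ directly, observing that it lies entirely inside $T_1$, and then to compare it against $\ell = \lambda - \delta$, which I will arrange to be large. Reading off the indexed form of $T$, this subtrajectory consists of the edge $H_{k,1}H_{k,2}$, followed by the entire $(k+1)$-st iteration of the outer loop of $T_1$, which itself ends with the edges $C_{k+1,W+1,3}H_{k+1,1}$ and $H_{k+1,1}H_{k+1,2}$. Every vertex of $T_1$ except the $H$-vertices lies on the circle of radius $r$ about the origin, while $H_1$ and $H_2$ sit at radii $4r'$ and $8r'$. Hence there are exactly four ``long'' edges on this path, namely $H_{k,1}H_{k,2}$, $H_{k,2}G_{k+1,1}$, $C_{k+1,W+1,3}H_{k+1,1}$ and $H_{k+1,1}H_{k+1,2}$, each of length at most an absolute constant times $r'$ by the triangle inequality; and the remaining $O(W^2)$ edges of the $(k+1)$-st outer iteration each have length at most $2r$. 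So the subtrajectory has total length at most $c_1 W^2 r + c_2 r'$ for absolute constants $c_1, c_2$ (which I can make explicit by counting $4W^2 + 11W + 10$ vertices in the iteration and bounding the four long edges via $||H_1||, ||H_2|| \le 8r'$).

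Next I would lower-bound $\ell$. First, $d = ||r\cis\pi - r'\cis\phi||$ satisfies $d^2 = r^2 + r'^2 + 2rr'\cos\phi > r'^2$ since $0<\phi<\tfrac\pi2$, so $d > r'$ and therefore $\delta = r + r' - d < r$; thus $\ell = \lambda - \delta > \lambda - r$. It then remains to make $\lambda$ large, and here I follow the hint by modifying the base case of Lemma~\ref{lem:3ov_p_q_well_defined}. The points $P_{2n-1},Q_{2n-1},P_{2n},Q_{2n},P_{2n+1},Q_{2n+1}$ all lie on the segment $B^-G^-$, and they are precisely the $P_i,Q_i$ that lie on the subtrajectory from $A^+_{2n-1,1}$ to $D^+_{2n+3}$ whose length defines $\lambda$. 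Placing each of the three relevant $(P_i,Q_i)$ pairs at $(G^-,B^-)$ rather than at $(B^-,B^-)$ increases $\lambda$ by $18\,||B^-G^-|| = \Theta(r')$ (by the same intermediate-value computation as in Lemma~\ref{lem:3ov_subtrajectory_lemma}, where the contribution of one $\bigcirc_{1\le u\le 3}(P_i\circ Q_i)$ block ranges over $[\,||B^-G^-||,\,7\,||B^-G^-||\,]$), and this change does not disturb the inductive step of Lemma~\ref{lem:3ov_p_q_well_defined}, which only balances the fixed $A$-to-$A$ length $\lambda_i$ against the fixed $D$-to-$D$ length and never refers to $\lambda$ itself; in particular the feasibility inequalities $4\sqrt2\, r' \le \lambda_i \le 2\pi r'$ are unaffected. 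Combined with the fact that the defining subtrajectory already traverses all of parts~2--6 of $T_2$ — every such part contributes at least a further fixed multiple of $r'$ — this shows $\lambda$ can be taken to be at least a (large) absolute constant times $r'$.

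Finally I would combine the two estimates: choosing $r'$ sufficiently large relative to $W^2 r$, say $r' > r_0(W,r)$ for an explicit threshold, which is compatible with the standing requirement $r' > 10r$, gives $\ell > \lambda - r > c_1 W^2 r + c_2 r'$, which dominates the length of the subtrajectory from $H_{k,1}$ to $H_{k+1,2}$, as required (the degenerate index $k=n$ can be excluded, or handled by noting the relevant suffix of $T$ is even longer). The main obstacle is the two-scale bookkeeping: I must pin down $c_1$, $c_2$, the comparable constant in $||B^-G^-||$, and the baseline length of parts~2--6 of $T_2$ carefully enough to verify that the $\Theta(r')$ gain from the base-case modification, together with that baseline, genuinely exceeds $c_2$; none of this is deep, but it requires keeping the $W^2r$ term and the $r'$ term separate and exhibiting a concrete valid choice of $r'$.
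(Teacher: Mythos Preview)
Your approach is essentially the same as the paper's: bound the $H_{k,1}$--to--$H_{k+1,2}$ subtrajectory by an absolute constant times $r'$ (plus an $O(W^2 r)$ term absorbed by taking $r'$ large), then modify the base case of Lemma~\ref{lem:3ov_p_q_well_defined} to force $\lambda$ above that bound. The only difference is cosmetic: instead of repositioning the three existing $(P_i,Q_i)$ pairs and appealing to the baseline length of parts~2--6 of $T_2$, the paper increases the number of $(P_{2n+1},Q_{2n+1})$ copies from $3$ to $41$ (placed at $B^-,G^-$), giving $\lambda > 41r'$ against an explicit upper bound of $40r'$ with no further constant-chasing.
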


\begin{proof}
Consider the subtrajectory from $H_{k,1}H_{k+1,2}$. All points on this subtrajectory, other than $H_{k,1}$, $H_{k,2}$, $H_{k+1,1}$ and $H_{k+1,2}$ lie within a circle of radius $r$ centered at the origin. We can set $r'$ to be arbitrarily large relative to $r$, so that the length of the subtrajectory is dominated by the distances between the origin, $H_{k,1}$, $H_{k,2}$, $H_{k+1,1}$ and $H_{k+1,2}$. Therefore, the length of the subtrajectory is at most $40 r'$, ignoring terms involving~$r$. It suffices to show that  $\ell > 40r'$.

Recall that the subtrajectory from $A_{2n-1,1}^+$ to $D^+_{2n+3}$ is defined to have length $\lambda = \ell + \delta$. Recall that for all $1 \leq i \leq 2n+3$, $P_i$ and $Q_i$ were defined so that the subtrajectory from $A_{2i-1,1}^+$ to $D_{2i+3}^+$ also has length $\lambda$. We will modify the subtrajectory $A_{2n-1,1}^+$ to $D^+_{2n+3}$ so that $\lambda > 41r'$, and therefore, $\ell > 40r'$.

To do this, instead of initialising our induction in Lemma~\ref{lem:3ov_p_q_well_defined} with $P_{2n+1} = Q_{2n+1} = B^-$ we place $P_{2n+1}$ and $Q_{2n+1}$ at $B^-$ and $G-$ respectively. Moreover, instead of placing three copies of each of $P_{2n+1}$ and $Q_{2n+1}$, we place 41 copies. Each subtrajectory $P_{2n+1} \circ Q_{2n+1} \circ P_{2n+1}$ has length at least $r'$, so $\lambda > 41r'$, as required.
\end{proof}

\subsection{Free space diagram}
\label{sec:3ov_free_space_diagram}

In this section, we describe the free space diagram $F_d(T,T)$, in other words, the free space diagram between the trajectory $T$ and itself, with distance parameter~$d$. For any pair of points $x$ and $y$ on the trajectory $T$, not necessarily vertices of $T$, define $f(x,y)$ to be the point in $F_d(T,T)$ with $x$-coordinate associated with point $x$ and $y$-coordinate associated with point $y$. We introduce the following notation to describe the regions of non-free space in the free space diagram.

\begin{definition}
Suppose $f(x,y)$ is not in the free space of $F_d(T,T)$. Define $\Diamond(x,y)$ to be the region of non-free space that contains $f(x,y)$. 
\end{definition}

Next, we show the non-free space $\Diamond(x,y)$ is always associated with an antipodal pair.

\begin{lemma}
All regions of non-free space in $F_d(T_2,T_1)$ are $\Diamond(v^+,v)$ where the pair $(v^+,v)$ are antipodes.
\end{lemma}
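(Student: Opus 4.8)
The plan is to take an arbitrary connected component $R$ of the non-free space of $F_d(T_2,T_1)$, pick any point $f(b,a)\in R$ — here $b$ is a point of $T_2$ and $a$ a point of $T_1$, and $f(b,a)$ being non-free means $|ab|>d$ — and exhibit a corner $f(v^+,v)$ of the free-space cell containing $f(b,a)$ such that $f(v^+,v)$ is non-free, lies in $R$, and has $(v^+,v)$ an antipodal pair. Since $\Diamond(v^+,v)$ is by definition the component of non-free space containing $f(v^+,v)$, this identifies $R=\Diamond(v^+,v)$ and proves the lemma.

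Everything comes down to a single-cell fact: \emph{in any cell $C$ of $F_d(T_2,T_1)$, every connected component of the non-free part $C\setminus K$ of $C$ contains a corner of $C$}, where $K\subseteq C$ denotes the free part of the cell. Granting this, apply it to the component of $f(b,a)$ inside its cell to obtain such a corner $f(v^+,v)$, where $v^+$ is a vertex of $T_2$ and $v$ a vertex of $T_1$ (hence both are vertices of $T$). As $C\setminus K$ sits inside the non-free space of the entire diagram, $f(v^+,v)$ lies in the same component $R$ as $f(b,a)$; and since $f(v^+,v)$ is non-free we have $|v^+v|>d$, so Lemma~\ref{lem:3ov_antipodal_property_of_vertices} gives that $(v^+,v)$ is antipodal.

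To prove the single-cell fact, use the standard fact that the free part $K$ of a Fr\'echet cell is the intersection of the convex quadrilateral $C$ with a sublevel set of a convex quadratic, hence a closed convex subset of $C$. Let $U$ be a component of $C\setminus K$ and $u\in U$. If $K=\emptyset$ then $U=C$ and we are done. Otherwise $u\notin K$, so there is a line $\ell$ strictly separating $u$ from $K$; the open half-plane $H$ bounded by $\ell$ and containing $u$ is disjoint from $K$, so $H\cap C$ is convex — hence connected — and contained in $C\setminus K$, which forces $H\cap C\subseteq U$. Finally, $H\cap C$ contains a corner of $C$: either $\ell$ crosses the interior of $C$, in which case a line crossing the interior of a convex quadrilateral always leaves a vertex strictly on each side, and the vertex on $u$'s side lies in $H\cap C$; or $\ell$ misses the interior of $C$, in which case $K\subseteq \ell\cap C$ lies on $\partial C$, so $C\setminus K=(\mathrm{int}\,C)\cup(\partial C\setminus K)$ is connected and contains at least two corners of $C$.

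The step I expect to take the most thought is this single-cell statement — in particular, seeing that the supporting-hyperplane argument works uniformly, so that no case analysis on the exact shape of $K$ (or on which corners of $C$ happen to be free) is needed, and carefully handling the boundary cases where $\ell$ does not cross the interior of $C$. It is worth noting that this route uses only Lemma~\ref{lem:3ov_antipodal_property_of_vertices} together with convexity of the per-cell free space; alternatively one could start from Lemma~\ref{lem:3ov_antipodal_property_for_segments} to name an antipodal corner of the relevant cell at the outset and then run the same connectivity argument to reach it.
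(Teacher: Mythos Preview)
Your proof is correct, but it takes a genuinely different route from the paper's.

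The paper first invokes Lemma~\ref{lem:3ov_antipodal_property_for_segments} to name an antipodal corner $(v^+,v)$ of the cell containing $f(x,y)$, and then argues connectivity by an explicit axis-aligned path in non-free space: it checks, for every antipodal pair occurring in the construction, a monotonicity property (moving along $v^-v^+$ toward $v^+$ increases the distance to the opposite segment, and symmetrically), which guarantees that the horizontal segment from $f(x,y)$ to $f(v^+,y)$ and then the vertical segment to $f(v^+,v)$ stay in non-free space.

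Your argument instead uses only the generic convexity of the per-cell free region $K$: a separating hyperplane shows that any component of $C\setminus K$ contains a corner of $C$, and then Lemma~\ref{lem:3ov_antipodal_property_of_vertices} identifies that corner as an antipodal pair. This is cleaner in that it avoids the construction-specific monotonicity verification and would work for any Fr\'echet free-space diagram, not just the one built here. (Incidentally, your case split is slightly heavier than necessary: once you have $u\in H\cap C$, the observation that not all four corners can lie in the closed complement of $H$---else $C$ itself would---immediately puts a corner in $H\cap C$, with no need to distinguish whether $\ell$ meets $\mathrm{int}\,C$.) The paper's approach, on the other hand, yields an explicit $L$-shaped non-free path and foreshadows the diamond-shape analysis in the next lemma, where the same monotonicity is reused.
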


\begin{proof}
Suppose $f(x,y)$ is not in free space. Then by Lemma~\ref{lem:3ov_antipodal_property_for_segments}, there exists an endpoint of the segment containing $x$ and an endpoint of the segment containing $y$ that are antipodes. Let $(v^+, v)$ be this pair of antipodes. Let $x$ be on the segment $v^-v^+$ and let $y$ be on the segment $uv$. We can verify that for all $(v^+,v)$ that are antipodes in construction in Section~\ref{sec:3ov_construction}, by moving from $v^-$ to $v^+$ we move further away from (any points on) segment $uv$, and similarly, by moving from $u$ to $v$ we move further away from (any point on) segment $v^-v^+$. 

Using this fact, we can prove that desired lemma. Since $f(x,y)$ is in non-free space, we have $||xy|| > d$. By moving from $x$ to $v^+$, we move further away from $y$, which is on $uv$. Hence, $||v^+y|| >d$, and all points between $f(x,y)$ and $f(v^+,y)$ are in non-free space. Finally, by moving from $y$ to $v$, we move further away from $v^+$. Hence, all points between $f(v^+,y)$ and $f(v^+,v)$ are non-free space. Therefore, we have constructed a path of non-free space connecting $f(x,y)$ and $f(v^+,v)$, so $f(x,y)$ must be contained in $\Diamond(v^+,v)$.
\end{proof}

Now we show that the non-free space $\Diamond(v^+,v)$ is in fact a diamond. In particular, it is diamond shaped, with concave (inwards-curved) sides.  See Figure~\ref{fig:lower_diamonds}.

\begin{figure}[ht]
    \centering
    \includegraphics{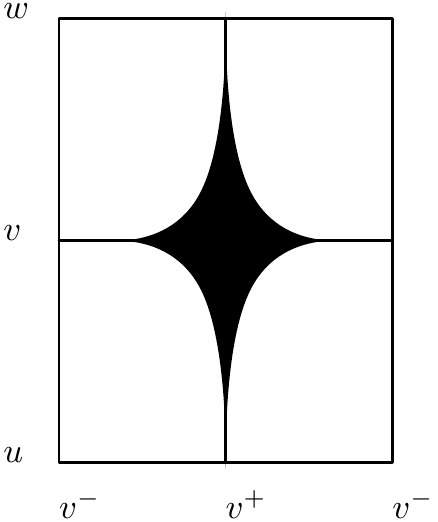}
    \caption{$\Diamond(v^+,v)$ is diamond shaped, with concave (inwards-curved) sides.}
    \label{fig:lower_diamonds}
\end{figure}

\begin{lemma}
Suppose $(v^+,v)$ are antipodes, $v \neq H_{u,1}$ and $v \neq H_{u,2}$ for any $1 \leq u \leq n$. Then $\Diamond(v^+,v)$ is diamond shaped, with concave (inwards-curved) sides. Moreover, the $x$-coordinates spanned by $\Diamond(v^+,v)$ corresponds to a subset of the neighbouring segments of $v^+$; the $y$-coordinates spanned by $\Diamond(v^+,v)$ corresponds to a subset of the neighbouring segments of $v$.
\end{lemma}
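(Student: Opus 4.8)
The plan is to exploit the fact that, away from the $H$-vertices, the antipodal pair $(v^+,v)$ is the unique "bad" pair in a small region of the free space diagram, so $\Diamond(v^+,v)$ must sit inside the grid block spanned by the two segments incident to $v^+$ (in the $x$-direction) and the two segments incident to $v$ (in the $y$-direction). First I would establish this localization: let $v^-v^+$ and $v^+w^+$ be the two segments of $T_2$ adjacent to $v^+$, and let $uv$ and $vu'$ be the two segments of $T_1$ adjacent to $v$. By Lemma~\ref{lem:3ov_antipodal_property_for_segments}, if $f(x,y)$ is in non-free space then some endpoint of the segment containing $x$ and some endpoint of the segment containing $y$ form an antipodal pair; combined with the monotonicity fact proved in the previous lemma (moving from $v^-$ or $w^+$ toward $v^+$ increases distance to any point of $uv\cup vu'$, and moving from $u$ or $u'$ toward $v$ increases distance to any point of $v^-v^+\cup v^+w^+$), one checks that $(v^+,v)$ is the only antipodal pair whose spanned grid block can contain $f(x,y)$. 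Hence $\Diamond(v^+,v)$ is confined to that $2\times 2$ block of cells, and moreover its boundary within each cell is an arc of the ellipse defining the free space of that cell, which is concave toward the free side — giving the "concave sides" claim immediately once we know the region is bounded by four such arcs meeting at four corners.

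Next I would pin down the diamond shape. The key quantitative input is the antipodal property: $||v^+v|| > d$ but, by Lemma~\ref{lem:3ov_antipodal_property_of_vertices} and the construction, each of $v^-, w^+, u, u'$ is within distance $d$ of $v$, and $v^+$ is within distance $d$ of each of $u,u'$ (and symmetrically). Parametrize each of the four segments by arc length; along the segment $v^-v^+$, the distance to the fixed point $v$ is a function that exceeds $d$ only near the $v^+$ endpoint (by the monotonicity fact), so there is a unique crossing point, and similarly on $v^+w^+$, $uv$, $vu'$. These four crossing points are exactly the four vertices of $\Diamond(v^+,v)$ on the four cell-boundary segments bounding the block, and they are the four "tips" of the diamond: at $f(v^+,v)$ the region is widest, and it tapers to these four points. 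Between any two adjacent tips the boundary is a single elliptical arc (the free-space ellipse of that cell), which is concave. This yields precisely the picture in Figure~\ref{fig:lower_diamonds}. The final sentence of the lemma — that the $x$-span (resp.\ $y$-span) of $\Diamond(v^+,v)$ is a subset of the neighbouring segments of $v^+$ (resp.\ $v$) — is then just a restatement of the localization step: the region lives in the $2\times 2$ block whose $x$-extent is the two segments adjacent to $v^+$ and whose $y$-extent is the two segments adjacent to $v$.

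I would carry the steps out in this order: (1) localization of the non-free region to the $2\times2$ grid block determined by the antipodal pair, using Lemma~\ref{lem:3ov_antipodal_property_for_segments} plus the monotonicity fact; (2) identification of the four boundary crossing points via the intermediate value theorem / strict monotonicity of the distance functions along the four incident segments, establishing they are unique; (3) observing that within each of the four cells the free-space boundary is an ellipse, hence the four arcs bounding $\Diamond(v^+,v)$ are concave, and that they connect the four tips consecutively to form the diamond; (4) reading off the $x$- and $y$-span statement from step (1). The main obstacle I anticipate is a careful case analysis in step (1): the hypothesis $v\neq H_{u,1},H_{u,2}$ is there precisely because the $H$-vertices are antipodal to \emph{every} vertex of $T_2$, so their non-free regions are not localized and not diamond-shaped; I need to verify that for every \emph{other} antipodal type listed in Definition~\ref{defn:antipodes} — the $A,B,C,D,E,F,G$ families — the neighbours on both curves are genuinely within distance $d$, so that no neighbouring cell in the block is itself non-free and the diamond does not "leak" out of its block or merge with an adjacent diamond. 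This is a finite check over the antipode types using the explicit coordinates and the angle-gap argument ($\angle v^+Ov \le \pi-\phi$ forces distance $\le d$), but it is the part that requires genuine attention rather than routine manipulation.
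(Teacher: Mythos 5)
Your proposal is correct and follows essentially the same route as the paper's proof: the span claim comes from Lemma~\ref{lem:3ov_antipodal_property_for_segments} together with the check that the flanking vertices ($v^-$-type neighbours on $T_2$ and the neighbours of $v$ on $T_1$) are within distance $d$, the monotonicity of the distance functions along the four incident segments gives the four $xy$-monotone sides (the paper organises this by quadrants through $v^+$ and $v$ rather than by the four tips, which is only a difference in presentation), and concavity follows because each side is an arc of the elliptical free-space boundary of its cell.
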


\begin{proof}
Lemma~\ref{lem:3ov_antipodal_property_for_segments} immediately implies that the $x$ and $y$-coordinates of $\Diamond(v^+,v)$ is spanned by the neighbouring segments of $v^+$ and $v$ respectively. It suffices to show that the shape of this non-free space is diamond shaped, with concave (inwards-curved) sides.

Draw a vertical line $\Diamond(v^+,v)$ through the $x$-coordinate~$v^+$, and a horizontal line through the $y$-coordinate~$v$. This divides $\Diamond(v^+,v)$ into four quadrants. We will show that each quadrant consists of a continuous, $xy$-monotone side. Then we will show that the sides are concave.

Let the neighbouring segments of $v^+$ in $T_2$ be $v^-v^+$ and $v^+v^-$. Let the neighbouring segments of $v$ in $T_1$ be $uv$ and $vw$. We can verify that for all $(v^+,v)$ that are antipodes in our construction, that the point $v^-$ is strictly closer to $u,v,w$ than $v^+$. We can also verify for all antipodes that $u$ and $w$ are closer to $v^+$ than $v$. 

Let us focus on the top right quadrant. This is the free space with $x$-coordinates associated with the segment $v^+v^-$, and $y$-coordinates associated with the segment $vw$. Along the edge $v^+v^-$, the distance to any fixed point on $vw$ is strictly increasing. Similarly, along the edge $vw$, the distance to any fixed point on $v^+v^-$ is strictly increasing. Therefore, the boundary of $\Diamond(v^+,v)$, which is the set of all $x \in v^+v^-$ and $y \in vw$ so that $||xy||=d$, is a continuous curve starting at the same $y$-coordinate as $v^+$, and moving down and to the right. This gives one of the four curved sides of the diamond. Moreover, we know that in the free space diagram, the free space must be the intersection of an ellipse with the cell. Hence, this $xy$-monotone side is concave, inward-facing, and the boundary of an ellipse. Repeating this for all four quadrants gives the four continuous, $xy$-monotone and concave (inwards-curved) sides of $\Diamond(v^+,v)$.
\end{proof}

Finally, we describe the free space and non-free space on the vertical line through the $x$-coordinate associated with the point $H_{k,2}$ for some $1 \leq k \leq n$. We the vertical line alternates between free and non-free space, where the free space corresponds with $H_{u,2}$ for some $u$, as shown in Figure~\ref{fig:lower_bound3}.

\begin{figure}[ht]
    \centering
    \includegraphics[width=0.4\textwidth]{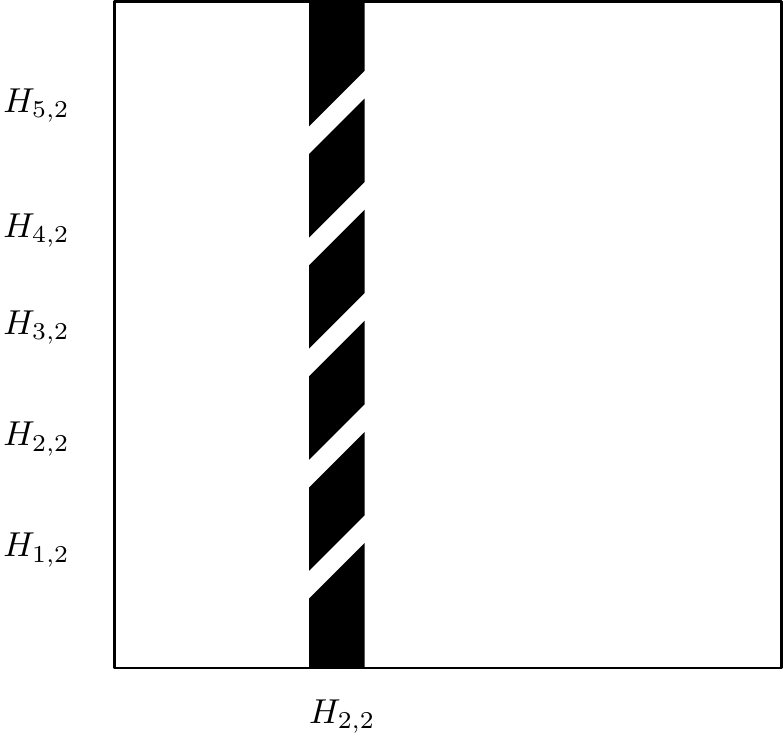}
    \caption{An example of the vertical line from $H_{k,2}$ passing through alternating regions of free and non-free space, where $k=2$ and $n=5$.}
    \label{fig:lower_bound3}
\end{figure}

\begin{lemma}
\label{lem:3ov_fs_Hk}
Consider the vertical line in the free space diagram $F_d(T,T)$ with $x$-coordinate corresponding to $H_{k,2}$. This vertical line consists of alternating regions of free space and non-free space, with $n$ regions of contiguous free space and $n+1$ regions of contiguous non-free space. Moreover, each of the $n$ regions of contiguous free space contains the point $f(H_{k,2}, H_{u,2})$ for some $1 \leq u \leq n$. 
\end{lemma}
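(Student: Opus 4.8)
The plan is to show that, as the $y$-coordinate ranges over the second copy of $T$, the point $T(y)$ lies within distance $d$ of $H_{k,2}$ precisely when $T(y)$ lies in a small neighbourhood of one of the $n$ vertices $H_{u,2}$, and that consecutive such neighbourhoods are separated by a genuine stretch of non-free space. The key structural fact I would exploit is that $H_{k,2} = 8r'\cis((2W+3)\phi)$ sits so far from the origin that it can only ``see'' the other copies of itself.

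First I would record two crude estimates on $d$. Expanding $d^2 = \|r\cis\pi - r'\cis\phi\|^2 = r^2 + r'^2 + 2rr'\cos\phi$ and using $r' > 10r$ gives $r' < d < r + r' < 2r'$. Since $\|H_{k,2}\| = 8r'$, for any point $p$ on $T$ the triangle inequality yields $\|H_{k,2} - p\| \ge 8r' - \|p\|$, so $\|H_{k,2} - p\| \le d$ forces $\|p\| > 8r' - d > 6r'$. The next step is a routine case check over the vertices of $T$: every vertex of $T_1$ other than the $H_{u,2}$'s has norm at most $4r'$ (the $H_{u,1}$'s sit at $4r'$, all remaining $T_1$ vertices at $r$ or less), and every vertex of $T_2$ has norm at most $r'$ --- this includes the auxiliary vertices $B_i^+, C_i^+, E_j^+, G_j^+$, which by the constructions in Section~\ref{sec:3ov_well_defined} lie on segments from the origin to a point of norm $r'$, and $P_{i,u}, Q_{i,u}$, which lie on $B^-G^-$; also $D^-_0$ has norm $r' - \delta$. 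Because $\|\cdot\|$ is convex, its maximum on a segment is attained at an endpoint, so the only points of $T$ with norm exceeding $6r'$ lie on the two segments of $T$ incident to some $H_{u,2}$, namely $H_{u,1}H_{u,2}$ and $H_{u,2}G_{u+1,1}$ (with $H_{n,2}D^-_0$ in place of the latter when $u = n$).

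I would then describe the free space near each $H_{u,2}$. All vertices $H_{u,2}$ coincide with $H_{k,2}$ as points of the plane, so $H_{u,2} \in \bar B(H_{k,2},d)$; meanwhile the far endpoints of the two segments incident to $H_{u,2}$ lie outside this ball, since $\|H_{k,2} - H_{u,1}\| \ge 8r' - 4r' = 4r' > d$, $\|H_{k,2} - G_{u+1,1}\| \ge 8r' - r > d$, and $\|H_{k,2} - D^-_0\| \ge 8r' - r' > d$. As $\bar B(H_{k,2},d)$ is convex, intersecting it with each of the two consecutive segments at $H_{u,2}$ gives a single sub-interval having $H_{u,2}$ as one endpoint, and their union is a single interval $I_u$ of the line $x = x(H_{k,2})$; it contains $f(H_{k,2},H_{u,2})$ and reaches neither $H_{u,1}$ nor $G_{u+1,1}$ (nor $D^-_0$). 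Combined with the previous paragraph, the free part of the line $x = x(H_{k,2})$ is exactly $I_1 \cup \dots \cup I_n$. Finally I would verify the region counts: for $1 \le u < n$ the vertex $G_{u+1,1}$ lies on $T$ strictly between $H_{u,2}$ and $H_{u+1,2}$ and has norm $r$, so $f(H_{k,2},G_{u+1,1})$ is non-free and separates $I_u$ from $I_{u+1}$, giving $n-1$ interior non-free regions; the prefix of $T$ preceding $I_1$ contains the vertex $G_{1,1}$ of norm $r$, hence is non-free, giving a leading non-free region; and the suffix $D^-_0 \circ T_2$ has norm at most $r' < 6r'$ throughout, hence is non-free, giving a trailing non-free region. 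Thus the line alternates between exactly $n$ maximal free regions $I_1, \dots, I_n$ and exactly $n+1$ maximal non-free regions, with $I_u$ containing $f(H_{k,2},H_{u,2})$, as claimed.

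The only mildly delicate point is the norm bookkeeping for the auxiliary $T_2$-vertices; everything else is forced by convexity of the Euclidean ball (the free space along a line is an interval) and convexity of the norm (maxima on segments are attained at endpoints). I do not anticipate a genuine obstacle here.
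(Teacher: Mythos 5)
Your argument is correct and is essentially the paper's own proof: both classify the points of $T$ by norm, observe that every vertex other than the $H_{u,2}$'s lies within radius $4r'$ of the origin (so, by convexity, only the two segments incident to some $H_{u,2}$ can come within distance $d$ of $H_{k,2}$), and conclude that the free space on the line is exactly one interval around each $f(H_{k,2},H_{u,2})$, yielding $n$ free and $n+1$ non-free regions. Your write-up merely makes explicit the norm bookkeeping for the auxiliary $T_2$ vertices, the bounds $r' < d < 2r'$, and the boundary cases (leading/trailing non-free regions, the $u=n$ segment to $D^-_0$) that the paper leaves implicit.
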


\begin{proof}
Consider the vertices of $T$. Divide the vertices into two sets, $H_2 = \{H_{u,2} \text{ for some } 1 \leq u \leq n \}$, and the other vertices $T \setminus H_2$. For all $v \in T \setminus H_2$, the vertex $v$ is within the ball of radius $4r'$ centered at the original, so the distance from $v$ to $H_{k,2}$ is greater than $d$. However, for all $v \in H_2$, the distance from $v$ to $H_{k,2}$ is zero.

Let $x$ be an arbitrary point on $T$. If $x$ is on a segment where both endpoints are in $T \setminus H_2$, then $f(H_{k,2},x)$ is non-free space. If one of the endpoints are in $T \setminus H_2$, then $x$ will only be free space if it is within distance $d$ to $H_{k,2}$. These regions of close proximity only occur when one of the endpoints is $H_{u,2}$. So there are exactly $n$ segments of free space, and $n+1$ segments of non-free space on the vertical line with $x$-coordinate $H_{k,2}$.
\end{proof}

\subsection{Paths in free space}
\label{sec:3ov_paths}

In the following section (Section~\ref{sec:3ov_if}), we will show that if $(\mathcal X, \mathcal Y, \mathcal Z)$ is a YES instance for 3OV, then $(T,m,\ell,d)$ is a YES instance for \problemtwo. This involves constructing a pair of vertical lines $l_s$ and $l_t$, and $m-1$ monotone paths from $l_s$ to $l_t$, with the property that the $m-1$ monotone paths intersect in their $y$-coordinates in at most one point, and the each of the $m-1$ monotone paths intersect with the $y$-interval from $s$ to $t$ in at most one point.

In this section, we compile a list of monotone paths, and prove their correctness. Some of these monotone paths are conditioned on one of the booleans $X_i[h]$, $Y_j[h]$ or $Z_k[h]$ being equal to zero. The list of monotone paths from this section will be used extensively in the subsequent section.

To aid with our description of monotone paths in both this section and the subsequence section, we use the following notation. Suppose $(v^+,v)$ are antipodes, $v \neq H_{u,1}$ and $v \neq H_{u,2}$ for any $1 \leq u \leq n$. Define the top, bottom, left and right corners of $\Diamond(v^+,v)$ to be $\Diamond_T(v^+,v)$, $\Diamond_B(v^+,v)$, $\Diamond_L(v^+,v)$ and $\Diamond_R(v^+,v)$ respectively. 

We say $f(x,y) \to f(w,z)$ if there is a monotone path from $f(x,y)$ to $f(w,z)$. 

\begin{lemma}
\label{lem:3ov_fs}
For $1 \leq i,j,k \leq n$ and $1 \leq h \leq W$:
\begin{enumerate}[label=(\alph*)]
    \item \label{lem:3ov_fs_aa_to_bb1}
$\Diamond_R(A_{2i-1,1}^+,A_{k,h,1}) \to \Diamond_T(B_{2i}^+,B_{k,h})$,
    \item \label{lem:3ov_fs_aa_to_bb2}
$\Diamond_R(A_{2i-1,1}^+,A_{k,h,1}) \to \Diamond_T(B_{2i+1}^+,B_{k,h})$ if $X_i[h] = 0$,
    \item \label{lem:3ov_fs_bb_to_cc1}
$\Diamond_T(B_{i}^+,B_{k,h}) \to \Diamond_B(C_{i+1}^+,C_{k,h,1})$,
    \item \label{lem:3ov_fs_bb_to_cc3}
$\Diamond_T(B^+_{2i},B_{k,h}) \to \Diamond_B(C_{2i+2}^+,C_{k,h,1})$ if $Z_k[h] = 0$,
    \item \label{lem:3ov_fs_cc_to_dd1}
$\Diamond_B(C_{i+2}^+,C_{k,h,1}) \to \Diamond_L(D_{i+3}^+,D_{k,h,1})$,
    \item \label{lem:3ov_fs_aa_to_dd1}
$\Diamond_R(A_{2i-1,1}^+,A_{k,h,1}) \to \Diamond_L(D_{2i+3}^+,D_{k,h,2})$.
	\item \label{lem:3ov_fs_aa_to_dd2}
$\Diamond_R(A_{2i-1,1}^+,A_{k,h,1}) \to \Diamond_L(D_{2i+3}^+,D_{k,h,1})$ if $X_i[h] \cdot Z_k[h] = 0$.
	\item \label{lem:3ov_fs_aa_to_dd3}
$\Diamond_L(A_{2i-1,2}^+,A_{k,h,2}) \to \Diamond_T(D_{2i+2}^+,D_{k,h,2})$.
	\item \label{lem:3ov_fs_ee_to_gg1}
$\Diamond_B(E^+_{2j-1},E_{k,h}) \to \Diamond_T(G^+_{2j,2}, G_{k,h})$.
	\item \label{lem:3ov_fs_ee_to_gg2}
$\Diamond_B(E^+_{2j-1}, E_{k,h}) \to \Diamond_T(G^+_{2j+1,2}, G_{k,h})$ if $Y_j[h] = 0$.
	\item \label{lem:3ov_fs_ed_to_gg1}
$\Diamond_L(E_1^+, E_{k,h}) \to \Diamond_T(G_{1,2}^+,G_{k,h})$.
	\item \label{lem:3ov_fs_ed_to_gg2}
$f(E_{1}^+,D_{k,h,2}) \to \Diamond_T(G_{2n+1,2}^+,G_{k,h})$.
\end{enumerate}
\end{lemma}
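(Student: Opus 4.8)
The plan is to handle all twelve items by a single recipe: each asserts that a monotone path exists between two named corners of diamonds, so for each one I would exhibit an explicit $xy$-monotone polyline in $F_d(T_2,T_1)$ joining the two corners and then verify that it stays in free space. By the lemmas of Section~\ref{sec:3ov_free_space_diagram}, the only non-free regions are the concave diamonds $\Diamond(v^+,v)$ over antipodal pairs $(v^+,v)$, and $\Diamond(v^+,v)$ is confined to the ``column block'' of $v^+$ (the cells whose $x$-range is swept by the two $T_2$-segments incident to $v^+$) crossed with the analogously defined ``row block'' of $v$. Since Definition~\ref{defn:antipodes} lists no antipode involving a $\cdot^-$ vertex except with an $H$-vertex, and every path below is routed away from the $H$-rows, each $\cdot^-$ column (and likewise each free region induced by an intermediate $O$) is entirely free; hence between two consecutive diamonds in any row or column there is always a free strip to travel along. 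Verifying a candidate path therefore reduces to a finite local check: for each diamond whose column block and row block both meet the bounding box of the path, confirm that the path skirts it, touching it at worst at $\Diamond_B$ or $\Diamond_L$ on the way in and $\Diamond_T$ or $\Diamond_R$ on the way out.

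Before running the twelve traces I would isolate one auxiliary fact, namely exactly when a diamond is empty. Three families of antipodal pairs are ``data-controlled'': $\Diamond(A^+_{i,u},A_{k,h,u})$ with $i,u$ even (controlled by $X_{i/2}[u/2]$), $\Diamond(F^+_{j,u},F_{k,h,u})$ with $j,u$ even (controlled by $Y_{j/2}[u/2]$), and the pair $\Diamond(B^+_{i},B_{k,h})$, $\Diamond(C^+_{i},C_{k,h,1})$ (controlled by $Z_k[h]$). Since antipodal vertices lie on exactly opposite rays, the antipodal distance is the sum of the two radii; when $u=2h$ the row vertex $A_{k,h,2h}$ is already pulled in by $\tfrac23\delta$, so if in addition $X_{i/2}[h]=0$ the column vertex loses another $\tfrac23\delta$ and the distance falls to $d-\tfrac13\delta<d$, killing the diamond, whereas if $X_{i/2}[h]=1$ only one $\tfrac23\delta$ is lost, the distance stays at $d+\tfrac13\delta>d$, and a smaller diamond survives; the $F^+$ and $B,C$ families behave identically (the $B$-row offset being $\varepsilon\ll\delta$ in place of $\tfrac23\delta$). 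Every other antipodal pair is ``structural'' and always yields a full diamond whose four corners occupy the coordinates implicitly referred to in the statement. With this fact in hand, each conditional item (parts \ref{lem:3ov_fs_aa_to_bb2}, \ref{lem:3ov_fs_bb_to_cc3}, \ref{lem:3ov_fs_aa_to_dd2}, \ref{lem:3ov_fs_ee_to_gg2}) is obtained by routing the path straight through the column (resp.\ row) of the vanished diamond, which the hypothesis makes entirely free; each unconditional item is obtained by a staircase that climbs through free strips and rounds the structural diamonds of the block lying between its two endpoints --- e.g.\ in part \ref{lem:3ov_fs_aa_to_bb1} the path leaves $\Diamond_R(A^+_{2i-1,1},A_{k,h,1})$, climbs through the free columns of the $(2i-1)$-st and $2i$-th $T_2$-iterations while moving right, and reaches $\Diamond_T(B^+_{2i},B_{k,h})$ from its lower left.

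The main obstacle is the bookkeeping rather than any isolated geometric difficulty: for each of the twelve paths one must enumerate every diamond whose column block and row block simultaneously overlap the path's bounding box and check that the path clears it, keeping the ``data'' diamonds labelled with the correct bit of $X$, $Y$, or $Z$ and confirming that the ``structural'' ones never block a strip the path needs. I would organise the write-up by grouping the items --- \ref{lem:3ov_fs_aa_to_bb1}--\ref{lem:3ov_fs_bb_to_cc1} for the $A$-to-$B$ and $B$-to-$C$ transitions inside one OR-gadget, \ref{lem:3ov_fs_bb_to_cc3}--\ref{lem:3ov_fs_aa_to_dd3} for the $C$-to-$D$ closings and the shortcut that witnesses orthogonality of $X_i$ with $Z_k$, and \ref{lem:3ov_fs_ee_to_gg1}--\ref{lem:3ov_fs_ed_to_gg2} for the mirror-image $E$-to-$G$ transitions encoding $Y_j$ --- so that within each group a single path-tracing template is instantiated with shifted indices, and after the first full trace the remaining cases are short, each accompanied by a pointer to the figure in Section~\ref{sec:3ov_free_space_diagram} for the global picture.
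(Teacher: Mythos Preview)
Your overall strategy---exhibit explicit $xy$-monotone polylines that thread between the diamonds, using that the only non-free regions are the concave $\Diamond(v^+,v)$ at antipodal pairs---is exactly what the paper does. The paper's proof is just a list of twelve chains of diamond corners, each step of which is a short monotone hop through free space, precisely the ``staircase through free strips'' you describe. For items \ref{lem:3ov_fs_aa_to_bb1}, \ref{lem:3ov_fs_aa_to_bb2}, \ref{lem:3ov_fs_cc_to_dd1}--\ref{lem:3ov_fs_ed_to_gg2} your plan would go through essentially verbatim, and your analysis of when the $A$- and $F$-diamonds vanish (the $\tfrac23\delta+\tfrac23\delta$ versus $\tfrac23\delta$ calculation) is correct.

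There is, however, a genuine gap in your treatment of the $Z$-mechanism, which governs items \ref{lem:3ov_fs_bb_to_cc1} and \ref{lem:3ov_fs_bb_to_cc3}. You write that the $B,C$ diamonds ``behave identically'' to the $A,F$ diamonds, ``the $B$-row offset being $\varepsilon\ll\delta$ in place of $\tfrac23\delta$.'' But an $\varepsilon$-shift (with $\varepsilon$ arbitrarily small relative to $\delta$) is nowhere near enough to kill a diamond: when $Z_k[h]=0$ the diamond $\Diamond(B_i^+,B_{k,h})$ does not vanish or shrink appreciably, and neither do any $C$-diamonds. The mechanism is entirely different. The radii $b_i$ and $c_i$ are engineered via the auxiliary points $I_1,\dots,I_{2n+3}$ on the segment $CB_\varepsilon$ so that $\|B_i^+I_i\|=d$ and $\|C_{i+2}^+I_i\|=d$; this forces the top corner of $\Diamond(B_i^+,B_{k,h})$ and the bottom corner of $\Diamond(C_{i+2}^+,C_{k,h,1})$ to sit at the \emph{same} $y$-coordinate (namely that of $I_i$) precisely when $B_{k,h}=B_\varepsilon$, i.e.\ when $Z_k[h]=0$. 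So \ref{lem:3ov_fs_bb_to_cc3} is witnessed by a single horizontal segment at the $I_{2i}$-level, not by routing through a vanished diamond; and \ref{lem:3ov_fs_bb_to_cc1} uses the midpoint $I_{i-1/2}$ to thread between $\Diamond_T(B_i^+,B_{k,h})$ at level $I_i$ and $\Diamond_B(C_{i+1}^+,C_{k,h,1})$ at level $I_{i-1}$. Without invoking the $I_i$ alignment you cannot prove \ref{lem:3ov_fs_bb_to_cc3}, and your account of it as a shrinking-diamond argument would fail. Once you correct this one point, your plan matches the paper's proof.
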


\begin{proof}\quad

\begin{itemize}
\item[\emph{\ref{lem:3ov_fs_aa_to_bb1}}]
$
\Diamond_R(A_{2i-1,1}^+,A_{k,h,1}) 
\to \Diamond_L(A_{2i,1}^+,A_{k,h,1})
\to \Diamond_L(A_{2i,2}^+,A_{k,h,2}) 
\to \ldots 
\to \Diamond_L(A_{2i,2W+1}^+,A_{k,h,2W+1}) \\
\to \Diamond_T(B_{2i}^+,B_{k,h})
$

\item[\emph{\ref{lem:3ov_fs_aa_to_bb2}}]
Since $X_i[h] = 0$, we have that $\Diamond(A_{2i,2h}^+,A_{k,h,2h})$ is non-existent by construction. Therefore, $\Diamond_T(A_{2i,2h-1}^+,A_{k,h,2h-1})
\to \Diamond_B(A_{2i,2h+1}^+,A_{k,h,2h+1})$. Hence, we have $
\Diamond_R(A_{2i-1,1}^+,A_{k,h,1}) 
\to \Diamond_L(A_{2i,1}^+,A_{k,h,1})
\to \Diamond_L(A_{2i,2}^+,A_{k,h,2}) 
\to \ldots
\to \Diamond_L(A_{2i,2h-1}^+,A_{k,h,2h-1}) 
\to \Diamond_T(A_{2i,2h-1}^+,A_{k,h,2h-1}) \\
\to \Diamond_B(A_{2i,2h+1}^+,A_{k,h,2h+1})
\to \Diamond_L(A_{2i+1,2h}^+,A_{k,h,2h}) 
\to \Diamond_L(A_{2i+1,2h+1}^+,A_{k,h,2h+1}) 
\to \ldots 
\to \Diamond_L(A_{2i+1,2W+1}^+,A_{k,h,2W+1})
\to \Diamond_L(B_{2i+1}^+,B_{k,h})
\to \Diamond_T(B_{2i+1}^+,B_{k,h})
$

\item[\emph{\ref{lem:3ov_fs_bb_to_cc1}}]
Define $I_{i-0.5}^+$ to be the midpoint of $I_{i}^+I_{i-1}^+$. From the definition of $I_i$, we have $||B_i^+I_{i-0.5}^+|| < d$ and $||C_{i+1}^+I_{i-0.5}^+|| < d$. The point $I_{i-0.5}^+$ lies on $CB$, however, since $\varepsilon$ can be chosen to be arbitrarily small, we can place $I_{i-0.5}^+$ on $B_{k,h}C_{k,h,1}$ while maintaining $||B_i^+I_{i-0.5}^+|| < d$ and $||C_{i+1}^+I_{i-0.5}^+|| < d$. Let $I_{i-0.5,k,h}^+$ be the copy of $I_{i-0.5}^+$ on the segment $B_{k,h}C_{k,h,1}$. Hence,
$
\Diamond_T(B_{i}^+,B_{k,h})
\to f(B_{i}^+,I_{i-0.5,k,h}^+) 
\to f(C_{i+1}^+,I_{i-0.5,k,h}^+) 
\to \Diamond_B(C_{i+1}^+,C_{k,h,1})
$.

\item[\emph{\ref{lem:3ov_fs_bb_to_cc3}}]
Since $Z_k[h] = 0$, we have $B_{k,h} = B_{k,W+1}$. Let $I_{i,k,h}$ be the copy of the point $I_i$ on the segment $B_{k,h}C_{k,h,1}$. Then by our definition of $I_i$, we have $||B_iI_i|| = d$ and $||C_{i+2}I_i|| = d$. Hence,
$
\Diamond_T(B^+_{2i},B_{k,h})
= f(I_{2i}, B_{k,h})
\to f(I_{2i}, C_{k,h,1})
= \Diamond_B(C_{2i+2}^+,C_{k,h,1})
$.

\item[\emph{\ref{lem:3ov_fs_cc_to_dd1}}] 
$
\Diamond_B(C_{i+2}^+,C_{k,h,1})
\to \Diamond_R(C_{i+2}^+,C_{k,h,1})
\to \Diamond_L(C_{i+3}^+,C_{k,h,1})
\to \Diamond_L(D_{i+3}^+,D_{k,h,1})
$.

\item[\emph{\ref{lem:3ov_fs_aa_to_dd1}}]
By \emph{\ref{lem:3ov_fs_aa_to_bb1}}, 
$\Diamond_R(A_{2i-1,1}^+,A_{k,h,1}) \to \Diamond_T(B_{2i}^+,B_{k,h})$. By \emph{\ref{lem:3ov_fs_bb_to_cc1}},
$\Diamond_T(B_{2i}^+,B_{k,h}) \to \Diamond_B(C_{2i+1}^+,C_{k,h,1})$. By \emph{\ref{lem:3ov_fs_cc_to_dd1}},
$\Diamond_B(C_{2i+1}^+,C_{k,h,1}) \to \Diamond_L(D_{2i+2}^+,D_{k,h,1})$. Putting this together,
$
\Diamond_R(A_{2i-1,1}^+,A_{k,h,1}) \to \\
\Diamond_T(B_{2i}^+,B_{k,h})
\to \Diamond_B(C_{2i+1}^+,C_{k,h,1})
\to \Diamond_L(D_{2i+2}^+,D_{k,h,1})
\to \Diamond_L(D_{2i+3}^+,D_{k,h,2})
$.

\item[\emph{\ref{lem:3ov_fs_aa_to_dd2}}]
We modify the proof of \emph{\ref{lem:3ov_fs_aa_to_dd1}} in the case where $X_i[h] \cdot Z_k[h] = 0$. If $X_i[h] = 0$, we replace \emph{\ref{lem:3ov_fs_aa_to_bb1}} with \emph{\ref{lem:3ov_fs_aa_to_bb2}} to yield 
$
\Diamond_R(A_{2i-1,1}^+,A_{k,h,1})  
\to \Diamond_T(B_{2i+1}^+,B_{k,h})
\to \Diamond_B(C_{2i+2}^+,C_{k,h,1})
\to \Diamond_L(D_{2i+3}^+,D_{k,h,1})
$. If $Z_k[h] = 0$, we replace \emph{\ref{lem:3ov_fs_bb_to_cc1}} with \emph{\ref{lem:3ov_fs_bb_to_cc3}} to yield 
$
\Diamond_R(A_{2i-1,1}^+,A_{k,h,1}) 
\to \Diamond_T(B_{2i}^+,B_{k,h})
\to \Diamond_B(C_{2i+2}^+,C_{k,h,1})
\to \Diamond_L(D_{2i+3}^+,D_{k,h,1})
$.

\item[\emph{\ref{lem:3ov_fs_aa_to_dd3}}]
$
\Diamond_L(A_{2i-1,2}^+,A_{k,h,2}) \to
\Diamond_T(B_{2i-1}^+,B_{k,h})
\to \Diamond_B(C_{2i}^+,C_{k,h,1})
\to \Diamond_L(D_{2i+1}^+,D_{k,h,1})
\to \\ \Diamond_L(D_{2i+2}^+,D_{k,h,2})
$.

\item[\emph{\ref{lem:3ov_fs_ee_to_gg1}}] 
$
\Diamond_B(E^+_{2j-1},E_{k,h}) 
\to \Diamond_L(E^+_{2j-1},E_{k,h})
\to \Diamond_L(E^+_{2j},E_{k,h})
\to \Diamond_L(F^+_{2j,1},F_{k,h,1}) 
\to \Diamond_L(F^+_{2j,2},F_{k,h,2}) \\
\to \ldots
\to \Diamond_L(F^+_{2j,2W+1},F_{k,h,2W+1}) 
\to \Diamond_T(G^+_{2j,2}, G_{k,h})
$

\item[\emph{\ref{lem:3ov_fs_ee_to_gg2}}]
Since $Y_k[h]=0$, we have that $\Diamond(F_{2j,2h}^+,F_{k,h,2h})$ is non-existent. Therefore, $\Diamond_T(F_{2j,2h-1}^+,F_{k,h,2h-1}) \\ \to \Diamond_B(F_{2j,2h-1}^+,F_{k,h,2h-1})$. Hence,
$
\Diamond_B(E^+_{2j-1},E_{k,h}) 
\to \Diamond_L(E^+_{2j-1},E_{k,h})
\to \Diamond_L(E^+_{2j},E_{k,h})
\to \Diamond_L(F^+_{2j,1},F_{k,h,1}) 
\to \Diamond_L(F^+_{2j,2},F_{k,h,2}) 
\to \ldots
\to \Diamond_L(F^+_{2j,2h-1},F_{k,h,2h-1}) 
\to \Diamond_T(F^+_{2j,2h-1},F_{k,h,2h-1}) \\
\to \Diamond_B(F_{2j,2h-1}^+,F_{k,h,2h-1})
\to \Diamond_L(F_{2j+1,2h}^+,F_{k,h,2h}) 
\to \Diamond_L(F_{2j+1,2h+1}^+,F_{k,h,2h+1})
\to \ldots
\to \Diamond_L(F_{2j+1,2W+1}^+,F_{k,h,2W+1})
\to \Diamond_T(G^+_{2j+1,2}, G_{k,h})
$

\item[\emph{\ref{lem:3ov_fs_ed_to_gg1}}]
$
\Diamond_L(E_1^+, E_{k,h})
\to \Diamond_L(F_{1,1}^+, F_{k,h,1})
\to \Diamond_L(F_{1,2}^+, F_{k,h,2}) 
\to \ldots
\to \Diamond_L(F_{1,2W+1}^+, F_{k,h,2W+1})
\to \Diamond_T(G_{1,2}^+,G_{k,h})
$

\item[\emph{\ref{lem:3ov_fs_ed_to_gg2}}]
$
f(E_{1}^+,D_{k,h,2}) 
\to \Diamond_L(D_1^+,D_{k,h,2})
\to \Diamond_L(E_{2n+1}^+,E_{k,h})
\to \Diamond_L(F_{2n+1,1}^+,F_{k,h,1}) 
\to \Diamond_L(F_{2n+1,2}^+,F_{k,h,2}) \\
\to \ldots
\to \Diamond_L(F_{2n+1,2W+1}^+,F_{k,h,2W+1})
\to \Diamond_T(G_{2n+1,2}^+,G_{k,h})$
\end{itemize}
\vspace{-20pt}
\end{proof}

\subsection{YES instances}
\label{sec:3ov_if}

In this section, we show that if our input $(\mathcal X, \mathcal Y, \mathcal Z)$ to 3OV is a YES instance, then our constructed instance $(T,m,\ell,d)$ for \problemtwo is a YES instance. 

Since $(\mathcal X, \mathcal Y, \mathcal Z)$ is a YES instance, there exists a triple of integers $1 \leq \bar i,\bar j,\bar k \leq n$, so that for all $1 \leq h \leq W$, we have $X_{\bar i}[h] \cdot Y_{\bar j}[h] \cdot Z_{\bar k}[h] = 0$. To show that $(T,m,\ell,d)$ is a YES instance, it suffices to construct a pair of vertical lines $l_s$ and $l_t$ in $F_d(T,T)$ so that, there are $m-1$ distinct monotone paths starting at $l_s$ and ending at $l_t$, where the $y$-coordinates of any two monotone paths overlap in at most one point, and where the subtrajectory from $s$ to $t$ has length at least $\ell$. Moreover, each of the monotone paths intersect the $y$-interval from $s$ to $t$ in at most one point.

Choose the starting point $s = J_{2\bar i-1,2\bar j-1}^+$ and ending point $t = L_{2\bar i+3,2\bar j+1}^+$. By Lemma~\ref{lem:sub_length_j_to_l} the subtrajectory from~$s$ to~$t$ has length exactly $\ell$. Next, we construct a set of $m-1 = 2nW+1$ monotone paths from $l_s$ to~$l_t$. 

Define $K_{j,k,h}$ to be the point on $G_{k,h} A_{k,h,1}$ so that $||K_{j,k,h}J_{i,j}^+|| = d$ and $||K_{j,k,h}G_{2n+2-j,2}^+|| = d$. Define $M_{j,k,h}$ to be the point on $D_{k,h,2} E_{k,h}$ so that $||M_{j,k,h}L_{i,j}^+|| = d$ and $||M_{j,k,h}E_{2n+2-j}^+|| = d$. Define $M_{j,k,W+1}$ to be the point on $D_{k,W+1,2}C_{k,W+1,3}$ so that $||M_{j,k,W+1}L_{i,j}^+|| = d$.

Our set of constructed paths is as follows:

\begin{itemize}[noitemsep]
    \item For $1 \leq k \leq n$, $k \neq \bar k$, $1 \leq h \leq W$, construct $f(J_{2\bar i-1,2\bar j-1}^+, A_{k,h,2}) \to f(L_{2\bar i+3,2\bar j+1}^+, E_{k,h})$.  

    \item For $1 \leq k \leq n$, $k \neq \bar k$, $1 \leq h \leq W$ construct $f(J_{2\bar i-1,2\bar j-1}^+, E_{k,h}) \to f(L_{2\bar i+3,2\bar j+1}^+, A_{k,h+1,1})$. 

    \item For $k = \bar k$, $1 \leq h \leq W$,  $Y_{\bar j}[h] = 0$, construct $f(J_{2\bar i-1,2\bar j-1}^+, K_{2\bar j-1,\bar k,h}) \to f(L_{2\bar i+3,2\bar j+1}^+, M_{2\bar j+1, \bar k, h})$.

    \item For $k = \bar k$, $1 \leq h \leq W$,  $Y_{\bar j}[h] = 0$, construct $f(J_{2\bar i-1,2\bar j-1}^+, M_{2\bar j+1, \bar k, h}) \to f(L_{2\bar i+3,2\bar j+1}^+, K_{2\bar j-1,\bar k,h+1})$.

    \item For $k = \bar k$, $1 \leq h \leq W$,  $Y_{\bar j}[h] \neq 0$, construct $f(J_{2\bar i-1,2\bar j-1}^+, K_{2\bar j-1,\bar k,h}) \to f(L_{2\bar i+3,2\bar j+1}^+, C_{k,h,2})$.

    \item For $k = \bar k$, $1 \leq h \leq W$,  $Y_{\bar j}[h] \neq 0$, construct $f(J_{2\bar i-1,2\bar j-1}^+, D_{k,h,2}) \to f(L_{2\bar i+3,2\bar j+1}^+, K_{2\bar j-1,\bar k,h+1})$.

    \item Construct $f(J_{2\bar i-1,2\bar j-1}^+, K_{2\bar j-1,\bar k,W+1}) \to f(L_{2\bar i+3,2\bar j+1}^+, M_{2\bar j+1, \bar k, W+1})$
\end{itemize}

We prove the correctness of our constructed paths. First, we note that $f(J_{2\bar i-1,2\bar j-1}^+, K_{2\bar j-1,\bar k,h})$ and $f(L_{2\bar i+3,2\bar j+1}^+, M_{2\bar j+1, \bar k, h})$ are free points since $||K_{2 \bar j - 1,k,h}J_{i,2 \bar j - 1}^+|| = d$ and $||M_{2 \bar j + 1,k,h}L_{i,2 \bar j + 1}^+|| = d$, respectively. In total, we have $2nW + 1$ paths. Next, we prove that these $2nW + 1$ paths are indeed valid.

\begin{lemma}
$f(J_{2\bar i-1,2\bar j-1}^+, A_{k,h,2}) \to f(L_{2\bar i+3,2\bar j+1}^+, E_{k,h})$
\end{lemma}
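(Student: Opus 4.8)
The plan is to exhibit a single explicit monotone path and check that it stays in the free space, using the skeleton paths already built in Lemma~\ref{lem:3ov_fs} for the bulk of it and doing a little extra work only near the two vertical lines $l_s$ and $l_t$. I would construct the path in three stages: a short horizontal \emph{entry} segment at $y$-coordinate $A_{k,h,2}$, the skeleton path of part~\ref{lem:3ov_fs_aa_to_dd3} of Lemma~\ref{lem:3ov_fs}, and a short \emph{exit} segment that climbs to $y$-coordinate $E_{k,h}$ and then runs horizontally.

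For the entry segment: by the placement of $J^+_{2\bar i-1,2\bar j-1}$ fixed in Section~\ref{sec:3ov_subtrajectory_property}, its column lies on the ray through $A^+_{2\bar i-1,1}$, just after $A^+_{2\bar i-1,1}$ and before $A^-_{2\bar i-1,1}$, i.e.\ inside the $(2\bar i-1)$st block of $T_2$. Lemma~\ref{lem:3ov_antipodal_property_of_vertices} together with Definition~\ref{defn:antipodes} tells us which columns are non-free at $y$-coordinate $A_{k,h,2}$: only the columns of type $A^+_{\bullet,2}$ (with $A^-_{\bullet,2}$ lying exactly on the $d$-boundary). Hence, starting at $f(J^+_{2\bar i-1,2\bar j-1},A_{k,h,2})$ and moving rightwards along the horizontal line at $y$-coordinate $A_{k,h,2}$, the first non-free region met is $\Diamond(A^+_{2\bar i-1,2},A_{k,h,2})$, so this horizontal segment remains free all the way to $\Diamond_L(A^+_{2\bar i-1,2},A_{k,h,2})$; Lemma~\ref{lem:3ov_antipodal_property_for_segments} is what I would cite to see that $f(J^+_{2\bar i-1,2\bar j-1},A_{k,h,2})$ itself is free.

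For the skeleton and exit: I would apply part~\ref{lem:3ov_fs_aa_to_dd3} of Lemma~\ref{lem:3ov_fs} with $i=\bar i$ to obtain $\Diamond_L(A^+_{2\bar i-1,2},A_{k,h,2}) \to \Diamond_T(D^+_{2\bar i+2},D_{k,h,2})$, observing that this path touches only diamonds of the blocks $2\bar i-1$ through $2\bar i+2$, all of which are present no matter what the 3OV instance is (the $A^+$ vertices of the odd block $2\bar i-1$ and all $B^+,C^+,D^+$ vertices have fixed radii), so the path exists for every $k$, in particular for $k\neq\bar k$. The corner $\Diamond_T(D^+_{2\bar i+2},D_{k,h,2})$ lies on the segment $D_{k,h,2}E_{k,h}$; from it a short vertical climb followed by a horizontal run reaches $f(L^+_{2\bar i+3,2\bar j+1},E_{k,h})$. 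To justify this last piece I would again use the antipodal characterisation: along the column of $D^-_{2\bar i+2}$ the points on $D_{k,h,2}E_{k,h}$ are free (the distance to $D^+_{2\bar i+2}$ strictly decreases as one leaves $D_{k,h,2}$), and along the horizontal line at $y$-coordinate $E_{k,h}$ every column between that of $D^+_{2\bar i+2}$ and that of $L^+_{2\bar i+3,2\bar j+1}$ is free, using also that $L^+_{2\bar i+3,2\bar j+1}$ lies strictly inside the segment $D^-_{2\bar i+3}D^+_{2\bar i+3}$ (Section~\ref{sec:3ov_well_defined}), hence strictly to the left of the column of $D^+_{2\bar i+3}$. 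Concatenating the three stages gives the required monotone path.

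The step I expect to require the most care is exactly this splicing at the two ends: the skeleton paths of Lemma~\ref{lem:3ov_fs} start and finish at \emph{corners} of diamonds, whereas the path we must produce is forced to begin at the specific off-corner point $f(J^+_{2\bar i-1,2\bar j-1},A_{k,h,2})$ and end at $f(L^+_{2\bar i+3,2\bar j+1},E_{k,h})$. Verifying that the two short connecting segments never leave the free space is where the precise positions of $J^+$ and $L^+$ from Sections~\ref{sec:3ov_well_defined} and~\ref{sec:3ov_subtrajectory_property}, the antipodal description of the non-free regions (Lemmas~\ref{lem:3ov_antipodal_property_of_vertices} and~\ref{lem:3ov_antipodal_property_for_segments}), and the smallness of $\varepsilon$ and $\delta$ relative to $r$ all enter; everything in between is a direct appeal to Lemma~\ref{lem:3ov_fs}.
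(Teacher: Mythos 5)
Your proposal is correct and follows essentially the same route as the paper: the paper's proof is exactly the chain $f(J_{2\bar i-1,2\bar j-1}^+, A_{k,h,2}) \to \Diamond_L(A_{2\bar i-1,2}^+,A_{k,h,2}) \to \Diamond_T(D_{2\bar i+2}^+,D_{k,h,2}) \to f(L_{2\bar i+3,2\bar j+1}^+, E_{k,h})$ with the middle step given by Lemma~\ref{lem:3ov_fs}\emph{\ref{lem:3ov_fs_aa_to_dd3}}, which is precisely your entry--skeleton--exit decomposition. Your extra verification of the two splicing segments only spells out what the paper leaves implicit.
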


\begin{proof}
$f(J_{2\bar i-1,2\bar j-1}^+, A_{k,h,2}) 
\to \Diamond_L(A_{2\bar i-1,2}^+,A_{k,h,2})
\to \Diamond_T(D_{2\bar i+2}^+,D_{k,h,2})
\to f(L_{2\bar i+3,2\bar j+1}^+, E_{k,h})$, where the second $\to$ in the chain is given by Lemma~\ref{lem:3ov_fs}\emph{\ref{lem:3ov_fs_aa_to_dd3}}.
\end{proof}

\begin{lemma}
$f(J_{2\bar i-1,2\bar j-1}^+, E_{k,h}) \to f(L_{2\bar i+3,2\bar j+1}^+, A_{k,h+1,1})$
\end{lemma}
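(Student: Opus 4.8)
The plan is to exhibit a monotone path of the shape
\[
f(J_{2\bar i-1,2\bar j-1}^+, E_{k,h}) \;\to\; \Diamond_L(E_1^+, E_{k,h}) \;\to\; \Diamond_T(G_{1,2}^+, G_{k,h}) \;\to\; f(L_{2\bar i+3,2\bar j+1}^+, A_{k,h+1,1}),
\]
in which the middle step is Lemma~\ref{lem:3ov_fs}\emph{\ref{lem:3ov_fs_ed_to_gg1}} applied to our $k$ and $h$; here $G_{k,h}$ denotes the copy of $G$ in $T_1$ lying immediately after $F_{k,h,2W+1}$, equivalently the vertex immediately before $A_{k,h+1,1}$. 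I would use part \emph{\ref{lem:3ov_fs_ed_to_gg1}} rather than the other $E$-to-$G$ chains (parts \emph{\ref{lem:3ov_fs_ee_to_gg1}}, \emph{\ref{lem:3ov_fs_ee_to_gg2}}, \emph{\ref{lem:3ov_fs_ed_to_gg2}}) because its source $\Diamond_L(E_1^+,E_{k,h})$ has height exactly $y(E_{k,h})$, so it is reachable from the starting point by a horizontal segment, whereas the sources of the other chains sit below height $y(E_{k,h})$ and hence cannot be reached from $f(J_{2\bar i-1,2\bar j-1}^+, E_{k,h})$ by a monotone path. It then remains only to check the first and the last step of the chain.

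For the first step I would travel to the right at the constant height $y(E_{k,h})$. The starting point is free: by Lemma~\ref{lem:3ov_antipodal_property_of_vertices}, $A_{2\bar i-1,1}^+$ and $E_{k,h}$ are not antipodes, hence within distance $d$, and $J_{2\bar i-1,2\bar j-1}^+$ lies on $A_{2\bar i-1,1}^+A_{2\bar i-1,1}^-$ nearer the origin. The $x$-range from $x(J_{2\bar i-1,2\bar j-1}^+)$ to the left corner of $\Diamond(E_1^+,E_{k,h})$ covers the remainder of the first $i$-loop of $T_2$ and then the single vertex $E_1^-$. Since the only antipodes of the vertex $E_{k,h}$ are the vertices $E_j^+$, the only regions of non-free space meeting the horizontal line at height $y(E_{k,h})$ are the diamonds $\Diamond(E_j^+,E_{k,h})$; no such vertex $E_j^+$ occurs in that $x$-range, so the horizontal segment lies entirely in free space and terminates at $\Diamond_L(E_1^+,E_{k,h})$.

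For the last step, $\Diamond_T(G_{1,2}^+, G_{k,h})$ has height strictly between $y(G_{k,h})$ and $y(A_{k,h+1,1})$. I would first move straight up, at $x=x(G_{1,2}^+)$, from $\Diamond_T(G_{1,2}^+,G_{k,h})$ to $f(G_{1,2}^+,A_{k,h+1,1})$; this segment is free because the only non-free region at that $x$-coordinate whose height range meets $[y(G_{k,h}),y(A_{k,h+1,1})]$ is $\Diamond(G_{1,2}^+,G_{k,h})$ itself, whose top corner we have just left. Then I would run to the right at the constant height $y(A_{k,h+1,1})$ until reaching $x=x(L_{2\bar i+3,2\bar j+1}^+)$; the endpoint is free since the only antipodes of $A_{k,h+1,1}$ are the vertices $A_{i,1}^+$, all of which occur in the first $i$-loop of $T_2$, hence strictly to the left of $x(G_{1,2}^+)$. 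Consequently the only non-free regions encountered over this final $x$-range are the diamonds $\Diamond(G_{j,2}^+,G_{k,h})$ with $j\ge 2$ (and $\Diamond(G_{2n+1,2}^+,G_{k,h})$), each of which has its top corner below $y(A_{k,h+1,1})$, so the run passes above them all. Concatenating the four pieces yields the desired monotone path. I expect the main obstacle to be precisely this long-range bookkeeping — invoking Lemma~\ref{lem:3ov_antipodal_property_of_vertices} to certify that the two long horizontal runs (at heights $y(E_{k,h})$ and $y(A_{k,h+1,1})$) stay in free space, and that the vertical climb out of $\Diamond_T(G_{1,2}^+,G_{k,h})$ up to $A_{k,h+1,1}$ completes before any further $G^+$-column is reached — whereas the genuinely delicate part, threading between the $F^+$-diamonds of sub-block $h$ while the height rises from $y(E_{k,h})$ past $y(G_{k,h})$, is already encapsulated in Lemma~\ref{lem:3ov_fs}\emph{\ref{lem:3ov_fs_ed_to_gg1}}.
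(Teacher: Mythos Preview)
Your proposal is correct and follows exactly the same three-step chain as the paper's proof, invoking Lemma~\ref{lem:3ov_fs}\emph{\ref{lem:3ov_fs_ed_to_gg1}} for the middle step; you simply spell out the justifications for the first and last horizontal/vertical moves that the paper leaves implicit. One minor quibble: under the paper's indexing convention the copy of $G$ immediately preceding $A_{k,h+1,1}$ is $G_{k,h+1}$, not $G_{k,h}$, but since the paper itself writes $G_{k,h}$ in Lemma~\ref{lem:3ov_fs}\emph{\ref{lem:3ov_fs_ed_to_gg1}} and in this proof, your reading matches the intended vertex and the argument is unaffected.
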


\begin{proof}
$f(J_{2\bar i-1,2\bar j-1}^+, E_{k,h})
\to \Diamond_L(E_1^+, E_{k,h}) 
\to \Diamond_T(G_{1,2}^+,G_{k,h})
\to f(L_{2\bar i+3,2\bar j+1}^+, A_{k,h+1,1})$, where the second $\to$ in the chain is given by Lemma~\ref{lem:3ov_fs}\emph{\ref{lem:3ov_fs_ed_to_gg1}}.
\end{proof}

\begin{lemma}
\label{lem:3ov_yes_jk_lm}
$f(J_{2\bar i-1,2\bar j-1}^+, K_{2\bar j-1,\bar k,h}) \to f(L_{2\bar i+3,2\bar j+1}^+, M_{2\bar j+1, \bar k, h})$
\end{lemma}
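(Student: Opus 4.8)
The plan is to produce a single explicit monotone path witnessing the claimed reachability, assembled as a three-link chain in exactly the style of the two preceding lemmas:
\begin{align*}
&f(J^+_{2\bar i-1,2\bar j-1}, K_{2\bar j-1,\bar k,h})
\to \Diamond_R(A^+_{2\bar i-1,1}, A_{\bar k,h,1}) \\
&\qquad\to \Diamond_L(D^+_{2\bar i+3}, D_{\bar k,h,2})
\to f(L^+_{2\bar i+3,2\bar j+1}, M_{2\bar j+1,\bar k,h}).
\end{align*}
The middle link is precisely Lemma~\ref{lem:3ov_fs}\emph{\ref{lem:3ov_fs_aa_to_dd1}} with $i=\bar i$ and $k=\bar k$, which is unconditional; the first and last links are short monotone paths that stay inside a single free-space cell, running along the elliptical boundary of one diamond. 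Note that this argument invokes no hypothesis on the bits $X_{\bar i}[h]$, $Y_{\bar j}[h]$, $Z_{\bar k}[h]$ at all: the only nontrivial traversal, across the $A$-$B$-$C$-$D$ portion of the $\bar k$-th block, is handled by part \emph{\ref{lem:3ov_fs_aa_to_dd1}} of Lemma~\ref{lem:3ov_fs}, which holds regardless of those bits. (It is the companion path from $f(L^+,M)$ into the next row, rather than this one, that will later need $Y_{\bar j}[h]=0$.)

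For the first link, I would use that $\|K_{2\bar j-1,\bar k,h}\,J^+_{2\bar i-1,2\bar j-1}\|=d$, so that $f(J^+,K)$ lies exactly on the free/non-free boundary. Since $J^+_{2\bar j-1}$ lies on the segment of $T_2$ leaving $A^+_{2\bar i-1,1}$ while $K_{2\bar j-1,\bar k,h}$ lies on $G_{\bar k,h}A_{\bar k,h,1}$, the segment of $T_1$ entering $A_{\bar k,h,1}$, this boundary point sits in the bottom-right quadrant of the diamond $\Diamond(A^+_{2\bar i-1,1},A_{\bar k,h,1})$, i.e.\ on its bottom-right arc. A short antipodal computation of the kind used in Section~\ref{sec:3ov_free_space_diagram} (with $\rho \cis \pi$ and $r \cis 0$ at distance $\rho+r$, and $r+r'-d=\delta$) identifies the right corner $\Diamond_R(A^+_{2\bar i-1,1},A_{\bar k,h,1})$ with the occurrence of $A^-_{2\bar i-1,1}$ immediately following $A^+_{2\bar i-1,1}$ in $T$; in particular $x(J^+_{2\bar j-1})\le x(\Diamond_R)$ and $y(K_{2\bar j-1,\bar k,h})\le y(A_{\bar k,h,1})$, so sliding up the (inward-curved, $xy$-monotone) bottom-right arc from $f(J^+,K)$ to $\Diamond_R$ is a valid monotone path. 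The last link is the mirror image: $\|M_{2\bar j+1,\bar k,h}\,L^+_{2\bar i+3,2\bar j+1}\|=d$ places $f(L^+,M)$ on the top-left arc of $\Diamond(D^+_{2\bar i+3},D_{\bar k,h,2})$ (using that $M$ lies on $D_{\bar k,h,2}E_{\bar k,h}$ and that $L^+_{2\bar j+1}$ lies between $D^-_{2\bar i+3}$ and $D^+_{2\bar i+3}$), its left corner $\Diamond_L$ is the occurrence of $D^-_{2\bar i+3}$ immediately preceding $D^+_{2\bar i+3}$, and the orderings $x(\Diamond_L)\le x(L^+_{2\bar j+1})$, $y(D_{\bar k,h,2})\le y(M_{2\bar j+1,\bar k,h})$ let us slide from $\Diamond_L$ along that arc up to $f(L^+,M)$. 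Concatenating the three links is then immediate, since Lemma~\ref{lem:3ov_fs}\emph{\ref{lem:3ov_fs_aa_to_dd1}} supplies a monotone path whose endpoints are exactly $\Diamond_R(A^+_{2\bar i-1,1},A_{\bar k,h,1})$ and $\Diamond_L(D^+_{2\bar i+3},D_{\bar k,h,2})$.

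The main obstacle is bookkeeping, not geometry. One must pin down, for each repeated vertex involved ($A^-_{2\bar i-1,1}$, $D^-_{2\bar i+3}$, $G_{\bar k,h}$, $E_{\bar k,h}$, and the $I$-points behind $b_i$ and $c_i$), exactly which occurrence in $T$ carries the points $J^+$, $K$, $L^+$, $M$ and the corners $\Diamond_R$, $\Diamond_L$; verify the small antipodal distance identities that locate those corners; and then confirm the handful of resulting $x$- and $y$-orderings that make each of the three arcs monotone in the required direction. Each such fact follows from the definitions of $J^+_j$, $L^+_j$, $b_i$, $c_i$ together with the monotonicity-of-distance-along-segments statements proved in Sections~\ref{sec:3ov_antipodal_property} and~\ref{sec:3ov_free_space_diagram}; the work lies in marshalling these rather than in any fresh idea.
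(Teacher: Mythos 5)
Your proposal is correct and follows essentially the same route as the paper: the paper's proof is exactly the three-link chain $f(J^+_{2\bar i-1,2\bar j-1},K_{2\bar j-1,\bar k,h}) \to \Diamond_R(A^+_{2\bar i-1,1},A_{\bar k,h,1}) \to \Diamond_L(D^+_{2\bar i+3},D_{\bar k,h,2}) \to f(L^+_{2\bar i+3,2\bar j+1},M_{2\bar j+1,\bar k,h})$, with the middle link supplied by the unconditional Lemma~\ref{lem:3ov_fs}\emph{\ref{lem:3ov_fs_aa_to_dd1}}. Your additional bookkeeping for the first and last links (locating $f(J^+,K)$ and $f(L^+,M)$ on the diamond boundaries via the distance-$d$ identities) is detail the paper leaves implicit, and your remark that no bit condition is needed here is accurate.
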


\begin{proof}
$f(J_{2\bar i-1,2\bar j-1}^+, K_{2\bar j-1,\bar k,h}) 
\to \Diamond_R(A_{2\bar i-1,1}^+,A_{k,h,1})
\to \Diamond_L(D_{2\bar i+3}^+,D_{k,h,2})
\to f(L_{2\bar i+3,2\bar j+1}^+, M_{2\bar j+1, \bar k, h})$, where the second $\to$ in the chain is given by Lemma~\ref{lem:3ov_fs}\emph{\ref{lem:3ov_fs_aa_to_dd1}}.
\end{proof}

\begin{lemma}
$f(J_{2\bar i-1,2\bar j-1}^+, M_{2\bar j+1, \bar k, h}) \to f(L_{2\bar i+3,2\bar j+1}^+, K_{2\bar j-1,\bar k,h+1})$ if $Y_{\bar j}[h]=0$.
\end{lemma}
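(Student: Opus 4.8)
The plan is to prove this exactly as Lemma~\ref{lem:3ov_yes_jk_lm} was proved: exhibit the monotone path as a short chain of reachabilities whose only non-trivial link is a diamond-to-diamond reachability already supplied by Lemma~\ref{lem:3ov_fs}. Since the hypothesis is $Y_{\bar j}[h]=0$, the link to use is Lemma~\ref{lem:3ov_fs}\emph{\ref{lem:3ov_fs_ee_to_gg2}}, the unique branch of Lemma~\ref{lem:3ov_fs} guarded by a condition of the form $Y_j[h]=0$. Writing $l_s$ for the vertical line through $J_{2\bar i-1,2\bar j-1}^+$ and $l_t$ for the one through $L_{2\bar i+3,2\bar j+1}^+$, the chain will read
\begin{align*}
    f(J_{2\bar i-1,2\bar j-1}^+, M_{2\bar j+1, \bar k, h})
    &\to \Diamond_B(E^+_a, E_{\bar k,h}) \\
    &\to \Diamond_T(G^+_{b,2}, G_{\bar k,h+1}) \\
    &\to f(L_{2\bar i+3,2\bar j+1}^+, K_{2\bar j-1,\bar k,h+1}),
\end{align*}
where $E^+_a$ is the $E$-vertex of $T_2$ lying at distance exactly $d$ from $M_{2\bar j+1,\bar k,h}$, $G^+_{b,2}$ is the $G$-vertex of $T_2$ lying at distance exactly $d$ from $K_{2\bar j-1,\bar k,h+1}$, and the middle arrow is Lemma~\ref{lem:3ov_fs}\emph{\ref{lem:3ov_fs_ee_to_gg2}} instantiated at the index $j$ with $2j-1=a$ and $2j+1=b$ (here $G_{\bar k,h+1}$ denotes the copy of $G$ opening the $(h{+}1)$-st block of $T_1$, i.e.\ the vertex immediately after $F_{\bar k,h,2W+1}$).

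The first step I would carry out is the two outer (``connector'') arrows, which are purely horizontal. Because $M_{2\bar j+1,\bar k,h}$ lies on $D_{\bar k,h,2}E_{\bar k,h}$ and was placed, via Lemma~\ref{lem:well_defined_ivt}, at distance exactly $d$ from $E^+_a$, the point $f(E^+_a,M_{2\bar j+1,\bar k,h})$ is precisely the lower intersection of the line $x=E^+_a$ with the free/non-free boundary, i.e.\ it equals $\Diamond_B(E^+_a,E_{\bar k,h})$; so the first arrow is realized by the horizontal segment at constant $y=M_{2\bar j+1,\bar k,h}$ from $l_s$ to $E^+_a$. This segment is free because $J_{2\bar i-1,2\bar j-1}^+$ lies in the first block of $T_2$, which is entirely to the left of the second block, and by Lemma~\ref{lem:3ov_antipodal_property_for_segments} the only diamonds meeting the row of $M_{2\bar j+1,\bar k,h}$ in this $x$-range are the $\Diamond(E^+_{a'},E_{\bar k,h})$; by the spacing of the $L^+_j$ (hence of the $M_j$ and the $e_j$) every $E^+_{a'}$ appearing strictly before $E^+_a$ has a diamond whose lowest point lies strictly above this row, so none of them is crossed. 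The third arrow is the mirror image: $K_{2\bar j-1,\bar k,h+1}$ lies on $G_{\bar k,h+1}A_{\bar k,h+1,1}$ at distance exactly $d$ from $G^+_{b,2}$, so $f(G^+_{b,2},K_{2\bar j-1,\bar k,h+1})=\Diamond_T(G^+_{b,2},G_{\bar k,h+1})$, and the horizontal segment at constant $y=K_{2\bar j-1,\bar k,h+1}$ runs through free space (past the remainder of the second block, the $C_1\,D_1\,C_2\,D_2\,H^-$ block, the $E_{2n+1}\,F_{2n+1}\,G_{2n+1}$ block and the final $C_i\,D_i$ region) to $L_{2\bar i+3,2\bar j+1}^+$. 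Concatenating the horizontal segment, the path from Lemma~\ref{lem:3ov_fs}\emph{\ref{lem:3ov_fs_ee_to_gg2}}, and the second horizontal segment yields a monotone path, which proves the lemma.

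The main obstacle is the index bookkeeping. I must identify $a$ and $b$ explicitly from the construction and verify three consistency requirements: (i) $M_{2\bar j+1,\bar k,h}$ sits exactly at the bottom-corner $y$-level of $\Diamond(E^+_a,E_{\bar k,h})$ and $K_{2\bar j-1,\bar k,h+1}$ at the top-corner $y$-level of $\Diamond(G^+_{b,2},G_{\bar k,h+1})$; (ii) $b=a+2$, so that a single value of $j$ feeds Lemma~\ref{lem:3ov_fs}\emph{\ref{lem:3ov_fs_ee_to_gg2}}; and (iii) this value of $j$ is the one for which the vanishing diamond $\Diamond(F^+_{2j,2h},F_{\bar k,h,2h})$ is governed by $Y_{\bar j}[h]$, so the hypothesis $Y_{\bar j}[h]=0$ is exactly what Lemma~\ref{lem:3ov_fs}\emph{\ref{lem:3ov_fs_ee_to_gg2}} demands. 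Tracing the definitions, $E^+_a$ and $G^+_{b,2}$ come out at a \emph{reflected} index (this is the reason the points $L^+_j$ were defined with the weight $\tfrac{2n+4-j}{2n+2}$ rather than $\tfrac{j}{2n+2}$), and one has to confirm that this reflection is consistent with the order in which the second block of $T_2$ lists the $\mathcal Y$-controlled vertices. Once that is pinned down, the free-space claims for the two connectors follow routinely from Lemma~\ref{lem:3ov_antipodal_property_for_segments}, and the proof is complete.
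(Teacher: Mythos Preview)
Your proposal is correct and follows essentially the same route as the paper: the paper's proof is precisely the three-link chain
\[
f(J_{2\bar i-1,2\bar j-1}^+, M_{2\bar j+1,\bar k,h}) \to \Diamond_B(E^+_{2n+1-2\bar j},E_{\bar k,h}) \to \Diamond_T(G^+_{2n+3-2\bar j,2},G_{\bar k,h}) \to f(L_{2\bar i+3,2\bar j+1}^+, K_{2\bar j-1,\bar k,h+1}),
\]
with the first and third arrows justified by $\|M_{2\bar j+1,\bar k,h}E^+_{2n+1-2\bar j}\|=d$ and $\|K_{2\bar j-1,\bar k,h+1}G^+_{2n+3-2\bar j,2}\|=d$, and the middle arrow by Lemma~\ref{lem:3ov_fs}\emph{\ref{lem:3ov_fs_ee_to_gg2}}; so your unidentified indices are $a=2n+1-2\bar j$ and $b=2n+3-2\bar j$, exactly the ``reflected'' values you anticipated.
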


\begin{proof}
$f(J_{2\bar i-1,2\bar j-1}^+, M_{2\bar j+1, \bar k, h}) 
\to \Diamond_B(E^+_{2n+1-2\bar j}, E_{k,h})
\to \Diamond_T(G^+_{2n+3-2\bar j,2}, G_{k,h}) 
\to \\ f(L_{2\bar i+3,2\bar j+1}^+, K_{2\bar j-1,\bar k,h+1})$, where the first $\to$ is given by $||M_{2\bar j+1,k,h}E_{2n+1-2 \bar j}^+|| = d$, second $\to$ in the chain is given by Lemma~\ref{lem:3ov_fs}\emph{\ref{lem:3ov_fs_ee_to_gg2}}, and the third $\to$ is given by $||K_{2\bar j -1,k,h}G_{2n+3-2 \bar j,2}^+|| = d$.
\end{proof}

\begin{lemma}
$f(J_{2\bar i-1,2\bar j-1}^+, K_{2\bar j-1,\bar k,h}) \to f(L_{2\bar i+3,2\bar j+1}^+, C_{k,h,2})$ if $Y_{\bar j}[h] \neq 0$.
\end{lemma}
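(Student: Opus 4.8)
The plan is to build the required monotone path as a three-leg concatenation, reusing the opening of the proof of Lemma~\ref{lem:3ov_yes_jk_lm} almost verbatim and diverging only at the last cell block. First I would observe that, since $(\bar i,\bar j,\bar k)$ is the orthogonal triple witnessing that $(\mathcal X,\mathcal Y,\mathcal Z)$ is a YES instance, we have $X_{\bar i}[h]\cdot Y_{\bar j}[h]\cdot Z_{\bar k}[h]=0$; as we are in the case $Y_{\bar j}[h]\neq 0$, this forces $X_{\bar i}[h]\cdot Z_{\bar k}[h]=0$, which is precisely the hypothesis under which Lemma~\ref{lem:3ov_fs}\emph{\ref{lem:3ov_fs_aa_to_dd2}} applies with $i=\bar i$ and $k=\bar k$. (Throughout, $C_{k,h,2}$ in the statement means $C_{\bar k,h,2}$, since this case is stated for $k=\bar k$.)

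With this in hand I would exhibit the path
\[
f(J_{2\bar i-1,2\bar j-1}^+, K_{2\bar j-1,\bar k,h}) \;\to\; \Diamond_R(A_{2\bar i-1,1}^+,A_{\bar k,h,1}) \;\to\; \Diamond_L(D_{2\bar i+3}^+,D_{\bar k,h,1}) \;\to\; f(L_{2\bar i+3,2\bar j+1}^+, C_{\bar k,h,2}).
\]
The first leg is identical to the first leg of the proof of Lemma~\ref{lem:3ov_yes_jk_lm}: $K_{2\bar j-1,\bar k,h}$ lies on $G_{\bar k,h}A_{\bar k,h,1}$ with $||K_{2\bar j-1,\bar k,h}J_{2\bar i-1,2\bar j-1}^+||=d$, so $f(J^+,K)$ is free, and moving rightward and upward through the free space of the surrounding block reaches $\Diamond_R(A_{2\bar i-1,1}^+,A_{\bar k,h,1})$. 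The second leg is Lemma~\ref{lem:3ov_fs}\emph{\ref{lem:3ov_fs_aa_to_dd2}}, available by the previous paragraph; note that here it lands in the $D_{\bar k,h,1}$-row, rather than in the $D_{\bar k,h,2}$-row as in Lemma~\ref{lem:3ov_yes_jk_lm}, precisely because $X_{\bar i}[h]\cdot Z_{\bar k}[h]=0$ lets the path cut through the $B$- and $C$-columns one copy lower.

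The only genuinely new step, and the one I expect to be the main obstacle, is the third leg, from $\Diamond_L(D_{2\bar i+3}^+,D_{\bar k,h,1})$ to $f(L_{2\bar i+3,2\bar j+1}^+, C_{\bar k,h,2})$; for this I would appeal to the structural description of $F_d(T,T)$ in Section~\ref{sec:3ov_free_space_diagram}. Since $(D_{2\bar i+3}^+,C_{\bar k,h,2})$ is not an antipodal pair (Definition~\ref{defn:antipodes}), the only region of non-free space inside the window with $x$-coordinate between the left corner of $\Diamond(D_{2\bar i+3}^+,D_{\bar k,h,1})$ and $L_{2\bar i+3,2\bar j+1}^+$, and $y$-coordinate between $D_{\bar k,h,1}$ and $C_{\bar k,h,2}$, is $\Diamond(D_{2\bar i+3}^+,D_{\bar k,h,1})$ itself. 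A short distance computation (using $|L_{2\bar j+1}^+|=r'-\tfrac{2n+3-2\bar j}{2n+2}\delta$, $D=r\cis 0$, and $d=r+r'-\delta$, which give $||L_{2\bar i+3,2\bar j+1}^+\,D||-d=\tfrac{2\bar j-1}{2n+2}\delta>0$) shows that $f(L_{2\bar i+3,2\bar j+1}^+,D_{\bar k,h,1})$ is non-free, so the $x$-coordinate of $L_{2\bar i+3,2\bar j+1}^+$ lies within the $x$-extent of that diamond, and to the right of its left corner. Hence one travels from $\Diamond_L(D_{2\bar i+3}^+,D_{\bar k,h,1})$ up and to the right along the diamond's upper-left elliptical arc until the $x$-coordinate reaches $L_{2\bar i+3,2\bar j+1}^+$, and then straight upward through free space to the $y$-coordinate of $C_{\bar k,h,2}$; the endpoint $f(L_{2\bar i+3,2\bar j+1}^+,C_{\bar k,h,2})$ is free because $C_{\bar k,h,2}=r\cis\pi$, $L_{2\bar i+3,2\bar j+1}^+$, and the origin are collinear with $||L_{2\bar i+3,2\bar j+1}^+\,C_{\bar k,h,2}||<d$. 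Concatenating the three legs gives the claimed monotone path.
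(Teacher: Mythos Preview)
Your proposal is correct and follows essentially the same three-leg chain as the paper's proof: $f(J^+,K)\to\Diamond_R(A_{2\bar i-1,1}^+,A_{\bar k,h,1})\to\Diamond_L(D_{2\bar i+3}^+,D_{\bar k,h,1})\to f(L^+,C_{\bar k,h,2})$, with the middle leg coming from Lemma~\ref{lem:3ov_fs}\emph{\ref{lem:3ov_fs_aa_to_dd2}} after observing that $Y_{\bar j}[h]\neq 0$ forces $X_{\bar i}[h]\cdot Z_{\bar k}[h]=0$. The paper simply asserts the first and third legs, whereas you supply the explicit geometric justification (the distance computations and the walk along the diamond boundary) for the third; this extra detail is sound and is exactly what the paper leaves implicit.
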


\begin{proof}
$f(J_{2\bar i-1,2\bar j-1}^+, K_{2\bar j-1,\bar k,h}) 
\to \Diamond_R(A_{2i-1,1}^+,A_{k,h,1}) 
\to \Diamond_L(D_{2i+3}^+,D_{k,h,1})
\to f(L_{2\bar i+3,2\bar j+1}^+, C_{k,h,2})$, 
where the second $\to$ in the chain is given by Lemma~\ref{lem:3ov_fs}\emph{\ref{lem:3ov_fs_aa_to_dd2}} (note $Y_{\bar j}[h] \neq 0 \implies X_i[h] \cdot Z_k[h] = 0$).
\end{proof}

\begin{lemma}
$f(J_{2\bar i-1,2\bar j-1}^+, D_{k,h,2}) \to f(L_{2\bar i+3,2\bar j+1}^+, K_{2\bar j-1,\bar k,h+1})$
\end{lemma}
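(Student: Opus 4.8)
The plan is to exhibit an explicit monotone path, following the same three‑part template as the other lemmas in this section: a horizontal opening segment lying in free space, one invocation of a clause of Lemma~\ref{lem:3ov_fs}, and a short closing detour into the target point. Since this is the case $Y_{\bar j}[h]\neq 0$ (so $X_{\bar i}[h]\cdot Z_{\bar k}[h]=0$), we cannot route through the $E^+_{2n+1-2\bar j}$‑block as in the $Y_{\bar j}[h]=0$ companion lemma; instead we route through the spare $E^+_{2n+1}$‑block, which is exactly what Lemma~\ref{lem:3ov_fs}\emph{\ref{lem:3ov_fs_ed_to_gg2}} provides. Concretely, I would show
\[
 f(J_{2\bar i-1,2\bar j-1}^+, D_{\bar k,h,2}) \;\to\; f(E_1^+, D_{\bar k,h,2}) \;\to\; \Diamond_T(G_{2n+1,2}^+, G_{\bar k,h}) \;\to\; f(L_{2\bar i+3,2\bar j+1}^+, K_{2\bar j-1,\bar k,h+1}),
\]
where the middle arrow is Lemma~\ref{lem:3ov_fs}\emph{\ref{lem:3ov_fs_ed_to_gg2}} verbatim (note $k=\bar k$ here).

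First I would check the opening arrow: this is a horizontal segment at height $D_{\bar k,h,2}$ running from the column of $J_{2\bar i-1,2\bar j-1}^+$ (a point on $A^+_{2\bar i-1,1}A^-_{2\bar i-1,1}$, early in $T_2$) rightward to the column of $E^+_1$ (the start of the second block of $T_2$). By the antipodal characterization (Lemma~\ref{lem:3ov_antipodal_property_of_vertices}) the only non‑free space on this row is of the form $\Diamond(D^+,D_{\bar k,h,2})$, and the vertex order of $T_2$ shows that no $D^+$‑vertex occurs between $J_{2\bar i-1,2\bar j-1}^+$ and $E^+_1$, so the segment is free and $x$‑monotone. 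Then comes Lemma~\ref{lem:3ov_fs}\emph{\ref{lem:3ov_fs_ed_to_gg2}}. Finally, for the closing arrow I would leave $\Diamond_T(G^+_{2n+1,2},G_{\bar k,h})$ by moving straight up the column $G^+_{2n+1,2}$ --- free across the whole $y$‑range of $y$‑block $h$, since that range contains no $G$‑vertex of $T_1$ --- stop just below the diamond $\Diamond(G^+_{2n+1,2},G_{\bar k,h+1})$, step one column to the right to clear it, continue upward past height $G_{\bar k,h+1}$ onto the segment $G_{\bar k,h+1}A_{\bar k,h+1,1}$, and then move rightward to the column of $L^+_{2\bar i+3,2\bar j+1}$. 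The right portion of this leg is free because every $A^+$‑, $F^+$‑ and $G^+$‑vertex of $T_2$ precedes $G^+_{2n+1,2}$, so no diamond meets any column at or to the right of $G^+_{2n+1,2}$ at any height on the segment $G_{\bar k,h+1}A_{\bar k,h+1,1}$; and the endpoint $f(L^+_{2\bar i+3,2\bar j+1},K_{2\bar j-1,\bar k,h+1})$ is free because the two underlying points subtend an angle at most $\tfrac{\pi}{2}<\pi-\phi$ at the origin and hence lie within distance $d$.

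The step I expect to be the main obstacle is the closing detour: verifying that one can exit the tiny diamond $\Diamond(G^+_{2n+1,2},G_{\bar k,h})$ and pass the stacked diamonds $\Diamond(G^+_{2n+1,2},G_{\bar k,\cdot})$ in the same column while keeping both coordinates non‑decreasing, and that the columns at and just to the right of $G^+_{2n+1,2}$ are free at every height between $G_{\bar k,h}$ and $K_{2\bar j-1,\bar k,h+1}$. This is precisely where the global layout of $T_2$ fixed in Section~\ref{sec:3ov_construction} is used: that $G^+_{2n+1,2}$ is the last $G^+$‑vertex of $T_2$, and that the $C^-$, $D^-$, $G^-$ vertices occurring after it are never on the $T_2$‑side of an antipode, except for the $H$‑antipodes, whose rows lie outside the $y$‑range in play. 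Everything else --- monotonicity at each juncture and freeness of the two named endpoints --- is routine bookkeeping with the vertex orders of $T_1$ and $T_2$ together with Lemma~\ref{lem:3ov_antipodal_property_of_vertices}.
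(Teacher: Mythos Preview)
Your proposal is correct and takes essentially the same approach as the paper: the paper's proof is precisely the three-step chain $f(J_{2\bar i-1,2\bar j-1}^+, D_{k,h,2}) \to f(E_{1}^+,D_{k,h,2}) \to \Diamond_T(G_{2n+1,2}^+,G_{k,h}) \to f(L_{2\bar i+3,2\bar j+1}^+, K_{2\bar j-1,\bar k,h+1})$ with Lemma~\ref{lem:3ov_fs}\emph{\ref{lem:3ov_fs_ed_to_gg2}} cited for the middle arrow, and nothing more. Your elaborate closing detour past a second $G$-diamond is unnecessary work caused by an indexing slip in the paper's statement of Lemma~\ref{lem:3ov_fs}\emph{\ref{lem:3ov_fs_ed_to_gg2}} (its proof ends at the $G$ that follows $F_{k,h,2W+1}$, i.e.\ $G_{k,h+1}$, so you actually arrive at the correct height immediately), but your argument is still sound.
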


\begin{proof}
$f(J_{2\bar i-1,2\bar j-1}^+, D_{k,h,2}) 
\to f(E_{1}^+,D_{k,h,2}) 
\to \Diamond_T(G_{2n+1,2}^+,G_{k,h})
\to f(L_{2\bar i+3,2\bar j+1}^+, K_{2\bar j-1,\bar k,h+1})$, where the second $\to$ in the chain is given by Lemma~\ref{lem:3ov_fs}\emph{\ref{lem:3ov_fs_ed_to_gg2}}.
\end{proof}

\begin{lemma}
$f(J_{2\bar i-1,2\bar j-1}^+, K_{2\bar j-1,\bar k,W+1}) \to f(L_{2\bar i+3,2\bar j+1}^+, M_{2\bar j+1, \bar k, W+1})$
\end{lemma}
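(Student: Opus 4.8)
The plan is to prove this statement in exactly the same way as Lemma~\ref{lem:3ov_yes_jk_lm}, by exhibiting the monotone path
\[
f(J_{2\bar i-1,2\bar j-1}^+, K_{2\bar j-1,\bar k,W+1})
\to \Diamond_R(A_{2\bar i-1,1}^+, A_{\bar k, W+1, 1})
\to \Diamond_L(D_{2\bar i+3}^+, D_{\bar k, W+1, 2})
\to f(L_{2\bar i+3,2\bar j+1}^+, M_{2\bar j+1, \bar k, W+1}).
\]
The two outer endpoints are free points, since $||K_{2\bar j-1,\bar k,W+1}\,J_{2\bar i-1,2\bar j-1}^+|| = d$ and $||M_{2\bar j+1,\bar k,W+1}\,L_{2\bar i+3,2\bar j+1}^+|| = d$ by the definitions of $K$ and $M$; each lies on the boundary of the corresponding diamond, in the cell from which the adjacent corner ($\Diamond_R$, respectively $\Diamond_L$) is monotonically reachable, verbatim as in Lemma~\ref{lem:3ov_yes_jk_lm}, with $C_{\bar k,W+1,3}$ playing the role that $E_{\bar k,h}$ plays there (recall that $M_{2\bar j+1,\bar k,W+1}$ lies on $D_{\bar k,W+1,2}C_{\bar k,W+1,3}$).

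The middle hop $\Diamond_R(A_{2\bar i-1,1}^+, A_{\bar k, W+1, 1}) \to \Diamond_L(D_{2\bar i+3}^+, D_{\bar k, W+1, 2})$ is the statement of Lemma~\ref{lem:3ov_fs}\emph{\ref{lem:3ov_fs_aa_to_dd1}}, except that that item is asserted only for rows $1 \le h \le W$, whereas here we use the row $h = W+1$. I would establish it for $h = W+1$ by re-running the proof of that item --- that is, the proofs of \emph{\ref{lem:3ov_fs_aa_to_bb1}}, \emph{\ref{lem:3ov_fs_bb_to_cc1}} and \emph{\ref{lem:3ov_fs_cc_to_dd1}} in sequence --- after observing that the relevant prefix of the $(\bar k,W+1)$-block of $T_1$, namely $G_{\bar k,W+1}\circ\bigcirc_u(A_{\bar k,W+1,u})\circ B_{\bar k,W+1}\circ C_{\bar k,W+1,1}\circ D_{\bar k,W+1,1}\circ C_{\bar k,W+1,2}\circ D_{\bar k,W+1,2}$, is combinatorially identical to the corresponding prefix of a generic $(k,h)$-block, and that in the $(W+1)$-block all of $A_{\bar k,W+1,u}$, $B_{\bar k,W+1}$, $C_{\bar k,W+1,\cdot}$ and $D_{\bar k,W+1,\cdot}$ take their unperturbed values $r\cis(\cdot)$. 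Hence every diamond in that chain --- $\Diamond(A_{2\bar i,u}^+, A_{\bar k,W+1,u})$ for $1 \le u \le 2W+1$, then $\Diamond(B_{2\bar i}^+, B_{\bar k,W+1})$, $\Diamond(C_{2\bar i+1}^+, C_{\bar k,W+1,1})$, $\Diamond(D_{2\bar i+2}^+, D_{\bar k,W+1,1})$ and $\Diamond(D_{2\bar i+3}^+, D_{\bar k,W+1,2})$ --- is still present, and the description in Section~\ref{sec:3ov_free_space_diagram} of this part of $F_d(T,T)$ as free space dotted with small diamonds applies unchanged to this row, so the corner-to-corner steps of the chain carry over unchanged.

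The one point that needs care, and which I expect to be the crux, is checking that no diamond along this chain is accidentally destroyed for the row $h = W+1$. When $X_{\bar i}$ forces an even-indexed vertex $A_{2\bar i,u}^+$ to shrink to $(r'-\frac23\delta)\cis(\cdot)$, its antipode $A_{\bar k,W+1,u}$ still has radius $r$, so the two points are at distance $r + (r'-\frac23\delta) = d + \frac13\delta > d$ (using $\delta = r+r'-d$), and the diamond merely narrows rather than vanishing. The cancellation $r + r' - \frac43\delta = d - \frac13\delta < d$ that does destroy a diamond in a generic block with $h \le W$ (when the perturbed vertex $A_{k,h,2h}$ meets a perturbed $A^+$ because $X_{\bar i}[h]=0$) --- and which is exploited in Lemma~\ref{lem:3ov_fs}\emph{\ref{lem:3ov_fs_aa_to_bb2}} --- cannot occur here, precisely because the $T_1$ vertices of the $(W+1)$-block are never perturbed. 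Everything else, including freeness of the endpoints and extension of the path to $l_s$ and $l_t$, is identical to Lemma~\ref{lem:3ov_yes_jk_lm}.
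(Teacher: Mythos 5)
Your proposal is correct and takes essentially the same route as the paper, whose proof likewise reuses the path of Lemma~\ref{lem:3ov_yes_jk_lm} with $h$ set to $W+1$ and flags only the subtlety that $M_{2\bar j+1,\bar k,W+1}$ lies on $D_{\bar k,W+1,2}C_{\bar k,W+1,3}$ while the diamond $\Diamond(D^+_{2\bar i+3},D_{\bar k,W+1,2})$ has the same shape as in rows $h\leq W$. Your explicit verification that Lemma~\ref{lem:3ov_fs}\emph{\ref{lem:3ov_fs_aa_to_dd1}} extends to the unperturbed $(W{+}1)$-row, with the distance computations $r+r'-\tfrac23\delta=d+\tfrac13\delta>d$ versus $r+r'-\tfrac43\delta=d-\tfrac13\delta<d$ showing no diamond in the chain vanishes, is a detail the paper leaves implicit but is the same argument.
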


\begin{proof}
The proof is essentially the same as the one for Lemma~\ref{lem:3ov_yes_jk_lm}, except $h$ is set to $W+1$. One subtlety is that~$M_{j,k,W+1}$ is a point on~$D_{k,W+1,2}C_{k,W+1,3}$. But the same monotone path persists since $\Diamond_L(D_{2\bar i+3}^+,D_{k,W+1,2})$ is the same shaped diamond as $\Diamond_L(D_{2\bar i+3}^+,D_{k,h,2})$ for $1 \leq h \leq W$. 
\end{proof}

Hence, we have $m-1$ valid monotone paths that start at $l_s$ and end at $l_t$. No pair of monotone paths overlap in $y$-coordinate in more than one point. No monotone path overlaps in $y$-coordinate with the $y$-interval from $s = J_{2\bar i-1,2\bar j-1}^+$ to $t = L_{2\bar i+3,2\bar j+1}^+$. These statements are true in both the $Y_{\bar j}[h] = 0$ and $Y_{\bar j}[h] \neq 0$ cases. Hence, we have shown that our constructed instance $(T,m,\ell,d)$ is a YES instance, yielding the following theorem.

\begin{theorem}
\label{theorem:3ov_yes_to_yes}
If our input $(\mathcal X, \mathcal Y, \mathcal Z)$ is a YES instance for 3OV, then our constructed instance $(T,m,\ell,d)$ is a YES instance for \problemtwo. 
\end{theorem}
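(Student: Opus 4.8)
The plan is to exhibit a concrete subtrajectory cluster. Since $(\mathcal X,\mathcal Y,\mathcal Z)$ is a YES instance, I would fix a triple $\bar i,\bar j,\bar k$ with $X_{\bar i}[h]\cdot Y_{\bar j}[h]\cdot Z_{\bar k}[h]=0$ for all $1\le h\le W$, and take the reference subtrajectory to run from $s=J^+_{2\bar i-1,2\bar j-1}$ to $t=L^+_{2\bar i+3,2\bar j+1}$, with $l_s,l_t$ the associated vertical lines of $F_d(T,T)$. Lemma~\ref{lem:sub_length_j_to_l} already tells us this subtrajectory has length exactly $\ell$, so the length requirement of \problemtwo\ holds; what remains is to build $m-1=2nW+1$ monotone paths from $l_s$ to $l_t$ meeting the two overlap constraints.

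For the paths I would use the list given just before the theorem, grouped in two ways. For every block $k\neq\bar k$ and coordinate $1\le h\le W$ there are two ``filler'' paths, abbreviated $f(s,A_{k,h,2})\to f(t,E_{k,h})$ and $f(s,E_{k,h})\to f(t,A_{k,h+1,1})$; these route through diamonds that are present for any input, so their validity follows from the input-independent parts of Lemma~\ref{lem:3ov_fs} (in particular \emph{\ref{lem:3ov_fs_aa_to_dd3}} and \emph{\ref{lem:3ov_fs_ed_to_gg1}}) together with the distance identities fixing $J^+,L^+,K,M$. For the block $k=\bar k$ I would exploit orthogonality: for each $h$, at least one of $X_{\bar i}[h],Y_{\bar j}[h],Z_{\bar k}[h]$ vanishes, which by construction deletes one red/green/blue diamond and opens a gap. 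I would then split on whether $Y_{\bar j}[h]=0$, using Lemma~\ref{lem:3ov_fs}\emph{\ref{lem:3ov_fs_ee_to_gg2}} to route through the $G^+$ column, or $Y_{\bar j}[h]\neq0$, in which case $X_{\bar i}[h]\cdot Z_{\bar k}[h]=0$ and Lemma~\ref{lem:3ov_fs}\emph{\ref{lem:3ov_fs_aa_to_dd2}} applies; in either case this gives two paths per $h$, plus one closing path through $K_{2\bar j-1,\bar k,W+1}$ and $M_{2\bar j+1,\bar k,W+1}$ (whose endpoints are free by the intermediate-value placements of Section~\ref{sec:3ov_well_defined}). Counting, $2(n-1)W+(2W+1)=2nW+1=m-1$, and each path is checked individually using Lemma~\ref{lem:3ov_fs} as in the short lemmas preceding the theorem.

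The hard part will be verifying the two non-overlap conditions, which is pure bookkeeping on $y$-coordinates. For the pairwise condition I would record the $y$-endpoints of each path — all of which are points of $T_1$ — and argue they form a non-crossing family: within a block $k\neq\bar k$ they appear in the monotone order $A_{k,h,2},E_{k,h},A_{k,h+1,1},\dots$ along $T_1$, with consecutive paths sharing only a single vertex; within block $\bar k$ the finely placed $K$ and $M$ points make consecutive paths abut at a single $y$-coordinate in both the $Y_{\bar j}[h]=0$ and $Y_{\bar j}[h]\neq0$ cases, and the case split creates no crossings. For the reference interval, since $s$ and $t$ both lie strictly inside $T_2$ (on the segments $A^+_{2\bar i-1,1}A^-_{2\bar i-1,1}$ and $D^-_{2\bar i+3}D^+_{2\bar i+3}$), the $y$-interval from $s$ to $t$ corresponds to a portion of $T_2$ and is therefore disjoint from the $y$-ranges of all constructed paths, which lie in $T_1$; hence that overlap is empty. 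Combining these, the reference subtrajectory and the $2nW+1$ paths form a subtrajectory cluster satisfying all conditions of \problemtwo, which is exactly Theorem~\ref{theorem:3ov_yes_to_yes}.
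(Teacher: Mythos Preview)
Your proposal is correct and follows essentially the same approach as the paper: fix the orthogonal triple $(\bar i,\bar j,\bar k)$, take the reference subtrajectory from $J^+_{2\bar i-1,2\bar j-1}$ to $L^+_{2\bar i+3,2\bar j+1}$ (length $\ell$ by Lemma~\ref{lem:sub_length_j_to_l}), and assemble the $2nW+1$ paths exactly as in the list preceding the theorem, verifying each via the corresponding item of Lemma~\ref{lem:3ov_fs}. Your justification of the two overlap conditions---the $T_1$-versus-$T_2$ separation for the reference interval, and the monotone ordering of endpoints within each $k$-block for pairwise overlap---is in fact more explicit than what the paper writes, which simply asserts both conditions hold.
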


\subsection{Cuts in free space}
\label{sec:3ov_cuts}

In the following section (Section~\ref{sec:3ov_only_if}), we will show that if $(\mathcal X, \mathcal Y, \mathcal Z)$ is a NO instance for 3OV, then $(T,m,\ell,d)$ is a NO instance for \problemtwo. This involves showing that for any pair of vertical lines~$l_s$ and~$l_t$, there cannot be $m-1$ monotone paths from $l_s$ to $l_t$ with the property that the $y$-coordinate of the $m-1$ monotone paths overlap in at most one point, and the $y$-coordinates of the $m-1$ monotone paths intersect with the $y$-interval from $s$ to $t$ in at most one point. In actuality, we show the contrapositive of the above statement in Section~\ref{sec:3ov_only_if}, but the core idea of upper bounding the number of monotone paths remains the same. 

To upper bound the number of monotone paths that can exist in our free space diagram, we require a significantly different (or in fact, opposite) strategy to Sections~\ref{sec:3ov_paths}. In particular, we need a way to show that, for a point $(s,z_1)$ on $l_s$ and a point $(t,z_2)$ on $l_t$, that there is no monotone path from $(s,z_1) \to (t,z_2)$. To show that there is no such monotone path, we construct a cutting sequence from $(s,z_1)$ to $(t,z_2)$. This is very similar in spirit to cuts constructed in Buchin~\etal's~\cite{DBLP:conf/soda/BuchinOS19} lower bound. 

First, let us define a cutting sequence. We say that $f(x,y)$ dominates $f(w,z)$ if $f(x,y)$ is strictly above and to the left of $f(w,z)$. In other words, $x > w$ and $y > z$. We say that $f(x,y)$ undermines $f(w,z)$ if $f(x,y)$ is vertically below $f(w,z)$, and the vertical segment between the two points is entirely in non-free space. We define a dominating sequence to be a sequence of points so that each point either dominates or undermines the next point in the sequence. We define a cutting sequence to be a dominating sequence where the first point dominates the second point, and the second to last point dominates the last point. 

To simplify the description of our dominating and cutting sequences in this and subsequent sections, we use the following notation. Let $f(x,y) \dominates f(w,z)$ denote that $f(x,y)$ dominates $f(w,z)$. Let $f(x,y) \undermines f(w,z)$ denote that $f(x,y)$ undermines $f(w,z)$. Let $f(x,y) \domseq f(w,z)$ denote that there is a dominating sequence from $f(x,y)$ to $f(w,z)$. See Figure~\ref{fig:dominating_sequence}.

\begin{figure}[ht]
    \centering
    \includegraphics{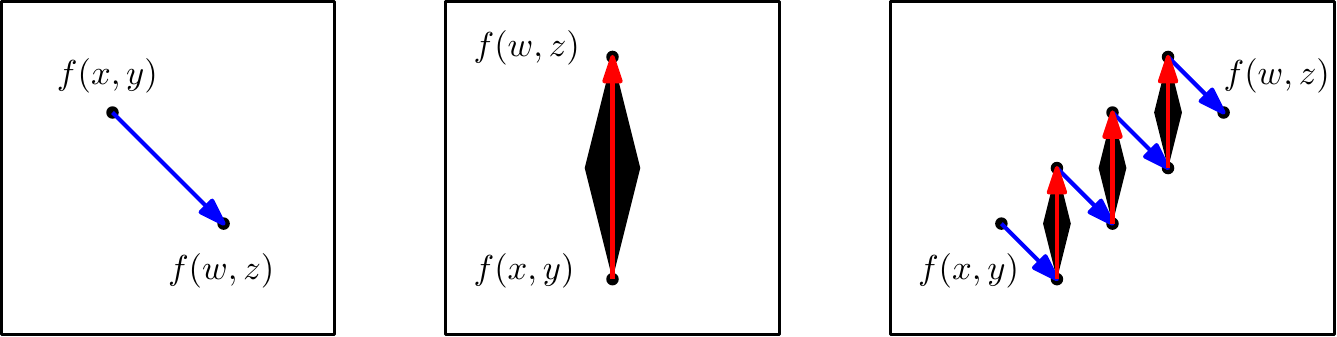}
    \caption{Left: $f(x,y) \dominates f(w,z)$. Middle: $f(x,y) \undermines f(w,z)$. Right: $f(x,y) \domseq f(w,z)$.}
    \label{fig:dominating_sequence}
\end{figure}

With this notation in mind, we are ready to prove that if there is a cutting sequence from $f(x,y)$ to $f(w,z)$ that there cannot be a monotone path $f(x,y) \to f(w,z)$.

\begin{lemma}
\label{lem:3ov_cutting_sequence}
If there is a cutting sequence from $f(x,y)$ to $f(w,z)$, then there is no monotone path from $f(x,y)$ to $f(w,z)$. 
\end{lemma}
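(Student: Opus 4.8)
The plan is to suppose, for contradiction, that both a cutting sequence $f(x,y) = p_0, p_1, \dots, p_k = f(w,z)$ and a monotone path $\pi$ from $f(x,y)$ to $f(w,z)$ exist, and then to track how $\pi$ must pass ``around'' each step of the cutting sequence, deriving a contradiction at the final step. The key invariant I would maintain along the sequence is a statement of the form: the monotone path $\pi$ contains a point that is weakly to the right of and weakly above $p_i$ --- i.e., $\pi$ has ``already gotten past'' $p_i$ in both coordinates. More precisely, I would argue that for each $i$, there is a point $q_i$ on $\pi$ with $x(q_i) \ge x(p_i)$ and $y(q_i) \ge y(p_i)$, and in fact that $q_i$ can be taken to lie on the portion of $\pi$ that is itself weakly before the endpoint in the natural parametrization.

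The induction splits on whether $p_i \domseq p_{i+1}$ uses a $\dominates$ step or a $\undermines$ step. If $p_i \dominates p_{i+1}$, then $p_{i+1}$ is strictly below and strictly to the right of $p_i$; since $q_i$ is weakly right of and weakly above $p_i$, it is strictly above $p_{i+1}$ in $y$ and we still need to show $\pi$ eventually reaches an $x$-coordinate $\ge x(p_{i+1})$ --- but $x(p_{i+1}) > x(p_i) $ is not automatic to clear, so here I would instead use the other direction of the invariant: rather than ``past $p_i$,'' I want the dual statement that $\pi$ must pass \emph{below or to the left} of the forbidden region, and a $\dominates$ step forces $\pi$ to have been below-left of $p_{i+1}$ at some earlier parameter, because $\pi$ cannot cut through the upper-left quadrant of $p_{i+1}$ without crossing the segment from $p_i$. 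The cleanest formulation is probably: let $t_i$ be the last parameter at which $\pi$ is weakly left of and weakly below $p_i$ (equivalently, $\pi$ at parameter $t_i$ lies in the closed lower-left quadrant of $p_i$); I claim $t_i$ is well-defined and nondecreasing in $i$. For a $\undermines$ step: $p_{i+1}$ is directly above $p_i$ with the whole vertical segment in non-free space, so $\pi$ (being in free space) can never occupy any point of that segment; combined with monotonicity and $t_i$ being in the lower-left quadrant of $p_i$, the path cannot ``escape upward'' through that non-free wall, so at $t_{i+1}$ it is still weakly below $p_{i+1}$ and weakly left, giving $t_{i+1} \ge t_i$. For a $\dominates$ step: $p_{i+1}$ is strictly below-right of $p_i$, so the lower-left quadrant of $p_i$ is contained in the (closed) lower-left quadrant of $p_{i+1}$ shifted appropriately --- actually it is contained in the region that is either below-left of $p_{i+1}$ or in the lower-right; in any case $\pi(t_i)$ is weakly below $p_{i+1}$, hence $t_{i+1}$ exists and $t_{i+1}\ge t_i$.

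Then I would close the argument using the ``cutting'' hypothesis at the two ends. Because $p_0 \dominates p_1$, the starting point $f(x,y) = p_0$ is strictly above-left of $p_1$; but $\pi$ starts \emph{at} $p_0$, so at parameter $0$ the path is exactly at $p_0$, which is \emph{not} in the lower-left quadrant of $p_0$ in the strict sense we need --- here I must be careful about which quadrant. Let me instead phrase $t_i$ as the last parameter at which $\pi$ is weakly left \emph{and} weakly below $p_i$ in the sense $x(\pi) \le x(p_i)$, $y(\pi)\le y(p_i)$; since $\pi(0)=p_0$ satisfies this with equality, $t_0 = 0$ is fine (or some $t_0 \ge 0$). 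Symmetrically, the last step $p_{k-1}\dominates p_k$ means $p_{k-1}$ is strictly above-left of $p_k = f(w,z)$, the endpoint of $\pi$. By the chain $t_0 \le t_1 \le \dots \le t_{k-1}$, at parameter $t_{k-1}$ the path is weakly left of and weakly below $p_{k-1}$, hence \emph{strictly} left of and strictly below $p_k$ (since $p_{k-1}$ is strictly above-left of $p_k$). But after parameter $t_{k-1}$ the path must still reach $p_k$, which is fine by monotonicity --- so I have not yet reached a contradiction, and the real obstruction must come from a $\undermines$ step somewhere forcing the path to be \emph{trapped}. The contradiction is this: consider the last index $i^\ast$ such that step $i^\ast \to i^\ast+1$ is an $\undermines$ step (it exists iff the sequence is not purely $\dominates$; if it is purely $\dominates$ then $p_0 \dominates p_1 \dominates \dots \dominates p_k$ forces $y(p_0) > y(p_1) > \dots > y(p_k)$ and $x(p_0) < \dots < x(p_k)$, and a monotone path from $p_0$ to $p_k$ is trivially possible, so a cutting sequence \emph{must} contain at least one $\undermines$ step --- this is where I need to re-examine the definition).

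\textbf{Expected main obstacle.} The genuinely delicate point is pinning down exactly how the $\undermines$ relation (vertical segment entirely in non-free space) interacts with monotonicity to block $\pi$, and making the quadrant/last-parameter invariant airtight across the alternation of $\dominates$ and $\undermines$ steps --- in particular handling the boundary/closed-versus-open cases at $p_0$ and $p_k$, and verifying that a cutting sequence (with the ``first dominates, second-to-last dominates'' conditions) genuinely cannot be realized by any monotone path. I expect the author's proof reduces this to a short invariant: ``if $\pi$ reaches a point weakly NE of $p_i$ then it reaches one weakly NE of $p_{i+1}$'' for $\dominates$ steps, and ``$\pi$ cannot reach anything weakly NE of $p_{i+1}$ given it was weakly SW of $p_i$'' for $\undermines$ steps (using the non-free vertical wall), contradicting that $\pi$ ends at $p_k$ which is weakly NE of $p_{k-1}$. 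I would structure the final write-up around that single monotone invariant rather than the two-sided $t_i$ bookkeeping above.
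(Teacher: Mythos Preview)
Your overall structure --- assume a monotone path $\pi$ exists and track where the sequence points sit relative to $\pi$ --- matches the paper's. The gap is in your specific invariant. You take $t_i$ to be the last parameter at which $\pi$ lies weakly south-west of $p_i$ and claim $t_{i+1} \ge t_i$; but on a $\dominates$ step this fails, because $p_{i+1}$ is strictly \emph{below} $p_i$, so $\pi(t_i)$ being weakly below $p_i$ says nothing about its height relative to $p_{i+1}$. The SW-quadrant of $p_i$ is not contained in that of $p_{i+1}$, and your assertion ``in any case $\pi(t_i)$ is weakly below $p_{i+1}$'' is exactly the step that does not go through.

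The paper inverts the roles: instead of asking where $\pi$ sits relative to each $p_i$, it asks where each $p_i$ sits relative to $\pi$. Say $p$ is \emph{below} $\pi$ if, at $x$-coordinate $x(p)$, the path's $y$-coordinate exceeds $y(p)$. Then $p_2$ is below $\pi$ (it is strictly below-right of the start point, and $\pi$ is monotone); a $\dominates$ step preserves ``below'' (since $p_{i+1}$ is further right, where $\pi$ can only be higher, and further down); and a $\undermines$ step also preserves ``below,'' since otherwise the non-free vertical segment $p_ip_{i+1}$ would meet $\pi$, which lies entirely in free space. But $p_{k-1}$ is strictly \emph{above} $\pi$ (it is strictly above-left of the endpoint $p_k$), giving the contradiction. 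The paper compresses the two preservation facts into a single crossing observation --- both the dominating sequence and $\pi$ have weakly increasing $x$-coordinates, so the first below-to-above transition must occur on a $\undermines$ step --- but that is the entire argument. Replace your parameter-tracking with this ``which side of $\pi$'' invariant and the proof goes through cleanly.
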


\begin{proof}
Let the cutting sequence be $f(x_1, y_1), f(x_2, y_2), \ldots, f(x_{i-1}, y_{i-1}), f(x_i, y_i)$, where $f(x_1, y_1) = f(x,y)$ and $f(x_i, y_i) = f(w,z)$. Then $f(x_1,y_1) \dominates f(x_2,y_2) \domseq f(x_{i-1},y_{i-1}) \dominates f(x_i,y_i)$. Suppose for the sake of contradiction that there is a monotone path from $f(x_1, y_1)$ to $f(x_i, y_i)$. See Figure~\ref{fig:cutting_sequence}.

\begin{figure}[ht]
    \centering
    \includegraphics{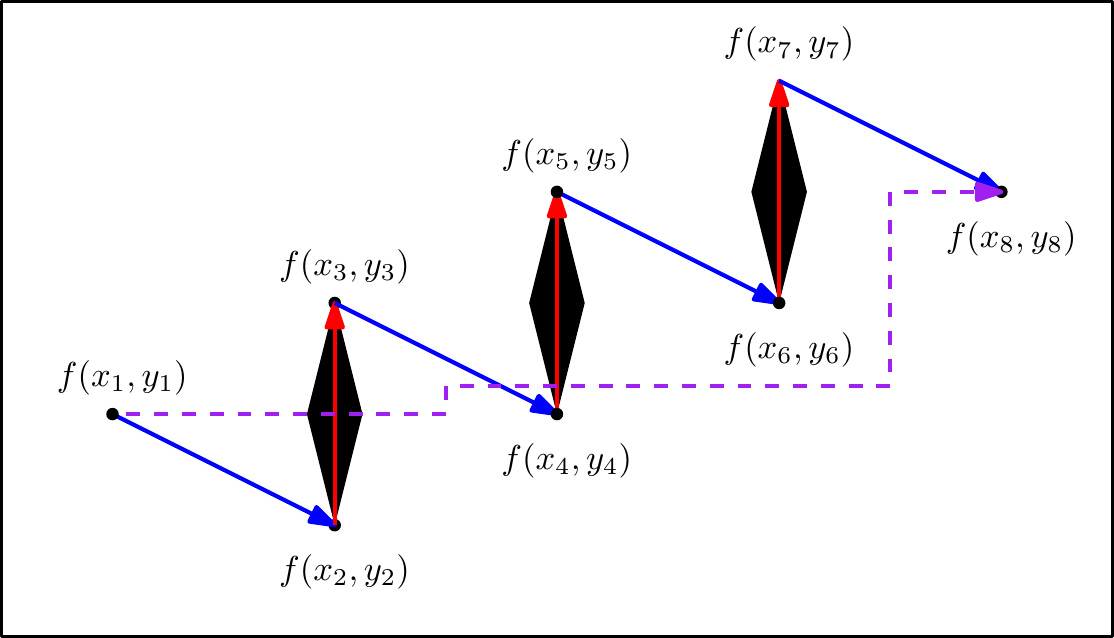}
    \caption{A cutting sequence from $f(x_1,y_1)$ to $f(x_i,y_i)$, for $i=8$.}
    \label{fig:cutting_sequence}
\end{figure}

Since $f(x_1,y_1) \dominates f(x_2,y_2)$, we know $f(x_2,y_2)$ is strictly below the monotone path $f(x_1, y_1) \to f(x_i, y_i)$. Since $f(x_{i-1}, y_{i-1}) \dominates (x_i, y_i)$, we know $f(x_{i-1}, y_{i-1})$ is strictly above the monotone path $f(x_1, y_1) \to f(x_i, y_i)$. Therefore, a dominating sequence $f(x_2,y_2) \domseq f(x_{i-1},y_{i-1})$ starts below the monotone path and ends above the monotone path. 

If we ignore the $y$-coordinates and focus only on the $x$-coordinates, the dominating sequence $f(x_2,y_2) \domseq f(x_{i-1}, y_{i-1})$ and the monotone path $f(x_1,y_1) \to f(x_i,y_i)$ are both weakly increasing. Therefore, there must be an $x$-coordinate where the dominating sequence $f(x_2,y_2) \domseq f(x_{i-1}, y_{i-1})$ crosses from being below to being above the monotone path $f(x_1,y_1) \to f(x_i,y_i)$. At this point, the dominating sequence is increasing in $y$-coordinate. Therefore, we must have an undermining step at this $x$-coordinate. However, an undermining step only traverses non-free space, so it cannot cross the monotone path. This yields a contradiction.
\end{proof}

In this section, we compile a list of dominating sequences, and pairs of undermining points. We prove the correctness of these sequences. For now, we will not show any cutting sequences, but in Section~\ref{sec:3ov_only_if}, we will combine our dominating sequences and undermining points to form cutting sequences. Some of these dominating sequences are conditioned on one of the booleans $X_i[h]$, $Y_j[h]$ or $Z_k[h]$ being equal to one. 

For several of our dominating sequences and undermining points, we require the following additional points. Let $N_1$, $N_2$ and $N_3$ be points on $H_1H_2$ so that the points $H_1, N_1, N_2, N_3, H_2$ are evenly spaced. Let $N_{k,1}$, $N_{k,2}$ and $N_{k,3}$ be the copies of $N_1$, $N_2$ and $N_3$ on the segment $H_{k,1}H_{k,2}$. 

\begin{lemma}
\label{lem:3ov_cuts}
For $1 \leq i,j,k,u \leq n$ and $1 \leq h \leq W$:
\begin{enumerate}[label=(\alph*)]
    \item \label{lem:3ov_cuts_gg_cc_1}
$\Diamond_B(G_{1,1}^+, G_{k,h}) \domseq \Diamond_B(C_1^+,C_{k,h,1})$
	\item \label{lem:3ov_cuts_cc_gg_1}
$\Diamond_B(C_1^+,C_{k,h,1}) \domseq \Diamond_T(G_{2n+1,2}^+,G_{k,h})$
	\item \label{lem:3ov_cuts_aa_cc_1}
$\Diamond_B(A_{2i+1,1}^+,A_{k,h,1}) \domseq \Diamond_T(C_{2i+3}^+,C_{k,h,1})$
	\item \label{lem:3ov_cuts_ee_gg_1}
$\Diamond_B(E_1^+,E_{k,h}) \domseq \Diamond_T(G_{1,2}^+,G_{k,h})$
	\item \label{lem:3ov_cuts_aa_dd_1}
$\Diamond_B(A_{2i-1,2}^+,A_{k,h,2}) \domseq \Diamond_T(D_{2i+2}^+,D_{k,h,2})$.
	\item \label{lem:3ov_cuts_aa_cc_2}
$\Diamond_B(A_{2i,1}^+,A_{k,h,1}) \domseq \Diamond_T(C_{2i+2}^+,C_{k,h,2})$ if $X_i[h] \cdot Z_k[h] = 1$.
	\item \label{lem:3ov_cuts_ee_gg_2}
$\Diamond_B(E_{2j}^+,E_{k,h}) \domseq \Diamond_T(G_{2j,2}^+,G_{k,h})$ if $Y_j[h] = 1$.
	\item \label{lem:3ov_cuts_ng_nn}
$f(N_{k,3},G_{1,1}) \undermines f(N_{k,3}, N_{1,2})$
	\item \label{lem:3ov_cuts_hn_nn}
$f(H_{k,1},N_{u,1}) \domseq f(N_{k,3}, N_{u+1,2})$
	\item \label{lem:3ov_cuts_hn_hd}
$f(H_{k,1},N_{n,1}) \undermines f(H_{k,1}, D_{2n+3}^+)$
	\item \label{lem:3ov_cuts_gh_gg}
$f(G_{1,1}^+,H_{k-1,1}) \undermines f(G_{1,1}^+,G_{k,h})$
	\item \label{lem:3ov_cuts_hh_cg}
$f(H_1^-,H_{n,1}) \domseq f(C_3^+,G^+_{2n+1,2})$
	\item \label{lem:3ov_cuts_he_hd}
$f(H_1^-,E_{2n+1}^+) \undermines f(H_1^-,D_{2n+3}^+)$
\end{enumerate}
\end{lemma}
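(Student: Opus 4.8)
The plan is to verify the thirteen assertions (a)--(m) one at a time, in each case by writing down an explicit dominating sequence between the stated endpoints and checking every step against the definitions of $\dominates$, $\undermines$ and $\domseq$ from Section~\ref{sec:3ov_cuts}. A $\dominates$ step between two named points---almost always corners $\Diamond_T,\Diamond_B,\Diamond_L,\Diamond_R$ of diamonds $\Diamond(v^+,v)$, or images $f(x,y)$ of explicit points on $T$---reduces to the two coordinate inequalities $x>w$ and $y>z$. The $x$- and $y$-orderings of the named vertices are read off directly from their positions in the indexed trajectory $T$ at the end of Section~\ref{sec:3ov_subtrajectory_property}, while the polar coordinates of the vertices, together with the $\tfrac23\delta$- and $\varepsilon$-perturbations used in the construction, pin down how far each diamond's corner sits above/below (resp. left/right of) its centre $(v^+,v)$; this is what is needed to compare a $\Diamond_B$ of one diamond against a $\Diamond_T$ of the next, and it is controlled by the lemma of Section~\ref{sec:3ov_free_space_diagram} describing the (concave, diamond) shape of $\Diamond(v^+,v)$. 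An $\undermines$ step additionally requires certifying that the vertical segment joining the two points is entirely non-free; this is where the real work sits.

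The items split into two groups. Items (a)--(g) are dominating sequences living in the ``main'' part of the diagram and running along a block of consecutive antipodal diamonds---the same blocks of diamonds that carry the monotone-path chains of Lemma~\ref{lem:3ov_fs}, but now certifying an obstruction rather than a path. For the unconditional items (a)--(e) I would simply exhibit the chain of diamond corners (modelled on the corresponding chains in the proof of Lemma~\ref{lem:3ov_fs}) and check the inequalities. The conditional items (f) ($X_i[h]\cdot Z_k[h]=1$) and (g) ($Y_j[h]=1$) are the decisive ones: when the relevant boolean(s) equal $1$, the diamonds that were made to disappear in Lemma~\ref{lem:3ov_fs}---namely $\Diamond(A_{2i,2h}^+,A_{k,h,2h})$ and/or the $B$/$C$ diamonds for (f), and $\Diamond(F_{2j,2h}^+,F_{k,h,2h})$ for (g)---are instead present at full size, and it is precisely the non-free space of these diamonds that supplies the $\undermines$ steps; one has to confirm, from $\phi=\tfrac{\pi}{4W+6}$, $d=||r\cis\pi-r'\cis\phi||$ and $\delta=r+r'-d$, that a present diamond overlaps the appropriate coordinate range of its neighbours so that no sliver of free space breaks the vertical segment.

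Items (h)--(m) involve the ``far'' vertices $H_{k,1},H_{k,2}$ and the auxiliary points $N_{k,1},N_{k,2},N_{k,3}$ on the segments $H_{k,1}H_{k,2}$. The governing observation is the one already exploited in Lemma~\ref{lem:3ov_fs_Hk}: every vertex of $T$ other than the $H$'s lies inside the ball of radius $4r'$ about the origin, whereas the $H$'s---and hence the $N$'s, which lie on $H_{k,1}H_{k,2}$---lie at distance between $4r'$ and $8r'$; therefore a column (or row) through one of these points is non-free except in a few explicitly enumerable short windows where the opposite curve comes close to another $H$/$N$ point. So for each of (h)--(m) I would (i) identify the $H$/$N$ point that names the column or row, (ii) apply (a variant of) Lemma~\ref{lem:3ov_fs_Hk} to list exactly which rows/columns of it are free, and (iii) check that the endpoints of the claimed $\undermines$ step, or the successive points of the claimed $\domseq$, fall outside all of those free windows; the $\dominates$ steps inside these sequences are again handled by the coordinate inequalities, read off from the vertex order of $T$.

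The main obstacle is the geometric bookkeeping for the $\undermines$ steps. For the conditional items it is showing that, once the boolean is $1$, consecutive diamonds in the relevant block overlap in $x$-range (resp. $y$-range) tightly enough that there is no free sliver between them; for (h)--(m) it is showing that the enumerated free windows of an $H$/$N$ column or row miss the vertical segments used. In both cases the verification ultimately reduces to inequalities among $r$, $r'$, $d$, $\delta$, $\varepsilon$ and $\phi$, and is tedious but routine once the relevant diamonds and windows have been pinned down. I would organise the write-up so that each individual step cites exactly one of: the explicit vertex order of $T$, the diamond-shape lemma of Section~\ref{sec:3ov_free_space_diagram}, Lemma~\ref{lem:3ov_fs}, or Lemma~\ref{lem:3ov_fs_Hk}.
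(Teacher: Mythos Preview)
Your plan is correct and follows the paper's own route: exhibit, for each item, an explicit chain of diamond corners (or $f(\cdot,\cdot)$ points for (h)--(m)) and check the $\dominates$/$\undermines$ relations one by one. Two small calibrations will make your execution cleaner and match the paper exactly. First, for items (a)--(g) the $\undermines$ steps in the paper are always of the trivial form $\Diamond_B(v^+,v)\undermines\Diamond_T(v^+,v)$ through a \emph{single} nonempty diamond; there is no need to argue that consecutive diamonds ``overlap in $x$-range'' to eliminate a free sliver, and the real coordinate work lives in the $\dominates$ steps $\Diamond_T(v^+,v)\dominates\Diamond_B(w^+,w)$ between \emph{adjacent} diamonds. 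Second, the conditions in (f) and (g) enter in two ways, not one: $X_i[h]=1$ (resp.\ $Y_j[h]=1$) guarantees the diamond $\Diamond(A_{2i,2h}^+,A_{k,h,2h})$ (resp.\ $\Diamond(F_{2j,2h}^+,F_{k,h,2h})$) is nonempty so that an $\undermines$ step through it is available, while $Z_k[h]=1$ in (f) is used for a $\dominates$ step, namely $\Diamond_T(B_{2i}^+,B_{k,h})\dominates\Diamond_B(C_{2i+1}^+,C_{k,h,1})$, which fails when $Z_k[h]=0$ because then the $B$-diamond shrinks.
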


\begin{proof}
\quad
\begin{itemize}
    \item[\emph{\ref{lem:3ov_cuts_gg_cc_1}}]
$\Diamond_B(G_{1,1}^+, G_{k,h}) 
\undermines \Diamond_T(G_{1,1}^+, G_{k,h})
\dominates \Diamond_B(A_{1,1}^+, A_{k,h,1})
\undermines \Diamond_T(A_{1,1}^+, A_{k,h,1})
\dominates \Diamond_B(A_{1,2}^+, A_{k,h,2}) \\
\undermines \Diamond_T(A_{1,2}^+, A_{k,h,2})
\dominates
\ldots
\dominates \Diamond_B(A_{1,2W+1}^+, A_{k,h,2W+1}) 
\undermines \Diamond_T(A_{1,2W+1}^+, A_{k,h,2W+1}) \\
\dominates \Diamond_B(C_1^+,C_{k,h,1})
$
    \item[\emph{\ref{lem:3ov_cuts_cc_gg_1}}]
$
\Diamond_B(C_1^+,C_{k,h,1})
\undermines \Diamond_T(C_1^+,C_{k,h,1})
\dominates \Diamond_B(D_1^+,D_{k,h,1})
\undermines \Diamond_T(D_1^+,D_{k,h,1}) 
\dominates \Diamond_B(C_2^+,C_{k,h,2})\\
\undermines \Diamond_T(C_2^+,C_{k,h,2}) 
\dominates \Diamond_B(D_2^+,D_{k,h,2})
\undermines \Diamond_T(D_2^+,D_{k,h,2}) 
\dominates \Diamond_B(E_{2n+1}^+,E_{k,h}) 
\undermines  \\\Diamond_T(E_{2n+1}^+,E_{k,h})
\dominates \Diamond_B(F_{2n+1,1}^+,F_{k,h,1}) 
\undermines \Diamond_T(F_{2n+1,1}^+,F_{k,h,1})
\dominates \Diamond_B(F_{2n+1,2}^+,F_{k,h,2}) 
\undermines \Diamond_T(F_{2n+1,2}^+,F_{k,h,2})
\dominates \ldots
\dominates \Diamond_B(F_{2n+1,2W+1}^+,F_{k,h,2W+1}) 
\undermines \Diamond_T(F_{2n+1,2W+1}^+,F_{k,h,2W+1})
\dominates \Diamond_B(G_{2n+1,2}^+,G_{k,h+1})
\undermines \Diamond_T(G_{2n+1,2}^+,G_{k,h+1})
$
    \item[\emph{\ref{lem:3ov_cuts_aa_cc_1}}]
$
\Diamond_B(A_{2i+1,1}^+,A_{k,h,1})
\undermines \Diamond_T(A_{2i+1,1}^+,A_{k,h,1})
\dominates \Diamond_B(A_{2i+1,2}^+,A_{k,h,2})
\undermines \Diamond_T(A_{2i+1,2}^+,A_{k,h,2}) \\
\dominates \ldots 
\dominates \Diamond_B(A_{2i+1,2W+1}^+,A_{k,h,2W+1})
\undermines \Diamond_T(A_{2i+1,2W+1}^+,A_{k,h,2W+1})
\dominates \Diamond_B(B_{2i+1}^+,B_{k,h}) \\
\undermines \Diamond_T(B_{2i+1}^+,B_{k,h}) 
\dominates \Diamond_B(C_{2i+3}^+,C_{k,h,1})
\undermines \Diamond_T(C_{2i+3}^+,C_{k,h,1})
$
    \item[\emph{\ref{lem:3ov_cuts_ee_gg_1}}]
$
\Diamond_B(E_1^+,E_{k,h})
\undermines \Diamond_T(E_1^+,E_{k,h})
\dominates \Diamond_B(E_1^+,E_{k,h})
\undermines \Diamond_T(E_1^+,E_{k,h})
\dominates \Diamond_B(F_{1,1}^+,F_{k,h,1}) \\
\undermines \Diamond_T(F_{1,1}^+,F_{k,h,1})
\dominates \Diamond_B(F_{1,2}^+,F_{k,h,2})
\undermines \Diamond_T(F_{1,2}^+,F_{k,h,2})
\dominates \ldots 
\dominates \Diamond_B(F_{1,2W+1}^+,F_{k,h,2W+1}) \\
\undermines \Diamond_T(F_{1,2W+1}^+,F_{k,h,2W+1})
\dominates \Diamond_B(G_{1,2}^+,G_{k,h})
\undermines \Diamond_T(G_{1,2}^+,G_{k,h})
$
    \item[\emph{\ref{lem:3ov_cuts_aa_dd_1}}]
$
\Diamond_B(A_{2i-1,2}^+,A_{k,h,2})
\undermines \Diamond_T(A_{2i-1,2}^+,A_{k,h,2}) 
\dominates \Diamond_B(A_{2i-1,3}^+,A_{k,h,3})
\undermines \Diamond_T(A_{2i-1,3}^+,A_{k,h,3}) \\
\dominates \ldots 
\dominates \Diamond_B(A_{2i-1,2W+1}^+,A_{k,h,2W+1})
\undermines \Diamond_T(A_{2i-1,2W+1}^+,A_{k,h,2W+1}) 
\dominates \Diamond_B(B_{2i-1}^+,B_{k,h}) \\
\undermines \Diamond_T(B_{2i-1}^+,B_{k,h}) 
\dominates \Diamond_B(C_{2i+1}^+,C_{k,h,1})
\undermines \Diamond_T(C_{2i+1}^+,C_{k,h,1})
\dominates \Diamond_B(D_{2i+1}^+,D_{k,h,1}) \\
\undermines \Diamond_T(D_{2i+1}^+,D_{k,h,1}) 
\dominates \Diamond_B(C_{2i+2}^+,C_{k,h,2}) 
\undermines \Diamond_T(C_{2i+2}^+,C_{k,h,2})
\dominates \Diamond_B(D_{2i+2}^+,D_{k,h,2}) \\
\undermines \Diamond_T(D_{2i+2}^+,D_{k,h,2})
$
    \item[\emph{\ref{lem:3ov_cuts_aa_cc_2}}]
Since $X_i[h] = 1$, we have $\Diamond(A_{2i,2h}^+,A_{k,h,2h})$ is non-empty. Therefore, $
\Diamond_T(A_{2i,2h-1}^+,A_{k,h,2h})
\dominates \Diamond_B(A_{2i,2h}^+,A_{k,h,2h})
\undermines \Diamond_T(A_{2i,2h}^+,A_{k,h,2h})
\dominates \Diamond_B(A_{2i,2h+1}^+,A_{k,h,2h})
$. Since $Z_k[h] = 1$, $\\\Diamond_T(B_{2i}^+,B_{k,h}) 
\dominates \Diamond_B(C_{2i+1}^+,C_{k,h,1})$. Now, $
\Diamond_B(A_{2i,1}^+,A_{k,h,1})
\undermines \Diamond_T(A_{2i,1}^+,A_{k,h,1})
\dominates \\ \Diamond_B(A_{2i,2}^+,A_{k,h,2}) 
\undermines \Diamond_T(A_{2i,2}^+,A_{k,h,2}) 
\dominates \ldots 
\dominates \Diamond_B(A_{2i,2h}^+,A_{k,h,2h})
\undermines \Diamond_T(A_{2i,2h}^+,A_{k,h,2h})
\dominates \ldots 
\dominates \Diamond_B(A_{2i,2W+1}^+,A_{k,h,2W+1}) 
\undermines \Diamond_T(A_{2i,2W+1}^+,A_{k,h,2W+1}) 
\dominates \Diamond_B(B_{2i}^+,B_{k,h}) 
\undermines \Diamond_T(B_{2i}^+,B_{k,h}) \\
\dominates \Diamond_B(C_{2i+1}^+,C_{k,h,1}) 
\undermines \Diamond_T(C_{2i+1}^+,C_{k,h,1}) 
\dominates \Diamond_B(D_{2i+1}^+,D_{k,h,1}) 
\undermines \Diamond_T(D_{2i+1}^+,D_{k,h,1}) \\
\dominates \Diamond_B(C_{2i+2}^+,C_{k,h,2}) 
\undermines \Diamond_T(C_{2i+2}^+,C_{k,h,2}) 
$
    \item[\emph{\ref{lem:3ov_cuts_ee_gg_2}}]
Since $Y_j[h] = 0$, we have $\Diamond(F_{2j,2h}^+,F_{k,h,2h})$ is non-empty. Therefore, $
\Diamond_T(F_{2j,2h-1}^+,F_{k,h,2h-1})
\dominates \Diamond_B(F_{2j,2h}^+,F_{k,h,2h})
\undermines \Diamond_T(F_{2j,2h}^+,F_{k,h,2h})
\dominates \Diamond_B(F_{2j,2h+1}^+,F_{k,h,2h+1})
$. Now, $
\Diamond_B(E_{2j}^+,E_{k,h}) \\
\undermines \Diamond_T(E_{2j}^+,E_{k,h}) 
\dominates \Diamond_B(F_{2j,1}^+,F_{k,h,1})
\undermines \Diamond_T(F_{2j,1}^+,F_{k,h,1})
\dominates \Diamond_B(F_{2j,2}^+,F_{k,h,2})
\undermines \\ \Diamond_T(F_{2j,2}^+,F_{k,h,2}) 
\dominates \ldots 
\dominates \Diamond_B(F_{2j,2h}^+,F_{k,h,2h}) 
\undermines \Diamond_T(F_{2j,2h}^+,F_{k,h,2h})
\dominates \ldots
\dominates \\ \Diamond_B(F_{2j,2n+1}^+,F_{k,h,2n+1}) 
\undermines \Diamond_T(F_{2j,2n+1}^+,F_{k,h,2n+1}) 
\dominates \Diamond_B(G_{2j,2}^+,G_{k,h}) 
\undermines \Diamond_T(G_{2j,2}^+,G_{k,h})
$
    \item[\emph{\ref{lem:3ov_cuts_ng_nn}}]
Recall that $H_1 = 4r' \cis (3W+4) \cdot \phi$ and $H_2 = 8r' \cis (2W+3) \cdot \phi$. So the distance between consecutive pairs in $H_1, N_1, N_2, N_3, H_2$ is greater than $d$. The subtrajectory from $G_{1,1}$ to $N_{1,2}$ is closest to the point $N_3$ at its endpoint, but $||N_{1,2}N_3|| > d$. Hence, for all points $x$ on the subtrajectory between $G_{1,1}$ to $N_{1,2}$, we have $||xN_{k,3}|| > r' > d$. Hence, $f(N_{k,3},G_{1,1}) \undermines f(N_{k,3}, N_{1,2})$.

    \item[\emph{\ref{lem:3ov_cuts_hn_nn}}]
Let $M$ be a point so that $H_{u,2} \prec M$ and $||H_{u,2}M|| = 2d$. We show that $f(H_{k,1},N_{u,1}) \undermines f(H_{k,1},M) \dominates f(N_{k,3},H_{u,2}) \undermines f(N_{k,3},N_{u+1,2})$. The subtrajectory from $N_{u,1}$ to $M$ lies outside the circle of radius $2 r'$ centered at the origin, but $H_{k,1}$ is within distance $r'$ of the origin. Hence, $f(H_{k,1},N_{u,1}) \undermines f(H_{k,1},M)$. Similarly, $f(N_{k,3},H_{u,2}) \undermines f(N_{k,3},N_{u+1,2})$.

    \item[\emph{\ref{lem:3ov_cuts_hn_hd}}]
The distances between $H_{k,1}$ and the segments $N_{n,1}H_{2,n}$ and $H_{2,n}D^-_0$ is strictly greater than~$d$. Of all the points on the subtrajectory from $D^-_0$ to $D_{2n+3}^+$, the point closest to $H_{k,1}$ is $H_1^-$, as all other vertices lie below the real axis on the complex plane. Since $||H_1^-H_{k,1}|| > d$, we have, for every point $x$ on segment $N_{n,1}H_{2,n}$, that $f(H_{k,1},x)$ is in non-free space. Therefore, $f(H_{k,1},N_{n,1}) \undermines f(H_{k,1}, D_{2n+3}^+)$. 

    \item[\emph{\ref{lem:3ov_cuts_gh_gg}}]
Let $x$ be a point on the segment $G_{k,h} H_{k,1}$. Note that as $x$ moves from $G_{k,h}$ to $H_{k,1}$, the distance from $x$ to $G_{1,1,}^+$ increases. But $||G_{k,h}G_{1,1}^+|| > d$, so $||xG_{1,1}^+|| > d$. Therefore, $f(G_{1,1}^+,x)$ is in non-free space for all $x$ on the segment $G_{k,h} H_{k,1}$. Thus, $f(G_{1,1}^+,H_{k-1,1}) \undermines f(G_{1,1}^+,G_{k,h})$.

    \item[\emph{\ref{lem:3ov_cuts_hh_cg}}]
We show that $f(H_1^-,H_{n,1}) \undermines f(H_1^-,D_2^+) \dominates f(C_3^+,D_2^-) \undermines f(C_3^+,G^+_{2n+1,2})$. The point $H_1^-$ is distance $d$ away from the segments $H_{n,1}H_{n,2}$ and $H_{n,2}D_0^-$. The subtrajectory from $D_0^-$ to $D_2^+$ lies entirely below the real axis in the complex plane, so the distance from $H_1^-$ to any of these points is greater than $d$. Therefore, $f(H_1^-,H_{n,1}) \undermines f(H_1^-,D_2^+)$. The distance from $C_3^+$ to any point on the subtrajectory starting at $D_2^-$ and ending at $G_{2n+1,2}^+$ is greater than $d$. Therefore, $f(C_3^+,D_2^-) \undermines f(C_3^+,G^+_{2n+1,2})$. Putting this together yields the dominating sequence.

    \item[\emph{\ref{lem:3ov_cuts_he_hd}}]
The subtrajectory from $E_{2n+1}^+$ to $D_{2n+3}^+$ lies entirely below the real axis in the complex plane, so the distance from $H_1^-$ to any of these points is at least $d$. Therefore, $f(H_1^-,x)$ is in non-free space for all points $x$ on the segment $E_{2n+1}^+D_{2n+3}^+$. This implies the claimed lemma.
\end{itemize}
\vspace{-20pt}
\end{proof}

\subsection{NO instances}
\label{sec:3ov_only_if}

In this section, we show that if our input $(\mathcal X, \mathcal Y, \mathcal Z)$ to 3OV is a NO instance, then our constructed instance $(T,m,\ell,d)$ for \problemtwo is a NO instance. 

We do this by taking the contrapositive. We show that if our constructed instance $(T,m,\ell,d)$ is a YES instance for \problemtwo, then our input $(\mathcal X, \mathcal Y, \mathcal Z)$ is a YES instance for 3OV.

Since $(T,m,\ell,d)$ is a YES instance, there exists a pair of vertical lines $l_s$ and $l_t$ so that the subtrajectory from $s$ to $t$ has length at least $\ell$, there are $m-1 = 2nW+1$ monotone paths from $l_s$ to $l_t$, no two of these monotone paths overlap in $y$-coordinate, and no monotone path overlaps in $y$-coordinate with the $y$-interval from $s$ to $t$.

The remainder of this section consists of five steps. The first step is to narrow down the starting position $s$ of the reference subtrajectory. The second step is to show that there are $2W+1$ monotone paths from $l_s$ to $l_t$ between the $y$-coordinates of $H_{k-1,1}$ and $H_{k,1}$, for some $1 \leq k \leq n$. The third step is to show that $s$ is between $G^+_{2i-1,1}$ and $G^+_{2i+1,1}$, for some $1 \leq i \leq n$. The fourth step is to show that $s$ is between $J^+_{2i-1,2j-1}$ and $J_{2i-1,2j+1}$ for some $1 \leq j \leq n$. The fifth step is to show that, for the integers $i$, $j$ and $k$ in our second, third and fourth steps, that $X_i$, $Y_j$, $Z_k$ are orthogonal. Putting our five steps together shows that our input $(\mathcal X, \mathcal Y, \mathcal Z)$ is a YES instance, as required.

As mentioned above, our first step is to narrow down the position of the starting point $s$. We begin with some useful notation, and then prove that $s$ is between $H_{n,1}$ and $G_{2n+1,1}^+$ in Lemma~\ref{lem:prec_h_s_g}.

\begin{definition}
We say $x \prec y$ if $x$ precedes $y$ in the trajectory $T$.
\end{definition}

\begin{lemma}
\label{lem:prec_h_s_g}
$H_{n,1} \prec s \prec G_{2n+1,1}^+$
\end{lemma}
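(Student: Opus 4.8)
The plan is to use the length lower bounds from Section~\ref{sec:3ov_subtrajectory_property} together with the structure of the free space diagram on the "$H$-columns" from Lemma~\ref{lem:3ov_fs_Hk} to pin down where $l_s$ and $l_t$ can lie. First I would recall the overall layout of $T$: it consists of $n$ blocks of $T_1$-material (each terminating in the pair $H_{k,1} \circ H_{k,2}$), followed by the single point $D^-_0$, followed by $T_2$, which in turn consists of the $A^+/B^+/G^+$ blocks, then the $E^+/F^+/G^+$ blocks, then the $C^+/D^+$ blocks. The reference subtrajectory runs from $s$ to $t$ with length $\geq \ell$, so $s \prec t$ and the subtrajectory from $s$ to $t$ is a contiguous piece of $T$. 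The claim is really two one-sided bounds: $s$ cannot be before $H_{n,1}$, and $s$ cannot be after $G^+_{2n+1,1}$.

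For the lower bound $H_{n,1} \prec s$, I would argue by contradiction using Lemma~\ref{lem:sub_length_h_h}, which says the subtrajectory from $H_{k,1}$ to $H_{k+1,2}$ has length less than $\ell$ for every $1 \leq k \leq n$ (with the convention, from that lemma's proof, that $H_{n,1}$ to "the end" is short). If $s$ were at or before $H_{n,1}$, then $t$ would have to lie far enough along $T$ that the subtrajectory $s \to t$ has length $\geq \ell$; I would combine this with Lemma~\ref{lem:3ov_fs_Hk}, which describes the vertical line through (the $x$-coordinate of) $H_{k,2}$ as alternating between $n$ free regions and $n+1$ non-free regions. The key point is that any monotone path must cross every such $H$-column, and in each $H$-column it is confined to one of the $n$ free strips; since the $m-1 = 2nW+1$ monotone paths are pairwise non-overlapping in $y$-coordinate, and additionally non-overlapping with the $y$-interval of the reference subtrajectory, having $l_s$ to the left of $H_{n,1}$ would force the reference $y$-interval plus the $2nW+1$ path $y$-intervals to simultaneously avoid too many of these forbidden non-free bands across too many $H$-columns — more precisely, I expect a counting argument showing that at most $2W$ (not $2W+1$) non-overlapping paths can squeeze between consecutive $H_{k,\cdot}$ rows, contradicting the pigeonhole-style requirement used throughout the reduction. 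I would make this precise using the cuts $f(H_{k,1}, N_{u,1}) \domseq f(N_{k,3}, N_{u+1,2})$ and $f(H_{k,1}, N_{n,1}) \undermines f(H_{k,1}, D^+_{2n+3})$ from Lemma~\ref{lem:3ov_cuts}\ref{lem:3ov_cuts_hn_nn}--\ref{lem:3ov_cuts_hn_hd}, which (via Lemma~\ref{lem:3ov_cutting_sequence}) block monotone paths starting too far left in an $H$-column from reaching the $D^+$ region on $l_t$.

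For the upper bound $s \prec G^+_{2n+1,1}$, I would again argue by contradiction: if $s$ were at or after $G^+_{2n+1,1}$, then $t$ lies even further into the tail of $T_2$ (the $E^+/F^+$ block at index $2n+1$ and the $C^+/D^+$ blocks), and I would invoke the cutting sequences $f(H_1^-, H_{n,1}) \domseq f(C^+_3, G^+_{2n+1,2})$ and $f(H_1^-, E^+_{2n+1}) \undermines f(H_1^-, D^+_{2n+3})$ from Lemma~\ref{lem:3ov_cuts}\ref{lem:3ov_cuts_hh_cg}--\ref{lem:3ov_cuts_he_hd}, together with Lemma~\ref{lem:3ov_cutting_sequence}, to show that too many of the $2nW+1$ required monotone paths would be cut off — i.e. the free space available to the right of $G^+_{2n+1,1}$ on $l_s$ simply cannot source $2nW+1$ pairwise-disjoint monotone paths reaching $l_t$. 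The length constraint $\geq \ell$ is what forces $t$ into this cramped far-right region once $s$ is that far right, and the cuts then give the contradiction.

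The main obstacle I expect is the counting/pigeonhole bookkeeping in the lower-bound direction: one has to track, across all $n$ of the $H$-columns simultaneously, exactly how many of the $2nW+1$ monotone paths plus the reference $y$-interval can coexist without overlap, and show this is impossible when $s$ sits left of $H_{n,1}$. This is where the interplay between Lemma~\ref{lem:3ov_fs_Hk} (the alternating free/non-free structure) and the length bound Lemma~\ref{lem:sub_length_h_h} has to be combined carefully; getting the off-by-one right (why $2W+1$ rather than $2W$ paths per block is the relevant threshold) is the delicate part. The upper-bound direction should be more routine once the right cutting sequences from Lemma~\ref{lem:3ov_cuts} are assembled.
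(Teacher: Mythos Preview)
Your proposal gets the lower-bound direction roughly right in spirit but misframes the argument, and it takes an unnecessarily hard route for the upper bound.

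For the upper bound $s \prec G_{2n+1,1}^+$, the paper does not use any cutting sequences or path-counting at all. It is a one-line length argument: by construction the subtrajectory from $A_{2n-1,1}^+$ to the final vertex $D_{2n+3}^+$ has length $\ell+\delta$, so the tail from $G_{2n+1,1}^+$ to $D_{2n+3}^+$ has length strictly less than $\ell$. Since $t$ cannot go past the last vertex, any $s \succeq G_{2n+1,1}^+$ would make the reference subtrajectory too short. Your proposed use of Lemma~\ref{lem:3ov_cuts}\ref{lem:3ov_cuts_hh_cg}--\ref{lem:3ov_cuts_he_hd} here is both unnecessary and awkward, since those cuts concern $y$-coordinates in the $H_{n,1}$--$D_{2n+3}^+$ range and are designed for a different lemma (Lemma~\ref{lem:3ov_no_one_up_top}); it is not clear they would even assemble into the contradiction you want when $l_s$ sits that far right.

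For the lower bound $H_{n,1} \prec s$, you have identified the correct cutting-sequence ingredients from Lemma~\ref{lem:3ov_cuts}\ref{lem:3ov_cuts_hn_nn}--\ref{lem:3ov_cuts_hn_hd}, but your description of the argument as ``at most $2W$ paths per block'' and ``tracking across all $n$ of the $H$-columns simultaneously'' is not what is needed. The paper's argument is more direct: if $s \preceq H_{k,1}$ for some $k$, then Lemma~\ref{lem:sub_length_h_h} forces $H_{k,2} \prec t$, so the single column through $H_{k,2}$ lies between $l_s$ and $l_t$. One then runs an induction on the first $n+1$ monotone paths (not all $2nW+1$), using the base cut \ref{lem:3ov_cuts_ng_nn} and the step cut \ref{lem:3ov_cuts_hn_nn} to show the $i$-th path must end with $N_{i,2} \preceq z_i$; the $(n+1)$-th path is then killed by \ref{lem:3ov_cuts_hn_hd}. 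There is no per-block $2W$ bound and no simultaneous bookkeeping across multiple $H$-columns --- the argument lives entirely in the one column $H_{k,2}$ and its $n$ free strips. You are also missing the base-case cut \ref{lem:3ov_cuts_ng_nn}, without which the induction does not start.
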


\begin{proof}
We first show $s \prec G_{2n+1,1}^+$. By our construction, the subtrajectory from $A_{2n-1,1}^+$ to $D_{2n+3}^+$ has length $\ell + \delta$. Hence, the subtrajectory from $G_{2n+1,1}^+$ to $D_{2n+3}^+$ has length less than $\ell$. Since $D_{2n+3}^+$ is the final vertex of $T$, we have $t \preceq D_{2n+3}^+$. Since the subtrajectory from $s$ to $t$ has length at least $\ell$, we have that $s \prec G_{2n+1,1}^+$.

Next, we show $H_{n,1} \prec s$. Suppose for the sake of contradiction that $s \preceq H_{n,1}$. Without loss of generality, let $H_{k-1,1} \preceq s \preceq H_{k,1}$ for some $1 \leq k \leq n$. By Lemma~\ref{lem:sub_length_h_h}, we must have $s \preceq H_k \prec H_{k,2} \prec t$. Let the first $n+1$ monotone paths from $l_s$ to $l_t$ be $f(s,y_i) \to f(t,z_i)$ for $1 \leq i \leq n+1$. 

We will prove by induction that $N_{i,2} \preceq z_i$ for all $1 \leq i \leq n$. In the base case, suppose for the sake of contradiction that $z_1 \prec N_{1,2}$. Then $f(s,y_1) \dominates f(N_{k,3},G_{1,1}) \undermines f(N_{k,3},N_{1,2}) \dominates f(t,z_1)$ by Lemma~\ref{lem:3ov_cuts}\emph{\ref{lem:3ov_cuts_ng_nn}}. There is a cutting sequence from $f(s,y_1)$ to $f(t,z_1)$, which yields a contradiction.

Now we prove the inductive case. Assume the inductive hypothesis that $N_{i,2} \preceq z_i$. Then $N_{i,2} \preceq y_{i+1}$, since our monotone paths do not overlap in $y$-coordinate. Suppose for the sake of contradiction that $z_{i+1} \prec N_{i+1,2}$. Then $f(s,y_{i+1}) \dominates f(H_{k,1}, N_{i,1}) \domseq f(N_{k,3},N_{i+1},2) \dominates f(t,z_{i+1})$ by Lemma~\ref{lem:3ov_cuts}\emph{\ref{lem:3ov_cuts_hn_nn}}. There is a cutting sequence from $f(s,y_{i+1})$ to $f(t,z_{i+1})$, which is a contradiction. This completes our induction.

Setting $i=n$ in our induction, we get that $N_{n,1} \preceq z_n$. Therefore, $N_{n,1} \preceq y_{n+1}$. Now, by Lemma~\ref{lem:3ov_cuts}\emph{\ref{lem:3ov_cuts_hn_hd}} we have the cutting sequence $f(s,y_{n+1}) \dominates f(H_{k,1}, N_{n,1}) \undermines f(H_{k,1},D_{2n+3}^+) \dominates f(t,z_{n+1})$, yielding the final contradiction. Therefore, our initial assumption that $s \preceq H_{n,1}$ cannot hold, and we are done.
\end{proof}

Next, we show that there is at most no monotone paths in the region of free space between $y$-coordinates of $H_{n,1}$ and $D_{2n+3}^+$ that do not intersect the $y$-interval from $s$ to $t$. This fact will be useful for the second step in this section, which was to show that there are $2W+1$ monotone paths associated with the region between $y$-coordinates of $H_{k-1,1}$ and $H_{k,1}$, for some $1 \leq k \leq n$.

\begin{lemma}
\label{lem:3ov_no_one_up_top}
In the free space with bottom left corner $f(s, H_{n,1})$ and top right corner $f(t, D_{2n+3}^+)$, any monotone path from $l_s$ to $l_t$ overlaps in $y$-coordinate with the $y$-interval from $s$ to $t$ in more than one point.
\end{lemma}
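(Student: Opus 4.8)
The plan is to argue by contradiction. Suppose $P$ is a monotone path from $f(s,y_P)$ on $l_s$ to $f(t,z_P)$ on $l_t$ lying inside the stated region, so $H_{n,1}\preceq y_P\preceq z_P\preceq D_{2n+3}^+$, and suppose its $y$-extent $[y_P,z_P]$ meets the $y$-interval $[s,t]$ in at most one point. Because $s\prec t$, this leaves exactly three possibilities: (i) $z_P\preceq s$; (ii) $y_P\succeq t$; (iii) $P$ is horizontal ($y_P=z_P$) at a height in $[s,t]$. In each case I would either produce a cutting sequence from $f(s,y_P)$ to $f(t,z_P)$ — which rules out $P$ by Lemma~\ref{lem:3ov_cutting_sequence} — or directly exhibit non-free space that $P$ must cross. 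The only inputs are the localisation $H_{n,1}\prec s\prec G_{2n+1,1}^+$ from Lemma~\ref{lem:prec_h_s_g}, the catalogue of cuts in Lemma~\ref{lem:3ov_cuts}, and the length estimates of Section~\ref{sec:3ov_subtrajectory_property}.

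For case (ii), $P$ lies in the slab $x\in[s,t]$, $y\in[t,D_{2n+3}^+]$. I would first show $E_{2n+1}^+\preceq t$ using a length argument in the spirit of Lemma~\ref{lem:sub_length_h_h} (the subtrajectory from $s$ to $t$ has length $\geq\ell$, while $s\prec G_{2n+1,1}^+$); then $s\prec H_1^-\prec t$, so $P$ crosses the column through $H_1^-$ at a $y$-coordinate in $[y_P,z_P]\subseteq[E_{2n+1}^+,D_{2n+3}^+]$, which is entirely non-free by Lemma~\ref{lem:3ov_cuts}\ref{lem:3ov_cuts_he_hd} — equivalently $f(s,y_P)\dominates f(H_1^-,E_{2n+1}^+)\undermines f(H_1^-,D_{2n+3}^+)\dominates f(t,z_P)$ is a cutting sequence, with Lemma~\ref{lem:3ov_cuts}\ref{lem:3ov_cuts_hh_cg} covering the variant where $t$ lies among $C_3^+,\dots,D_{2n+3}^+$. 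Case (i) I would treat symmetrically at the column through $t$ (or through $G_{1,1}^+$, invoking Lemma~\ref{lem:3ov_cuts}\ref{lem:3ov_cuts_gh_gg}): since $z_P$ lies in $[H_{n,1},s]$ and $t$ sits past $H_{n,1}$ but is far from every $H_{k,1}$ vertex, the column through $t$ is non-free on the $y$-range from $H_{n,1}$ up to just before $D_0^-$, which together with $z_P\preceq s$ (and a short split on whether $s\prec D_0^-$) gives the cut. For case (iii), a horizontal path at height $y_P\in[s,t]$ would force the entire length-$\geq\ell$ subtrajectory from $s$ to $t$ into the disc of radius $d$ about $y_P$; but, since $H_{n,1}\prec s$, that subtrajectory contains a point near $H_{n,2}$ or near $H_1^-$ together with a point near some $D_i^+$, which are more than $2d$ apart (by the radius estimates underlying Section~\ref{sec:3ov_antipodal_property}) — so no such $y_P$ exists.

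The step I expect to be the main obstacle is the bookkeeping in cases (i) and (ii): the position of $t$ along $T$ is only loosely constrained, so one must either pin it down with a length estimate or split into a few sub-cases according to which of $H_1^-$, $E_{2n+1}^+$, $C_3^+$, $D_{2n+3}^+$ the point $t$ lies beyond, and then in each sub-case verify that the undermining segment of the chosen cut in Lemma~\ref{lem:3ov_cuts} really does span both $y_P$ and $z_P$ and sits at an $x$-coordinate strictly between $s$ and $t$, so that a genuine cutting sequence (as defined in Section~\ref{sec:3ov_cuts}) results. Case (iii) should be routine once the antipodal/radius bounds are in hand.
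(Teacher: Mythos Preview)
Your overall strategy (use cutting sequences from Lemma~\ref{lem:3ov_cuts} to rule out paths) is right, and your case~(ii) is essentially the paper's second step: with $y_P\succeq t\succ E_{2n+1}^+$ you invoke cut~\emph{\ref{lem:3ov_cuts_he_hd}} exactly as the paper does. But the trichotomy introduces two genuine gaps.

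\textbf{Case (iii).} Your distance claim fails. Once $s\succ H_{n,2}$ the reference subtrajectory lies entirely in $T_2$, whose vertices (apart from $H_1^-$) sit on rays at angles in $[\pi,2\pi]$ at distance $\approx r'$ from the origin; their pairwise distances are at most $\approx 2r'$, while $2d\approx 2(r+r')>2r'$. So you cannot exhibit two points of the subtrajectory more than $2d$ apart, and the horizontal-path case is not disposed of.

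\textbf{Case (i).} The argument here is underspecified. Cut~\emph{\ref{lem:3ov_cuts_gh_gg}} lives in the column $G_{1,1}^+$, but $s$ can lie anywhere in $(H_{n,1},G_{2n+1,1}^+)$ and in particular past $G_{1,1}^+$; and the ad~hoc claim that ``the column through $t$ is non-free on $[H_{n,1},D_0^-]$'' is not among the established cuts and would need its own proof. The right tool is cut~\emph{\ref{lem:3ov_cuts_hh_cg}}: it gives $f(H_1^-,H_{n,1})\domseq f(C_3^+,G_{2n+1,2}^+)$, and since $z_P\preceq s\prec G_{2n+1,1}^+\prec G_{2n+1,2}^+$ this yields the cutting sequence directly.

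The paper avoids the trichotomy altogether. It proves, for \emph{every} monotone path in the region, that $y_2\succeq G_{2n+1,2}^+$ (else cut~\emph{\ref{lem:3ov_cuts_hh_cg}} fires) and $y_1\preceq E_{2n+1}^+$ (else cut~\emph{\ref{lem:3ov_cuts_he_hd}} fires). Combined with $H_{n,1}\prec s\prec G_{2n+1,1}^+\prec E_{2n+1}^+$ and $E_{2n+1}^+\prec t$, the intervals $[y_1,y_2]$ and $[s,t]$ both strictly contain $E_{2n+1}^+$, so they overlap in more than one point. This two-line direct bound subsumes your cases (i)--(iii) simultaneously and sidesteps the distance issue in (iii) entirely.
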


\begin{proof}
Let $f(s,y_1) \to f(t,y_2)$ be a monotone path in the free space with bottom left corner $f(s, H_{n,1})$ and top right corner $f(t, D_{2n+3}^+)$. Therefore, $H_{n,1} \preceq y_1 \preceq y_2 \preceq D_{2n+3}^+$. Suppose $y_2 \prec G^+_{2n+1,2}$. Then $f(s,y_1) \dominates f(H_1^-,H_{n,1}) \domseq f(C_3^+,G^+_{2n+1,2}) \dominates f(t,y_2)$ by Lemma~\ref{lem:3ov_cuts}\emph{\ref{lem:3ov_cuts_hh_cg}}. There is a cutting sequence from $f(s,y_1)$ to $f(t,y_2)$, which is a contradiction. Hence, $G^+_{2n+1,2} \preceq y_2$. Suppose, $E_{2n+1}^+ \prec y_1$. By Lemma~\ref{lem:3ov_cuts}\emph{\ref{lem:3ov_cuts_he_hd}},
$f(s,y_1) \dominates f(H_1^-,E_{2n+1}^+) \undermines f(H_1^-,D_{2n+3}^+) \dominates f(t,y_2)$, which contradicts the fact that $f(s,y_1) \to f(t,y_2)$ is a monotone path. Hence, $y_1 \preceq E_{2n+1}^+$. Therefore $H_{n,1} \preceq y_1 \preceq E_{2n+1}^+ \prec G^+_{2n+1,2} \preceq y_2 \preceq D_{2n+3}^+$. But we also have $H_{n,1} \preceq s \prec G_{2n+1,1}^+$, so $E_{2n+1}^+ \prec t \prec D_{2n+3}^+$. Hence, the $y$-interval of the monotone path $f(s,y_1) \to f(t,y_2)$ intersects the $y$-interval from $s$ to $t$ in more than one point. 
\end{proof}

Now we are ready to prove the second step of this section.

\begin{lemma}
\label{lem:3ov_no_2d+1}
Let~$H_{0,1} = G_{1,1}$. There exists $1 \leq k \leq n$ so that, in the free space with bottom left corner $f(s, H_{k-1,1})$ and top right corner $f(t, H_{k,1})$, there are at least $2W+1$ monotone paths with non-overlapping $y$-coordinates. 
\end{lemma}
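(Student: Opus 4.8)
The plan is to prove Lemma~\ref{lem:3ov_no_2d+1} by showing that each of the $2nW+1$ monotone paths from $l_s$ to $l_t$ is trapped inside a single horizontal band $[H_{k-1,1},H_{k,1}]$ in its $y$-coordinate, and then invoking the pigeonhole principle over the $n$ bands $k=1,\dots,n$ (with $H_{0,1}=G_{1,1}$).

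First I would show that, for every $1\le k\le n$, the horizontal line with $y$-coordinate $H_{k,1}$ is entirely non-free at every $x$-coordinate strictly between $l_s$ and $l_t$. Each occurrence $H_{k,1}$ is the single plane point $H_1=4r'\cis((3W+4)\phi)$, at distance $4r'$ from the origin. By Lemma~\ref{lem:prec_h_s_g} we have $H_{n,1}\prec s\prec G^+_{2n+1,1}$ and $t\preceq D^+_{2n+3}$, so every point of $T$ with $x$-coordinate in $(s,t)$ lies in $T_2$, in $D^-_0$, or on one of the connecting segments $H_{n,1}H_{n,2}$, $H_{n,2}D^-_0$. All of $T_2$ and $D^-_0$ lie in the ball of radius $r'$ about the origin; on $H_{n,1}H_{n,2}$ the distance to $H_1$ grows monotonically from $0$ to $\|H_2-H_1\|>4r'$; and a short estimate shows the segment $H_{n,2}D^-_0$ also stays at distance $>d$ from $H_1$. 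Hence $f(x,H_{k,1})$ is non-free for every such $x$ with $\|x-H_1\|>d$. The one delicate case is $s$ lying on $H_{n,1}H_{n,2}$ within distance $d$ of $H_1$; I would dispatch it by observing that then, by reasoning analogous to Lemma~\ref{lem:3ov_fs_Hk}, $l_s$ meets the free space only in $n$ pairwise disjoint blobs clustered around the points $H_{1,1},\dots,H_{n,1}$, which cannot serve as the left endpoints of $2nW+1>n$ pairwise non-overlapping monotone paths reaching $l_t$ — a contradiction — so in fact $\|s-H_1\|>d$ and the claimed non-freeness holds.

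Granting this, no monotone path from $l_s$ to $l_t$ contains any point with $y$-coordinate $H_{k,1}$, so no such path straddles the line $y=H_{k,1}$ for $1\le k\le n$. Since $G_{1,1}$ is the very first vertex of $T$, the line $y=G_{1,1}=H_{0,1}$ is the bottom edge of $F_d(T,T)$; and by Lemma~\ref{lem:3ov_no_one_up_top} no monotone path whose $y$-coordinates lie entirely in $[H_{n,1},D^+_{2n+3}]$ can meet the $y$-interval from $s$ to $t$ in at most one point. Combining these facts, each of the $2nW+1$ monotone paths has all of its $y$-coordinates inside $[H_{k-1,1},H_{k,1}]$ for exactly one $k\in\{1,\dots,n\}$. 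By pigeonhole some $k$ captures at least $\lceil(2nW+1)/n\rceil=2W+1$ of them; being a subfamily of a pairwise non-overlapping family they are themselves pairwise non-overlapping in $y$-coordinate, and they all lie in the free space with bottom left corner $f(s,H_{k-1,1})$ and top right corner $f(t,H_{k,1})$, which is precisely the assertion of the lemma.

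The main obstacle is the first step: verifying that the lines $y=H_{k,1}$ are non-free throughout the strip between $l_s$ and $l_t$, and in particular ruling out the degenerate configuration in which $s$ hugs $H_{n,1}$. Once that is settled, the remainder is a routine confinement-plus-pigeonhole argument.
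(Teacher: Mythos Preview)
Your overall strategy coincides with the paper's: invoke Lemma~\ref{lem:3ov_no_one_up_top} to exclude paths whose $y$-range sits in $[H_{n,1},D^+_{2n+3}]$, then pigeonhole the $2nW+1$ paths over the $n$ bands $[H_{k-1,1},H_{k,1}]$. The paper's three-line proof in fact leaves the confinement of each path to a single band entirely implicit; you correctly identify that this needs the horizontal lines $y=H_{k,1}$ to be non-free for every $x$ with $s\preceq x\preceq t$, and you supply that argument.

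The genuine gap is your handling of the delicate case. You observe that if $s$ lies on $H_{n,1}H_{n,2}$ with $\|s-H_1\|\le d$ then $l_s$ meets the free space in only $n$ blobs, and you conclude that these ``cannot serve as the left endpoints of $2nW+1>n$ pairwise non-overlapping monotone paths.'' The blob count is right, but the inference is not: the number of free components on $l_s$ does not by itself bound the number of non-overlapping monotone paths from $l_s$ to $l_t$, since a single free interval on $l_s$ can source arbitrarily many paths with disjoint $y$-ranges (those ranges are constrained by the free space on $l_t$ and in the interior of the strip, neither of which you analyse). A cleaner way to dispose of this case is to note that consecutive points among $H_1,N_1,N_2,N_3,H_2$ are more than $d$ apart (this is used in the proof of Lemma~\ref{lem:3ov_cuts}\ref{lem:3ov_cuts_ng_nn}), so $\|s-H_1\|\le d$ forces $s\prec N_{n,1}\prec N_{n,3}$; the cutting-sequence induction inside the proof of Lemma~\ref{lem:prec_h_s_g} can then be adapted (anchoring the final undermining step at $x$-coordinate $N_{n,3}$ instead of $H_{k,1}$) to reach the same contradiction from only $n+1$ paths.
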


\begin{proof}
There are $m-1 = 2nW+1$ monotone in the free space with bottom left corner $f(s,H_{0,1})$ and top right corner $f(t,D_{2n+3}^+)$. By Lemma~\ref{lem:3ov_no_one_up_top} there are no monotone paths in the free space with bottom left corner $f(s, H_{n,1})$ and top right corner $f(t, D_{2n+3}^+)$ that do not intersect the $y$-interval from $s$ to $t$. Therefore, all $2nW+1$ monotone paths are in the free space with bottom left corner $f(s,H_{0,1})$ and top right corner $f(t,H_{n,1})$. By pigeonhole principle, there exists $1 \leq k \leq n$ so that there are $2W+1$ monotone paths in the free space with bottom left corner $f(s, H_{k-1,1})$ and top right corner $f(t, H_{k,1})$.
\end{proof}

For the remainder of this section, assume that $1 \leq k \leq n$ is an integer so that there are at least $2W+1$ monotone paths in the free space with bottom left corner $f(s,H_{k-1,1})$ and top right corner $f(t,H_{k,1})$. Using these paths we can narrow down the position of $s$ further.

\begin{lemma}
\label{lem:no_a_s_a}
$G_{1,1}^+ \preceq s \prec G_{2n+1,1}$
\end{lemma}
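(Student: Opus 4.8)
Our goal is $G^+_{1,1}\preceq s$; together with the upper bound $s\prec G^+_{2n+1,1}$ already established in Lemma~\ref{lem:prec_h_s_g}, this yields the statement (I read the ``$\prec G_{2n+1,1}$'' there as $\prec G^+_{2n+1,1}$). Throughout I would fix an integer $1\le k\le n$ as supplied by Lemma~\ref{lem:3ov_no_2d+1}, giving $2W+1$ pairwise non-overlapping monotone paths $f(s,y_i)\to f(t,z_i)$, $1\le i\le 2W+1$, all with $y$-coordinates inside the subtrajectory from $H_{k-1,1}$ to $H_{k,1}$ (with $H_{0,1}:=G_{1,1}$). Ordering them so that $z_1\preceq\dots\preceq z_{2W+1}$, non-overlap gives the chain $H_{k-1,1}\preceq y_1\preceq z_1\preceq y_2\preceq\dots\preceq z_{2W+1}\preceq H_{k,1}$.

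Suppose for contradiction $s\prec G^+_{1,1}$. Since $H_{n,1}\prec s$ (Lemma~\ref{lem:prec_h_s_g}), $s$ lies on one of $H_{n,1}H_{n,2}$, $H_{n,2}D^-_0$, $D^-_0G^-_{1,1}$, $G^-_{1,1}G^+_{1,1}$. I would first dispose of the two degenerate cases: on $H_{n,1}H_{n,2}$ and $H_{n,2}D^-_0$ every vertex of $T$ in the band $[H_{k-1,1},H_{k,1}]$ other than the $H$-vertices is at distance more than $d$ from $s$ (the estimate is the one in Lemma~\ref{lem:3ov_fs_Hk}), so $l_s$ meets free space in that band only in the $O(1)$ intervals around $H_{k-1,1},H_{k-1,2},H_{k,1}$, which — by an argument like Lemma~\ref{lem:3ov_cuts}\ref{lem:3ov_cuts_gh_gg} — cannot be the starting column of $2W+1$ non-overlapping monotone paths confined to that band. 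Hence $s$ lies on $D^-_0G^-_{1,1}$ or $G^-_{1,1}G^+_{1,1}$, so in particular $s\prec G^+_{1,1}\prec C^+_1\prec t$, and (because $s$ is this early, the length condition on the reference subtrajectory pushes $t$ past $G^+_{2n+1,2}$, by the same subtrajectory-length estimates as in Lemma~\ref{lem:prec_h_s_g}) the vertical lines $l_{G^+_{1,1}}$, $l_{C^+_1}$ and $l_{G^+_{2n+1,2}}$ all lie strictly between $l_s$ and $l_t$.

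Once these lines are between $l_s$ and $l_t$, the cutting sequences of Lemma~\ref{lem:3ov_cuts}\ref{lem:3ov_cuts_gh_gg}, \ref{lem:3ov_cuts_gg_cc_1}, \ref{lem:3ov_cuts_cc_gg_1}, \ref{lem:3ov_cuts_ee_gg_1} and \ref{lem:3ov_cuts_aa_dd_1} are all activated, and via Lemma~\ref{lem:3ov_cutting_sequence} they say that no monotone path from $l_s$ to $l_t$ can have \emph{both} endpoints inside a single gadget region of block~$k$: if $y_i$ is at or above the bottom of $\Diamond(G^+_{1,1},G_{k,h})$ then $z_i$ is strictly above the bottom of $\Diamond(C^+_1,C_{k,h,1})$, and if $y_i$ is at or above the bottom of $\Diamond(C^+_1,C_{k,h,1})$ then $z_i$ is strictly above the bottom of $\Diamond(G^+_{1,1},G_{k,h+1})$. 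From here I would argue exactly as in the proof of Lemma~\ref{lem:prec_h_s_g}, pushing the paths up ``checkpoint by checkpoint'': by induction on $i$, $z_i$ lies at or above the $i$-th of these checkpoints — the base case anchoring $z_1$ via Lemma~\ref{lem:3ov_cuts}\ref{lem:3ov_cuts_gh_gg}, and the inductive step using $y_i\succeq z_{i-1}\succeq$ the $(i-1)$-st checkpoint together with Lemma~\ref{lem:3ov_cuts}\ref{lem:3ov_cuts_gg_cc_1}/\ref{lem:3ov_cuts_cc_gg_1}. Block~$k$ carries only $2W$ such checkpoints strictly below $H_{k,1}$, so after $2W$ steps one has $y_{2W+1}\succeq z_{2W}\succeq$ the last checkpoint, and one further cut forces $z_{2W+1}\succ H_{k,1}$, contradicting $z_{2W+1}\preceq H_{k,1}$. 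Hence $s\succeq G^+_{1,1}$.

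The step I expect to be the main obstacle is the checkpoint bookkeeping in the last paragraph: pinning down exactly which cuts of Lemma~\ref{lem:3ov_cuts} are simultaneously activated (this depends delicately on how early $s$ is and on exactly where the length constraint puts $t$), identifying the correct family of $2W$ checkpoints inside block~$k$, and checking that the induction genuinely overruns $H_{k,1}$ after $2W+1$ steps rather than merely filling up block~$k$. All of this is a careful audit of the $|PX|\lessgtr d$ inequalities implicit in Lemma~\ref{lem:3ov_cuts}, combined with the non-overlap hypothesis on the $y$-intervals $[y_i,z_i]$.
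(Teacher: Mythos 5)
Your core argument is the paper's: assume $H_{n,1}\prec s\prec G^+_{1,1}$, take the $2W+1$ non-overlapping paths $f(s,y_i)\to f(t,z_i)$ in the band from $H_{k-1,1}$ to $H_{k,1}$ supplied by Lemma~\ref{lem:3ov_no_2d+1}, and combine the cutting sequences of Lemma~\ref{lem:3ov_cuts}\emph{\ref{lem:3ov_cuts_gh_gg}}, \emph{\ref{lem:3ov_cuts_gg_cc_1}}, \emph{\ref{lem:3ov_cuts_cc_gg_1}} with Lemma~\ref{lem:3ov_cutting_sequence} to push the endpoints $z_i$ upward checkpoint by checkpoint until the last path overflows $H_{k,1}$. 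Two deviations are worth flagging. First, your preliminary case analysis on which segment contains $s$ is not needed, and your disposal of the cases $s\in H_{n,1}H_{n,2}$ and $s\in H_{n,2}D^-_0$ does not hold up as stated: that the free space on $l_s$ inside the band is confined to $O(1)$ intervals near the $H$-vertices does not by itself exclude $2W+1$ non-overlapping paths, since many paths with disjoint $y$-extents can start inside a single free interval. The paper handles all of $H_{n,1}\prec s\prec G^+_{1,1}$ uniformly: the only property of $s$ its cutting sequences use is that $l_s$ lies strictly left of the column of $G^+_{1,1}$ (and, tacitly, that $t$ lies right of the columns of $G^+_{2n+1,2}$ and $C^+_3$ --- a length fact the paper glosses over and you at least flag explicitly), so your detour can simply be dropped. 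Second, your checkpoint scheme is finer than necessary: you alternate $G_{k,h}$- and $C_{k,h,1}$-rows ($2W$ checkpoints, consuming all $2W+1$ paths), whereas the paper chains \emph{\ref{lem:3ov_cuts_gg_cc_1}} and \emph{\ref{lem:3ov_cuts_cc_gg_1}} into a single inductive step, proving $G_{k,i+1}\prec z_i$ only for $1\le i\le W$ and then cutting the $(W+1)$-st path with the dominating sequence through $f(G^+_{1,1},G_{k,W+1})\domseq f(C^+_3,H_{k,1})$. The coarser chaining also avoids the strict-versus-non-strict bookkeeping your diamond-bottom checkpoints create (ending each cut at a diamond top inside a vertex row yields strict inequalities for free), which is exactly the ``audit'' you anticipate. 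With the case split removed and the cuts chained, your argument coincides with the paper's proof.
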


\begin{proof}
Suppose for the sake of contradiction that $H_{n,1} \prec s \prec G_{1,1}^+$. Let the monotone paths in the free space with bottom left corner $f(s,H_{k-1,1})$ and top right corner $f(t,H_{k,1})$ be $f(s,y_i) \to f(t,z_i)$ for $1 \leq i \leq 2W+1$. 

We will prove that $G_{k,i+1} \prec z_i$ for $1 \leq i \leq W$ by induction. Suppose for the sake of contradiction that $z_1 \preceq G_{k,2}$. Then by Lemmas~\ref{lem:3ov_cuts_gh_gg}, \ref{lem:3ov_cuts_gg_cc_1} and~\ref{lem:3ov_cuts_cc_gg_1} we have the cutting sequence $f(s,y_1) \dominates f(G_{1,1}^+,H_{k-1,1}) \undermines f(G_{1,1}^+,G_{k,h}) \domseq \Diamond_B(C_1^+,C_{k,h,1} \domseq \Diamond_T(G_{2n+1,2}^+,G_{k,h+1}) \dominates f(t,z_1)$, contradicting $f(s,y_1) \to f(t,z_1)$. 

Now we prove the inductive case. Assume the inductive hypothesis $G_{k,i+1} \prec z_i$. Then $G_{k,i+1} \prec y_{i+1}$. Suppose for the sake of contradiction that $z_{i+1} \preceq G_{k,i+2}$. Then by Lemma~\ref{lem:3ov_cuts}\emph{\ref{lem:3ov_cuts_gg_cc_1}} and~\emph{\ref{lem:3ov_cuts_cc_gg_1}} we have the cutting sequence $f(s,y_{i+1}) \dominates \Diamond_B(G_{1,1}^+, G_{k,i+1}) \domseq \Diamond_B(C_1^+,C_{k,h,1} \domseq \Diamond_T(G_{2n+1,2}^+,G_{k,i+2}) \dominates f(t,z_{i+1})$, contradicting the fact that $f(s,y_{i+1}) \to f(t,z_{i+1})$. This completes the induction.

Setting $i=W$ yields $G_{k,W+1} \prec z_W$. So $G_{k,W+1} \prec y_{W+1}$. But now, by Lemma, $f(s,y_{W+1}) \dominates f(G_{1,1}^+,G_{k,W+1}) \domseq f(C_3^+,H_{k,1}) \dominates f(t,z_{W+1})$, contradicting $f(s,y_{W+1}) \to f(t,z_{W+1})$. This yields a contradiction on our initial assumption, so $G_{1,1}^+ \preceq s$ as required.
\end{proof}

We can now assume without loss of generality that $G_{2i-1,1}^+ \prec s \preceq G_{2i+1,1}^+$ for some $1 \leq i \leq n$. Recall that this was the third step of this section. Next, we prove a similar result towards the fourth step.

\begin{lemma}
$J_{2i-1,1}^+ \prec s \preceq J_{2i-1,2n+1}^+$
\end{lemma}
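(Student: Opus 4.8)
The plan is to prove the two inequalities $J^+_{2i-1,1}\prec s$ and $s\preceq J^+_{2i-1,2n+1}$ separately, each by contradiction, following the template of Lemmas~\ref{lem:prec_h_s_g} and~\ref{lem:no_a_s_a}. In both cases I would combine what we already have, namely $G^+_{2i-1,1}\prec s\preceq G^+_{2i+1,1}$, with the $2W+1$ monotone paths with pairwise non-overlapping $y$-coordinates guaranteed by Lemma~\ref{lem:3ov_no_2d+1} inside the strip between the $y$-coordinates of $H_{k-1,1}$ and $H_{k,1}$. Write these paths as $f(s,y_p)\to f(t,z_p)$, $1\le p\le 2W+1$, ordered so that $y_1\preceq z_1\preceq y_2\preceq\dots\preceq z_{2W+1}$; their $y$-coordinates all lie in ``block $k$'' of $T_1$, i.e.\ the portion of $T_1$ whose rows in order are $G_{k,1},A_{k,1,\bullet},B_{k,1},C_{k,1,1},D_{k,1,1},C_{k,1,2},D_{k,1,2},E_{k,1},F_{k,1,\bullet},G_{k,2},\dots,G_{k,W+1},A_{k,W+1,\bullet},\dots,C_{k,W+1,3},H_{k,1}$.

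For the upper bound I would assume $J^+_{2i-1,2n+1}\prec s$. Together with $s\preceq G^+_{2i+1,1}$ this puts $l_s$ past the last $J$-point of block $2i-1$ (in particular past $A^+_{2i-1,1}$), and I would split into subcases according to whether $l_s$ lies in the tail of block $2i-1$ or in block $2i$ (in the latter case $l_s$ can fall on one of the vectorised $A^+_{2i,\bullet}$ or $B^+_{2i}$ columns, where $\mathcal X$ enters). In each subcase, moving $s$ past $J^+_{2i-1,2n+1}$ causes an extra diamond in a low row of block $k$ — of the form $\Diamond(A^+_{2i,1},A_{k,h,1})$ or $\Diamond(B^+_{2i},B_{k,h})$, whichever column $l_s$ pierces — to meet $l_s$, so that a single dominance step from $f(s,y_1)$ into that diamond can be prepended to the dominating sequences of Lemma~\ref{lem:3ov_cuts}\ref{lem:3ov_cuts_aa_cc_1} and, one $h$ at a time, to the concatenations of \ref{lem:3ov_cuts_gg_cc_1}--\ref{lem:3ov_cuts_cc_gg_1}. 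Running the same induction on $p$ as in Lemma~\ref{lem:no_a_s_a}, each $z_p$ (hence $y_{p+1}$, by non-overlap) is forced past $G_{k,p+1}$ for $1\le p\le W$ and then up through the rest of block $k$; the last path must begin above $G_{k,W+1}$ and still reach $H_{k,1}$ on $l_t$, which is impossible because the length constraint on the reference subtrajectory (recall $\ell=\lambda-\delta$; cf.\ Lemma~\ref{lem:sub_length_j_to_l}) together with $t\preceq D^+_{2n+3}$ forces $t$ into an interval from which a cut of the type in Lemma~\ref{lem:3ov_cuts}\ref{lem:3ov_cuts_hh_cg} separates $f(s,y_{2W+1})$ from $f(t,z_{2W+1})$.

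For the lower bound I would assume $s\preceq J^+_{2i-1,1}$, so (with $G^+_{2i-1,1}\prec s$) $l_s$ sits in the short window from $G^+_{2i-1,1}$ up to $J^+_{2i-1,1}$, i.e.\ strictly to the left of the column at which the diamond-boundary point $K_{1,k,h}$ used by the YES-instance paths of Section~\ref{sec:3ov_if} becomes reachable. There $l_s$ is too far left for that diamond to have ``opened'', and already the first path $f(s,y_1)\to f(t,z_1)$ should admit a cut obtained by prepending a dominance step from $f(s,y_1)$ into $\Diamond_B(A^+_{2i-1,2},A_{k,h,2})$ to Lemma~\ref{lem:3ov_cuts}\ref{lem:3ov_cuts_aa_dd_1}, whose endpoint $\Diamond_T(D^+_{2i+2},D_{k,h,2})$ then dominates $f(t,z_1)$ (here Lemma~\ref{lem:no_a_s_a} together with the length bound is used to place $t$ beyond the relevant column); this contradicts $f(s,y_1)\to f(t,z_1)$.

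The step I expect to be the main obstacle is the geometric bookkeeping common to both directions: for each forbidden position of $s$ one must pin down precisely which diamonds of block $k$ meet $l_s$ and on which cell boundaries $t$ can lie, and one must verify that the prepended and appended dominance steps are strict. This reduces to comparing distances such as $||J^+_{2i-1,1}-A_{k,h,1}||$, $||A^+_{2i,1}-A_{k,h,1}||$ and $||J^+_{2i-1,2n+1}-A^-_{2i-1,1}||$ with $d=r+r'-\delta$, using the common-ray structure of the construction and the explicit placement of $J^+$, $K$, $L^+$, $M$ on diamond boundaries from Section~\ref{sec:3ov_construction}; no new geometric estimate beyond Sections~\ref{sec:3ov_antipodal_property}--\ref{sec:3ov_free_space_diagram} should be needed, and the remaining work is in choosing the correct landmark sequences for the two climbing inductions.
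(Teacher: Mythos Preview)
Your approach diverges from the paper's in a fundamental way, and the divergence is where the gap lies. The paper does \emph{not} prove this lemma by deriving a contradiction via cutting sequences. Instead, it argues by \emph{path modification}: in the case $G^+_{2i-1,1}\prec s\preceq J^+_{2i-1,1}$ it shows that each of the $2W+1$ monotone paths $f(s,y_p)\to f(t,z_p)$ can be extended or truncated to a valid path $f(J^+_{2i-1,1},y'_p)\to f(L^+_{2i+3,3},z'_p)$ (using that the horizontal segment from $f(J^+_{2i-1,1},y_p)$ to $f(s,y_p)$ is free when $y_p$ is not in the narrow band between $G_{k,h}$ and $A_{k,h,2}$, and handling that band separately). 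The case $J^+_{2i-1,2n+1}\prec s\preceq G^+_{2i+1,1}$ is handled symmetrically, prepending the free horizontal segment from $f(J^+_{2i-1,2n+1},y_p)$ to $f(s,y_p)$. The paper explicitly notes that this suffices because Lemma~\ref{lem:3ov_no_jj_to_orthogonal} only needs $s$ in the stated range \emph{together with} $2W+1$ paths; the lemma is effectively a ``without loss of generality'' reduction rather than a literal impossibility.

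Your contradiction argument for the lower bound breaks concretely. You claim that already the first path admits the cut $f(s,y_1)\dominates\Diamond_B(A^+_{2i-1,2},A_{k,h,2})\domseq\Diamond_T(D^+_{2i+2},D_{k,h,2})\dominates f(t,z_1)$, but the first dominance step requires $y_1$ to lie above the $y$-coordinate of $\Diamond_B(A^+_{2i-1,2},A_{k,h,2})$, whereas all you know is $y_1\succeq H_{k-1,1}$, which can be far below that diamond. You would therefore need a climbing induction, but then nothing in your available cuts (Lemma~\ref{lem:3ov_cuts}) distinguishes $s\preceq J^+_{2i-1,1}$ from $s$ slightly to the right of $J^+_{2i-1,1}$; the $J$-points are calibrated to the $K$-points used only in Lemma~\ref{lem:3ov_no_jj_to_orthogonal}, not to any cut in Lemma~\ref{lem:3ov_cuts}. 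The upper-bound direction has the analogous problem: for $s$ just past $J^+_{2i-1,2n+1}$ but still on the segment $A^+_{2i-1,1}A^-_{2i-1,1}$, no diamond of the form $\Diamond(A^+_{2i,1},\cdot)$ or $\Diamond(B^+_{2i},\cdot)$ meets $l_s$ at all, so the prepended dominance step you describe is unavailable in that subcase.
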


\begin{proof}
It suffices to rule out the cases $G_{2i-1,1}^+ \prec s \preceq J_{2i-1,1}^+$, and $J_{2i-1,2n+1}^+ \prec s \preceq G_{2i+1,1}^+$. We will focus on the $2W+1$ paths in the free space with bottom left corner $f(s,H_{k-1,1})$ and top right corner $f(t,H_{k,1})$ be $f(s,y_i) \to f(t,z_i)$. The reason we can do this is that, for our final lemma, Lemma~\ref{lem:3ov_no_jj_to_orthogonal}, we only require that $J_{2i-1,1}^+ \prec s \preceq J_{2i-1,2n+1}^+$ and that there are $2W+1$ paths.

We start with the $G_{2i-1,1}^+ \prec s \preceq J_{2i-1,1}^+$ case. We will modify the $2W+1$ monotone paths so that $s = J_{2i-1,1}^+$. By Lemma~\ref{lem:sub_length_j_to_l}, we have $C_{2n+3}^+ \prec t \preceq L_{2i+3,3}^+$. Let the monotone paths in the free space with bottom left corner $f(s,H_{k-1,1})$ and top right corner $f(t,H_{k,1})$ be $f(s,y_i) \to f(t,z_i)$ for $1 \leq i \leq 2W+1$. Define $y_i' = A_{k,h,2}$ if $G_{k,h} \prec y_i \prec A_{k,h,2}$, otherwise $y_i' = y_i$. Define $z_i' = C_{k,h,2}$ if $D_{k,h,1} \preceq z_i \prec D_{k,h,2}$, define $z_i' = K_{1,k,h}^+$ if $D_{k,h,2} \preceq z_i \preceq K_{1,k,h}^+$ otherwise $z_i' = z_i$. We claim that if $f(s,y_i) \to f(t,z_i)$, then $f(J_{2i-1,1}^+,y_i') \to f(L_{2i+3,3}^+,z_i')$. If $G_{k,h} \prec y_i \prec A_{k,h,2}$, then $f(s,y_i)$ is to the right of $\Diamond(A_{2i-1,1}^+,A_{k,h})$ and we can replace the starting point $f(s,y_i)$ with $f(J_{2i-1,1}^+,A_{k,h,2})$. Otherwise, the segment between $f(s,y_i)$ and $f(J_{2i-1,1}^+,y_i)$ is free space, so we can replace the starting point $f(s,y_i)$ with $f(J_{2i-1,1}^+,y_i)$. Hence, we have $f(J_{2i-1,1}^+,y_i') \to f(t,z_i)$. If $D_{k,h,1} \preceq z_i \prec D_{k,h,2}$, then $f(t,z_i)$ is to the right of $\Diamond(C_{2i+3}^+,C_{k,h,2})$, so the monotone path $f(J_{2i-1,1}^+,y_i') \to f(t,z_i)$ must have passed under $\Diamond_B(C_{2i+3}^+,C_{k,h,2})$. Hence, we can replace the ending point $f(t,z_i)$ with $f(L_{2i+3,3}^+,C_{k,h,2})$. If $D_{k,h,2} \preceq z_i \preceq K_{1,k,h}^+$ then we append the monotone path $f(t,z_i) \to f(L_{2i+3,3}^+,K_{1,k,h}^+)$ to obtain the monotone path $f(J_{2i-1,1}^+,y_i') \to f(L_{2i+3,3}^+,K_{1,k,h}^+)$. Otherwise, the horizontal segment $f(t,z_i) \to f(L_{2i+3,3}^+,z_i)$ is free space, so we can append this to our path. Hence, we have modified all $2W+1$ monotone paths so that $s = J_{2i-1,1}^+$ and $t = L_{2i+3,3}^+$ as required.

Next we consider the $J_{2i-1,2n+1}^+ \prec s \preceq G_{2i+1,1}^+$. First, note that the subtrajectory from $J_{2n-1,2n+1}^+$ to $D_{2n+3}^+$ has length less than $\ell$, so we only need to consider the case where $1 \leq i < n$. The remainder of this proof is very similar to the previous case. We modify the $2W+1$ monotone paths. Our modification moves the starting $x$-coordinate to $J_{2i-1,2n+1}^+$ and the ending $x$-coordinate to $L_{2i+3,2n+1}^+$. For the starting $x$-coordinate, we observe that since $s \leq G_{2i+1,1}^+$, the point $f(s,y_i)$ is to the left of $\Diamond(G_{2i+1,1}^+, G_{k,h})$. Therefore, the horizontal segment from $f(J_{2i-1,2n+1}^+,y_i)$ to $f(s,y_i)$ must be free space. Therefore, we can prepend this to the monotone path to obtain $f(J_{2i-1,2n+1}^+,y_i) \to f(t,z_i)$. We truncate the monotone path at the $y$-coordinate $L_{2i+3,2n+1}^+$ to obtain $f(J_{2i-1,2n+1}^+,y_i) \to f(L_{2i+3,2n+1}^+,z_i')$ for some $z_i' \prec z_i$. Therefore, we modified all $2W+1$ monotone paths so that $s = J_{2i-1,1}^+$ and $t = L_{2i+3,3}^+$ as required.

Putting this all together, we obtain that  there are $2W+1$ paths in the free space with bottom left corner $f(s,H_{k-1,1})$ and top right corner, and $J_{2i-1,1}^+ \prec s \preceq J_{2i-1,2n+1}^+$ for some $1 \leq i \leq n$.
\end{proof}

For the remainder of this section, we can assume that $J_{2i-1,2j-1} \prec s \preceq J_{2i-1,2j+1}$ for some $1 \leq j \leq n$. Recall that this was the fourth step of this section.

Now we are ready to prove the fifth and final step of this section. We use the $2W+1$ paths from Lemma~\ref{lem:3ov_no_2d+1} and the cutting sequences from Lemma~\ref{lem:3ov_cutting_sequence} to show that that $X_i$, $Y_j$ and $Z_k$ are orthogonal, and therefore that $(\mathcal X, \mathcal Y, \mathcal Z)$ is a YES instance.

\begin{lemma}
\label{lem:3ov_no_jj_to_orthogonal}
Suppose $J_{2i-1,2j-1} \prec s \preceq J_{2i-1,2j+1}$ and there are $2W+1$ monotone paths in the free space diagram with bottom left corner $f(s, H_{k-1})$ and top right corner $f(t, H_k)$. Then $$X_i[h] \cdot Y_j[h] \cdot Z_k[h] = 0 \text{ for all } 1 \leq h \leq W.$$
\end{lemma}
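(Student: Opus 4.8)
The plan is to assume, for contradiction, that there is some $1 \leq h \leq W$ with $X_i[h] \cdot Y_j[h] \cdot Z_k[h] = 1$; that is, $X_i[h] = Y_j[h] = Z_k[h] = 1$. We work with the $2W+1$ monotone paths in the free space with bottom left corner $f(s, H_{k-1,1})$ and top right corner $f(t, H_{k,1})$, say $f(s,y_t) \to f(t,z_t)$ for $1 \leq t \leq 2W+1$, ordered by increasing $y$-coordinate so that $z_t \preceq y_{t+1}$ (the paths do not overlap in $y$-coordinate). Since $J_{2i-1,2j-1} \prec s \preceq J_{2i-1,2j+1}$, the vertical line $l_s$ sits strictly to the right of $\Diamond(A_{2i-1,1}^+, A_{k,h,1})$ but (using Lemma~\ref{lem:sub_length_j_to_l}) to the left of the corresponding diamonds one ``period'' over; similarly $l_t$ is pinned between $L^+_{2i+3,2j-1}$ and $L^+_{2i+3,2j+1}$. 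I would first record these positional facts precisely, so that a monotone path from $l_s$ cannot ``cheat'' by starting before $\Diamond(A_{2i-1,1}^+, A_{k,h,1})$, and a path ending at $l_t$ cannot end after $\Diamond(D^+_{2i+3}, D_{k,h,\cdot})$.

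The core of the argument is an induction on $t$ that forces the $y$-coordinates $z_t$ to climb too fast. Concretely, I expect to prove by induction that $G_{k,t+1} \prec z_t$ for $1 \leq t \leq W$, and then derive a contradiction at $t = W+1$. The base case and inductive step both use a cutting sequence assembled from Lemma~\ref{lem:3ov_cuts}: assuming $z_t \preceq G_{k,t+2}$ (or $z_1 \preceq G_{k,2}$ in the base case), one strings together a dominating sequence from $f(s, y_{t+1})$ down past $\Diamond_B(A_{2i,1}^+, A_{k,t+1,1})$, then uses part \emph{\ref{lem:3ov_cuts_aa_cc_2}} (which applies precisely because $X_i[h]\cdot Z_k[h] = 1$) to reach $\Diamond_T(C^+_{2i+2}, C_{k,t+1,2})$, then part \emph{\ref{lem:3ov_cuts_ee_gg_2}} (which applies because $Y_j[h] = 1$) to traverse the $F$-column diamonds and land on $\Diamond_T(G^+_{2j,2}, G_{k,t+1})$, finishing with a domination step onto $f(t, z_{t+1})$. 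This yields a cutting sequence from $f(s,y_{t+1})$ to $f(t,z_{t+1})$, contradicting that these two points are joined by a monotone path (Lemma~\ref{lem:3ov_cutting_sequence}). Setting $t = W$ gives $G_{k,W+1} \prec z_W$, hence $G_{k,W+1} \prec y_{W+1}$; a final cutting sequence of the same flavor — going from $f(s, y_{W+1})$ through the $C$/$D$ columns and out to $f(t, z_{W+1})$, e.g.\ via parts \emph{\ref{lem:3ov_cuts_hh_cg}} and~\emph{\ref{lem:3ov_cuts_he_hd}}-type reasoning adapted to row $k$ — rules out the $(W+1)$st path, contradicting that all $2W+1$ paths exist.

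The main obstacle, and where I would spend the most care, is bookkeeping: matching the correct diamond labels (the indices $2i-1, 2i, 2i+1$ on the $x$-side and $k, h$ on the $y$-side) so that the hypotheses $X_i[h] = 1$, $Y_j[h] = 1$, $Z_k[h] = 1$ line up with exactly the three conditional lemmas \emph{\ref{lem:3ov_cuts_aa_cc_2}}, \emph{\ref{lem:3ov_cuts_ee_gg_2}}, and the $Z_k[h]=1$ clause embedded in \emph{\ref{lem:3ov_cuts_aa_cc_2}}. One subtlety is that the choice $X_i[h]=1$ must be used to make the $A$-column diamond $\Diamond(A_{2i,2h}^+, A_{k,h,2h})$ \emph{present} (so the path is blocked), whereas in the YES-instance construction the same diamond being \emph{absent} is what let a path through — so the two directions of the reduction are genuinely dual here, and I must make sure the cutting sequences I quote are the ones conditioned on the ``$=1$'' side. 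A second subtlety is handling the boundary case $J_{2i-1,2j+1}^+ \prec s$: since the preceding lemma already reduced to $J_{2i-1,2j-1} \prec s \preceq J_{2i-1,2j+1}$, I can take that as given and need not re-run the path-modification argument. Assuming the indices are threaded correctly, each individual cutting sequence is just a concatenation of the already-proven pieces in Lemma~\ref{lem:3ov_cuts}, so no new geometric estimate is required.
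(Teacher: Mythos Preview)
Your proposal has a genuine gap in the induction. You aim to prove $G_{k,t+1} \prec z_t$ for all $1 \leq t \leq W$, and at every inductive step you invoke Lemma~\ref{lem:3ov_cuts}\emph{\ref{lem:3ov_cuts_aa_cc_2}} and \emph{\ref{lem:3ov_cuts_ee_gg_2}}. But these lemmas are conditional: part~\emph{\ref{lem:3ov_cuts_aa_cc_2}} needs $X_i[h]\cdot Z_k[h]=1$ and part~\emph{\ref{lem:3ov_cuts_ee_gg_2}} needs $Y_j[h]=1$ \emph{for the specific row index $h$ appearing in the diamonds you are cutting through}. Your contradiction hypothesis only supplies a single $\bar h$ with $X_i[\bar h]\cdot Y_j[\bar h]\cdot Z_k[\bar h]=1$; for $h\neq\bar h$ any of these booleans may be~$0$, the corresponding diamond may be absent, and the cutting sequence you quote collapses. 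A symptom of this over-reach is the arithmetic: your bound advances one full row per path, so only $W+1$ of the $2W+1$ paths are used before you declare a contradiction --- the remaining $W$ paths play no role, which should raise a flag.

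The paper's proof handles exactly this issue with a three-phase induction. For $h<\bar h$ it uses only the unconditional sequences (parts~\emph{\ref{lem:3ov_cuts_aa_cc_1}} and~\emph{\ref{lem:3ov_cuts_cc_gg_1}}), which advance one row per \emph{two} paths. At $h=\bar h$ it spends the conditional lemmas once, gaining an extra push: $z_{2\bar h-1}$ is forced past $M_{2j+1,k,\bar h}$ and $z_{2\bar h}$ past $K_{2j-1,k,\bar h+1}$ (slightly beyond the unconditional target $G_{k,\bar h+1}$). For $h>\bar h$ it resumes with unconditional sequences~\emph{\ref{lem:3ov_cuts_aa_dd_1}} and~\emph{\ref{lem:3ov_cuts_ee_gg_1}}, but now from the shifted starting points $A_{k,h,2}$ and $E_{k,h}$ that the extra push made available. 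Only after all $2W$ paths are accounted for does the final cut rule out path $2W+1$. The single use of the conditional lemmas at row $\bar h$ is precisely what makes $2W+1$ paths (rather than $2W$) impossible; your scheme tries to cash in that bonus at every row and therefore cannot be justified.
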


\begin{proof}
Let the $2W+1$ monotone paths be $f(s,y_i) \to f(t,z_i)$ for $1 \leq i \leq 2W+1$. Suppose for the sake of contradiction that there exists $1 \leq \bar h \leq W$ such that $X_i[\bar h] \cdot Y_j[\bar h] \cdot Z_k[\bar h] = 1$. We will prove the following  series of statements by induction:

\begin{enumerate}[topsep=0pt,itemsep=0ex,label=\emph{(\roman*)}]
    \item for $1 \leq h < \bar h$, we have $C_{k,h,1} \preceq z_{2h-1}$, $G_{k,h} \prec z_{2h}$,
    \item for $h = \bar h$, we have $M_{2j+1,k,\bar h} \prec z_{2\bar h-1}$, $K_{2j-1,k,\bar h+1} \prec z_{2 \bar h}$,
    \item for $\bar h < h \leq W$, we have $E_{k,h,2} \preceq z_{2h-1}$, $A_{k,h+1,1} \prec z_{2h}$.
\end{enumerate}

After proving these three statements, we will use them to yield a contradiction. But first, we will prove $(i)$, $(ii)$ and $(iii)$, in order:

\begin{enumerate}[topsep=0pt,label=\emph{(\roman*)}]
    \item Let $1 \leq u \leq 2 \bar h - 2$. We prove the following by induction on $u$: if $u$ is odd, let $u=2h-1$, and we show that $C_{k,h,1} \preceq z_{2h-1}$, whereas if $u$ is even, let $u=2h$, and we show that $G_{k,h} \prec z_{2h}$.
    
    We start with the base case $u=1$. Suppose for the sake of contradiction that $z_1 \prec C_{k,1,2}$. Then by Lemma~\ref{lem:3ov_cuts}\emph{\ref{lem:3ov_cuts_aa_cc_1}}, we have $f(s,y_1) \dominates f(G_{3,1}^+,H_{k-1,1}) \undermines \Diamond_T(G_{3,1}^+,G_{k,h}) \dominates \Diamond_B(A_{2i+1,1}^+,A_{k,h,1}) \domseq \Diamond_T(C_{2i+3}^+,C_{k,h,1}) \dominates f(t,z_1)$, contradicting the claim that $f(s,y_1) \to f(t,z_1)$. This completes the base case. Next, we prove the inductive case of $(i)$.
    
    We split the inductive case into two subcases, either $u$ is odd or $u$ is even. 
    
    In the inductive case, if $u$ is odd, we assume the inductive hypothesis for $u-1$, which is even. Let $u = 2h-1$. Our inductive hypothesis states that $G_{k,h-1,2} \prec z_{2h-2}$, from which we get $G_{k,h-1,2} \prec y_{2h-1}$. Suppose for the sake of contradiction that $z_{2h-1} \prec C_{k,h,1}$. Then by Lemma~\ref{lem:3ov_cuts}\emph{\ref{lem:3ov_cuts_aa_cc_1}}, we have $f(s,y_{2h-1}) \dominates \Diamond_B(A_{2i+1,1}^+,A_{k,h,1}) \domseq \Diamond_T(C_{2i+3}^+,C_{k,h,1}) \dominates f(t,z_{2h-1})$, contradicting $f(s,y_{2h-1}) \to f(t,z_{2h-1})$. So $ C_{k,h,1} \preceq z_{2h-1}$, as required.
    
    In the inductive case, if $u$ is even, we assume the inductive hypothesis for $u-1$, which is odd. Let $u = 2h$. Our inductive hypothesis states that $C_{k,h,1} \preceq z_{2h-1}$, from which we get $C_{k,h,1} \preceq y_{2h}$. Suppose for the sake of contradiction that $z_{2h} \preceq G_{k,h}$. Then by Lemma~\ref{lem:3ov_cuts}\emph{\ref{lem:3ov_cuts_cc_gg_1}}, $f(s,y_{2h}) \dominates \Diamond_B(C_1^+,C_{k,h,1}) \domseq \Diamond_T(G_{2n+1,2}^+,G_{k,h}) \dominates f(t,z_{2h})$, contradicting $f(s,y_{2h}) \to f(t,z_{2h})$. So $G_{k,h} \prec z_{2h}$, as required.
    
    This completes the proof by induction of $(i)$.
    
    \item For $h = \bar h$, recall the following definitions. Recall that $\bar h$ is defined so that $X_i[\bar h] \cdot Y_j[\bar h] \cdot Z_k[\bar h] = 1$. Recall that $K_{2j-1,k,h}$ is the point on $G_{k,h} A_{k,h,1}$ so that $||K_{2j-1,k,h}J_{i,2j-1}^+|| = d$ and $||K_{2j-1,k,h}G_{2n+3-2j,2}^+|| = d$. Recall that $M_{2j+1,k,h}$ is the point on $D_{k,h,2} E_{k,h}$ so that $||M_{2j+1,k,h}L_{i,2j+1}^+|| = d$ and $||M_{2j+1,k,h}E_{2n+1-2j}^+|| = d$. Recall that $J_{2i-1,2j-1} \prec s \preceq J_{2i-1,2j+1}$, so by Lemma~\ref{lem:sub_length_j_to_l}, we have $L_{2i+3,2j+1} \prec t \preceq L_{2i+3,2j+3}$.
    
    We begin by showing $M_{2j+1,k,\bar h} \prec z_{2\bar h-1}$. Assume for the sake of contradiction that $z_{2\bar h-1} \preceq M_{2j+1,k,\bar h}$. Since $L_{2i+3,2j+1} \prec t$, we have $f(t,C_{k,h,2}) \undermines f(t,M_{2j+1,k,\bar h})$. Therefore, $z_{2\bar h-1} \prec C_{k,h,2}$. By $(i)$ we have $G_{k,h} \prec z_{2 \bar h - 2} \preceq y_{2 \bar h - 1}$. So $f(s,y_{2 \bar h - 1}) \dominates \Diamond_B(A_{2i,1}^+,A_{k,h,1})$. Since $X_i[h] \cdot Z_k[h] = 1$, by Lemma~\ref{lem:3ov_cuts}\emph{\ref{lem:3ov_cuts_aa_cc_2}}, we have that $\Diamond_B(A_{2i,1}^+,A_{k,h,1}) \domseq \Diamond_T(C_{2i+2}^+,C_{k,h,2})$. Finally, since $z_{2\bar h-1} \prec C_{k,h,2}$, we have $\Diamond_T(C_{2i+2}^+,C_{k,h,2}) \dominates f(s,z_{2 \bar h - 1})$. Putting this together we obtain the cutting sequence $f(s,y_{2 \bar h - 1}) \dominates \Diamond_B(A_{2i,1}^+,A_{k,h,1}) \domseq \Diamond_T(C_{2i+2}^+,C_{k,h,2}) \dominates f(s,z_{2 \bar h - 1}$, contradicting the fact that $f(s,y_{2 \bar h - 1}) \to f(s,z_{2 \bar h - 1})$. Therefore, $M_{2j+1,k,\bar h} \prec z_{2\bar h-1}$, as required.
    
    Next, we show $K_{2j-1,k,\bar h+1} \prec z_{2 \bar h}$. Assume for the sake of contradiction that $z_{2\bar h} \preceq K_{2j-1,k,\bar h+1}$. We know that $M_{2j+1,k,\bar h} \prec z_{2 \bar h -1}$. Therefore, $M_{2j+1,k,\bar h} \prec y_{2 \bar h}$. Since $Y_j[h] = 1$, by Lemma~\ref{lem:3ov_cuts}\emph{\ref{lem:3ov_cuts_ee_gg_2}}, we have that $f(s,y_{2 \bar h}) \dominates \Diamond_B(E_{2n+1-2j}^+,E_{k,\bar h}) \domseq \Diamond_T(G_{2n+1-2j,2}^+,G_{k,\bar h}) \dominates f(t,z_{2 \bar h})$, contradicting the fact that $f(s,y_{2\bar h}) \to f(t,z_{2 \bar h})$. Therefore, $K_{2j-1,k,\bar h+1} \prec z_{2 \bar h}$, as required.
    
    \item Let $2 \bar h + 1 \leq u \leq 2W$. We prove the following by induction on $u$: if $u$ is odd, let $u = 2h-1$, and we show that $E_{k,h,2} \preceq z_{2h-1}$, whereas if $u$ is even, let $u=2h$, and we show that $A_{k,h+1,1} \prec z_{2h}$.
    
    We start with the base case $u = 2 \bar h + 1$. Suppose for the sake of contradiction that $z_{2\bar h + 1} \prec E_{k,\bar h + 1, 2}$. By $(ii)$ we know that $K_{2j-1,k,\bar h+1} \prec z_{2\bar h}$, which implies that $K_{2j-1,k,\bar h_1} \prec y_{2\bar h+1}$. Since $s \preceq J_{2i-1,2j+1}$, we have $f(s,K_{2j-1,k,\bar h+1}) \undermines f(s,A_{k,\bar h+1,2})$. Therefore, $A_{k,\bar h+1,2} \prec y_{2 \bar h + 1}$. Now, by Lemma~\ref{lem:3ov_cuts}\emph{\ref{lem:3ov_cuts_aa_dd_1}}, we get $f(s,y_{2 \bar h + 1}) \dominates \Diamond_B(A_{2i-1,2}^+,A_{k,\bar h+1,2}) \domseq \Diamond_T(D_{2i+2}^+,D_{k,\bar h+1,2}) \dominates f(t,z_{2 \bar h + 1})$, contradicting $f(s,y_{2 \bar h + 1}) \to f(t,z_{2 \bar h + 1})$. Hence, $E_{k,\bar h + 1, 2} \preceq z_{2\bar h + 1}$ as required. This completes the base case. Next, we prove the inductive case of $(iii)$.
    
    We split the inductive case into two subcases, either $u$ is odd or $u$ is even. 
    
    In the inductive case, if $u$ is odd, we assume the inductive hypothesis for $u-1$, which is even. Since the base case $u = 2 \bar h + 1$ is already handled, we can assume that $2 \bar h + 2 \leq u-1$. Let $u = 2h-1$. Our inductive hypothesis states that $A_{k,h,1} \prec z_{2h-2}$, which implies $A_{k,h,1} \prec y_{2h-1}$. Suppose for the sake of contradiction that $z_{2h-1} \prec E_{k,h,2}$. Then by Lemma~\ref{lem:3ov_cuts}\emph{\ref{lem:3ov_cuts_aa_dd_1}} we have $f(s,y_{2 h - 1}) \dominates \Diamond_B(A_{2i-1,2}^+,A_{k,h,2}) \domseq \Diamond_T(D_{2i+2}^+,D_{k,h,2}) \dominates f(t,z_{2h-2})$, contradicting the claim that $f(s,y_{2h-1}) \to f(t,z_{2h-1})$. Hence, $E_{k,h,2} \preceq z_{2h-1}$, as required.
    
    In the inductive case, if $u$ is even, we assume the inductive hypothesis for $u-1$, which is odd. Let $u = 2h$. Our inductive hypothesis states that $E_{k,h,2} \preceq z_{2h-1}$, which which we get $E_{k,h,2} \preceq y_{2h}$. Suppose for the sake of contradiction that $z_{2h} \preceq A_{k,h+1,1}$. Then by Lemma~\ref{lem:3ov_cuts}\emph{\ref{lem:3ov_cuts_ee_gg_1}} we get $f(s,y_{2h}) \dominates \Diamond_B(E_1^+,E_{k,h}) \domseq \Diamond_T(G_{1,2}^+,G_{k,h}) \dominates f(t,z_{2h})$, contradicting the fact that $f(s,y_{2h}) \to f(t,z_{2h})$. Therefore, $A_{k,h+1,1} \prec z_{2h}$, as required. 
    
    This completes the proof by induction of $(iii)$.

\end{enumerate}

Finally, set $h=W$ in statement $(iii)$ to get $A_{k,W+1,1} \prec z_{2W}$. Therefore, $A_{k,W+1,1} \prec y_{2W+1}$. By Lemma~\ref{lem:3ov_cuts}\emph{\ref{lem:3ov_cuts_aa_dd_1}}, $f(s,y_{2W+1}) \dominates \Diamond_B(A_{2i-1,2}^+,A_{k,W+1,2}) \domseq \Diamond_T(D_{2i+2}^+,D_{k,W+1,2}) \undermines f(D_{2i+2}^+,H_{k,1}) \dominates f(t,z_{2W+1})$. This contradicts the fact that $f(s,y_{2W+1}) \to f(t,z_{2W+1})$. 

Hence, our initial assumption that there exists $1 \leq \bar h \leq W$ such that $X_i[\bar h] \cdot Y_j[\bar h] \cdot Z_k[\bar h] = 1$ cannot hold. Therefore, $X_i$, $Y_j$ and $Z_k$ are orthogonal.
\end{proof}

Putting this all together yields the main theorem of Section~\ref{sec:3ov_only_if}.

\begin{theorem}
\label{thm:3ov_no_to_no}
If our input $(\mathcal X, \mathcal Y, \mathcal Z)$ is a NO instance for 3OV, then our constructed instance $(T,m,\ell,d)$ is a NO instance for \problemtwo.
\end{theorem}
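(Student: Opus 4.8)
The plan is to prove the contrapositive: assuming the constructed instance $(T,m,\ell,d)$ is a YES instance for \problemtwo, I will deduce that the input $(\mathcal X, \mathcal Y, \mathcal Z)$ is a YES instance for 3OV. So I would begin by unpacking what a YES instance provides: a pair of vertical lines $l_s, l_t$ such that the reference subtrajectory from $s$ to $t$ has length at least $\ell$, together with $m-1 = 2nW+1$ monotone paths from $l_s$ to $l_t$ whose $y$-coordinates pairwise overlap in at most one point and each of which overlaps the $y$-interval from $s$ to $t$ in at most one point. Everything after this step is assembly of the five localisation results already established in this section.

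Then I would chain the lemmas in order. \emph{Step one}: apply Lemma~\ref{lem:prec_h_s_g} to obtain $H_{n,1} \prec s \prec G_{2n+1,1}^+$, pinning the start of the reference subtrajectory into a coarse window. \emph{Step two}: apply Lemma~\ref{lem:3ov_no_2d+1} (which rests on Lemma~\ref{lem:3ov_no_one_up_top} together with the pigeonhole principle) to extract an index $1 \leq k \leq n$ such that the band of free space between the $y$-coordinates of $H_{k-1,1}$ and $H_{k,1}$ carries at least $2W+1$ of the monotone paths; fix this $k$ and work henceforth exclusively with that box and that set of $2W+1$ paths. \emph{Step three}: feed these $2W+1$ paths into Lemma~\ref{lem:no_a_s_a} to sharpen the window to $G_{1,1}^+ \preceq s \prec G_{2n+1,1}$, hence $G_{2i-1,1}^+ \prec s \preceq G_{2i+1,1}^+$ for some $1 \leq i \leq n$; fix this $i$. \emph{Step four}: apply the subsequent lemma (the one stating $J^+_{2i-1,1} \prec s \preceq J^+_{2i-1,2n+1}$) to obtain $J_{2i-1,2j-1} \prec s \preceq J_{2i-1,2j+1}$ for some $1 \leq j \leq n$, noting that the path-modification argument there is designed precisely so that the property ``there are $2W+1$ paths in the box with corners $f(s,H_{k-1,1})$ and $f(t,H_{k,1})$'' is preserved. \emph{Step five}: with $i, j, k$ now fixed, invoke Lemma~\ref{lem:3ov_no_jj_to_orthogonal} — whose two hypotheses are exactly the conclusions of steps four and two — to conclude $X_i[h] \cdot Y_j[h] \cdot Z_k[h] = 0$ for all $1 \leq h \leq W$, i.e.\ $(X_i, Y_j, Z_k)$ is an orthogonal triple. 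This exhibits a 3OV witness, so $(\mathcal X, \mathcal Y, \mathcal Z)$ is a YES instance, and the contrapositive follows.

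The substantive work is already done in Lemmas~\ref{lem:prec_h_s_g}, \ref{lem:3ov_no_2d+1}, \ref{lem:no_a_s_a} and \ref{lem:3ov_no_jj_to_orthogonal}, so the only thing requiring care — and the closest thing to an obstacle here — is bookkeeping the three indices $i, j, k$ consistently across the chain: the $k$ produced by the pigeonhole in step two must be the same $k$ that Lemma~\ref{lem:3ov_no_jj_to_orthogonal} consumes, and the $i$ chosen in step three must be carried unchanged through step four into step five. Since steps three through five are all phrased relative to ``the $2W+1$ monotone paths in the free space with bottom-left corner $f(s,H_{k-1,1})$ and top-right corner $f(t,H_{k,1})$'', I would state explicitly that once $k$ is fixed in step two we never leave that box, so that the downstream lemma hypotheses match verbatim and no fresh argument is needed beyond the assembly.
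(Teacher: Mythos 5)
Your proposal is correct and is essentially the paper's own proof: the paper likewise argues the contrapositive by assembling the same five localisation steps in the same order (Lemma~\ref{lem:prec_h_s_g}, then Lemma~\ref{lem:3ov_no_2d+1} via Lemma~\ref{lem:3ov_no_one_up_top} and pigeonhole to fix $k$, then Lemma~\ref{lem:no_a_s_a} to fix $i$, then the $J^+$-window lemma to fix $j$, and finally Lemma~\ref{lem:3ov_no_jj_to_orthogonal} to conclude orthogonality of $(X_i,Y_j,Z_k)$). Your remark about carrying the fixed box $f(s,H_{k-1,1})$--$f(t,H_{k,1})$ and its $2W+1$ paths unchanged through the later steps is exactly the bookkeeping the paper relies on, so no further argument is needed.
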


\subsection{Putting it all together}

\label{sec:3ov_putting_together}

Finally, we combine Theorems~\ref{theorem:3ov_yes_to_yes} and~\ref{thm:3ov_no_to_no} to obtain our main theorem of Section~\ref{sec:3ov}.

\theoremfour*

\begin{proof}
Suppose for the sake of contradiction that there is an $O(n^{3-\varepsilon})$ time algorithm for \problemtwo under the continuous Fr\'echet distance, for some $\varepsilon > 0$. We will use this to construct an  $O(n^{3-\varepsilon}W^{O(1)})$ time algorithm for 3OV, which cannot occur unless SETH fails~\cite{williams2018some}.

Our algorithm for 3OV is as follows. We obtain our input $(\mathcal X, \mathcal Y, \mathcal Z)$ for 3OV. We construct the instance $(T,m,\ell,d)$ as described in Section~\ref{sec:3ov_construction}.  Next, we run our algorithm for \problemtwo under the continuous Fr\'echet distance. If our algorithm returns YES for $(T,m,\ell,d)$, we return YES for $(\mathcal X, \mathcal Y, \mathcal Z)$, likewise in the NO case. This completes the description of our algorithm.

The correctness of our algorithm follows directly from Theorems~\ref{theorem:3ov_yes_to_yes} and~\ref{thm:3ov_no_to_no}. Finally, we analyse the running time of our 3OV algorithm. Constructing the instance $(T,m,\ell,d)$ takes $O(nW)$ time, and the complexity of $T$ is $O(nW)$. Our \problemtwo algorithm for the input $(T,m,\ell,d)$ takes $O(n^{3-\varepsilon}W^{3-\varepsilon})$ time. Therefore, the overall running time is $O(n^{3-\varepsilon}W^{O(1)})$, which cannot occur unless SETH fails.
\end{proof}

\bibliographystyle{plain}
\bibliography{main}

\newpage
\appendix

\section{Fr\'echet distance and Free space diagram}

\label{sec:appendix_frechet_freespace}
A trajectory $T$ is defined by a sequence of points $v_1, v_2, \ldots, v_n$ in $c$-dimensional Euclidean space, denoted $\mathbb R^c$. 
A trajectory $T$ of complexity $n$ is defined to be a continuous function $T:[0,n] \to \mathbb R^c$ satisfying $T(i) = v_i$ for all integers $1 \leq i \leq n$. Moreover, $T(i + \mu) = (1 - \mu) v_i + \mu v_{i+1}$ for all integers $1 \leq i \leq n-1$ and all reals $0 \leq \mu \leq 1$.

In Sections~\ref{sec:overview_3ov} and~\ref{sec:3ov} we make use of the following notation. If $T_1$ and $T_2$ are trajectories, we define $T_1 \circ T_2$ to be, first, the trajectory along $T_1$, then, a segment from the endpoint of $T_1$ to the start point of $T_2$, and finally, the trajectory $T_2$. In the special case of vertices, $v_1 \circ v_2$ is simply the segment from $v_1$ to $v_2$. In this way, we can write $T = v_1 \circ v_2 \circ \ldots \circ v_n$. Alternatively, we can use the notation $T = \bigcirc_{1 \leq i \leq n} v_i$.

Let $\Gamma(n)$ be the space of all continuous non-decreasing functions from $[0,1]$ to $[1,n]$. The continuous Fr\'echet distance between a pair of trajectories $T_1$ and $T_2$ of complexities $n_1$ and $n_2$ is defined to be:

\[
    d_F(T_1,T_2) = \inf_{\substack{\alpha_1 \in \Gamma(n_1) \\ \alpha_2 \in \Gamma(n_2)}} \ \max_{\mu \in [0,1]} ||T_1(\alpha_1(\mu)) - T_2(\alpha_2(\mu))||,
\]
where $|| \cdot ||$ denotes the Euclidean norm. Let $\Delta(n_1,n_2)$ be the set of all sequences of pairs of integers $(x_1,y_1), (x_2,y_2), \ldots (x_k,y_k)$, satisfying $(x_1, y_1) =  (1,1)$, $(x_k, y_k) = (n_1, n_2)$, and $(x_{i+1}, y_{i+1}) \in \{(x_i+1,y_i), (x_i+1,y_i+1), (x_i, y_i+1)\}$. The discrete Fr\'echet distance between a pair of trajectories $T_1$ and $T_2$ of complexities $n_1$ and $n_2$ is defined to be:

\[
    d_{DF}(T_1,T_2) = \min_{\alpha \in \Delta(n_1,n_2)} \max_{(x,y) \in \alpha} ||T_1(x) - T_2(y)||.
\]

The continuous free space diagram $F_d(T_1,T_2)$ between a pair of trajectories $T_1$ and $T_2$ of complexities $n_1$ and $n_2$ is defined to be the rectangle $[1,n_1] \times [1,n_2]$ separated into free and non-free space:
\[
    F_d(T_1,T_2) = 
    \begin{cases}
        (s,t) \in [1,n_1] \times [1,n_2] :  ||T_1(s) - T_2(t)|| \leq d, \quad \text{then $(s,t)$ is in free space}\\
        (s,t) \in [1,n_1] \times [1,n_2] :  ||T_1(s) - T_2(t)|| > d,
        \quad \text{then $(s,t)$ is in non-free space.}
    \end{cases}
\]
If we divide the $F_d(T_1,T_2)$ into $n_1$ columns and $n_2$ rows of unit width, this divides the space into $n_1n_2$ cells. See Figure~\ref{fig:pre_01_fsd}. In each cell, the free space is the intersection of an ellipse with the cell. The ellipse can intersect the boundary of the cell in at most eight points. We define these intersection points to be critical points of $F_d(T_1,T_2)$. We also define the corners of cells to be critical points of $F_d(T_1,T_2)$.

\begin{figure}[ht]
    \centering
    \includegraphics{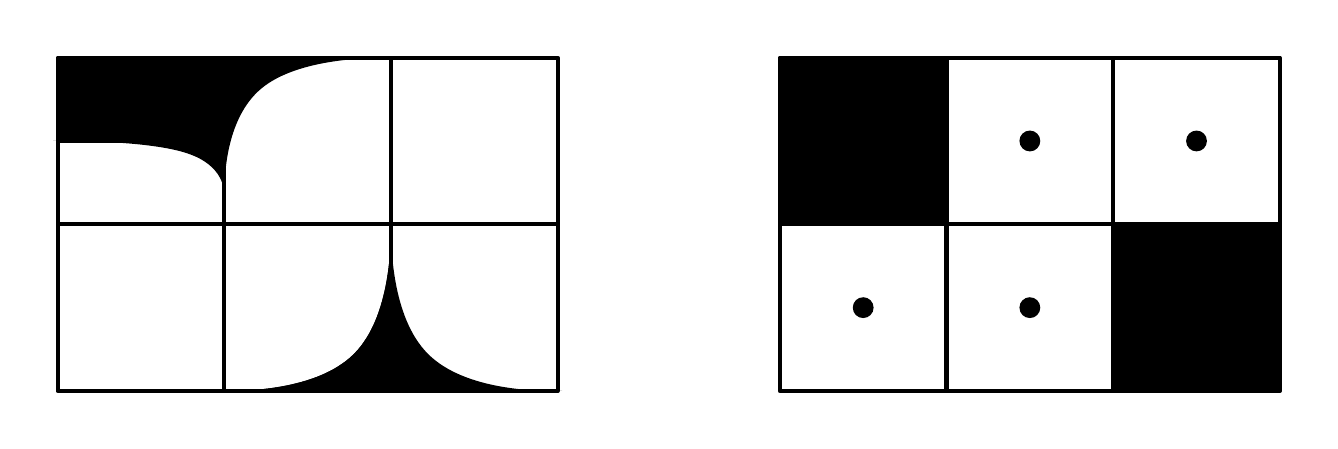}
    \caption{Examples of the continuous free space diagram (left) and the discrete free space diagram (right).}
    \label{fig:pre_01_fsd}
\end{figure}

The discrete free space diagram between a pair of trajectories $T_1$ and $T_2$ of complexities $n_1$ and $n_2$ is defined the set of grid points $\{1,2,\ldots,n_1\} \times \{1,2,\ldots,n_2\}$ separated into free and non-free space~\cite{DBLP:journals/ijcga/AltG95}. 

\[
    F_d(T_1,T_2) = 
    \begin{cases}
        (s,t) \in \{1,\ldots,n_1\} \times \{1,\ldots,n_2\}: ||T_1(s) - T_2(t)|| \leq d,  \text{ then $(s,t)$ is a free grid point}\\
        (s,t) \in \{1,\ldots,n_1\} \times \{1,\ldots,n_2\}: ||T_1(s) - T_2(t)|| > d,  \text{ then $(s,t)$ is a non-free point.}\\
    \end{cases}
\]

With slight abuse of notation, $F_d(T_1,T_2)$ is used to denote both the discrete and continuous free space diagrams, although it will be clear from context which similarity measure is being considered.

A monotone path in continuous free space is a continuous path that is monotone with respect to both its $x$- and $y$-coordinates. The continuous Fr\'echet distance between trajectories $T_1$ and $T_2$ (of complexities $n_1$ and $n_2$) is at most $d$ if and only if there exists a monotone path in $F_d(T_1,T_2)$ starting at $(1,1)$ and ending at $(n_1,n_2)$.

A monotone path in discrete free space is a sequence of grid points $(x_1,y_1), (x_2,y_2), \ldots (x_k,y_k)$ satisfying $(x_{i+1}, y_{i+1}) \in \{(x_i+1,y_i), (x_i+1,y_i+1), (x_i, y_i+1)\}$. The discrete Fr\'echet distance between $T_1$ and $T_2$ (of complexities $n_1$ and $n_2$) is at most $d$ if and only if there exists a monotone path in $F_d(T_1,T_2)$ starting at $(1,1)$ and ending at $(n_1,n_2)$.

\begin{figure}[ht]
    \centering
    \includegraphics{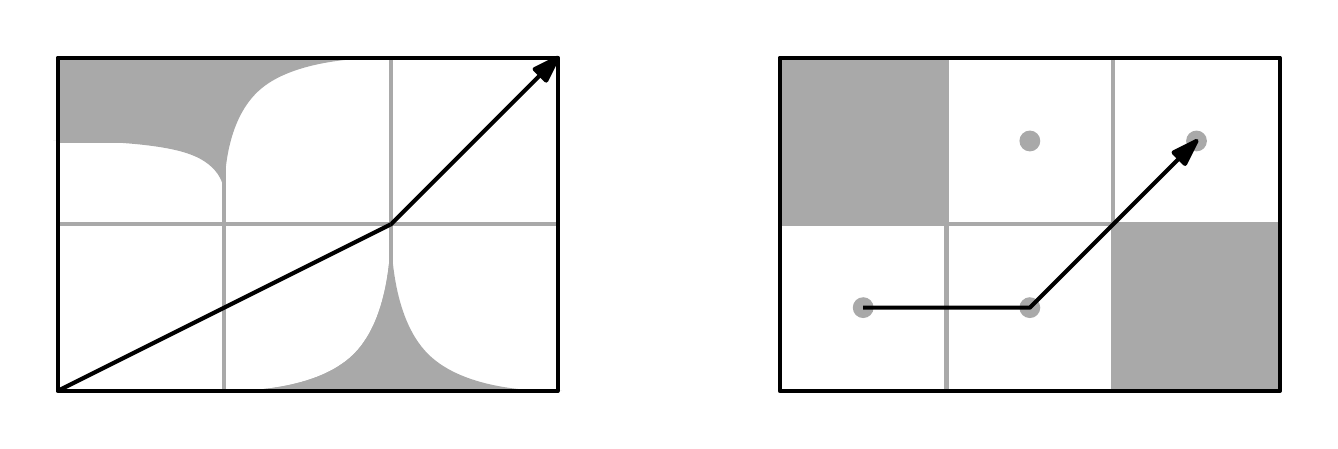}
    \caption{Examples of monotone paths in the continuous (left) and discrete free space (right).}
    \label{fig:pre_02_fsd}
\end{figure}

\end{document}